\documentclass[a4paper,USenglish,cleveref, autoref, thm-restate, numberwithinsect]{lipics-v2021}

\pdfoutput=1 %uncomment to ensure pdflatex processing (mandatatory e.g. to submit to arXiv)

\hideLIPIcs  %uncomment to remove references to LIPIcs series (logo, DOI, ...), e.g. when preparing a pre-final version to be uploaded to arXiv or another public repository
\nolinenumbers

\usepackage{microtype}%if unwanted, comment out or use option "draft"
\usepackage[T5,T1]{fontenc}
\usepackage[utf8]{inputenc}
\usepackage{xspace}
\newcommand\Nguyen{Nguy{\~{\^e}}n\xspace}

\usepackage{amsmath,amsthm}
\usepackage{stmaryrd}
\usepackage{mathtools}
\usepackage{graphicx}
\usepackage{csquotes}
\usepackage{cmll}
\usepackage{tikz-cd}
\usetikzlibrary{decorations.pathreplacing, calc}

\usepackage{quiver}

\title{Tree transducers of linear size-to-height increase}
\subtitle{(and the additive conjunction of linear logic)}

\titlerunning{Tree transducers of linear size-to-height increase} %TODO optional, please use if title is longer than one line

\author{Luc Dartois}{Université Marie et Louis Pasteur, CNRS, institut FEMTO-ST, F-25000 Besançon, France}{luc.dartois@femto-st.fr}{}{}
\author{Lê Thành Dũng (Tito) \Nguyen}{LIS, CNRS \& Aix-Marseille University, France \and \url{https://nguyentito.eu/}}{nltd@nguyentito.eu}{https://orcid.org/0000-0002-6900-5577}{}
\author{Charles Peyrat}{Université Paris-Est Créteil, France}{}{}{}
\authorrunning{L.~Dartois, L.~T.~D.~\Nguyen \& C.~Peyrat} %TODO mandatory. First: Use abbreviated first/middle names. Second (only in severe cases): Use first author plus 'et al.'

\Copyright{Jane Open Access and Joan R. Public} %TODO mandatory, please use full first names. LIPIcs license is "CC-BY";  http://creativecommons.org/licenses/by/3.0/

\ccsdesc[500]{Theory of computation~Transducers}

\keywords{macro tree transducers, monadic second-order logic, set interpretations, linear lambda-calculus, game semantics} %TODO mandatory; please add comma-separated list of keywords

\category{} %optional, e.g. invited paper

\relatedversion{} %optional, e.g. full version hosted on arXiv, HAL, or other respository/website
\EventEditors{John Q. Open and Joan R. Access}
\EventNoEds{2}
\EventLongTitle{42nd Conference on Very Important Topics (CVIT 2016)}
\EventShortTitle{CVIT 2016}
\EventAcronym{CVIT}
\EventYear{2016}
\EventDate{December 24--27, 2016}
\EventLocation{Little Whinging, United Kingdom}
\EventLogo{}
\SeriesVolume{42}
\ArticleNo{23}
\newcommand\IM{\mathrm{IM}}
\newcommand\OM{\mathrm{OM}}
\newcommand\ttL{\mathtt{L}}
\newcommand\ttR{\mathtt{R}}
\newcommand\exa{\mathtt{exa}}

\newcommand{\subtree}[2]{#1{\restriction_{#2}}}
\newcommand{\dirs}[1]{\mathsf{D}_{#1}}
\newcommand{\db}[1]{[#1]}
\newcommand{\qq}[1]{\langle #1 \rangle}
\newcommand{\pp}[1]{\langle #1 \rangle}
\newcommand{\bet}[1]{\lceil #1 \rceil}
\newcommand{\bec}[1]{\lfloor #1 \rfloor}
\newcommand{\bek}[1]{\llceil #1 \rrfloor}
\newcommand{\N}{\mathbb{N}}

\DeclareMathOperator{\Stitch}{Stitch}
\DeclareMathOperator{\DecodeH}{DecodeH}
\DeclareMathOperator{\DecodeN}{DecodeN}
\DeclareMathOperator{\Point}{Point}
\DeclareMathOperator{\memToStitch}{MemToStitch}
\DeclareMathOperator{\rank}{rk}
\DeclareMathOperator{\Dom}{Dom}

\DeclareMathOperator{\qzero}{qzero}

\DeclareMathOperator{\Img}{Im}
\DeclareMathOperator{\Link}{Link}
\DeclareMathOperator{\downLink}{DownLink}
\DeclareMathOperator{\upLink}{UpLink}
\DeclareMathOperator{\downNewLink}{DownNewLink}
\DeclareMathOperator{\orgn}{o}

\DeclareMathOperator{\child}{child}

\DeclareMathOperator{\height}{height}

\newcommand\cA{\mathcal{A}}
\newcommand\cB{\mathcal{B}}
\newcommand\cG{\mathcal{G}}
\newcommand\cH{\mathcal{H}}
\newcommand{\cI}{\mathcal{I}}
\newcommand\cM{\mathcal{M}}
\newcommand\cN{\mathcal{N}}
\newcommand\cR{\mathcal{R}}
\newcommand\cT{\mathcal{T}}

\newcommand{\parent}[1]{#1{\uparrow}}
\DeclareMathOperator\lab{lab}

  \usepackage{todonotes}
  \newcommand{\tito}[1]{}
  \newcounter{todocounter}

\newcommand{\LA}{\mathrm{LA}}
\newcommand{\twhm}{\mathsf{THM}}

\newcommand{\THMr}{\mathsf{THM}^{\mathsf{R}}}
\newcommand{\THM}{\mathsf{THM}}
\newcommand{\uTHMr}{\mathsf{uTHM}^{\mathsf{R}}}
\newcommand{\uTHM}{\mathsf{uTHM}}
\newcommand{\mtt}{\mathsf{MTT}}
\newcommand{\twt}{\mathsf{TWT}}
\newcommand\lsi{\mathsf{LSI}}
\newcommand\lshi{\mathsf{LSHI}}
\newcommand\lhi{\mathsf{LHI}}
\newcommand{\lshii}{\lshi\text{-}}
\newcommand{\lhii}{\lhi\text{-}}
\newcommand{\lshimtt}{\lshii\mtt}
\newcommand\sem[1]{\llbracket #1 \rrbracket}

\newcommand{\msosi}{\mathsf{MSO}\text{-}\mathsf{SI}}
\newcommand{\mso}{\mathsf{MSO}}
\newcommand{\maxrank}[1]{\mathsf{R}_{#1}}
\newcommand{\rankK}[2]{#1^{(#2)}}

\newcommand\RHS{\mathsf{RHS}}

\newcommand\Conf{\mathsf{Conf}}
\newcommand\Redex{\mathsf{Redex}}
\DeclareMathOperator\brawo{brawo}
\newcommand{\EncPeb}{\mathrm{EncPeb}}

\newcommand\Strat{\mathrm{Strat}}

\NewDocumentCommand\Tree{m o}{%
  \IfNoValueTF{#2}{%true
    \mathsf{T}_{#1}
  }%
  {%false
    \mathsf{T}_{#1}(#2)
  }
}
\newcommand{\tto}[1][]{\mathrel{
  \vphantom{\xrightarrow{#1}}
  \smash{\xrightarrow{#1}}
  \vphantom{\to}^*}
}

\newcommand\simulambda{\leftrightsquigarrow}

\begin{document}

% https://github.com/borisveytsman/acmart/issues/399
\AddToHook{env/proposition/begin}{\crefalias{theorem}{proposition}}
\AddToHook{env/lemma/begin}{\crefalias{theorem}{lemma}}
\AddToHook{env/definition/begin}{\crefalias{theorem}{definition}}
\AddToHook{env/corollary/begin}{\crefalias{theorem}{corollary}}
\AddToHook{env/conjecture/begin}{\crefalias{theorem}{conjecture}}
\AddToHook{env/example/begin}{\crefalias{theorem}{example}}

\maketitle

\begin{abstract}
  We investigate a natural generalization to trees of Hennie machines, a known automaton model for regular string functions. Tree-to-tree Hennie machines are tree-walking tree transducers with the ability to rewrite the node labels of their input tree, subject to a bounded visit restriction. Interestingly, they do not merely compute regular tree functions (i.e.\ $\mso$ transductions), but a larger class of functions with linear size-to-height increase (LSHI).

  We prove that this class sits between LSHI macro tree transducers (MTTs) and $\mso$ set interpretations. To argue for its robustness, we show that it is closed under precomposition (resp.\ postcomposition) by MTTs of linear size (resp.\ height) increase. As a consequence, it contains the entire composition hierarchy of MTTs of linear height increase; we also prove that this composition hierarchy is strict.

  Finally, we give an alternative characterization of this function class based on a $\lambda$-calculus with linear types. The key difference with similar characterizations of $\mso$ transductions is the use of additive tuples in the encoding of output trees. Our equivalence proof, using game semantics / geometry of interaction, is heavily inspired by an analogous result on higher-order recursion schemes.
\end{abstract}

\section{Introduction}%
\label{sec:introduction}

\emph{Macro tree transducers} (MTTs) are an automata model for tree transformations first introduced by Engelfriet and Vogler in the 1980s~\cite{Macro,EngelfrietPushdownMacro}. There have been many developments concerning MTTs since then;
one starting point of this paper is the recent:
\begin{theorem}[{Gallot, Maneth, Nakano \& Peyrat~\cite{GMNP24}}]%
  \label{thm:mtt-lhi}
  Given a \emph{macro tree transducer} computing a function $f$ on ranked trees, one can decide whether it has:
  \begin{description}
    \item[linear height increase (LHI):] $\height(f(t)) = O(\height(t))$
    \item[linear size-to-height increase (LSHI):] $\height(f(t)) = O(|t|)$
  \end{description}
\end{theorem}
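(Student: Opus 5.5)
The plan is to reduce both properties to a finiteness question about the output height contributed by each piece of input, solved by pumping and limitedness arguments on a suitable abstraction of the MTT. First, normalize the MTT $M$ in three ways. Make it total and nondeleting, so every state and parameter is used; this can be done with regular look-ahead and preserves both height properties. Add regular look-ahead so that every rule reached is productive. Then annotate each state $q$ at an input node $u$ by a finite \emph{height profile}. This profile records, for each parameter $y_j$, whether $y_j$ occurs in the output $\sem{q}(t|_u)$, and whether its depth there is $0$ or positive. These annotations are regular properties of $t|_u$, since they are computed bottom-up by a finite-state abstraction of the rules. The height of $f(t)$ is then governed by two kinds of quantities: the maximal depth of a parameter occurrence, and the maximal height of parameter-free output, both taken over all states at all nodes.

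Next, I will express the quantity $h_q(s) = \max$ depth of parameters in $\sem{q}(s)$ as a max-plus (tropical) recursive system over the input tree. At each input symbol, the value for $q$ is a maximum over paths in the rule's right-hand side of sums of constants and of child values $h_{q'}(s_i)$. Parameters are passed through nested calls, so the system is a distance tree automaton (equivalently a weighted tree automaton over $(\N\cup\{-\infty\},\max,+)$) with the subtlety of parameter nesting. Nesting turns into addition along call chains, so it stays in the max-plus framework.

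With this system in hand, the two conditions become:

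\textbf{LHI:} $h$ grows at most linearly in the height of $s$. This means there is no ``pumpable'' context (input cycle) along which some state sequence adds depth superlinearly. Superlinear growth can only come from a loop that, iterated $n$ times, makes the output height grow by more than $O(n)$. Because of branching in the right-hand sides, such growth is caused by a state that duplicates into two copies, each of which again contributes unbounded depth along the same input path. I would characterize LHI as the absence of such a \emph{pumping witness}: an input context $C$ and a state $q$ such that the output of $q$ on $C$ depends, in a depth-increasing way, on at least two calls to states at the hole, both of which are ``growing''.

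\textbf{LSHI:} The same analysis applies, but with the tree-shaped input unfolded. Height must be $O(|t|)$, so the forbidden pattern is one where the height contributions of disjoint subtrees get \emph{added} along a single output path more than linearly. Since the input size is the sum over branches, the pattern is a state reaching two \emph{different or equal} input positions in a way that stacks their output heights, combined with pumping along paths.

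Each witness is a regular property of finite contexts over the finite abstraction, so its existence is decidable by emptiness of a tree automaton. Correctness of each criterion needs two directions. If there is no witness, an induction over the input gives a linear bound, via a potential-function argument: each input node contributes $O(1)$ to any output path. If a witness exists, pumping yields a family of inputs with superlinear height.

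The main obstacle will be the second direction of correctness for the no-witness case. Intuitively, absence of duplicating growth should imply a linear bound, but parameter nesting can cause subtle accumulation through unboundedly many nested calls. For this step I would first try to reduce to limitedness and linearity results for distance automata: decide whether $h(t) - c\cdot\height(t)$ (resp.\ $c\cdot|t|$) is bounded using known decidability of boundedness for max-plus/distance tree automata (Hashiguchi, Kirsten, Colcombet's cost functions). This reduction is the step I expect to require the most care, in particular in encoding parameter substitution faithfully into the weighted automaton.
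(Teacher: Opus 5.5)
There is a genuine gap, and it lies in your two pumping criteria: they do not capture the actual dichotomy. The paper does not reprove this theorem. It relies on \cite{GMNP24}, rephrased as \Cref{thm:mtt-rephrased}. After a normalization with regular look-ahead (the relabeler $\cR$ and the new MTT $\cM'$), LSHI is equivalent to \emph{finite nesting}: a uniform bound on the number of state calls on the subtree at a \emph{single} input node that are nested along one output branch (inputs in $L^{1\square}$). LHI is equivalent to \emph{finite yield nesting}: the same bound for an arbitrary \emph{antichain} of input nodes (inputs in $L^{\square}$). \cite{GMNP24} shows these boundedness properties of $\height_\blacksquare\circ\sem{(\cM')^\blacksquare}$ are decidable. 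Your criteria put \enquote{one position} and \enquote{several positions} on the wrong sides.

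The running example refutes your LHI criterion. The MTT of \Cref{ex:mtt} never nests two calls on the same node ($\height_\blacksquare\leqslant 1$ with one $\square$). Moreover, for every one-hole context $C$ and tree $s$ one checks $\height(\exa(C^n[s]))=O(n)$, so no single pumpable context is a witness. Yet $\exa$ has quadratic height increase: the rule $\qq{q_0,a(x_1,x_2)}\to\qq{q_1,x_2}(a(\qq{q_0,x_1},\qq{q_0,x_1}))$ nests, along one output branch, calls on $2,12,112,\dots$, an unbounded antichain. Exhibiting this needs independent pumps on the left spine and on the right spines. If \enquote{two calls at the hole} means parallel copies, the criterion fails the other way: $\qq{q,a(x_1,x_2)}\to a(\qq{q,x_1},\qq{q,x_1})$ duplicates a growing call and is LHI.

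Conversely, the pattern you forbid for LSHI, as you describe it (a state stacking output heights of different positions, plus pumping), occurs in $\exa$, which is LSHI. Likewise, your potential argument \enquote{each input node contributes $O(1)$ to each output path} only yields $O(|t|)$. For $O(\height(t))$ you need an $O(1)$ bound per antichain (e.g.\ per level of the input), which is exactly yield nesting.

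The distance-automaton fallback does not close the gap. Your recursion for $h_q$ uses max-plus operations, but it is not a finite max-plus tree automaton. One output branch can pass through several nested calls on the same subtree, so a child's values are summed with multiplicities that compound down the input. For instance, with $\qq{q,\sigma(x_1)}(y_1)\to\qq{q,x_1}(\qq{q,x_1}(y_1))$ and $\qq{q,e}(y_1)\to c(y_1)$, the depth of $y_1$ on $\sigma^n(e)$ is $2^n$, whereas a finite max-plus automaton has values in $O(|t|)$. A finite encoding is essentially available only when these multiplicities are bounded, i.e.\ under finite nesting, which is what you are trying to decide. Moreover, Hashiguchi/Kirsten limitedness decides boundedness, not linear growth. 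Colcombet's cost functions are taken up to a correction function, so they identify $\height$ with $\height^2$ and cannot separate LHI from the quadratic behaviour of $\exa$.

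What is missing is twofold. First, you need a normal form in which nested calls cannot be depth-neutral, so that unbounded (yield) nesting really forces superlinear height; without it, unboundedly nested calls acting as the identity on their parameter add no height. Your depth-$0$/positive profiles point in this direction. Second, you need the antichain-based nesting measure, together with a decision procedure for its boundedness over the regular languages $L^{1\square}$ and $L^{\square}$.
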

Compare this to an important result from a quarter century ago:
\begin{theorem}[{Engelfriet \& Maneth~\cite{MacroMSOLinear}}]\label{thm:mtt-lsi}
  Given an $\mtt$ computing a function $f$ on ranked trees, one can decide whether it is of \emph{linear size increase (LSI)}.
  Furthermore, if it is, then one can translate it to an \emph{Monadic Second-Order transduction} that defines $f$. (Conversely, every $\mso$ transduction can be translated into an $\mtt$.)
\end{theorem}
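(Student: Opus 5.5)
This is a cited result (Engelfriet and Maneth), so the plan is to reconstruct the proof strategy of the cited work. It has three parts: decidability of LSI, the translation of LSI $\mtt$s to $\mso$ transductions, and the converse inclusion.

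\textbf{Normalisation.} First I would transform the given $\mtt$ into a normal form without changing the function $f$. The normal form should satisfy two properties:
\begin{itemize}
\item it is \emph{proper}: every state/subtree pair that is reachable contributes to the output, so there is no deletion of computations;
\item it is \emph{nondeleting in parameters}: every parameter that is passed is actually used.
\end{itemize}
Getting there uses regular look-ahead, which $\mtt$s can absorb, plus a product construction that guesses which parameters will be deleted. In such a normal form, size increase is governed by copying. An $\mtt$ can duplicate in two ways: a state may be called several times on the same input subtree, or a parameter may occur several times in a right-hand side.

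\textbf{Characterising LSI.} The key combinatorial step is to show that a proper $\mtt$ has linear size increase iff it is \emph{finite copying}, in two senses:
\begin{itemize}
\item finite copying in the input: each input node is processed by a bounded number of state calls over all reachable configurations;
\item finite copying in the parameters: each parameter's actual value is copied a bounded number of times into the output.
\end{itemize}
The direction from finite copying to LSI is a direct counting argument, since each input node then contributes $O(1)$ output nodes. For the converse I would use a pumping argument. If copying is unbounded, then by properness one can find an input context whose repetition makes the number of copies grow exponentially, or at least superlinearly, and this contradicts $|f(t)| = O(|t|)$. Both finite-copying properties are decidable: one builds finite automata or a weighted/counting construction that tracks the copy numbers and tests them for boundedness. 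This is where decidability of LSI comes from.

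\textbf{From finite copying to $\mso$.} An $\mtt$ that is finite copying in the input can be decomposed into a finite-copying top-down tree transducer followed by a finite-copying substitution. More precisely, one can simulate it by a finite-copying tree-walking/attributed transducer with a bounded number of visits per node. Such transducers with bounded copying are known to be exactly $\mso$ transductions. The argument is that each output node is attributed to a pair (input node, copy index), with the copy index drawn from a finite set, and that the edge relation between output nodes is $\mso$-definable from the runs, since the state behaviour is regular.

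\textbf{Converse and main obstacle.} The converse, that every $\mso$ transduction is realised by an $\mtt$, would go through the known equivalence of $\mso$ transductions with single-use restricted attributed tree grammars with look-ahead, together with the simulation of attribute grammars by $\mtt$s with look-ahead. I expect the main obstacle to be the pumping direction of the LSI characterisation, namely showing that non-finite-copying forces superlinear growth. The difficulty is that in the presence of parameters, copies can be hidden inside parameter values that are later deleted. That is exactly why the properness normal form must be established first, and why it is delicate.
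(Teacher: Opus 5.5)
\textbf{Gap: the normal form is too weak for the key equivalence.} Your architecture is the one the paper alludes to: normalize, show that on normalized $\mtt$s linear size increase coincides with finite copying, then read off decidability and $\mso$-definability. The paper gives no proof, only the citation and the remark that the result comes from boundedness properties of suitably normalized $\mtt$s. However, your normal form (no deleted computations, nondeletion in parameters) does not make \enquote{LSI $\Rightarrow$ finite copying} true. Take input letters $a$ (rank 1) and $e$ (rank 0), output letters $b$ (rank 2) and $e$ (rank 0), initial state $q_0$, and the rules
\[ \qq{q_0,a(x_1)} \to b(\qq{q_1,x_1},\qq{q_0,x_1}) \qquad \qq{q_1,a(x_1)} \to \qq{q_1,x_1} \qquad \qq{q_0,e} \to e \qquad \qq{q_1,e} \to e \]
There are no parameters, nothing is deleted, every call eventually contributes to the output, and $|f(a^n(e))| = 2n+1$. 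Yet the node at depth $k$ is processed by $k+1$ calls, so copying in the input is unbounded. Similarly, the rules $\qq{q_0,a(x_1)} \to \qq{q,x_1}(d)$, $\qq{q,a(x_1)}(y_1) \to c(y_1,\qq{q,x_1}(y_1))$, $\qq{q,e}(y_1) \to y_1$ give a nondeleting, LSI transducer that copies $y_1$ unboundedly. The underlying point is that unboundedly many copies of a \emph{bounded} amount of material do not violate LSI. Your pumping step cannot repair this: pumping a context may make the number of copies grow only linearly (as in the first example), not superlinearly.

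\textbf{What is missing.} The argument needs Engelfriet and Maneth's \emph{properness}. First, every state must have infinitely many distinct outputs on each look-ahead class; otherwise its finitely many outputs are precomputed by look-ahead and its calls disappear, which eliminates $q_1$ above. Second, every parameter of every state must receive infinitely many distinct actual arguments; otherwise those finitely many arguments are folded into the states, which eliminates $y_1$ above. Making an $\mtt$ proper effectively relies on deciding finiteness of (restricted) ranges of $\mtt$s, not merely on look-ahead plus guessing deleted parameters. The look-ahead must also stay explicit (e.g.\ as a preceding bottom-up relabeling, as in \Cref{thm:mtt-rephrased}), because simulating it inside a plain $\mtt$ reintroduces parameter deletion. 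Once properness holds, pumping works with only linear growth in the number of copies. Iterate $k$ times a segment $D$ along which the copy number increases, and plug below it a fixed subtree on which each copied call produces more than $c\cdot|D|$ output nodes. For parameter copying, symmetrically, choose the context above so that the copied argument is that large. Nondeletion makes these contributions disjoint, so for large $k$ the output exceeds $c$ times the input size.
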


This seems more informative than the statement of the analogous \Cref{thm:mtt-lhi}: in addition to decidability, it gives us an equivalence between $\lsi$-$\mtt$s and a completely different formalism for describing functions from trees to trees, based on $\mso$ logic. This equivalence is a consequence of the strategy for decidability: it comes from boundedness properties on the runs of (suitably normalized) $\mtt$s that characterize the $\lsi$ property.\footnote{Reducing various quantitative decision problems to boundedness questions is a general idea with wide-ranging applicability in automata theory, see e.g.~\cite{CostFunctionsLICS,AnandSSZ24}. Note also that using a quite different approach, Gallot, Lhote and \Nguyen~\cite{StructPoly} gave alternative proofs of \Cref{thm:mtt-lsi} and of a generalization of the $\lshi$ half of \Cref{thm:mtt-lhi}.}

Similar boundedness properties arise in the decision procedure for \Cref{thm:mtt-lhi}. Hence the question: what consequences can we draw concerning MTTs of linear (size-to-)height increase? We show that they can be translated to \emph{$\mso$ set interpretations} ($\msosi$), an extension of $\mso$ transductions from the 2000s~\cite{ColcombetL07} recently studied\footnote{Let us also mention the work of Bojańczyk et al.~\cite{msoInterpretations,Bojanczyk23} on string-to-string \emph{$\mso$ interpretations}, an intermediate case between transductions and set interpretations. They are one of the equivalent characterizations of \emph{polyregular functions}.} by Lhote et al.~\cite{Lex,StructPoly}. More than that: to prove this comparison in expressive power, we find it enlightening to consider an intermediate model of independent interest: 
\[ \lshimtt \subset \text{\textbf{tree-to-tree Hennie machines}} \subset \msosi \]
In fact, this new machine model is our actual main topic here. Recall that a \emph{Hennie machine}~\cite[Section~7]{EngelfrietHoogeboom} is a Turing machine:
\begin{itemize}
  \item with a single read/write input tape and a write-only left-to-right output tape;
  \item whose tape head must stay within the original bounds of the input string (that is, the only usable memory is the space initially occupied by the input);\footnote{Actually, this condition does not appear in Hennie's original work~\cite[Section~II.C]{Hennie65} on language recognition by bounded-visit Turing machines, nor in its subsequent extension by Rajlich~\cite{Rajlich75} to string outputs. However the name \enquote{Hennie machine} seems be consistently used to refer to a machine model that satisfies this restriction on the tape head, starting with its introduction by Engelfriet and Hoogeboom~\cite[\S7]{EngelfrietHoogeboom} (for later examples, cf.~\cite{GuillonPPP23,GuillonPPP22,GuillonP19}).}
  \item visiting each cell of the tape \emph{a bounded number of times}.\footnote{Guillon, Pighizzini, Prigioniero and Průša~\cite{GuillonPPP23,GuillonPPP22,GuillonP19} define Hennie machines (as language acceptors) by a linear-time restriction instead of a bounded-visit restriction, but these are effectively equivalent, see~\cite[Lemma~1]{GuillonPPP23} or~\cite[proof of Theorem~3]{Hennie65}.}
\end{itemize}
In other words, it is a bounded-visit two-way transducer extended with the ability to use the positions of input letters as mutable memory cells. In this informal description, replacing the two-way string transducer by its usual generalization to trees, namely the \emph{tree-walking tree transducer} (see e.g.~\cite[Chapter~8]{courcellebook}), results in the notion of \emph{tree-to-tree Hennie machine} ($\THM$).
\begin{claim}\label{clm:lshi}
  $\THM$s have linear size-to-height increase.
\end{claim}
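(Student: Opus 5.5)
The argument follows the standard one for tree-walking tree transducers. The key point is that in a tree-walking tree transducer, the height of the output is bounded by the length of a single branch of the computation. A single branch of the computation is the sequence of configurations the machine goes through between the root of the output and one output leaf. I will fix a $\THM$ with visit bound $k$, finite state set $Q$, and maximal number of subcomputations spawned per transition $r$. I then bound the length of every such branch by $O(|t|)$.

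First I make the model precise enough to speak of branches. A configuration is a triple (state, current node, current labelling of the input tree). An output node is produced by a transition that emits a symbol $a$ of rank $m$. That transition spawns $m$ independent continuations, each starting from the same configuration, and each continuation carries its own copy of the mutable labelling.

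The height of $f(t)$ is then at most the maximum, over root-to-leaf paths of the computation tree, of the number of steps along that path, plus one. Each output symbol along such a path costs at least one step. Steps that emit nothing only make the path longer, so they do not weaken the bound.

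Second, I use the bounded-visit condition. I interpret it as: along any single branch (a sequential run of the machine, ignoring the other branches), each input node is visited at most $k$ times. This is the natural meaning for a machine whose memory is forked at output symbols. A visit to a node corresponds to at least one step spent there. Hence the number of steps along any branch is at most $k\cdot|t|$, possibly multiplied by a constant $c$ accounting for bounded $\varepsilon$-moves that stay at a node without counting as a new visit. If such stationary moves exist, they cannot repeat a configuration without looping forever. Because the node and its label are fixed while the machine stays in place, their number is bounded by $|Q|$.

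Therefore $\height(f(t)) \le c\,k\,|t| + 1 = O(|t|)$. The constant depends only on the machine.

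The main obstacle is the definitional one in the second step: making sure the visit bound is imposed per branch rather than globally. If it were global, branching would interact with the count, though a global bound would only make the estimate easier. The real issue is to rule out unbounded stationary loops and termination problems. For a well-defined (total, terminating) $\THM$ these cannot occur: a non-terminating branch would visit some node infinitely often, contradicting the visit bound.

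The matching lower example also shows the bound is tight. Take a machine that rewrites labels to perform a depth-first traversal and outputs a monadic tree with one output node per step. This gives height $\Theta(|t|)$ even on input trees of logarithmic height, which explains why one gets $\lshi$ and not $\lhi$.
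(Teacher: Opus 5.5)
This is essentially the paper's argument: the output height is bounded by the length of a branch-outputting run, and the per-run bounded-visit condition of Definition~\ref{def:bounded-visit} bounds that length by $N\cdot|t|$. One correction: in this model a single transition emits a whole tree from the finite range of $\delta$, so one step can add several output nodes to a branch, and your claim that ``each output symbol costs at least one step'' is false as stated. The fix is to multiply by $h$, the maximal height of a right-hand side of $\delta$, which is exactly the paper's observation $|b_i| = O(1)$; your discussion of stationary moves is unnecessary, since $\THM$s have no stay-instructions and every configuration at a node counts as a visit anyway.
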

\begin{proof}[Proof idea]
  The bounded visit property forces $\THM$s to terminate in linear time. In the tree-to-tree case, the branches are produced in parallel by forking processes, each taking linear time. We give a more rigorous proof in \Cref{sec:bounded-visit}.
\end{proof}

\begin{figure}
\begin{footnotesize}
\begin{tikzpicture}
\draw[fill,color=red!20,draw=none] (-9,1.5) -- (7,1.5) -- (7,0.5) -- (-9,.5);
\draw[fill,color=blue!20,draw=none] (-9,0.5) -- (7,0.5) -- (7,-0.4) -- (-9,-.4);
\draw[fill,color=green!20,draw=none] (-9,-0.4) -- (7,-.4) -- (7,-1.3) -- (-9,-1.3);

\node (thm) at (0,0) {\textbf{tree-to-tree Hennie Machines ($\THM$)}};
\node (lshimtt) at (-6,0) {$\lshimtt$};
\node (msosi) at (6,0) {$\msosi$};
\node (actors) at (3,-1) {actor-based tree transducers};
\node (lambda) at (-3,-1) {affine $\lambda$-transducers with additive branching};
\node (thm71) at (0,-.6) {\Cref{thm:lambda-actor-hennie}};

\node (lhicomp) at (-2,1.3) {$\lhii\mtt\circ\THM$};
\node (compmso) at (2,1.3) {$\THM\circ\mso\text{ transductions} $};

%\node (compprecomp) at (0,0.6) {\Cref{thm:closure-props}};
\node (thmpostcom) at (-2.33,0.8) {\Cref{cor:lhimtt-comp-thm}};
\node (thmprecomp) at (2.4,0.8) {\Cref{thm:precomp-msot}};

\path[<->,>=stealth] (thm) edge node{} (lhicomp);
\path[<->,>=stealth] (thm) edge node {} (compmso);

\path[->,>=stealth] (lshimtt) edge node[above] {\Cref{cor:lshimtt-hennie}} (thm);
\path[->,>=stealth] (thm) edge node[above] {\Cref{thm:THMtoMSOSI}} (msosi);
\path[->,>=stealth] (thm) edge node {} (lambda);
\path[->,>=stealth] (lambda) edge node {} (actors);
\path[->,>=stealth] (actors) edge node {} (thm);

\node[align=right] at (-8,0) { Closure prop. \\[1.8em]
								Inclusions \\[1.8em]
								Equivalences};

\end{tikzpicture}
\end{footnotesize}
\caption{Summary of the main results. An arrow indicates inclusion of function classes.}\label{fig:results}
\end{figure}

Just as recent works on polyregular functions~\cite{PolyregSurvey,Kiefer24} seem to answer \enquote{what string-to-string functions of polynomial size increase can be computed by \enquote{reasonable} transducers}, we envision our study of tree-to-tree Hennie machines as a contribution towards
understanding \enquote{what $\lshi$ functions can be computed by \enquote{reasonable} transducers}.
The class of functions computed by $\THM$s does not seem to coincide with any previously defined formalism.\footnote{Though in the tree-to-string case, it collapses to $\mso$ transductions (\Cref{cor:string-msot}). In the string-to-string case this was one of Engelfriet and Hoogeboom's main results on Hennie machines~\cite[Theorem~26]{EngelfrietHoogeboom}.}
Yet we argue that it is robust, by exhibiting closure properties and alternative characterizations. These results are summarized in \Cref{fig:results}; the rest of this introduction discusses them further.

\subparagraph{Closure properties.}

We show that the functions computed by~$\THM$
\begin{enumerate}
  \item are \emph{regularity reflecting}, i.e.\ the inverse image of a regular tree language is effectively regular (\Cref{cor:reg-preserving});
  \item are closed under precomposition by $\mso$ tree transductions (i.e.\ macro tree transducers of linear size increase);
  \item are closed under postcomposition by $\lhii\mtt$s.
\end{enumerate}
As a consequence, $\THM$s subsume the composition hierarchy of $\lhii\mtt$s, whose strictness we also show (\Cref{cor:lhimtt-strict}).

The proof of the third item relies on characteristics of $\lhii\mtt$s uncovered in Gallot et al.'s proof of \Cref{thm:mtt-lhi}, as discussed at the beginning. We relate these characteristics to a \enquote{narrow-visit} condition on $\THM$s. Narrow-visit implies $\mathsf{LHI}$, and a large part of the proof consists in showing:
$[\text{narrow-visit}\ \THM] \circ \THM = \THM$ (\Cref{thm:narrow-postcomp}).

For the first two items above, it is convenient to use our alternative characterization, due to its good compositional properties.

\subparagraph{$\lambda$-calculus characterization.}

We prove that $\THM$s are expressively equivalent to \emph{affine\footnote{In the $\lambda$-calculus, a linear (resp.\ affine) function uses its argument exactly (resp.\ at most) once. Affineness is closer to the \enquote{single use restrictions} in automata theory.} $\lambda$-transducers with additive branching}.
%a
This is a formalism in the style of linear higher-order tree transducers~\cite{LambdaTransducer,nguyen_et_al:LIPIcs.STACS.2025.68} or \enquote{implicit automata}~\cite{iatlc1,titoPhD,entics:14804}. Compared to these works, we extend the type system of these affine $\lambda$-transducers with the \emph{additive conjunction}~`$\with$' of linear logic, and use it\footnote{An idea originating in the work of Clairambault, Grellois and Murawski~\cite{LHORS}.} in the encoding of output trees to group the children of a node. Intuitively, this means that the affineness restrictions apply separately for each output branch (just like the bounded-visit restriction of our Hennie machines) rather than globally.

This idea is natural in the context of the previous work of Clairambault and Murawski~\cite{MAHORS}. They prove that two formalisms generating infinite trees are expressively equivalent: one based on affine $\lambda$-calculus and using additive branching, the other based on automata whose infinite memory is subject to a bounded-visit condition. Our own proof is a direct adaptation of theirs.
\begin{itemize}
  \item To compile $\lambda$-transducers into $\THM$s, we reuse a form of finite-memory geometry of interaction\footnote{\label{ftn:goi}The name \enquote{Geometry of Interaction} refers to a family of semantics (both operational and denotational) of linear logic and $\lambda$-calculi, closely related to game semantics.}  developed in~\cite{MAHORS}.
This involves a detour through yet another model of computation, based on communicating finite-state processes that we call \enquote{actors}.\footnote{They were originally called \enquote{transducers}~\cite[\S3.2]{MAHORS}, but we avoid this name here to avoid confusion. The name \enquote{actor} alludes both to the geometry of inter\emph{action} and to the more general \enquote{Actor model} of message-passing computation~\cite{Actors,ActorsLL,KosterCM16}.}
  \item The converse translation is more technically elementary, but of broader significance in the grand scheme of things: it gives us regularity reflection. 
        In a paper closely related to our work~\cite{ExpReg}, Colcombet, Lhote and Ohlmann leverage this result to prove that string-to-string $\msosi$ are regularity-reflecting --- with the side effect of settling an old open problem on automatic structures, cf.~\cite[\S1]{Lex}. One initial impetus for our work, actually, was to provide this piece of the proof of~\cite{ExpReg}. The use of \enquote{yield-Hennie    machines} in~\cite{ExpReg} led us to introduce and study tree-to-tree Hennie machines more generally, with an independent motivation coming from the line of work on $\mtt$s described previously.
\end{itemize}

\begin{remark}\label{rem:lambda-reg}
  Without the affineness constraint, simply typed \mbox{$\lambda$-transducers} subsume virtually all transducer models (for both strings and trees) from the literature, see e.g.~\cite[\S1.4.1]{titoPhD}.
  
  Regularity reflection for $\lambda$-transducers (cf.~\cite[\S5.3.1]{GallotPhD} or~\cite[Corollary~1.4.1]{titoPhD}\footnote{Corollary of Hillebrand and Kanellakis's~\cite[Theorem~3.4]{HillebrandKanellakis}, which is the origin of Nguy\~{\^e}n and Pradic's aforementioned \enquote{implicit automata} research programme.}) is a straightforward application of the naive set-theoretic semantics of the simply typed $\lambda$-calculus: if we interpret the base type as a finite set of states, then all higher-order types are also interpreted as finite sets, and $\lambda$-terms are interpreted as elements of these sets. This technically simple yet conceptually deep idea underlies several connections between automata theory and $\lambda$-calculus, from the use of finitary denotational semantics in higher-order model checking~\cite{Walukiewicz16,GrelloisPhD} to the theory of regular languages of $\lambda$-terms~\cite{Salvati15,HOParity,moreau:tel-05428993}. Let us note that Clairambault and Murawski's aforementioned~\cite{MAHORS} comes from a line of work motivated by higher-order model checking~\cite{LHORS,LiMO22}.
\end{remark}

\subparagraph{Plan of the paper.}

In \Cref{sec:notations} we fix some notations and recall some generalities on tree automata/transducers.
In \Cref{sec:hennie} we introduce unrestricted tree-to-tree Hennie machines, the bounded-visit restriction (defining $\THM$s) and the narrow-visit restriction, as well as some features extending $\THM$s. The three sections that follow are independent; we translate:
\begin{itemize}
  \item $\lshii\mtt$s to $\THM$s in \S\ref{sec:mtt};
  \item $\THM$s to $\mso$ set interpretations in \S\ref{sec:msosi};
  \item actor-based tree transducers to $\THM$s in \S\ref{sec:actors}.
\end{itemize}
In \S\ref{sec:lambda} we use actors to relate $\THM$s and $\lambda$-transducers, and deduce the closure by precomposition by $\mso$ transductions. In \S\ref{sec:postcomp} we look at postcomposition properties: we prove regularity reflection (via \mbox{$\lambda$-transducers}) and postcomposition by $\lhii\mtt$. Finally, in \S\ref{sec:lhimtt-comp} we study the $\lhii\mtt$ composition hierarchy.

\section{Notations \& preliminaries}\label{sec:notations}

A \emph{ranked set} is a set $\Sigma$ endowed with a function $\rank \colon \Sigma \rightarrow \N$. A finite ranked set is called a \emph{ranked alphabet}. We use the notations $\rankK{\Sigma}{k} = \{\sigma\in\Sigma \mid \rank(\sigma)=k\}$ and $\maxrank{\Sigma} = \max(\rank(\Sigma))$.
We use $\db{k}$ as a shorthand for the set $\{1,\ldots,k\}$.

A \emph{tree} $t$ over a ranked set $\Sigma$ is defined by:
\begin{itemize}
  \item a nonempty prefix-closed domain $\Dom(t) \subset \db{\maxrank{\Sigma}}^*$, whose elements are called the \emph{nodes} of $t$;
  \item a map $\lab_t \colon \Dom(t)\to \Sigma$, the \emph{labeling} of $t$, such that
        \[ \forall u\in\Dom(t),\quad \{i \in \N \mid ui \in \Dom(t) \} = \db{\rank(\lab_t(u))} \]
\end{itemize}
Thus, we also refer to $\rank(\lab_t(u))$ as the \emph{rank of the node $u$}.

We write $\Tree{\Sigma}$ for the set of trees over $\Sigma$, and when $\Sigma \cap Y = \varnothing$,
\[ \Tree{\Sigma}[Y] = \Tree{\Sigma \cup Y}\quad \text{with}\ \rank(y) = 0\; \forall y \in Y \]
In other words, the elements of $\Tree{\Sigma}[Y]$ are like trees over $\Sigma$ except that some leaves may have labels from $Y$.

The axioms entail that $\Dom(t)$ contains the empty sequence~$\varepsilon$; it is called the \emph{root} of $t$. The \emph{leaves} are the rank 0 nodes, or equivalently, the maximal nodes for the prefix ordering. This ordering is also called the \emph{ancestor} relation on nodes, and its converse is the \emph{descendant} relation.
For $u \in \Dom(t) \setminus \{\varepsilon\}$, we write $\parent{u}$ for the \emph{parent} of $u$, i.e.\ the unique node for which $\exists i \in \db{\maxrank{\Sigma}} : u = (\parent{u})i$.

We write $\dirs{k} = \{ \uparrow \} \cup\db{k}$ for the set of directions we can move in from a node of arity $k$ in a tree, where $ \uparrow $ represents the upward direction towards the parent. According to the previous notations, if $u \neq \varepsilon$ is a rank $k$ node of a tree $t$ and $d \in \dirs{k}$, then $ud$ defines a node in $t$ (one of the neighbors in $u$).

A \emph{partial branch} $b$ of $t$ is the downward closure of a node $s$ --- its \emph{endpoint} --- in $\Dom(t)$ with respect to the prefix order.
 A branch, or \textit{total branch} for disambiguation purposes, is a partial branch whose endpoint is a leaf.

The \emph{size} $|t|$ of a tree $t$ is the cardinality of its domain.
Its \emph{height} $\height(t)$ is the maximum size of its branches minus one.

For a node $s$ in a tree $t$, the \emph{subtree of $t$ rooted at $s$} is defined by:
\begin{align*}
  \Dom(\subtree{t}{s}) &= \{u \in \db{\maxrank{\Sigma}}^{*} \mid su \in \Dom(t)\}\\
  \lab_{\subtree{t}{s}}(u) &= \lab_{t}(su)
\end{align*}
The \emph{substitution of $s$ by $t'$ in $t$} is obtained from $t$ by replacing the subtree rooted at $s$ by $t'$ --- formally:
\begin{align*}
  \Dom(t[s \leftarrow t']) &= (\Dom(t)\setminus \{s\}\cdot [\maxrank{\Sigma}]^{*})\cup \{s\}\cdot \Dom(r) \\
  \lab_{t[s \leftarrow t']}(u) &= \begin{cases}
    \lab_{t'}(v) &\text{if}\ u=sv\ \text{for some}\ v\in\Dom(r) \\
    \lab_t(u) &\text{otherwise}
  \end{cases}
\end{align*}
In this paper, it is often (but not always) the case that in such a substitution, $s$ is a leaf; we then speak of a \emph{first-order substitution}.

For two families $\vec\sigma = (\sigma_{i})$ of distinct letters and $\vec{t}' = (t'_{i})$ of trees with the same indices, we denote by $t[\vec\sigma \leftarrow \vec t']$ or $t[\sigma_{i} \leftarrow t'_{i}\; \forall i]$ the simultaneous substitution of all nodes of $t$ with label $\sigma_{i}$ of $t$ by $t'_{i}$ (recall that this replaces the entire subtrees rooted at these nodes). In the case of a singleton family we write $t[\sigma \leftarrow t']$.

For $\sigma \in \Sigma$ of rank $k$ and $t_{1},\dots,t_{k} \in \Tree{\Sigma}$, we write $\sigma(t_{1},\dots,t_{k})$ for the unique tree whose root has label $\sigma$ and whose subtree rooted at the node $i$ is $t_{i}$ for all $i \in \db{k}$. Every tree can be uniquely decomposed this way; since $|t_{i}| < |\sigma(t_{1},\dots,t_{k})|$ this allows reasoning by \emph{structural induction}.

\subsection{A running example}\label{sec:running-example}

\newcommand\pbt{\mathrm{pbt}}

Let $\Sigma^\exa = \{a,b\}$ with $\rank(a) = 2$ and $\rank(b) = 0$. We define $\pbt \colon \N \to \Tree{\Sigma^\exa}$ (for \enquote{perfect binary tree}) by:
\[ \pbt(0) = b \qquad \pbt(n+1) = a(\pbt(n), \pbt(n))\ \text{for}\ n \in \N\]
and our running example $\exa \colon \Tree{\Sigma^\exa} \to \Tree{\Sigma^\exa}$ by
\[ \exa(t) = \pbt(|\Dom(t) \cap 1^* 2^*|) \]
In other words, $\height(\exa(t))$ counts the number of nodes of $t$ that are reached from the root by paths that cannot take a \enquote{turn} from going right to going left. This function has:
\begin{description}
  \item[exponential size increase:] on inputs $a(a(\ldots a(b,b) \cdots), b), b)$;
  \item[linear size-to-height increase:] $\height(\exa(t)) \leqslant |t|$;  
  \item[quadratic height increase:] the number of words in $1^* 2^*$ of length at most $n$ is $(n+1)(n+2)/2$.
\end{description}

\subsection{Classical tree automata}

  A \emph{deterministic bottom-up tree automaton} over $\Sigma$ consists of:
  \begin{itemize}
    \item a finite set $Q$ of states;
    \item for each $\sigma\in\Sigma$, a transition function $\delta[\sigma] \colon Q^{\rank(\sigma)} \to Q$.
  \end{itemize}
  We may identify $\delta[\sigma]$ with an element of $Q$ when $\rank(\sigma)=0$. As usual, this can be extended to $\delta[t] \in Q$ for every tree $t \in \Tree\Sigma$ by structural induction: $\delta[\sigma(t_{1},\dots,t_{k})] = \delta[\sigma](\delta[t_{1}],\dots,\delta[t_{k}])$.

We shall be concerned with two extensions of deterministic bottom-up tree automata with additional data:
\begin{itemize}
  \item A \emph{bottom-up recognizer} comes with a codomain $F$ and a map
      $\varphi\colon Q\shortrightarrow F$. It defines a \emph{recognizable function} $t \mapsto \varphi(\delta[t])$ from trees to $F$ (whose range is always finite even when $F$ is infinite). The languages whose indicator functions are recognizable (with $F=\{0,1\}$) form the well-known class of \emph{regular tree languages}.
  \item A \emph{bottom-up relabeler} comes with a rank-preserving map $\rho\colon \Sigma \times Q \to \Gamma$. It defines a shape-preserving tree-to-tree function, called a \emph{bottom-up relabeling}: for $k = \rank(\sigma)$,
  \begin{align*}
    \sem{\cR}\colon \Tree{\Sigma} &\to \Tree{\Gamma} \qquad\text{(where \(\cR\) is the relabeler)}\\
    \sigma(t_1,\dots,t_{k}) &\mapsto \rho(\sigma,\delta[\sigma(t_1,\dots,t_{k})])(\sem{\cR}(t_1),\dots,\sem{\cR}(t_{k}))
  \end{align*}
\end{itemize}

\subsection{Tree-generating machines}\label{sec:tree-generating}

We recall from~\cite[\S3]{nguyen_et_al:LIPIcs.STACS.2025.68} a lightweight formalism for describing various tree transducer models. A \emph{tree-generating machine} with output alphabet $\Gamma$ is made of:
\begin{itemize}
  \item a set $\Conf$ of \emph{configurations};
  \item an \emph{initial configuration} $C_{init}\in\Conf$;
  \item a (partial) \emph{computation-step function}
        $\Conf \rightharpoonup \Tree{\Gamma}[\Conf]$.
\end{itemize}
$\Tree{\Gamma}[\Conf]$ should be seen as a set of global states of a computation:
\begin{itemize}
  \item the nodes with labels in $\Gamma$ are part of the eventual output tree, which is being produced in a top-down fashion;
  \item the leaves with labels in $\Conf$ correspond to processes running in parallel, each in charge of producing one of the remaining subtrees of the output.
\end{itemize}
By taking rewrite rules that substitute a leaf by the image of its label by the computation-step function, we get a confluent rewriting system on $\Tree{\Gamma}[\Conf]$.
A \emph{run} is a rewriting sequence starting from $C_{init}$. By confluence, there is at most one tree in $\Tree\Gamma$ that can be reached by the runs (indeed, such trees are normal forms). If it exists, we call it the \emph{output} of the machine.

To define the \emph{semantics $\sem{\cT} \colon \Tree\Sigma \rightharpoonup \Tree\Gamma$ of a tree transducer $\cT$}, our methodology is to specify how to build a tree-generating machine from $\cT$ and a given input tree $t\in\Tree\Sigma$, depending on the transducer model. The output of this tree-generating machine is then used as the definition of $\sem{\cT}(t)$.

\subsection{A branchwise semantics}

It is often useful to reason on the trajectory of a single process producing one of the output branches, rather than the entire family of processes producing the output tree in parallel. This is why we propose an alternative presentation of the semantics of a tree-generating machine, which does not appear in~\cite{nguyen_et_al:LIPIcs.STACS.2025.68}.

Let $\Gamma$ be the ranked output alphabet.
We define the set of \emph{branch words} of a tree over $\Gamma$ by structural induction:
\begin{align*}
  \brawo(\gamma) &= \{\gamma\}\quad \text{when}\ \rank(\gamma) = 0\\
  \brawo(\gamma(t_{1},\dots,t_{k})) &= \bigcup_{i} (\gamma,i) \cdot \brawo(t_{i})\  \text{when}\ k \geqslant 1
\end{align*}
\begin{claim}\label{clm:brawo}
  For $t \in \Tree\Gamma$, $b = (\gamma_{1},i_{1}) \dots (\gamma_{n},i_{n}) \gamma_{n+1} \in \brawo(t)$:
  \begin{itemize}
    \item $\lab_{t}(i_{1} \dots i_{\ell}) = \gamma_{\ell+1}$ for all $\ell\in\{0,\dots,n\}$;
    \item mapping $b$ to $\{\varepsilon, i_{1}, \dots, (i_{1} \dots i_{n})\}$ defines a size-preserving bijection from $\brawo(t)$ to the total branches of $t$.
  \end{itemize}
  Consequently, $\brawo \colon \Tree{\Gamma} \to \mathcal{P}(\widetilde{\Gamma}^{*}\rankK{\Gamma}{0})$ is injective, where
\begin{align*}
  \widetilde\Gamma &= \{(\gamma,i) \mid \gamma \in \Gamma,\; i \in \{1,\dots,\rank(\gamma)\}\}
\end{align*}
\end{claim}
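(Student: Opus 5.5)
The plan is a structural induction on $t$, proving both items simultaneously and then deducing injectivity from them.

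\emph{First item.} If $t = \gamma$ with $\rank(\gamma)=0$, the only branch word is $\gamma$, so $n=0$. The item then says $\lab_t(\varepsilon)=\gamma$, which holds. If $t=\gamma(t_1,\dots,t_k)$ with $k\geqslant 1$, any $b\in\brawo(t)$ has the form $(\gamma,i_1)\cdot b'$ with $b'\in\brawo(t_{i_1})$. Hence $\gamma_1=\gamma=\lab_t(\varepsilon)$, which is the case $\ell=0$. For $\ell\geqslant 1$, the induction hypothesis applied to $b'$ in $t_{i_1}$ gives $\lab_{t_{i_1}}(i_2\dots i_\ell)=\gamma_{\ell+1}$. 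Since $\subtree{t}{i_1}=t_{i_1}$, this label equals $\lab_t(i_1 i_2\dots i_\ell)$.

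\emph{Second item.} The first step is to check that the image of $b$ is a total branch. By the first item, each prefix $i_1\dots i_\ell$ lies in $\Dom(t)$, so the image is the downward closure of $i_1\dots i_n$. Its endpoint has label $\gamma_{n+1}\in\rankK{\Gamma}{0}$, so it is a leaf. The image has $n+1$ elements, and so does the word $b$ if size is understood as the number of letters. Next I build the inverse map. A total branch with leaf endpoint $u=i_1\dots i_n$ is sent to the word read off the labels and directions along $u$. An induction using the defining clause of $\brawo$ shows this word lies in $\brawo(t)$, and the two maps are clearly mutually inverse. Equivalently, injectivity follows because the image of $b$ already records $i_1,\dots,i_n$, while the labels $\gamma_\ell$ are forced by the first item. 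Surjectivity follows because every leaf $u$ is reached by descending through the children given by the successive letters of $u$.

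\emph{Consequence.} The last step is to recover $t$ from $\brawo(t)$. The domain $\Dom(t)$ is the union of the total branches, since every node lies below some leaf; these branches are the images of the branch words, hence determined by $\brawo(t)$. For each node $v$, choose any branch word whose image contains $v$. By the first item, that word gives $\lab_t(v)$, and the label does not depend on the choice. So $t$ is determined, and $\brawo$ is injective.

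Nothing here is a genuine obstacle. The only points needing care are the bookkeeping of indices ($\gamma_{\ell+1}$ labels the node $i_1\dots i_\ell$) and fixing the convention that the size of a branch word is its length.
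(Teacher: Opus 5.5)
Your proof is correct. The paper states this claim without proof, and your structural induction on $t$ following the definition of $\brawo$, with the bijection and the injectivity of $\brawo$ read off from the first item, is the routine argument it leaves implicit; just make explicit that the first item includes $i_1\dots i_\ell\in\Dom(t)$, which your induction does carry along since $\Dom(\gamma(t_1,\dots,t_k))=\{\varepsilon\}\cup\bigcup_{i}i\cdot\Dom(t_i)$.
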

We therefore aim to characterize the output of a tree-generating machine (with the notations of the previous subsection) through its branch words.
To do so, we consider a labeled transition system over $\Conf \cup \rankK{\Gamma}{0}$ with labels in $\widetilde{\Gamma}^{*}$.
\begin{definition}
  Let $C \in \Conf$, $b \in \widetilde{\Gamma}^{*}$ and $x \in \Conf \cup \rankK{\Gamma}{0}$.
  \[ C \xrightarrow{b} x \overset{\text{def}}{\iff} bx \in \brawo(\text{computation-step}(C)) \]
\end{definition}
As expected, we define $x \tto[b] y$ by the existence of some decomposition $b = b_{1} \dots b_{n}$ that labels some \emph{path}
\[ x = x_{0} \xrightarrow{b_{1}} \dots \xrightarrow{b_{n}} x_{n} = y \]
This is an instance of labeled transition system \emph{with monoidal structure on labels}, see for instance~\cite[Example~2.3]{RelationalPresheaves}.
\begin{claim}\label{clm:branchwise-sem}
  If the output of the tree-generating machine exists, then its set of branch
  words is $\{b\gamma \mid b \in \widetilde{\Gamma}^*,
  \gamma\in\rankK{\Gamma}{0}, C_{init} \tto[b] \gamma \}$.
\end{claim}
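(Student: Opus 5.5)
We prove both inclusions between the set $B$ of branch words of the output $t_{out}$ and the set $S = \{b\gamma \mid C_{init} \tto[b] \gamma\}$, working with the rewriting system on $\Tree{\Gamma}[\Conf]$. The key invariant is a generalization of the statement to every intermediate global state. For a tree $T \in \Tree{\Gamma}[\Conf]$, we view $\brawo(T)$ as a set of words in $\widetilde\Gamma^*(\rankK{\Gamma}{0} \cup \Conf)$, since configurations are rank $0$ leaves. Define
\[ \mathrm{Br}(T) = \{ b b' \gamma \mid bx \in \brawo(T),\ x \tto[b'] \gamma,\ \gamma \in \rankK{\Gamma}{0} \}, \]
where $x \tto[\varepsilon] x$ covers the case $x \in \rankK{\Gamma}{0}$. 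We then have $\mathrm{Br}(C_{init}) = S$, because $\brawo(C_{init}) = \{C_{init}\}$. We also have $\mathrm{Br}(t_{out}) = \brawo(t_{out}) = B$, because a tree in $\Tree{\Gamma}$ has no configuration leaves and a leaf $\gamma$ admits only the empty path.

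The main step is to show that $\mathrm{Br}$ is invariant under one rewrite step. Suppose $T \to T'$ by replacing a leaf at node $s$ labelled $C$ with $U = \text{computation-step}(C)$. By an easy structural induction extending \Cref{clm:brawo}, $\brawo(T')$ is obtained from $\brawo(T)$ by replacing the single word $b_s C$, where $b_s$ is the branch prefix leading to $s$, with the words $\{b_s u \mid u \in \brawo(U)\}$. The other branch words are unchanged, since branch words correspond bijectively to leaves. It therefore suffices to show
\[ \{b'\gamma \mid C \tto[b'] \gamma\} = \{ b'' b''' \gamma \mid b''y \in \brawo(U),\ y \tto[b'''] \gamma \}. \]
This is the standard unfolding of the first step of a path. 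A nonempty path from $C$ starts with some $C \xrightarrow{b''} y$, which by definition means $b''y \in \brawo(U)$. The path cannot be empty, because $C \in \Conf$ is not an output leaf $\gamma$. Conversely, prepending such a step to a path gives a path from $C$. Note that if $C$ has no computation step, both sides are empty. This is consistent with the setting, since such a $C$ never gets rewritten in a run reaching $t_{out}$.

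Finally, if the output exists, there is a run $C_{init} = T_0 \to \dots \to T_n = t_{out}$. Invariance then gives $S = \mathrm{Br}(T_0) = \mathrm{Br}(T_n) = B$. The only delicate point is the bookkeeping in the substitution lemma for branch words: branch words of a first-order substitution at a leaf are obtained by splicing. This follows by induction on the depth of $s$ using the inductive definition of $\brawo$, so I expect no real obstacle. Confluence is not even needed beyond knowing that some run reaches $t_{out}$.
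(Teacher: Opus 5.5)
Your proof is correct. The paper does not prove this claim: it is stated without proof in \S2, and the appendix never returns to it. So there is no competing argument to compare against.

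Your invariant $\mathrm{Br}$ is the natural way to make the claim rigorous. It is preserved by each single-leaf rewrite for two reasons: $\brawo$ behaves by splicing under first-order substitution at a leaf, and the paths out of a configuration unfold through their first step.

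The facts you use implicitly all hold in the paper's setting:
\begin{itemize}
  \item $\Conf$ and $\rankK{\Gamma}{0}$ are disjoint, as the notation $\Tree{\Gamma}[\Conf]$ presupposes. This is what makes a path from a configuration to an output leaf nonempty, and what leaves a leaf in $\rankK{\Gamma}{0}$ with only the empty path.
  \item A branch word determines its leaf, by \Cref{clm:brawo}. So replacing the word of the rewritten leaf affects no other contribution to $\mathrm{Br}$.
\end{itemize}

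You are also right that confluence is only needed for \enquote{the output} to be well defined. The equality itself only requires one finite run ending in $\Tree{\Gamma}$.
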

In general, we refer to a path starting from $C_{init}$ (not necessarily ending on an element of $\rankK{\Gamma}{0}$) as a \emph{branch-outputting run}.

\section{Tree-to-tree Hennie machines}\label{sec:hennie}

% Drawing on the usual definition of tree-walking automata, and the notion of Hennie
% machines, we define tree-to-tree Hennie machines as tree-walking tree transducers that
% access to finite memory over each node of the input, and are restricted to a only
% a bounded number of visits of each input node.
% Tito: déjà dit dans l'intro ça

\subsection{Unrestricted THMs: definition \& semantics}\label{sec:uthm}

\begin{definition}\label{def:uthm}
    An \emph{unrestricted tree-to-tree Hennie machine} ($\uTHM$) is a tuple $ \cH = (Q, M, \top, \Sigma, \Gamma, q_{init}, \delta)$ where:
\begin{itemize}
    \item $Q$ is the finite set of \emph{states} of $\cH$;
    \item $M$ is the finite set of \emph{memory symbols} of $\cH$;
    \item $\top \in M$ is the initial memory value of $\cH$;
    \item $\Sigma$ (resp.\ $\Gamma$) is the ranked \emph{input} (resp.\ \emph{output}) alphabet;
    \item $q_{init} \in Q$ is the initial state of $\cH$;
  \item $\delta \colon Q\times\Sigma\times M \rightharpoonup \Tree{\Gamma}[Q\times M\times \dirs{\maxrank{\Sigma}}]$ is the (partial) \emph{transition function} of $\cH$, such that for all $\sigma \in \Sigma$,
        \[ \delta(Q \times \{\sigma\} \times M) \subseteq \Tree{\Gamma}[Q\times M\times \dirs{\rank(\sigma)}]  \]
\end{itemize}
\end{definition}
The semantics of this $\uTHM$ has been informally sketched in the introduction. To describe it rigorously, we use the formalism of tree-generating machines. We start with its \emph{configurations} over an input $t\in \Tree\Sigma$: they are tuples $(u,q,\mu)$ where
\begin{itemize}
  \item $u$ is a node of $t$, the current position;
  \item $q \in Q$ is the current state;
  \item $\mu \colon \Dom(t) \to M$ describes the current memory values.
\end{itemize}
For such a configuration and $(q',m,d) \in Q\times M\times \dirs{\maxrank{\Sigma}}$, let
\[ (u,q,\mu) \ogreaterthan (q',m,d) = (ud, q', \mu')\ \text{where}\
  \begin{cases}
    \mu'(u) = m\\
    \mu'(v) = \mu(v)\ \text{for}\ v \neq u
  \end{cases}
\]
 --- the machine \enquote{writes the symbol $m$ in memory at the current position $u$ before moving to the new position $ud$}.
The result of this operation is \emph{undefined} if $ud \notin \Dom(t)$. 

We extend the definition of $\ogreaterthan$ to $\Tree{\Gamma}[Q\times M\times \dirs{\maxrank{\Sigma}}]$ by application to the leaves; inductively: $C \ogreaterthan \gamma(t_{1},\dots,t_{k}) = \gamma(C \ogreaterthan t_{1},\; \dots,\; C \ogreaterthan t_{k})$.

Let $\Conf(\cH,t)$ be the set of configurations of the $\uTHM$ $\cH$ on the input tree $t$. The \emph{computation-step function} is
\begin{align*}
  \Conf(\cH,t) &\rightharpoonup \Tree{\Gamma}[\Conf(\cH,t)] \\
  (u,q,\mu) &\mapsto (u,q,\mu) \ogreaterthan \delta(q,\lab_{t}(u), \mu(u))
\end{align*}
Finally, the \emph{initial configuration} is $C_{init}(t) = (\varepsilon, q_{init}, (v \mapsto \top))$.

\begin{example}\label{ex:thm}
  We give a $\uTHM$ that computes $\exa$ from~\S\ref{sec:running-example}.
  \begin{itemize}
    \item There are 2 states, $q_0$ (initial) and $q_1$.
    \item There are 4 memory symbols $\top,\ttL,\ttR,\maltese$.
    \item The transitions are (undefined in the 2 unspecified cases): 
          \begin{align*}
            (q_0,a,\top) &\mapsto a((q_1,\ttL,2),(q_1,\ttL,2)) & (q_0,b,\top) &\mapsto a(b,b) \\ 
            (q_1,a,\top) &\mapsto a((q_1,\ttR,2),(q_1,\ttR,2)) & (q_1,a,\ttL) &\mapsto (q_0,\maltese,1) \\ 
            (q_1,b,\top) &\mapsto a((q_1,\maltese,\uparrow),(q_1,\maltese,\uparrow)) & (q_1,a,\ttR) &\mapsto (q_1,\maltese,\uparrow) 
          \end{align*}
  \end{itemize}
  On an input tree $t\in\Tree{\Sigma^\exa}$, all branch-outputting runs ending in $b$ traverse the same configurations in the same order; the only difference is in the labels ($(a,1)$ vs.\ $(a,2)$). Such a run starts by writing $\ttL$ to the root, and going down to the right child in state $q_1$. It then goes right, writing $\ttR$ in memory on all encountered nodes, until it reaches a leaf. It then starts going up while writing $\maltese$ on the visited nodes (including the leaf) until it reaches a node where the memory is $\ttL$. After having thus visited all nodes in $\Dom(t) \cap 2^*$, it then moves to the left child of the root, which has not yet been visited, and switches back to the initial state $q_0$. The computation then proceeds as it did before on the subtree rooted at the left child. 

  Note that the transitions only output a node $a$ when the current memory is $\top$, which ensures that each input node is counted at most once when determining the output height.
\end{example}

\subsection{The bounded-visit restriction}\label{sec:bounded-visit}

\begin{remark}
  A $\uTHM$ computing a string-to-boolean function (i.e.\ $\max(\rank(\Sigma)) = 1$ and $\Gamma = \{\mathtt{true} : 0,\; \mathtt{false} : 0\}$) is the same thing as a deterministic \emph{linear bounded automaton}. It is well known that this automaton model can recognize at least all context-free languages~\cite[Theorem~3]{Kuroda64} while its nondeterministic variant recognizes precisely the context-sensitive languages~\cite[Theorem~1]{Kuroda64}.
\end{remark}

In order to stay within the realm of \enquote{finite-state computability} we shall therefore impose the bounded-visit restriction. The latter can be conveniently defined by using the branchwise semantics.
\begin{definition}\label{def:bounded-visit}
  Let $\cH$ be a $\uTHM$ with input alphabet $\Sigma$.
  
  The \emph{number of visits} of a branch-outputting run of $\cH$ at a node $u$ (resp.\ a set of nodes $S$) of the input tree is the number of configurations in the path whose current position is $u$ (resp.\ is in $S$).

  $\cH$ is \emph{bounded-visit} on an input language $L \subseteq \Tree{\Sigma}$ whenever there is a uniform bound $N$ such that, for any $t \in L$, any node $u$ of $t$ and any branch-outputting run on the input $t$, the number of visits at $u$ is at most $N$.
  A \emph{tree-to-tree Hennie machine} ($\THM$) is a $\uTHM$ that is bounded-visit on all inputs ($L = \Tree\Sigma$).
\end{definition}
\Cref{ex:thm} is a $\THM$: it visits each node at most twice.

We can now establish linear size-to-height increase rigorously.
\begin{proof}[{Proof of Claim~\ref{clm:lshi}}]
  By definition of height \& Claims~\ref{clm:brawo} / \ref{clm:branchwise-sem},
  \begin{align*}
    \height(\sem{\cH}(t))
    &= \max\{|b| \mid \exists \gamma : b\gamma \in \brawo(\sem{\cH}(t))\}\\
    &= \max\{|b_{1} \dots b_{\ell}| \mid \exists \gamma : C_{init}(t)  \xrightarrow{b_{1}} \dots \xrightarrow{b_{\ell}} \gamma \}
  \end{align*}
  The labeled transition system $\to$ for $\cH$ has labels derived from the branch words of the range $\delta(Q\times\Sigma\times M)$ of the transition function of $\cH$ (to see this, unfold the definition of the computation-step function). Since this range is finite, $|b_{i}| = O(1)$ in the above description, so $|b_{1} \dots b_{\ell}| = O(\ell)$. Finally, $\ell = O(|t|)$ because any branch-outputting run of a $\THM$ visits each of the $|t|$ nodes $O(1)$ times --- this is the bounded-visit restriction.
\end{proof}

\begin{remark}
Unfortunately, it is undecidable whether a $\uTHM$ is a $\THM$, already in the string-to-boolean case~\cite[Theorem~1 + Lemma~1]{GuillonPPP23}. However, given an explicit bound, we have a decidability result (\Cref{cor:reg-domain}). 
\end{remark}

We also define the \emph{number of downwards visits} of a branch-outputting run of a $\uTHM$ at an input node $u$ by counting the occurrences of transitions of the form $(\parent{u},\dots) \to (u,\dots)$ (thus, if $u$ is the root, this number is zero). The following property guarantees that \enquote{bounded-downwards-visit} is the same thing as bounded-visit.
\begin{claim}
  The number of visits at a node is at most the sum of the numbers of downwards visits at this node and at its children.
\end{claim}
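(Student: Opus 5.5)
The plan is a charging argument along the configuration sequence of a fixed branch-outputting run. Fix such a run $C_{init}(t) = x_0 \xrightarrow{b_1} x_1 \xrightarrow{b_2} \cdots \xrightarrow{b_n} x_n$ and a node $u$. We list the indices $j$ with $x_j$ a configuration at position $u$. Each such visit is injectively mapped to a downwards-visit transition into $u$ or into one of its children; this yields the claimed bound.

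First we record the shape of each step. By unfolding the computation-step function and the operation $\ogreaterthan$, if $x_{j} \xrightarrow{b} x_{j+1}$ with $x_{j+1}$ a configuration, then $x_{j+1}$ has position $u'd$ where $u'$ is the position of $x_j$ and $d \in \dirs{k}$. So the position moves to the parent or to a child at each step. Consequently, for each $j > 0$ such that $x_j$ is at position $u$, the step $x_{j-1} \to x_j$ is one of two kinds:
\begin{itemize}
\item a move from $\parent{u}$ down to $u$, which is a downwards visit at $u$;
\item a move from some child $ui$ up to $u$.
\end{itemize}
The only visit without a predecessor is $j=0$, which happens only when $u = \varepsilon$.

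Second, we charge the upward arrivals. A visit at $u$ reached by an upward move from $ui$ is charged to the \emph{last} downwards transition $(u,\dots)\to(ui,\dots)$ occurring before index $j$. This transition exists because the run sits at $ui$ at time $j-1$ and (for $ui \neq \varepsilon$) must have entered the subtree of $ui$ from $u$. More precisely, since moves are between neighbours, the last time before $j-1$ that the position crossed the edge $\{u,ui\}$ it went downward. Injectivity holds because between that downward crossing and the upward move at $j$ the run stays in the subtree of $ui$. Hence two distinct upward arrivals from $ui$ are separated by a new downward crossing. Downward arrivals at $u$ are charged to themselves.

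Summing over the two kinds of arrivals gives the bound, with the caveat of the initial visit at the root. For $u = \varepsilon$ there are no downward visits, so the initial configuration at $j=0$ must be absorbed separately. The intended reading is presumably ``at most the sum, plus one'', or the sum taken over configurations that are entered by a transition. I expect this off-by-one at the root to be the main (if minor) obstacle. I will handle it by stating the inequality with a $+1$ for the root, or equivalently by noting that the final statement only matters up to a constant for the bounded-visit equivalence. The only other delicate point is checking that output leaves (elements of $\rankK{\Gamma}{0}$) never occur mid-path, since they have no successors. This means every intermediate $x_j$ is indeed a configuration.
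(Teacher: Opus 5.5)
Your proof is correct and follows the same argument as the paper, which simply observes that moves are local, so every upward arrival at $u$ from $ui$ is preceded by a downward visit of $ui$ from $u$. Your version is more careful, making the injectivity explicit via the last crossing of the edge $\{u,ui\}$. You are also right about the root: the initial configuration $C_{init}$ is a visit matched by no downward visit, so the inequality as literally stated fails there (e.g.\ for a run that outputs a leaf directly from $C_{init}$). It needs a $+1$, which the paper's one-line proof overlooks but which is harmless for the bounded-visit equivalence it is used for.
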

\begin{proof}
  Along a branch-outputting run, the position of a $\uTHM$ moves locally, therefore every upwards visit of $u$ from $ui$ must have been preceded by a downwards visit of $ui$ from $u$. 
\end{proof}

\subsection{Weight-reducing $\THM$s \& totality}

The bounded-visit restriction is defined semantically. We present here a syntactic counterpart, taken from \cite{GuillonPPP23} for Hennie machine as acceptors of string languages.

\begin{definition}\label{def:weight-reducing}
  A $\uTHM$ with the set $M$ of memory symbols and the transition function $\delta$ is \emph{weight-reducing} if there exists a partial order on $M$ such that
  \begin{itemize}
    \item the initial symbol $\top$ is maximal;
    \item for every $m \in M$, only memory symbols that are strictly smaller than $m$ appear in $\delta(\dots,\dots, m)$. 
  \end{itemize}
\end{definition}
For instance, \Cref{ex:thm} is weight-reducing ($\top > {\ttL,\ttR} > \maltese$).
The following result may be compared with~\cite[Lemma~2]{GuillonPPP23}.
\begin{proposition}\label{prop:weight-reducing}
  Every weight-reducing $\uTHM$ is a $\THM$.

  Conversely, given a $\uTHM$ $\cH$ and $N \in \N$, one can compute a weight-reducing $\THM$ $\cH'$ such that $\sem{\cH'}$ is a \emph{total function} and:
  \begin{itemize}
    \item for $t$ in the domain of $\sem{\cH}$ s.t.\ all branch-outputting runs of $\cH$ visit at most $N$ times each node, $\sem{\cH'}(t) =\sem{\cH}(t)$;
    \item on other inputs, $\cH'$ returns an output that contains some chosen default leaf symbol. 
  \end{itemize}
  In particular, if $\cH$ is a $\THM$ whose number of visits is uniformly bounded by $N$, then $\sem{\cH} \subseteq \sem{\cH'}$.
\end{proposition}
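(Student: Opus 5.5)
For the first half, I would bound the number of visits at a node using the claim that visits are at most the sum of downward visits at the node and at its children. So it suffices to bound downward visits at each node along any branch-outputting run. A step out of a node $u$ is a transition $\delta(q,\sigma,m)$ with $m=\mu(u)$ the current memory. That transition writes a memory symbol $m'$ at $u$, and $m'$ must appear in $\delta(\dots,\dots,m)$. Weight-reduction then gives $m' < m$. Along a branch-outputting run, only configurations on that one path matter: each configuration visiting $u$ is followed (if not the last) by a step that strictly decreases the memory at $u$. Memory at $u$ changes only when the run is at $u$.

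Hence the sequence of memory values at $u$, read at successive visits, is a strictly decreasing chain in the finite poset $M$ starting from $\top$. So the number of visits at $u$ is at most $|M|$, plus one for a possible final configuration. This gives a uniform bound $N=|M|+1$ directly, and the claim on downward visits is not even needed.

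For the converse, given $\cH$ and $N$, I would build $\cH'$ with memory $M' = M \times \{0,\dots,N\} \cup \{\text{error markers}\}$. A memory value $(m,c)$ records the simulated memory $m$ together with the number $c$ of visits already made at this node. The order is $(m,c) > (m',c')$ iff $c<c'$, with $\top' = (\top,0)$ maximal. On a transition from $(q,\sigma,(m,c))$ with $c<N$, $\cH'$ outputs $\delta(q,\sigma,m)$ with each leaf $(q',m',d)$ replaced by $(q',(m',c+1),d)$. This output is strictly smaller, so $\cH'$ is weight-reducing. If $c=N$, the simulation would exceed the bound, so $\cH'$ instead outputs the default leaf symbol. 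Where $\delta$ is undefined, $\cH'$ also outputs the default leaf. Likewise, a move in a direction $d$ with $ud\notin\Dom(t)$ (e.g.\ $\uparrow$ at the root) is undefined in $\cH$; this is detectable only if $\cH'$ knows whether it is at the root, a limitation I address in the next paragraph. Totality then follows from weight-reduction: every branch-outputting run is finite, because visits are bounded, so a non-terminating process is impossible. Each process step either emits output or stops at a leaf, and since $\delta'$ is total, every run reaches a normal form.

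The main obstacle is making $\delta'$ genuinely total, which requires handling moves that leave the tree. I would fix this by also storing in memory a bit "is this the root", or better, by having $\cH'$ remember it in its state. Going up from the root happens only at $\varepsilon$. I would enrich $Q'=Q\times\{\mathrm{root},\mathrm{nonroot}\}$, with the flag updated deterministically on each move: moving down always yields nonroot. Moving up yields root exactly when the parent is the root, which is unknown locally. I would therefore store a root flag in memory, initialised lazily: the initial configuration uses a distinct initial state that marks $\varepsilon$ as root on first visit. Memory at other nodes keeps the default flag nonroot, which requires the first write to distinguish the cases, and this is fine.

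Finally, I would check correctness. If all runs of $\cH$ on $t$ visit each node at most $N$ times and $t\in\Dom\sem{\cH}$, then no error branch is triggered, and by the branchwise semantics (\Cref{clm:branchwise-sem}) the branch words coincide. Otherwise, some branch-outputting run either exceeds $N$ visits or hits an undefined step; $\cH'$ detects this at that point and emits the default leaf. A subtle case is $t\notin\Dom\sem{\cH}$ through an infinite run. Such a run would exceed $N$ visits somewhere, since $t$ is finite, so it too yields the default leaf.
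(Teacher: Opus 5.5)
Your proposal is correct and follows the paper's proof. The paper also stores a per-node visit counter in memory (it counts down from $N$ where you count up, with the weight order read off the counter), outputs the default leaf when the counter is exhausted or $\delta$ is undefined, and writes a special root marker at the very first transition to block upward moves from the root. Your direct argument for the first half is sound: the memory at a node strictly decreases at each visit, so the number of visits is bounded by the height of the poset $M$. The paper's appendix leaves this part implicit.
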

Speaking of totality, the semantics of a $\THM$ is naturally a \emph{partial} function, but we focus on the case when the function happens to be total. This restriction is not very significant: as we explain below (\Cref{cor:reg-domain}), the domain of a partial $\THM$ is effectively regular.

%\begin{remark}\label{rem:total-root}
%We could also syntactically enforce totality as follows: take the data of a weight-reducing $\THM$ with total transition function, and add a total \enquote{transition function at the root} such that
%\[ \delta_{\mathrm{root}}(Q \times \{\sigma\} \times M) \subseteq \Tree{\Gamma}[Q\times M\times \db{\rank(\sigma)}] \]
%(note that $\uparrow \notin \db{\rank(\sigma)}$) in order to handle the special case
%\[ (\varepsilon,q,\mu) \mapsto (u,q,\mu) \ogreaterthan \delta_{\mathrm{root}}(q,\lab_{t}(\varepsilon), \mu(\varepsilon)) \]
%\end{remark}

\subsection{The narrow-visit restriction}

Recall that a \emph{chain} in a partially ordered set (poset) is a totally ordered subset, while an \emph{antichain} is a subset of pairwise incomparable elements. When we say that a set of nodes in a tree is an antichain, we mean with respect to the ancestor ordering.
\begin{definition}
  A $\uTHM$ $\cH$ is \emph{narrow-visit} on an input language $L$ whenever there is a uniform bound $N$ such that, for any $t \in L$, \emph{any antichain $S$ of $t$} and any branch-outputting run on the input $t$, the number of visits at $S$ is at most $N$.
\end{definition}
We rephrase the narrow-visit property by combining a basic observation with a classical duality theorem (which is closely related to other dualities in combinatorial optimization, cf.~\cite{Fulkerson1956}).
\begin{theorem}[Dilworth~{\cite{Dilworth1950}}]
  A poset is included in the union of $N<\infty$ chains iff all its antichains have cardinality at most $N$.
\end{theorem}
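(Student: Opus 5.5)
The statement to prove is Dilworth's theorem: a poset $P$ is covered by $N$ chains iff every antichain has cardinality at most $N$.

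The easy direction comes first. If $P$ is contained in a union of $N$ chains, then each chain meets a given antichain in at most one element, since two distinct elements of a chain are comparable. So every antichain has at most $N$ elements.

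For the converse, I would first treat finite $P$, which is the case relevant to the paper since the posets are node sets of finite trees. I would use the standard reduction to bipartite matching.
\begin{itemize}
  \item Build a bipartite graph with left copy $P^-$ and right copy $P^+$, and an edge $x^- y^+$ whenever $x < y$.
  \item A matching $E$ of size $k$ yields a partition of $P$ into $|P| - k$ chains: each $x$ is linked to its matched successor, and the resulting paths are chains by transitivity.
  \item By K\H{o}nig's theorem, a maximum matching has size equal to a minimum vertex cover $C$.
  \item Let $A = \{x \mid x^- \notin C \text{ and } x^+ \notin C\}$. Then $|A| \geq |P| - |C|$. Moreover $A$ is an antichain, because any relation $x < y$ within $A$ would give an uncovered edge $x^- y^+$.
\end{itemize}
Hence the minimum number of chains is at most $|P| - |C| \leq |A| \leq N$.

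Alternatively, I could run Perles' short induction on $|P|$. If some maximum antichain $A$ is neither the set of minimal nor the set of maximal elements, split $P$ into $P^{\uparrow} = \{x \mid \exists a \in A,\ x \geq a\}$ and $P^{\downarrow}$, both strictly smaller. Apply induction to each part and glue the chains at the elements of $A$. Otherwise, remove a comparable pair consisting of a minimal and a maximal element, which lowers the width by one, and induct.

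If $P$ is infinite, which the statement formally allows, I would derive the result from the finite case by compactness. Each finite subposet has an $N$-chain cover, i.e.\ a map to $\db{N}$ with comparable fibers. Such maps form a closed condition in the compact product $\db{N}^{P}$ (Tychonoff), since each constraint involves only two elements. A map satisfying all finite constraints simultaneously exists by the finite intersection property.

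The main obstacle is the converse direction for finite $P$, specifically the correctness of either the K\H{o}nig-based reduction or the case split in the induction. Since the paper cites Dilworth's theorem as classical, I would simply invoke~\cite{Dilworth1950} and not reprove it.
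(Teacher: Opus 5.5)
Your proposal is correct. The easy direction, the K\H{o}nig-based reduction (Fulkerson's proof, i.e.\ the combinatorial-optimization duality the paper points to via~\cite{Fulkerson1956}), Perles' induction and the compactness extension to infinite posets are all sound. This matches the paper, which does not reprove the theorem but simply cites~\cite{Dilworth1950}; there, only the finite case is needed, since it is applied to sets of nodes of finite trees in \Cref{cor:narrow-branches}.
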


\begin{claim}
  For a set of tree nodes, the following are equivalent:
  \begin{itemize}
    \item it is a chain (for the ancestor ordering);
    \item it is included in some branch of the tree.
  \end{itemize}
\end{claim}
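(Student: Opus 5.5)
Both directions are unfolded directly from the definitions: a branch is the downward closure of a leaf for the prefix order, which is total on the prefixes of a single word.

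\emph{Second item implies first.} Suppose the set $S$ is included in a total branch $b$, the downward closure of a leaf $s$. Every element of $b$ is a prefix of $s$. For any two prefixes $u, v$ of the same word $s$, one is a prefix of the other: compare their lengths, and the shorter one is a prefix of the longer. Hence $b$ is totally ordered by the ancestor relation, and so is its subset $S$.

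\emph{First item implies second.} Let $S$ be a chain.
\begin{itemize}
  \item If $S$ is empty, it is included in any branch, and branches exist: start from the root and repeatedly move to child~$1$. Since the domain is finite, this reaches a leaf.
  \item Otherwise $S$ is a finite, nonempty, totally ordered set (the domain of a tree is finite). It therefore has a maximum $u$, its deepest element. Every element of $S$ is an ancestor of $u$, i.e.\ a prefix of $u$.
\end{itemize}
Extend $u$ to a leaf $s$ by the same descent: from $u$, repeatedly go to child $1$ while the current node has positive rank. This terminates because the domain is finite. The branch with endpoint $s$ is the set of all prefixes of $s$ lying in $\Dom(t)$. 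This set contains all prefixes of $u$ (by prefix-closedness of the domain), and hence contains $S$.

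There is no real obstacle here. The only points needing care are the finiteness of the domain, which guarantees both that a maximum exists and that the descent terminates, and the convention that a branch is determined by a leaf endpoint.
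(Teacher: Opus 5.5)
Your proof is correct: both directions are the routine unfolding of the definitions. Prefixes of a single word are pairwise comparable. A nonempty chain in the finite domain has a deepest element, which you extend to a leaf by descending through first children; the labeling axiom guarantees child $1$ exists at every node of positive rank. The paper states this claim without proof, as a basic observation used alongside Dilworth's theorem, and your argument is the one it implicitly relies on.
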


\begin{corollary}\label{cor:narrow-branches}
  An $\uTHM$ $\cH$ is narrow-visit on an input language $L$ if and only if the two following conditions hold jointly:
  \begin{itemize}
    \item $\cH$ is bounded-visit on $L$;
    \item for every branch-outputting run whose input tree is in $L$, the set of visited nodes (i.e.\ those with a nonzero number of visits) is included in the union of $O(1)$ input branches.
  \end{itemize}
\end{corollary}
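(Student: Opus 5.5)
We prove the two directions of \Cref{cor:narrow-branches} separately. Throughout, we fix a branch-outputting run and consider the finite poset $V$ of nodes it visits, ordered by the ancestor relation. The two preceding results turn conditions on antichains into conditions on branches: by the Claim, chains of $V$ are exactly subsets of input branches, so Dilworth's theorem says that $V$ has no antichain of size more than $N$ if and only if $V$ is covered by at most $N$ input branches.

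\emph{($\Rightarrow$)} Assume $\cH$ is narrow-visit on $L$ with bound $N$.
\begin{itemize}
  \item \textbf{Bounded-visit.} A singleton $\{u\}$ is an antichain, so the number of visits at $u$ is at most $N$.
  \item \textbf{Covering by branches.} Let $S \subseteq V$ be any antichain. Each node of $S$ is visited at least once, and $S$ itself is an antichain, so $|S|$ is at most the number of visits at $S$, which is at most $N$. Hence every antichain of $V$ has cardinality at most $N$. Dilworth's theorem then gives a cover of $V$ by at most $N$ chains, and by the Claim each chain lies inside some input branch. So $V$ is included in the union of $N$ branches, uniformly over all runs on inputs in $L$.
\end{itemize}

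\emph{($\Leftarrow$)} Assume each node is visited at most $N_1$ times, and every such $V$ is covered by at most $N_2$ branches. Take any antichain $S$ of $t$. Only the nodes of $S\cap V$ contribute visits. The set $S \cap V$ is an antichain contained in a union of $N_2$ branches. A branch is a chain, so it meets an antichain in at most one node; thus $|S\cap V| \le N_2$. The number of visits at $S$ is therefore at most $N_1 N_2$, a uniform bound, and $\cH$ is narrow-visit on $L$.

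The only point needing care is the reading of \enquote{$O(1)$ input branches} in the statement. It must mean a bound that is uniform over all runs and all inputs in $L$, and it is this reading that makes the constants $N$ and $N_1 N_2$ above independent of $t$. Note also that Dilworth's theorem is applied only to finite posets, since $V \subseteq \Dom(t)$ is finite.
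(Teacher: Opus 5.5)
Your proof is correct and is the argument the paper intends. The paper gives no separate proof: it presents the corollary as an immediate consequence of Dilworth's theorem together with the claim that chains are exactly the subsets of branches. You spell out that derivation faithfully, using singleton antichains for bounded-visit, Dilworth on the finite visited set for the covering, and the observation that a branch meets an antichain in at most one node for the converse.
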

\begin{corollary}\label{cor:NarrowLHI}
  If a $\uTHM$ is narrow-visit on all inputs, then it is a $\THM$ and it has linear height increase.
\end{corollary}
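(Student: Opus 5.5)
The statement is Corollary~\ref{cor:NarrowLHI}. Both halves should follow from Corollary~\ref{cor:narrow-branches} together with the height computation already carried out in the proof of Claim~\ref{clm:lshi}.

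\textbf{Step 1: narrow-visit implies bounded-visit.} Suppose $\cH$ is narrow-visit on all inputs, with bound $N$. Any single node $u$ forms the singleton antichain $\{u\}$. Hence every branch-outputting run visits $u$ at most $N$ times, uniformly over inputs $t$ and runs. This is exactly Definition~\ref{def:bounded-visit} with $L=\Tree\Sigma$, so $\cH$ is a $\THM$. Equivalently, this is the first condition of Corollary~\ref{cor:narrow-branches}.

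\textbf{Step 2: linear height increase.} We reuse the first part of the proof of Claim~\ref{clm:lshi}:
\[ \height(\sem{\cH}(t)) = \max\{|b_1\dots b_\ell| \mid \exists\gamma : C_{init}(t)\xrightarrow{b_1}\dots\xrightarrow{b_\ell}\gamma\}. \]
Each $|b_i|=O(1)$, since the labels come from branch words of the finite range of $\delta$. It therefore suffices to show that $\ell=O(\height(t))$.

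Each step $\xrightarrow{b_i}$ leaves from a configuration, so $\ell$ is at most the number of configurations on the path, i.e.\ the total number of visits of the run. By the second condition of Corollary~\ref{cor:narrow-branches}, the visited nodes lie in a union of at most $K$ input branches, where $K$ is a uniform constant obtained via Dilworth's theorem with $K=N$. Each branch has at most $\height(t)+1$ nodes, so at most $K(\height(t)+1)$ distinct nodes are visited. By Step 1, each of them is visited at most $N$ times. Hence
\[ \ell \leqslant NK(\height(t)+1), \]
which gives $\height(\sem{\cH}(t)) = O(\height(t))$.

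\textbf{Expected obstacle.} The only delicate point is checking that the constant in Corollary~\ref{cor:narrow-branches} is uniform in $t$ and in the run. I would settle this directly rather than rely on the $O(1)$ in that statement. The visited set of a run is a finite poset under the ancestor ordering. Each of its antichains $S$ receives at least $|S|$ visits, and receives at most $N$ by narrow-visit, so every antichain has cardinality at most $N$. Dilworth's theorem then covers the visited set by $N$ chains, each contained in a branch. Note also that partial runs which do not end on a leaf do not affect the height, so restricting attention to paths ending in some $\gamma\in\rankK{\Gamma}{0}$ is harmless.
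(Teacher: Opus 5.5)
Your proposal is correct and follows essentially the same route as the paper. Bounded-visit comes from narrow-visit (the first item of \Cref{cor:narrow-branches}), and linear height increase comes from the estimate in the proof of Claim~\ref{clm:lshi}, with each run visiting $O(1)$ branches of at most $\height(t)+1$ nodes, each node visited $O(1)$ times; your explicit Dilworth argument yielding the uniform bound $N^2(\height(t)+1)$ just unpacks the $O(1)$ constants that the paper leaves implicit in \Cref{cor:narrow-branches}.
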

\begin{proof}
  As we have seen when proving Claim~\ref{clm:lshi}, the output height is at most linear in the length of branch-outputting runs. On an input $t$, such a run visits $O(1)$ branches, each of which has at most $\height(t)+1$ nodes, and each node is visited $O(1)$ times.
\end{proof}

\subsection{Basic extensions}\label{sec:basic-extensions}

To conclude \Cref{sec:hennie}, we present three additional features that can be added to $\uTHM$s and $\THM$s. We show that these extensions are conservative, i.e.\ do not lead to an increase in expressive power.

\subparagraph{Stay-instructions}

The positions of two successive configurations of an $\uTHM$ are always distinct neighbors. We extend the syntax of transitions to allow staying on the current position; this is often convenient in our constructions of $\THM$s. Stay-instructions are indicated by a new symbol $\circlearrowleft$, and relax the requirement on the transition function $\delta$ to be
\[ \delta(Q \times \{\sigma\} \times M) \subseteq \Tree{\Gamma}[Q\times M\times (\{\circlearrowright\}\cup\dirs{\rank(\sigma)})] \]
The semantics is extended by taking $u{\circlearrowleft} = u$ for any input node $u$.

\begin{lemma}\label{lem:stay}
  Every $\uTHM$ with stay-instructions can be effectively translated to an equivalent $\uTHM$. This translation preserves the bounded-visit and narrow-visit restrictions on any input language.
\end{lemma}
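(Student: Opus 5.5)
The translation simulates each stay-step by a short up-and-back (or down-and-back) excursion to a neighbour, stores the intermediate state in the memory of the current node, and checks that this adds only a constant factor to the number of visits. A global pre-processing to collapse stay-chains is not possible, since the outcome of a chain of stays depends on memory that earlier moves may have rewritten. So the simulation has to be local.

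Let $\cH$ have states $Q$, memory $M$ and transitions $\delta$ with stay-instructions. The new machine $\cH'$ replaces a leaf $(q',m,\circlearrowleft)$ of a transition right-hand side by a round trip to a neighbour. Since the machine cannot know in advance whether a neighbour exists, it first needs some bookkeeping.

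Whether the current node is the root cannot be read locally. To handle this, $\cH'$ carries in its state a flag $r\in\{\mathrm{root},\mathrm{nonroot}\}$. It is initially $\mathrm{root}$, becomes $\mathrm{nonroot}$ on a downward move, and on an upward move from child $i$ it is recomputed. This cannot be recovered from the state alone, so the memory is extended instead: every node stores whether it is the root. The root writes this at its first visit, children learn "nonroot" when first entered from above, and this bit is rewritten alongside every memory write. Concretely, the memory alphabet becomes $M\times\{\mathrm{root},\mathrm{nonroot}\}\times(\{\bot\}\cup Q)$. The last component is a "pending state" slot.

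A stay $(q',m,\circlearrowleft)$ at node $u$ is simulated as follows.
\begin{itemize}
  \item If $u$ has a child, which is known from the rank of the label, $\cH'$ writes $(m,r,q')$ at $u$, moves to child $1$ in an auxiliary state $\mathrm{back}$, and there makes a transition that rewrites the child's memory to itself and moves $\uparrow$ in a state $\mathrm{resume}$.
  \item If $u$ is a leaf and not the root, it does the same with the parent, returning via the direction $i$. Here $i$ is the last letter of $u$, which $\cH'$ must remember; this index is stored in the memory of $u$ when it is first entered from above.
  \item If $u$ is a leaf and the root, the input is a single node. In that case the run of $\cH$ never leaves $u$, and the stay can be resolved by iterating $\delta$ finitely. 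A stay-loop means undefinedness, detectable since $Q\times M$ is finite.
\end{itemize}
In state $\mathrm{resume}$ at $u$, $\cH'$ reads the pending state $q'$ from memory, clears the slot to $\bot$, and acts as $\delta(q',\lab(u),m)$. Chains of stays are simply repeated round trips. The output-tree structure is unaffected because auxiliary transitions output nothing (a single leaf configuration).

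Correctness follows from checking that the branch-outputting runs of $\cH'$ project, after erasing auxiliary configurations, exactly onto those of $\cH$. Memory contents agree modulo the extra components, so the same branch words are produced, and Claim~\ref{clm:branchwise-sem} gives equal outputs.

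For the visit bounds, each configuration of $\cH$ at $u$ yields at most $2$ configurations of $\cH'$ at $u$ and $1$ at a neighbour of $u$. The neighbour is a child of $u$ or its parent, hence it lies on the same branch, which preserves the narrow-visit condition via Corollary~\ref{cor:narrow-branches}. The visits of $\cH'$ at $v$ are therefore bounded by $2\times$ the visits of $\cH$ at $v$, plus the visits of $\cH$ at the $\leq \maxrank{\Sigma}+1$ neighbours of $v$, i.e.\ $O(N)$. For antichains $S$, the neighbour visits can be charged to visited nodes of $\cH$ lying in $O(1)$ branches, so they stay within $O(1)$ branches; alternatively, one can use the visit count directly, since each step is charged to the unique node it simulates.

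The main obstacle I expect is the root/leaf detection and remembering the return direction; storing these bits in per-node memory is the cleanest fix. Everything else is routine.
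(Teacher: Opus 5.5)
Your bounded-visit argument is fine, but your translation does \emph{not} preserve the narrow-visit condition, because of where you send the round trip. Take $a$ binary and $b$ a leaf. Let $\cH$, at each $a$-node with memory $\top$, write some $m_1$ and stay, and then with memory $m_1$ move to child $2$. This $\cH$ only ever visits the rightmost branch, at most twice per node, so it is narrow-visit. On a right comb of height $n$, your $\cH'$ makes an excursion from each $2^k$ to its child $2^k1$, so it visits the antichain $\{2^k1 \mid k<n\}$ a total of $n=\height(t)$ times.

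The step that fails is \enquote{the neighbour is a child of $u$ or its parent, hence it lies on the same branch}. A child $u1$ need not lie on any of the $O(1)$ branches that cover the nodes visited by $\cH$. In your charging argument, the parents of an antichain $S$ may form a chain, so the number of stays performed there is not controlled by the narrow-visit bound of $\cH$.

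A repair within your approach is to make the excursion to the parent whenever one exists, and to child $1$ only at the root. The children of an antichain form an antichain, so the extra visits at $S$ are then bounded by the visits of $\cH$ at the children of $S$, plus at most $N$ for the node $1$.

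More fundamentally, the premise of your first paragraph is false, and the paper's proof is exactly the static pre-processing you rule out. A stay does not move the head, so the next transition is $\delta(q',\lab(u),m)$, where $m$ is the value just written at $u$. No other node's memory is involved, so the whole stay-chain is determined by $(q,\sigma,m_0)$ alone.

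The paper therefore replaces each stay leaf $(q',m,\circlearrowleft)$ in $\delta(q,\sigma,m_0)$ by $\delta(q',\sigma,m)$ and iterates. Along each branch of this unfolding, either the pairs in $Q\times M$ never repeat, so the unfolding stops, or a pair repeats. A repeat signals an infinite stay-loop, and the rule can then be made undefined without changing the partial function computed.

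This simply deletes the stay configurations from every branch-outputting run. The number of visits at any node or antichain can only decrease, so both preservation claims are immediate. It also needs no root or leaf detection, no return indices and no pending-state slot. Your own single-node case already uses this very unfolding.
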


\subparagraph{Bottom-up initialization.} Our second conservativity result can be understood as a \enquote{closure under precomposition} property.

\begin{lemma}\label{lem:bottom-up-rel}
  Given a $\uTHM$ $\cH$ and a bottom-up relabeler $\cR$ (cf.~\S\ref{sec:notations}), whose respective input and output alphabets match, one can effectively construct a $\uTHM$ $\cH'$ such that $\sem{\cH'}=\sem{\cH}\circ\sem{\cR}$.

  If $\cH$ is bounded-visit on $\sem{\cR}(L)$, then $\cH'$ is bounded-visit on $L$.
\end{lemma}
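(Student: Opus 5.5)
The idea is to let $\cH'$ run a \emph{preprocessing phase} that computes, in the memory cells, the states of the bottom-up automaton of $\cR$ at every node. Then $\cH'$ simulates $\cH$, reading $\rho(\lab_t(u),q_u)$ in place of the original label. Write $\cR$ as $(Q_\cR,\delta_\cR,\rho)$. The memory of $\cH'$ will be $M' = (M\times Q_\cR)\cup\{\top'\}\cup(\text{auxiliary symbols})$, with the new initial symbol $\top'$ meaning \enquote{not yet processed}. Stay-instructions are allowed, by \Cref{lem:stay}.

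\textbf{Preprocessing.} Starting at the root in an initial phase, $\cH'$ performs a depth-first traversal of the input, as a tree-walking automaton does. At a node with memory $\top'$ it descends to the first child. When it returns from child $i$ to a node $u$, it must remember the automaton states of the children already treated. It does this by writing them in the memory of $u$: the auxiliary symbols are partial tuples in $Q_\cR^{<\maxrank{\Sigma}}$, and the state of the child just left is carried in the control state. Once all $k$ children have been processed, $\cH'$ computes $\delta_\cR[\sigma](q_1,\dots,q_k)$, writes $(\top,q_u)$ into the cell of $u$, and moves up. At leaves it writes $(\top,\delta_\cR[\sigma])$ directly.

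\textbf{Simulation.} When the traversal returns to the root, $\cH'$ switches to $q_{init}$ of $\cH$. A transition of $\cH'$ at a node with memory $(m,q)$ and input label $\sigma$ is $\delta(p,\rho(\sigma,q),m)$, with each leaf $(p',m',d)$ replaced by $(p',(m',q),d)$. This keeps the $Q_\cR$ component unchanged, and is well-typed because $\rho$ preserves ranks. The initial configuration and the preprocessing phase produce no output, so an easy invariant shows that the reachable global states of $\cH'$ after preprocessing correspond exactly to those of $\cH$ on $\sem{\cR}(t)$. Hence $\sem{\cH'}=\sem{\cH}\circ\sem{\cR}$.

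\textbf{Bounded visit.} Preprocessing visits each node at most $\maxrank{\Sigma}+1$ times, since the DFS enters each node once and returns to it once per child. The simulation phase visits each node as many times as the corresponding branch-outputting run of $\cH$ on $\sem{\cR}(t)$ does; branch-outputting runs correspond because preprocessing is a single linear non-branching prefix. Adding these gives the bound $N+\maxrank{\Sigma}+1$ on $L$. The only mildly delicate step is encoding the DFS bookkeeping in memory so that it is not confused with the simulation phase. Using disjoint tagged memory symbols, together with disjoint control states for the two phases, handles this.
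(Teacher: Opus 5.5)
Your proposal is correct and follows the paper's proof: a post-order traversal that stores partial tuples of the children's automaton states in memory, followed by a step-by-step simulation of $\cH$ with $\rank(\sigma)+1$ extra visits per node. The paper stores the relabeled letter $\rho(\sigma,p)$ where you store the state $p$, which makes no difference; one detail that you and the paper both leave implicit is how the machine recognizes the root at the end of the traversal, and a dedicated initial state whose first transition tags the root's memory cell handles it.
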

The idea of the proof is to perform a post-order tree traversal that computes and records in memory, at each node, the state reached there by the deterministic bottom-up automaton contained in $\cR$.

\begin{definition}
  A $\uTHM$ \emph{with bottom-up initialization} is:
  \begin{itemize}
    \item the data of a $\uTHM$ --- we reuse the notations of Def.~\ref{def:uthm};
    \item a deterministic bottom-up tree automaton $(Q',\delta')$;
    \item a map $\mathtt{init} \colon Q' \to M$.
  \end{itemize}
  Its semantics is the same as for $\uTHM$ except that for an input tree $t\in \Tree\Sigma$, the initial memory value at node $u$ is $\mathtt{init}(\delta'[\subtree{t}{u}])$. 

  We then define in the same way as \Cref{def:bounded-visit} the bounded-visit condition. A \emph{$\THM$ with bottom-up initialization} is a $\uTHM$ with bottom-up initialization that happens to be bounded-visit.
\end{definition}
\begin{proposition}
  Every $\uTHM$ with bottom-up initialization can be effectively translated to an equivalent $\uTHM$.

  The translation preserves the bounded-visit condition on any input language, but not the narrow-visit condition.
\end{proposition}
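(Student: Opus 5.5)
The plan is to reduce to \Cref{lem:bottom-up-rel}. Let $\cH$ be a $\uTHM$ with bottom-up initialization given by $(Q',\delta')$ and $\mathtt{init}$. Define the bottom-up relabeler $\cR$ over the automaton $(Q',\delta')$ with output alphabet $\Sigma\times Q'$, where $\rank(\sigma,q')=\rank(\sigma)$, and with $\rho(\sigma,q')=(\sigma,q')$. Each node $u$ of $\sem{\cR}(t)$ is then labeled by $(\lab_t(u),\delta'[\subtree{t}{u}])$.

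Next, build a plain $\uTHM$ $\cH_0$ over the input alphabet $\Sigma\times Q'$. It keeps the states of $\cH$ and uses memory $M\uplus\{\top_0\}$, with a fresh initial symbol $\top_0$. Its transitions are
\[
\delta_0(q,(\sigma,q'),m)=\delta(q,\sigma,m)\ (m\in M),\qquad \delta_0(q,(\sigma,q'),\top_0)=\delta(q,\sigma,\mathtt{init}(q')).
\]
The guard on ranks holds because the rank of $(\sigma,q')$ equals that of $\sigma$. Let $t$ be an input and $t_0=\sem{\cR}(t)$. The following correspondence holds by a step-by-step induction along runs: configurations $(u,q,\mu)$ of $\cH$ on $t$ match configurations $(u,q,\mu_0)$ of $\cH_0$ on $t_0$, where $\mu_0(u)=\top_0$ exactly at the never-written nodes, and elsewhere $\mu_0(u)=\mu(u)$. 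Reading $\top_0$ at a node is then equivalent to reading $\mathtt{init}(\delta'[\subtree{t}{u}])$ there. So the two tree-generating machines have isomorphic runs, and this isomorphism preserves positions. Two consequences follow: $\sem{\cH}=\sem{\cH_0}\circ\sem{\cR}$, and visit counts coincide, so $\cH$ is bounded-visit on $L$ iff $\cH_0$ is bounded-visit on $\sem{\cR}(L)$. Applying \Cref{lem:bottom-up-rel} then yields a $\uTHM$ $\cH'$ with $\sem{\cH'}=\sem{\cH_0}\circ\sem{\cR}=\sem{\cH}$, and bounded-visit is preserved on $L$.

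The one delicate point is ensuring that the construction of \Cref{lem:bottom-up-rel} adds only $O(1)$ visits per node. That is guaranteed by the lemma's statement, which I take as given: a post-order traversal visits each node a bounded number of times. Storing the automaton states in memory alongside $M$ keeps the alphabet finite.

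For the negative claim about narrow-visit, I would give a counterexample. Take an initialization computing, say, the parity of subtree size, and a $\uTHM$ with bottom-up initialization that just reads the root's initial memory, outputs it, and stops. That machine is trivially narrow-visit, with one visit. Any equivalent plain $\uTHM$ must find the parity of $|t|$ by walking. A narrow-visit run only visits $O(1)$ branches, and I would argue by a pumping/fooling argument that it cannot distinguish trees differing by an unvisited leaf pair. Concretely, on trees where the number of leaves off the visited branches can be altered without changing what the run sees, the run behaves identically but the parity changes. This fooling argument is the main obstacle: one must show that the unvisited parts of the input can be modified consistently. I would handle it by noting that the run's behaviour is determined by the labels on the visited nodes, then replacing an unvisited leaf $b$ by $a(b,b)$, which flips the parity of the size.
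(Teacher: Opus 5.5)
Your positive part is the paper's own argument. You relabel by $\cR$ so that each node carries its bottom-up state (the paper stores $\mathtt{init}(q')$ instead of $q'$, which is immaterial). A plain $\uTHM$ then takes the initial memory value from the label on first access; your fresh symbol $\top_0$ makes \enquote{first access} precise. You conclude with \Cref{lem:bottom-up-rel}. That part is fine.

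The negative part has a step that fails as written. Note that the paper gives no argument for this clause at all.

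\emph{The error.} Replacing a leaf $b$ by $a(b,b)$ adds exactly two nodes, so it never changes the parity of the size. Moreover, over $\{a:2,\,b:0\}$ every tree has odd size. So your function is constant, and a plain $\uTHM$ computes it with a single visit to the root: it is no counterexample.

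\emph{The repair.} Let the bottom-up automaton compute the parity of the number of \emph{leaves}, which your replacement does flip. (Alternatively, add a unary letter $c$ and replace $b$ by $c(b)$.) The fooling argument then goes through without Dilworth:
\begin{itemize}
  \item The leaves form an antichain. So a plain $\uTHM$ that is narrow-visit with bound $N$ visits at most $N$ leaves in the branch-outputting run producing its output.
  \item On an input with $N+1$ leaves, some leaf $\ell$ is therefore unvisited.
  \item Moves are local, and transitions only read the label and memory of the current node. So on the tree where $\ell$ is replaced by $a(b,b)$, the machine performs the same sequence of positions, states and writes.
  \item It thus gives the same output, although the parity has changed.
\end{itemize}

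\emph{Which claim is needed.} The statement's wording (\enquote{the translation}) only requires a weaker observation. The post-order phase of \Cref{lem:bottom-up-rel} visits every input node, so the set of all leaves, an antichain of unbounded size, receives unboundedly many visits. Your repaired argument proves something stronger: some narrow-visit machines with bottom-up initialization have no narrow-visit equivalent at all.
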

\begin{proof}
  For any $\uTHM$ $\cH''$ with bottom-up initialization, let:
  \begin{itemize}
    \item $\cR$ be the bottom-up relabeler using $(Q',\delta')$ from the above definition and $\rho\colon (\sigma,q') \mapsto (\sigma,\mathtt{init}(q'))$; 
    \item $\cH$ be a $\uTHM$ simulating $\cH''$ while reading from the input letters the initial memory values when it encounters a node for the first time --- thus, when $\cH''$ is bounded-visit on all inputs, $\cH$ is bounded-visit on the range of $\sem{\cR}$.
  \end{itemize}
  Then $\sem{\cH''} = \sem{\cH} \circ \sem{\cR}$.
  Therefore, we can apply \Cref{lem:bottom-up-rel}.
\end{proof}

\subparagraph{Regular lookaround.} Finally, we consider the extension of tree-to-tree Hennie machines with the ability to use regular information about the current configuration to decide the next transition.

\begin{definition}[Encodings of configurations]\label{def:enc-conf}
  Consider a $\uTHM$ $\cH$ with the notations of \Cref{def:uthm}.

  For $t\in\Tree\Sigma$, we encode $(u,q,\mu) \in \Conf(\cH,t)$ as the tree over $\Sigma \times M \times (Q \cup \{\mathtt{notHere}\})$, with the ranks inherited from $\Sigma$, whose domain is $\Dom(t)$ and whose label at each node $v$ is
  \[ (\lab_t(v),\; \mu(v),\; [q\ \text{if}\ v=u\text{, otherwise}\ \mathtt{notHere}]) \]
\end{definition}
\begin{definition}\label{def:thm-rla}
  A \emph{$\uTHM$ with regular lookaround} ($\uTHMr$) $\cH$ from an input alphabet $\Sigma$ to an output alphabet $\Gamma$ consists of:
  \begin{itemize}
    \item the data of a $\uTHM$ \emph{without the transition function} as in \Cref{def:uthm}, i.e.\ $(Q, M, \top, \Sigma, \Gamma, q_{init})$;
    \item a \emph{recognizable} transition function $\delta \colon \Tree{\Sigma \times M \times (Q
        \cup \{\mathtt{notHere}\})} \rightarrow T_{\Gamma}(Q \times M \times \dirs{k})$.
  \end{itemize}
  The configurations of the machine $\cH$ are the same as if it were an ordinary
  $\uTHM$ (the transition function is not involved). The semantics is defined
  analogously to ordinary $\uTHM$s, except that the computation-step function
  passes the entire configuration to the transition function --- since it is recognizable, it only depends on a finite amount of information:
  \begin{align*}
    \Conf(\cH,t) &\rightharpoonup \Tree{\Gamma}[\Conf(\cH,t)] \qquad\qquad(\text{for}\ t \in \Tree\Sigma)\\
    C &\mapsto C \ogreaterthan \delta(\text{encoding of}\ C)
  \end{align*}
\end{definition}
We combine bottom-up initialization with dynamic maintenance of local lookaround information to prove:
\begin{theorem}\label{thm:lookaround}
    Every $\uTHMr$ can be effectively translated to an equivalent $\uTHM$. The translation preserves the bounded-visit condition on any input language.
\end{theorem}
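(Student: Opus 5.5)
We simulate a $\uTHMr$ $\cH$ by a $\uTHM$ $\cH'$ that keeps, at every node $v$, enough finite information in memory to evaluate the recognizable transition function $\delta$ on the encoding of the current configuration locally. Fix a deterministic bottom-up automaton $(P,\eta)$ recognizing $\delta$; we may assume it runs over the alphabet $\Sigma\times M\times(Q\cup\{\mathtt{notHere}\})$. The encoding of $(u,q,\mu)$ carries $\mathtt{notHere}$ everywhere except at $u$. So the value of $\delta$ is determined by three pieces of data:
\begin{itemize}
  \item the states $\eta[\cdot]$ on the subtrees rooted at the children of $u$, all computed in the $\mathtt{notHere}$ regime;
  \item the triple $(\lab_t(u),\mu(u),q)$;
  \item a \enquote{context} summary of the rest of the tree, i.e.\ a function $P\to P$ describing how the state at $u$ propagates to the root.
\end{itemize}
This summary is composed from the ancestors' labels, memories and sibling subtree states, again in the $\mathtt{notHere}$ regime. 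The plan is to store at each node $v$, in addition to the memory symbol of $\cH$, two pieces of data. The first is the bottom-up state $\beta(v)\in P$ of the subtree at $v$ under the current memory in $\mathtt{notHere}$ mode. The second is the top-down context function $\kappa(v)\colon P\to P$ of $v$. Both range over finite sets, so the enriched memory alphabet is finite.

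Initialization uses the bottom-up initialization feature (shown conservative by the previous proposition). The values $\beta(v)$ for the initial memory $\top$ are given by a bottom-up automaton. The contexts $\kappa(v)$ are not bottom-up computable, but they are recognizable from the tree with $v$ marked. To handle this, I would combine a bottom-up relabeling (\Cref{lem:bottom-up-rel}) precomputing $\beta$ with a first top-down pass of $\cH'$ that fills in $\kappa$ lazily: $\kappa(vi)$ is determined by $\kappa(v)$, $\lab_t(v)$, $\mu(v)$ and $\beta(vj)$ for $j\neq i$. A cleaner alternative avoids storing $\kappa$ at all. We keep only $\beta$ in memory and maintain $\kappa$ of the \emph{current} position in the finite control state of $\cH'$. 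When moving from $v$ to a child $vi$, we compute $\kappa(vi)$ from $\kappa(v)$ and the siblings' $\beta$ values. When moving up, we restore $\kappa(\parent{v})$, which we can do by having stored it in $v$'s memory cell on the way down; this write happens only when we descend into $v$. Reading the sibling states $\beta(vj)$ requires stay or peek moves to the children, each costing one extra visit per move.

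Maintenance is then local. When $\cH$ overwrites $\mu(u)$ with $m$ and moves in direction $d$, only $\beta$ at $u$ and at its ancestors changes. The ancestors are the problem, and this is the main obstacle: a single write could invalidate $O(\height(t))$ stored $\beta$ values and destroy bounded-visit. The fix is to store $\beta$ \emph{relative} to the context rather than absolutely. Concretely, $\beta(v)$ depends only on the subtree at $v$, and a subtree strictly below $u$ or disjoint from $u$'s ancestors is unaffected by writing at $u$. For an ancestor $w$ of $u$, we do not use the stored $\beta(w)$ while the machine is below $w$: whenever the machine is at $u$ it only needs $\beta$ of the children of $u$ (all up to date) and $\kappa(u)$. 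Now $\kappa(u)$ uses $\beta$ of siblings of ancestors only, which are also up to date. When the machine moves up from $u$ to $\parent{u}$, it recomputes $\beta(u)$ from $\beta$ of the children of $u$ and the new $\mu(u)$, and writes it at $u$, which it is in the process of leaving. By induction, this yields the invariant that $\beta$ is correct at every node not on the path from the root to the current position. Proving this invariant, including the stay-instruction and peek-visit bookkeeping via \Cref{lem:stay}, is where the care lies.

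Each step of $\cH$ is thus simulated by $O(1)$ steps of $\cH'$, touching only $u$, its parent and its children. Hence the number of visits of $\cH'$ at a node is bounded by a constant times the number of visits of $\cH$ at that node and its neighbours. Since ranks are bounded, this means bounded-visit on any $L$ is preserved.
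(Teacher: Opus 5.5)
Your proposal is correct and is essentially the paper's construction. It uses the same decomposition of the lookaround state into the children's subtree states, the local triple, and a context function $P\to P$, the same bottom-up initialization, and the same invariant: stored summaries may be stale only along the current root-to-position path, and the one fresh value is carried in the finite control.

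The only real difference is where the subtree summaries live. The paper stores the children's states in the \emph{parent's} cell, together with the direction of the last exit, and passes a child's refreshed state up through the control state on return, so your peek moves are unnecessary.

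Two small corrections. Your first option (a pass filling in $\kappa$ at every node) does not work, because a single write at $u$ can invalidate $\kappa(w)$ for every $w$ off the root-to-$u$ path. Also, since a $\uTHM$ only writes at its current position, the stored $\kappa$ must go into the cell being left, e.g.\ $\kappa(v)$ at $v$ when descending from $v$, not into $v$'s cell when descending into $v$.
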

\begin{corollary}\label{cor:precomp-relab}
  The classes of functions computed by $\uTHM$s and by $\THM$s are both closed by precomposition by \emph{$\mso$ relabelings}.
\end{corollary}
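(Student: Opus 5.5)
The plan is to reduce to Theorem~\ref{thm:lookaround} and Lemma~\ref{lem:bottom-up-rel}. An $\mso$ relabeling is a shape-preserving function $f\colon\Tree\Sigma\to\Tree{\Sigma'}$ whose output label at each node $u$ is an $\mso$-definable function of $(t,u)$; I will assume the standard fact that such a function is determined by the input label together with a recognizable function of the tree with $u$ marked. The idea is then as follows. Given a $\uTHM$ (resp.\ $\THM$) $\cH$ over $\Sigma'$, build a $\uTHMr$ $\cH'$ over $\Sigma$ that has the same states, memory and initial data as $\cH$. The transition function of $\cH'$ works on the encoding of the current configuration from \Cref{def:enc-conf}. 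It computes the relabeled letter $\lab_{f(t)}(u)$ at the marked node $u$, together with the current memory value $\mu(u)$ and the state $q$, and it outputs $\delta_{\cH}(q,\lab_{f(t)}(u),\mu(u))$.

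The first step is to check that this transition function is recognizable. The encoding tree over $\Sigma\times M\times(Q\cup\{\mathtt{notHere}\})$ contains the marked position, and the projection onto $\Sigma$ with that mark gives exactly the input needed for the $\mso$ formula defining the relabeled label at $u$. By the Büchi--Doner--Thatcher--Wright equivalence between $\mso$ and tree automata, the map from the encoding to $(\lab_{f(t)}(u),q,\mu(u))$ is recognizable, since it has finite range and $\mso$-definable fibres. Composing with the finite map $\delta_\cH$ keeps it recognizable. A small point to handle: the directions in the output lie in $\dirs{\rank}$, and ranks agree because $f$ preserves shape.

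The second step is to compare semantics. Because $\Dom(f(t))=\Dom(t)$, configurations of $\cH$ on $f(t)$ and of $\cH'$ on $t$ are in bijection, with the same $(u,q,\mu)$. Under this bijection the computation-step functions coincide, since $\cH'$ uses exactly the letter that $\cH$ would read. The two tree-generating machines are therefore isomorphic, so $\sem{\cH'}(t)=\sem{\cH}(f(t))$. The branch-outputting runs also correspond with identical visit counts. So if $\cH$ is bounded-visit on all of $\Tree{\Sigma'}$ (in particular on the range of $f$), then $\cH'$ is bounded-visit on $\Tree\Sigma$.

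Finally, Theorem~\ref{thm:lookaround} turns $\cH'$ into an equivalent $\uTHM$ and preserves bounded-visit, which gives both closure statements. The main obstacle is the first step: identifying $\mso$ relabelings with a recognizable dependence on the marked tree. If the paper defines $\mso$ relabelings instead as compositions of bottom-up and top-down relabelings, I would handle them as follows. Bottom-up relabelings are covered directly by Lemma~\ref{lem:bottom-up-rel}. Top-down ones, and those using information above the node, are covered by the lookaround argument above, since the dependence on the context of $u$ is still regular on the marked encoding.
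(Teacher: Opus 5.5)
Your proposal is correct and follows the paper's intended route: the paper states this corollary as an immediate consequence of \Cref{thm:lookaround}, viewing preprocessing by an $\mso$ relabeling as a special case of regular lookaround that ignores the memory and state components of the encoded configuration. Exactly as you argue, the resulting $\uTHMr$ has the same runs, and hence the same visit counts, as $\cH$ running on the relabeled tree, so the bounded-visit property transfers.
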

The aforementioned $\mso$ relabelings are tree transformations that keep the shape of a tree (i.e.\ its domain) while changing its labels in a regular fashion; see for instance~\cite[Definition~7.20]{courcellebook} for a book reference.
\begin{remark}
  For automata models such as e.g.\ tree-walking tree transducers and macro tree transducers, regular lookaround morally amounts to preprocessing by $\mso$ relabelings. But for $\THM$s, this is not the case because the information given by the regular lookaround depends not only on the input tree and current position, but also on the current memory values at each node.
\end{remark}

\section{Macro tree transducers}\label{sec:mtt}

\subparagraph{Notations specific to this section.}

We reserve two special sets of symbols, a set $ X = \{x_i : i \in \N \setminus \{0\}\} $ of \emph{variables} and another set $ Y = \{y_i : i \in \N \setminus \{0\} \} $ of \emph{parameters}. We also write $X_k = \{x_i : i \leqslant k\} $ and $ Y_k = \{y_i : i \leqslant k \}$ for $k \in \N$.

A tree in $\Tree{\Sigma}[Y]$, whose leaves may be parameters, is called a \emph{tree context} over $\Sigma$. A \emph{$k$-ary tree context} is a tree in $\Tree{\Sigma}[Y_{k}]$.

% \tito{pas sûr que le cas general des SOsubst soit utile}
% In addition to the previously defined first-order substitution of a leaf by a tree, there is also a notion of \emph{second-order substitution} of a possibly internal node by a \emph{context}: given a tree $t \in \Tree{\Sigma}$, a node $u$ in $t$ of rank $k$, and a $k$-ary tree context $c \in \Tree{\Sigma}[Y_{k}]$, we define
% \[ t[u \Leftarrow c] = t[u \leftarrow c[y_i \leftarrow \subtree{t}{ui}\; \forall i]] \]
% Informally speaking, the second-order substitution first instantiates the parameters of $c$ with the  subtrees rooted at the children of $u$, then plugs the result at position $u$ in $t$.
% \tito{Je pense que c'est mieux de changer la notation pour la flèche que pour les crochets, car $\sem{-}$ désigne déjà la sémantique. TODO: propager cette notation partout}

For $Q$ a ranked set and $Z$ an arbitrary set, $\qq{Q, Z}$ is the ranked set of pairs $\qq{q, z}$ for $q\in Q$, $z \in Z$, with $\rank(\qq{q, z})  = \rank(q)$.

\subsection{MTTs: definition and semantics}

\begin{definition}\label{def:mtt}
  A (total deterministic) $\mtt$ $\cM$ from an input
  alphabet $\Sigma$ to an output alphabet $\Gamma$ (both ranked) consists of:
  \begin{itemize}
    \item a finite \emph{ranked} set $Q$ of states \& an initial state $q_{init} \in \rankK{Q}{(0)}$;
    \item for each state $q\in Q$ and each input letter $\sigma\in\Sigma$, a context
    \[ \RHS_{\cM}(q,\sigma) \in \Tree{\Gamma \cup \qq{Q,X_{\rank(\sigma)}}}[Y_{\rank(q)}] \]
  \end{itemize}
\end{definition}
The latter is meant to be the right-hand side of a rewrite rule
\[ \qq{q,\sigma(x_{1},\dots,x_{\rank(\sigma)})}(y_{1},\dots,y_{\rank(q)}) \to \RHS_{\cM}(q,\sigma) \]
Let us illustrate what this means before giving a formal definition:
\begin{example}\label{ex:mtt}
  Here is an $\mtt$ for the function $\exa$ from~\S\ref{sec:running-example}.
  \begin{align*}
    \qq{q_0,a(x_1,x_2)} &\to \qq{q_1,x_2}(a(\qq{q_0,x_1},\qq{q_0,x_1}))\\
    \qq{q_0,b} &\to a(b,b)\quad (\text{i.e.}\ \pbt(1)\ \text{as defined in \S\ref{sec:running-example}}) \qquad\phantom{.}\\
    \qq{q_1,a(x_1,x_2)}(y_1) &\to \qq{q_1,x_2}(a(y_1,y_1))\\
    \qq{q_1,b}(y_1) &\to a(y_1,y_1)
  \end{align*}
$\begin{aligned}
  \text{We have}\ \qq{q_{0},a(b,b)} &\to \qq{q_{1},b}(a(\qq{q_{0},b},\qq{q_{0},b}))
                                    &\to \qq{q_{1},b}(a(\pbt(1),\qq{q_{0},b}))\\
                                    &\to \qq{q_{1},b}(\pbt(2)) \to \pbt(3) = \exa(a(b,b)).   
\end{aligned}$
\end{example}
In the literature, the semantics of $\mtt$s allow the rewrite rules to apply to any subtree rooted at some node with label of the form $\qq{q,t}$. Let us call such trees \emph{redexes}:
\[ \Redex(\cM) = \{ t \in \Tree{\Gamma \cup \qq{Q,\Tree{\Sigma}}} \mid \lab_t(\varepsilon) \in \qq{Q,\Tree{\Sigma}} \} \]
There is also a standard notion~\cite[Definition~3.4]{Macro} of \emph{outside-in derivation}\footnote{Or \enquote{OI derivation}, corresponding to call-by-name evaluation in the $\lambda$-calculus, while inside-out (IO) derivations correspond to call-by-value.} that only rewrites \emph{outermost redexes}: those that are not contained in some other redex, i.e.\ whose root does not have a $\qq{Q,\Tree{\Sigma}}$-labeled strict ancestor. The example above is not outside-in: the only steps rewriting outmost redexes are the first and the last.

We now capture outside-in derivations --- which suffice to interpret total deterministic $\mtt$s as total functions~\cite[\S4.1]{Macro} --- in the framework of tree-generating machines. This serves as our reference definition of the semantics of $\mtt$s in this paper, avoiding the need to manipulate \enquote{second-order substitutions} in full generality.

The tree-generating machine for the $\mtt$ $\cM$ on an input $t\in\Tree\Sigma$ has the set of configurations $\Redex(\cM)$ and the initial configuration $\qq{q_{init},t}$. Its computation-step function is
\begin{align*}
  \Redex(\cM) &\to \Tree{\Gamma \cup \qq{Q,\Tree{\Sigma}}} \cong \Tree{\Gamma}[\Redex(\cM)]\\
  \qq{q,\sigma(\vec{t})}(\vec{t}')
             &\mapsto
               {\begin{cases}
                 \text{replace every label \(\qq{q',x_{i}}\) by \(\qq{q',t_{i}}\)}\\
                 \text{in}\ \RHS_{\cM}(q,\sigma)[\vec{y} \leftarrow \vec{t}']
               \end{cases}}
\end{align*}
using the fact that:
\begin{claim}
  The simultaneous substitution
  \[ [r\ \text{seen as a leaf} \leftarrow r\ \text{seen as a tree}\quad \forall r \in \Redex(\cM)] \]
  defines a bijection $\Tree{\Gamma}[\Redex(\cM)] \xrightarrow{\sim} \Tree{\Gamma \cup \qq{Q,\Tree{\Sigma}}}$ whose inverse turns outermost redexes into leaves. 
\end{claim}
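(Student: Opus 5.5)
The plan is to define the two maps explicitly and check that they are mutually inverse. Write $\Phi \colon \Tree{\Gamma}[\Redex(\cM)] \to \Tree{\Gamma \cup \qq{Q,\Tree{\Sigma}}}$ for the simultaneous substitution. Write $\Psi$ for the map in the other direction, defined by structural induction on $s \in \Tree{\Gamma \cup \qq{Q,\Tree{\Sigma}}}$:
\[ \Psi(s) = s\ \text{seen as a leaf} \quad \text{if}\ \lab_s(\varepsilon) \in \qq{Q,\Tree{\Sigma}}, \qquad \Psi(\gamma(s_1,\dots,s_k)) = \gamma(\Psi(s_1),\dots,\Psi(s_k))\ \text{if}\ \gamma \in \Gamma. \]
This is well defined because in the first case $s \in \Redex(\cM)$ by the very definition of $\Redex(\cM)$. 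By induction, the leaves of $\Psi(s)$ carrying a $\Redex(\cM)$ label are exactly the subtrees of $s$ rooted at nodes with a $\qq{Q,\Tree{\Sigma}}$ label and no such strict ancestor. These are exactly the outermost redexes, which gives the announced description of the inverse.

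Next I check $\Phi \circ \Psi = \mathrm{id}$ by structural induction on $s$.
\begin{itemize}
  \item If the root of $s$ is in $\qq{Q,\Tree{\Sigma}}$, then $\Psi(s)$ is the single leaf $s$, and substituting it gives back the tree $s$.
  \item Otherwise $s = \gamma(s_1,\dots,s_k)$. The substitution commutes with the constructor $\gamma$, because $\Gamma$-labels are not touched by it. So $\Phi(\Psi(s)) = \gamma(\Phi(\Psi(s_1)),\dots) = s$ by the induction hypothesis.
\end{itemize}

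Then I check $\Psi \circ \Phi = \mathrm{id}$ by structural induction on $T \in \Tree{\Gamma}[\Redex(\cM)]$.
\begin{itemize}
  \item If $T$ is a leaf labeled $r \in \Redex(\cM)$, then $\Phi(T) = r$. The root label of $r$ lies in $\qq{Q,\Tree{\Sigma}}$ by definition of $\Redex(\cM)$, so $\Psi(r)$ is the leaf $r$, which is $T$.
  \item If $T = \gamma(T_1,\dots,T_k)$ with $\gamma \in \Gamma$, then $\Phi(T) = \gamma(\Phi(T_1),\dots,\Phi(T_k))$ has root $\gamma \notin \qq{Q,\Tree{\Sigma}}$. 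Hence $\Psi$ recurses into the children, and the induction hypothesis concludes.
\end{itemize}

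The only point needing care, and the main obstacle, is the implicit assumption that $\Gamma$ and $\qq{Q,\Tree{\Sigma}}$ are disjoint, and likewise $\Gamma$ and $\Redex(\cM)$ as leaf labels. If they were not disjoint, "the root is a $\Gamma$-symbol" and "the root is a redex label" would not be exclusive cases, and the case split would break. I would state this disjointness as a standing convention, which is harmless up to renaming. I would also note that the ranks match in $\Psi$: a node labeled $\qq{q,t}$ has rank $\rank(q)$, and its children (the arguments $\vec t'$) are absorbed into the leaf together with it. Everything else is routine structural induction.
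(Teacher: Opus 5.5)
Your proof is correct and complete. The paper states this claim without proof, treating it as evident; your explicit inverse $\Psi$ by structural induction, the two round-trip inductions, and the identification of the $\Redex(\cM)$-labeled leaves of $\Psi(s)$ with the outermost redexes of $s$ are exactly the routine verification the paper leaves implicit.

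Your disjointness convention is also not an extra assumption. The notation $\Tree{\Gamma}[Y]$ is only introduced in the paper under the hypothesis $\Gamma \cap Y = \varnothing$. Disjointness of $\Gamma$ and $\qq{Q,\Tree{\Sigma}}$ is the same tacit convention the paper uses whenever it forms unions of ranked sets such as $\Gamma \cup \qq{Q,X_k}$.
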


\subsection{Syntactic criteria for linear (size-to-)height}

We now recall the results of Gallot, Maneth, Nakano and Peyrat~\cite{GMNP24} on L(S)HI-MTTs. Originally, they are stated in terms of $\mtt$s with \emph{regular lookahead}. As usual, regular lookahead amounts to preprocessing by a bottom-up relabeling; we propose a rephrasing that takes this point of view.

\begin{definition}
  Let $\cM$ be an $\mtt$ with the notations above.

  Let us fix a rank 0 symbol $\square$. We define the $\mtt$ $\cM^\blacksquare$ with:
  \begin{itemize}
    \item input alphabet $\Sigma \cup \{\square\}$ and output alphabet $\Gamma \cup \qq{Q,\{\blacksquare\}}$;
    \item the same ranked set of states $Q$ as $\cM$;
    \item the rules of $\cM$, plus $\qq{q,\square}(\vec{y}) \to \qq{q,\blacksquare}(\vec{y})$ for $q\in Q$.
  \end{itemize}
\end{definition}
For instance, for the $\mtt$ $\cM_\exa$ of \Cref{ex:mtt}, 
\[ \sem{\cM_\exa^\blacksquare}(a(\square,b)) = \qq{q_{1},\blacksquare}(\pbt(2)) \]
$\cM^\blacksquare$ is called the \enquote{extension} of $\cM$ in~\cite{GMNP24}. Its inputs are trees in $\Tree{\Sigma}[\{\square\}]$; they can be seen as inputs to $\cM$ in $\Tree\Sigma$ with some \enquote{missing subtrees} marked by $\square$.
Informally, the output of $\cM^\blacksquare$ represents what happens if we run $\cM$ on such an incomplete input, computing as much as possible while recording the places where $\cM$ gets stuck because it wants to explore a missing subtree.

For an input language $L \subseteq \Tree\Sigma$, we denote by $L^\square \subseteq \Tree{\Sigma}[\{\square\}]$ the set of \enquote{inputs from $L$ with missing subtrees}, that is, the smallest superset of $L$ closed under replacing a subtree by $\square$. The set of \enquote{inputs from $L$ with a single missing subtree} is
\[ L^{1\square} = \{ t \in L^\square \mid t\ \text{has only one \(\square\)-labeled leaf}\} \]

\begin{definition}
  For $t' \in \Tree{\Gamma \cup \qq{Q,\{\blacksquare\}}}$, let $\height_\blacksquare(t')$ be the maximum number of $\qq{Q,\{\blacksquare\}}$-labeled nodes on a branch of $t'$. 
  
  An $\mtt$ $\cM$ is \emph{finite nesting} (resp.\ \emph{finite yield nesting}) on an input language $L$ when $\height_\blacksquare \circ \sem{\cM^\blacksquare}$ is bounded on $L^{1\square}$ (resp.\ $L^\square$).
\end{definition}
One can show that $\height_\blacksquare(\cM^\blacksquare_\exa(t'))$ is the number of $\square$-labeled nodes in $\Dom(t') \cap 1^* 2^*$ for all $t' \in \Tree{\Sigma^\exa}[\{\square\}]$. Therefore, $\cM_\exa$ is finite nesting, but not finite yield nesting --- which is consistent with the (size-to-)height increase properties stated in \S\ref{sec:running-example}, since:
\begin{theorem}[{rephrasing of~\cite[\S4]{GMNP24}}]\label{thm:mtt-rephrased}
  Given an $\mtt$ $\cM$, one can compute an $\mtt$ $\cM'$ and a bottom-up relabeler $\cR$ such that:  
  \begin{itemize}
    \item $\sem{\cM} = \sem{\cM'} \circ \sem{\cR}$;
    \item $\cM$ is LSHI iff $\cM'$ is finite nesting on the range of $\cR$;
    \item $\cM$ is LHI iff $\cM'$ is finite yield nesting on the range of $\cR$.
  \end{itemize}
\end{theorem}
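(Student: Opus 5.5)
The statement is a rephrasing of results of~\cite{GMNP24}, which are phrased for $\mtt$s with regular lookahead. My plan is to translate their normal form and characterization into the present vocabulary rather than reprove them.

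\textbf{Step 1: build $\cM'$ and $\cR$.} Given $\cM$, I would first invoke the normalization procedure of~\cite[\S4]{GMNP24}. It produces an equivalent $\mtt$ with regular lookahead, call it $\cM_{\mathrm{la}}$, and its normal-form properties are exactly what their characterizations need. The lookahead of $\cM_{\mathrm{la}}$ is given by a deterministic bottom-up tree automaton $(Q',\delta')$, and its rules have the shape
\[ \qq{q,\sigma(x_1,\dots,x_k)}(\vec y) \to \RHS \quad\langle \delta'[x_1]=p_1,\dots,\delta'[x_k]=p_k\rangle. \]
I would take $\cR$ to be the bottom-up relabeler that relabels each node $u$ by $(\lab_t(u),\delta'[\subtree{t}{u1}],\dots,\delta'[\subtree{t}{uk}])$. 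This is computable bottom-up by pairing $Q'$-states, with output alphabet $\Sigma\times Q'^{*}$ ranked as $\Sigma$. Then $\cM'$ reads the lookahead states from the labels and applies the corresponding rule. The equality $\sem{\cM}=\sem{\cM'}\circ\sem{\cR}$ follows by a routine induction on outside-in derivations: the redexes $\qq{q,\subtree{t}{u}}$ of $\cM_{\mathrm{la}}$ correspond one-to-one to the redexes $\qq{q,\subtree{\sem{\cR}(t)}{u}}$ of $\cM'$, and they rewrite to corresponding right-hand sides.

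\textbf{Step 2: transfer the characterizations.} GMNP state that a normalized $\mtt$ with lookahead is LSHI (resp.\ LHI) iff it is finite nesting (resp.\ finite yield nesting). Their versions of these notions quantify over input trees in which some subtrees are replaced by a hole, but only holes whose lookahead state is consistent with some actual subtree. On the relabeled side, inputs to $\cM'$ are trees in the range of $\cR$ with subtrees cut out, i.e.\ exactly $(\Img\sem{\cR})^{1\square}$ and $(\Img\sem{\cR})^{\square}$. The label of the parent of a $\square$ still records the lookahead state of the removed subtree, and that state is realized by the subtree that was removed. The key check is therefore that $\sem{\cM'^\blacksquare}$ on such a tree coincides with GMNP's extended semantics on the corresponding partially known input, with the same count of $\qq{Q,\{\blacksquare\}}$-nodes per branch. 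This holds because both compute by outside-in rewriting until a state reaches the hole, and they stop in the same way there.

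\textbf{Main obstacle.} The delicate point is the alignment of definitions. GMNP's $L$ is all of $\Tree\Sigma$ with a lookahead, whereas here it is the range of $\cR$. Their nesting is measured on $\height_\blacksquare$ along branches, and one must confirm that it matches. Their normal form (e.g.\ nondeletion/nonerasure of parameters, and properness obtained via lookahead) must hold for $\cM'$ on $\Img\sem{\cR}$ rather than globally. I would handle this by checking that every normal-form property of GMNP is a property of reachable redexes only. Such properties are preserved by the relabeling translation, since redexes reachable in $\cM'$ on $\Img\sem{\cR}$ correspond exactly to redexes reachable in $\cM_{\mathrm{la}}$ on $\Tree\Sigma$.
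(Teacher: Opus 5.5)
Your proposal is correct and follows the same route as the paper. The paper does not reprove this result; it obtains it from the characterization in \cite{GMNP24} for normalized $\mtt$s with regular lookahead, by viewing the lookahead as a bottom-up relabeling $\cR$ applied beforehand. Your Step~2 spells out what the paper leaves implicit, namely how the (yield) nesting notions line up on $L^{1\square}$ and $L^{\square}$ when $L$ is the range of $\cR$.
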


% \begin{remark}
%   \Cref{thm:mtt-lhi} then follows from the decidability of finite (yield) nesting, also established in~\cite[\S4]{GMNP24}.
% \end{remark}

\subsection{From MTTs to (u)THMs}

The main result of this section is:
\begin{theorem}\label{thm:mtt-hennie}
  Any (total deterministic) $\mtt$ $\cM$ can be effectively translated to a $\uTHM$ $\cH$ such that:
  \begin{itemize}
    \item $\sem{\cM} = \sem{\cH}$;
    \item for any input tree $t$, any antichain $S$ of nodes of $t$, and any branch-outputting run of $\cH$, the number of downwards visits at $S$ is at most $\height_\blacksquare \circ \sem{\cM^\blacksquare}(t[s \leftarrow \square\;\; \forall s \in S])$.
  \end{itemize}
  The second item implies that, for any input language $L$,
  \begin{align*}
    \cM\ \text{finite nesting on}\ L &\implies \cH\ \text{bounded-visit on}\ L\\
    \cM\ \text{finite yield nesting on}\ L &\implies \cH\ \text{narrow-visit on}\ L
  \end{align*}
\end{theorem}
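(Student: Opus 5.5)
The plan is to simulate outside-in derivations of $\cM$ by a tree-walking process that uses the input memory as a call stack for parameters. The key point is that an $\mtt$ configuration $\qq{q,\subtree{t}{u}}(\vec t')$ has arguments $\vec t'$ that are themselves trees over $\Gamma\cup\qq{Q,\Tree\Sigma}$. When a branch-outputting run reaches a parameter $y_j$ in a right-hand side, it must "return" to the context that supplied this argument. This is continuation-passing: the argument was built in $\RHS_\cM(q',\sigma')$ at the parent node $\parent{u}$, at some position inside that right-hand side. So I would let the memory symbol at a node $u$ record a finite description of the right-hand-side position at $u$ from which the machine last went down. Concretely, the memory records a pair (state $q$, the position in $\RHS_\cM(q,\lab_t(u))$ of the call $\qq{q',x_i}$ currently being evaluated), together with a "not currently active" marker $\top$.

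Executing a call $\qq{q',x_i}$ at position $p$ of $\RHS_\cM(q,\sigma)$ at node $u$ works as follows. The machine writes $(q,p)$ into $\mu(u)$, moves to $ui$ in state $q'$, and outputs the output symbols of the right-hand side along the way. This output is produced by the transition's tree in $\Tree{\Gamma}[Q\times M\times\dirs{}]$, since the $\Gamma$-nodes of the RHS above call or parameter leaves are copied directly, and each leaf becomes one process. Encountering $y_j$ in $\RHS_\cM(q',\sigma'')$ at node $ui$ triggers the following. The machine resets the memory at $ui$ to $\top$ (or leaves it harmless), moves up with the instruction "resolve argument $j$". At $u$ it reads $(q,p)$ from memory and continues with the $j$-th argument subtree below position $p$ in $\RHS_\cM(q,\sigma)$.

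The subtlety is that different output branches fork into separate processes, each with its own copy of $\mu$. This is automatic, because configurations carry the whole memory. There is also nesting: the $j$-th argument may itself contain calls $\qq{q'',x_{i'}}$, which go down again and overwrite $\mu(u)$. It may also contain parameters $y_{j'}$ of $q$, which require going up further to $\parent{u}$, whose memory is still intact. Overwriting $\mu(u)$ is safe because, in outside-in evaluation along a single branch, once we leave position $p$ to evaluate an argument we never return to the call at $p$. Only the current RHS position at each ancestor matters, so the memory indeed behaves as a stack of frames along the current path, one frame per node. I would formalize correctness by a simulation invariant. A configuration $(u,q,\mu)$ of $\cH$ corresponds to a redex obtained by reading the frames stored in $\mu$ on the ancestors of $u$; the computation steps of the two tree-generating machines then agree branchwise (Claim~\ref{clm:branchwise-sem}). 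Stay-instructions (\Cref{lem:stay}) are used to handle RHS positions that are neither calls nor parameters.

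For the visit bound, fix an antichain $S$ and a branch-outputting run. Each downward visit into some $s\in S$ corresponds to an outermost call $\qq{q',x}$ on $\subtree{t}{s}$ that lies on the simulated output branch. In the run of $\cM^\blacksquare$ on $t[s\leftarrow\square\ \forall s\in S]$, that call becomes a $\qq{q',\blacksquare}$ node. Successive downward visits into $S$ along one branch are then nested: the later one is produced while evaluating an argument of the earlier one. So they appear as $\qq{Q,\{\blacksquare\}}$-labeled nodes on one branch of $\sem{\cM^\blacksquare}(t[\ldots])$, which gives the bound by $\height_\blacksquare$. The implications for finite (yield) nesting then follow by taking $|S|=1$, together with the claim that visits are bounded by downward visits at a node and its children, for bounded-visit; for narrow-visit one takes arbitrary $S$ (children of an antichain form an antichain).

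I expect the main obstacle to be this last correspondence. It requires making precise that the downward visits along a branch-outputting run map injectively to $\blacksquare$-nodes lying on a common branch of the $\cM^\blacksquare$ output. This needs care with how arguments are threaded, and with the fact that the machine returns into a node from below without a new downward visit.
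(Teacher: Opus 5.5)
Your proposal is correct and is essentially the paper's construction. The memory at each node holds the right-hand-side call currently being evaluated there (the paper stores the right-hand-side subtree rooted at that call, which is equivalent to your (state, position) pair), extra states $j$ carry \enquote{resolve argument $j$} upwards, and the visit bound comes from matching each downward visit into $S$ with a $\qq{Q,\{\blacksquare\}}$-node on a single branch of $\sem{\cM^\blacksquare}(t[s \leftarrow \square\ \forall s \in S])$, a correspondence your nesting argument spells out more explicitly than the paper (and the stay-instructions you mention are unnecessary, since one transition can emit the whole $\Gamma$-prefix of a right-hand side or argument at once).
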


To give some intuition for the proof, one may see the rules of an $\mtt$ as mutually recursive definitions of $\sem{q} \colon \Tree\Sigma \times {\Tree\Gamma}^{\rank(q)} \to \Tree\Gamma$ for each state $q$. This recursion can be compiled to a call stack. In fact, $\mtt$s are expressively equivalent to pushdown tree transducers~\cite[Theorem~5.16]{EngelfrietPushdownMacro},\footnote{Other similar equivalence results include HDT0L systems vs.\ a variant of pushdown transducers~\cite{FerteMarinSenizergues} \& copyful streaming string transducers vs.\ marble transducers~\cite{Marble,gaetanPhD}.} though we do not use this result and prefer to give a direct construction from $\mtt$s in order to control the number of visits. To represent the stack in memory, our $\uTHM$ records at each node the currently active recursive call, if any, whose argument in $\Tree\Sigma$ is the subtree rooted at that node. There can be at most one such call per node, so we can use a finite set of memory symbols. As for $\height_\blacksquare \circ \sem{\cM^\blacksquare}(t[s \leftarrow \square\;\; \forall s \in S])$, it overapproximates the total number of calls, in a branch-outputting run, whose arguments in $\Tree\Sigma$ are subtrees rooted at nodes in $S$.

From Theorems~\ref{thm:mtt-rephrased} and~\ref{thm:mtt-hennie}, combined with \Cref{lem:bottom-up-rel} on precomposing $\THM$s by bottom-up relabelings, we get: 
\begin{corollary}\label{cor:lshimtt-hennie}
  $\lshii\mtt \subset \THM$.
\end{corollary}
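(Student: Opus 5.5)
The inclusion $\lshii\mtt \subseteq \THM$ follows by chaining three earlier results; strictness is handled separately at the end.

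\textbf{Inclusion.} Let $\cM$ be an $\lshii\mtt$ computing $f$. By \Cref{thm:mtt-rephrased}, we compute an $\mtt$ $\cM'$ and a bottom-up relabeler $\cR$ with $\sem{\cM} = \sem{\cM'}\circ\sem{\cR}$. Since $\cM$ is LSHI, $\cM'$ is finite nesting on $L = \sem{\cR}(\Tree\Sigma)$. Applying \Cref{thm:mtt-hennie} to $\cM'$ gives a $\uTHM$ $\cH$ with $\sem{\cH} = \sem{\cM'}$. Its implication \enquote{finite nesting on $L$ $\implies$ bounded-visit on $L$} makes $\cH$ bounded-visit on $L$. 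This is exactly where the first item of \Cref{thm:mtt-hennie} is used, with antichains that are singletons $S = \{s\}$, which yields the bound on downwards visits at each node. Combining this with the claim that bounded-downwards-visit implies bounded-visit gives a uniform bound on visits.

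We then apply \Cref{lem:bottom-up-rel} to $\cH$ and $\cR$. It produces a $\uTHM$ $\cH'$ with $\sem{\cH'} = \sem{\cH}\circ\sem{\cR} = \sem{\cM}$. Because $\cH$ is bounded-visit on $\sem{\cR}(\Tree\Sigma)$, the lemma makes $\cH'$ bounded-visit on all of $\Tree\Sigma$, so $\cH'$ is a $\THM$. If one wants the machine to be syntactically weight-reducing, \Cref{prop:weight-reducing} can be used optionally with the explicit bound $N$.

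\textbf{Strictness.} If $\subset$ is meant as strict, I would use the running example $\exa$. It is computed by the $\THM$ of \Cref{ex:thm}. Its height increase is quadratic, but that alone does not rule out LSHI. For a separating function I would instead exhibit a $\THM$-computable function that no $\mtt$ computes, for instance one whose output size is double-exponential in the input size. A $\THM$ can produce such growth: a perfect binary tree of height $O(|t|)$ has size $2^{O(|t|)}$. Separating requires something beyond $\mtt$s, whose output size is at most exponential in the input size. This part is the main obstacle; a clean separation may instead be deferred to later sections, e.g.\ via the results on the $\lhii\mtt$ composition hierarchy (\Cref{cor:lhimtt-strict}), since $\THM$s contain all of it. The expected route is to show that some composition of two $\lhii\mtt$s, which lies in $\THM$ by \Cref{cor:lhimtt-comp-thm}, is not a single $\mtt$.
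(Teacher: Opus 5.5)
\textbf{The inclusion.} Your inclusion argument is correct and is the paper's own: \Cref{thm:mtt-rephrased} gives $\sem{\cM}=\sem{\cM'}\circ\sem{\cR}$ with $\cM'$ finite nesting on the range of $\cR$. Then \Cref{thm:mtt-hennie} turns $\cM'$ into a $\uTHM$ that is bounded-visit on that range. Finally \Cref{lem:bottom-up-rel} absorbs $\cR$ and yields a machine that is bounded-visit on all inputs. A small correction: the antichain bound you use with $S=\{s\}$ is the \emph{second} item of \Cref{thm:mtt-hennie}, not the first. Also, that theorem already states the implication ``finite nesting $\implies$ bounded-visit'' that you re-derive.

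\textbf{The strictness part.} This part is unnecessary. The paper uses $\subset$ for plain inclusion: it writes $\twt \subset \mtt$ and $\bigcup_k \lhii\mtt^k \subset \THM$, and reserves $\subsetneq$ for strictness in \Cref{cor:lhimtt-strict}.

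The growth-based separation you sketch is also wrong in both directions. By Claim~\ref{clm:lshi}, a $\THM$ has output height $O(|t|)$, so its output size is only $2^{O(|t|)}$. Your own perfect-binary-tree computation gives single-exponential growth, not double-exponential. Conversely, $\mtt$s do reach double-exponential size. A rule such as $\qq{q,a(x_1)}(y_1)\to\qq{q,x_1}(\qq{q,x_1}(y_1))$, together with $\qq{q,e}(y_1)\to g(y_1,y_1)$, yields on $a^n(e)$ a perfect binary tree of height $2^n$.

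Your fallback is the right route if strictness were wanted. \Cref{thm:lhitwt-vs-mtt} with $k=2$ gives $\THM\not\subset\mtt$, hence $\THM\not\subseteq\lshii\mtt$.
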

\begin{corollary}\label{cor:lhimtt-narrow}
  Given an $\lhii\mtt$ $\cM$, one can compute an $\THM$ $\cH$ and a bottom-up relabeler $\cR$ such that $\sem{\cM} = \sem{\cH} \circ \sem{\cR}$ and $\cH$ is \emph{narrow-visit on the range of $\sem{\cR}$}. 
\end{corollary}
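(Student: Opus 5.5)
The statement follows by stringing together earlier results; no new construction is needed. Let $\cM$ be an $\lhii\mtt$. First apply \Cref{thm:mtt-rephrased} to $\cM$. This gives an $\mtt$ $\cM'$ and a bottom-up relabeler $\cR$ such that $\sem{\cM} = \sem{\cM'}\circ\sem{\cR}$. Because $\cM$ is LHI, the same theorem says that $\cM'$ is finite yield nesting on the range $L = \sem{\cR}(\Tree\Sigma)$. Both objects are computable, as \Cref{thm:mtt-rephrased} is effective.

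Next, apply \Cref{thm:mtt-hennie} to $\cM'$. This yields a $\uTHM$ $\cH$ with $\sem{\cH} = \sem{\cM'}$. By the second implication listed in that theorem, finite yield nesting of $\cM'$ on $L$ makes $\cH$ narrow-visit on $L$. Hence $\sem{\cM} = \sem{\cH}\circ\sem{\cR}$, with $\cH$ narrow-visit on the range of $\sem{\cR}$, which is the main claim.

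The one point to check is that $\cH$ is a genuine $\THM$, meaning bounded-visit on \emph{all} inputs and not only on $L$. The statement of \Cref{cor:NarrowLHI} only covers machines that are narrow-visit on all inputs, so it does not apply directly. I see two ways to handle this.

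\begin{itemize}
  \item \textbf{Via the nesting bound (preferred).} Use that the intermediate $\cM'$ constructed in \cite{GMNP24} is actually LSHI, being a refinement of the LHI transducer $\cM$. Then, via \Cref{thm:mtt-rephrased} and \Cref{thm:mtt-hennie}, $\cM'$ is finite nesting on $L$. Again $\cH$ is only known to be bounded-visit on $L$, so this only pushes the question back. To conclude, I would use \Cref{prop:weight-reducing}, as below.
  \item \textbf{Via \Cref{prop:weight-reducing}.} The narrow-visit bound on $L$ gives in particular a uniform visit bound $N$ on $L$ (by \Cref{cor:narrow-branches}). Apply \Cref{prop:weight-reducing} to $\cH$ and $N$ to get a weight-reducing, hence genuine, $\THM$ $\cH'$ that agrees with $\cH$ on every input of $L$, since $\sem{\cM'}$ is total. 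For $\cH'$ to keep narrow-visit on $L$, its construction must track the original runs step by step, so that visit counts on antichains are preserved up to a constant. This is the main thing to verify. I expect it to hold, because the construction of \Cref{prop:weight-reducing} only annotates memory with visit counters and adds no extra movement.
\end{itemize}

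Replacing $\cH$ by $\cH'$ then gives the corollary.
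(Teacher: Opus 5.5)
Your core argument is the paper's own. \Cref{thm:mtt-rephrased} yields $\cM'$ and $\cR$ with $\cM'$ finite yield nesting on $L=\sem{\cR}(\Tree\Sigma)$. The second implication of \Cref{thm:mtt-hennie} then gives a $\uTHM$ $\cH$ with $\sem{\cH}=\sem{\cM'}$ that is narrow-visit on $L$.

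The paper stops there. It never argues that this $\cH$ is bounded-visit outside $L$. Its only use of the corollary (\Cref{cor:lhimtt-comp-thm}, through \Cref{thm:narrow-postcomp}) needs nothing beyond narrow-visit on the range of $\sem{\cR}$.

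Your extra step via \Cref{prop:weight-reducing} is a correct way to get a genuine $\THM$. On inputs from $L$, the new machine's runs are those of $\cH$ with bookkeeping (visit counters, root marker) added to memory. So the antichain bound is preserved, and $\sem{\cM}=\sem{\cH'}\circ\sem{\cR}$ because $\sem{\cH}$ is total.

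What you leave out is effectiveness. The corollary says $\cH$ can be \emph{computed}, so you must compute $N$, not just know it exists. This is easy to supply:
\begin{itemize}
  \item $L$ is effectively regular, being the image of a bottom-up relabeling.
  \item By \Cref{cor:reg-domain}, so is the set $D_N$ of inputs on which $\cH$ is defined and visits each node at most $N$ times.
  \item Test $L\subseteq D_N$ for $N=0,1,2,\dots$; this terminates because some bound exists.
\end{itemize}

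Your first bullet can be dropped. Finite nesting on $L$ follows directly from finite yield nesting, since $L^{1\square}\subseteq L^{\square}$. The claim that $\cM'$ is LSHI on all inputs is unsupported anyway.
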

This last property is involved in the proof of \Cref{cor:lhimtt-comp-thm}.

\section{Inclusion in MSO set interpretations}\label{sec:msosi}

We assume basic knowledge of monadic second-order logic. For a ranked alphabet $\Sigma$, the formulas of $\mso(\Sigma)$ use the atomic predicates $\sigma(x)$, $x \leqslant_{anc} y$ and $\child_i(x,y)$ to express respectively that $x$ has label $\sigma$, $x$ is an ancestor or $y$, and $y$ is the $i$-th child of $x$.
\begin{definition}[{\cite{ColcombetL07}}]
An \emph{$\mso$ set interpretation} (on ranked trees), or $\msosi$ for short, consists of:
\begin{itemize}
  \item a natural number $c$;
  \item input and output ranked alphabets $\Sigma$ and $\Gamma$;
  \item the following $\mso(\Sigma)$ formulas, where $\vec{X} = X_1,\dots,X_c$ and $\vec{Y} = Y_1,\ldots,Y_c$ are sequences of second-order variables:
        \[\begin{array}{ccc}
          \text{label} & \text{ancestor} & \text{child} \\
          (\varphi_\gamma(\vec{X}))_{\gamma \in \Gamma} & \varphi_{anc}(\vec{X},\vec{Y}) & (\varphi_i(\vec{X},\vec{Y}))_{i\leqslant\maxrank{\Gamma}}
        \end{array}\]
\end{itemize}
\end{definition}
%We give the semantics of an $\msosi$ as a Tree-generating machine:
%\begin{itemize}
%\item its set of configurations is $(t,s)$ where $t$ is an input tree and $s$ assigns a subset of nodes of $t$ to each $X_i$ and there exists $\gamma$ such that $(t,S_1,\ldots,S_c) \models \varphi_\gamma(\vec{X})\}$.
%\item its initial configuration the configuration $(t,i)$ such that for all other configuration $(t,s)$, $(t,s,i)\not\models \varphi_{anc}(\vec{X},\vec{Y})$
%\item its computation-step function:
%$(t,s) \to \gamma((t,s_1),\ldots,(t,s_k))$ if $(t,s)\models \varphi_\gamma(\vec{X})$ and
%for all $i<k$ $(t,s,s_i)\models \varphi_i(\vec{X},\vec{Y})$  
%\end{itemize}
%Additionally, it should hold that the initial configuration is unique, the computation-step function is deterministic, and for all $\nu,\nu'$, $(t,s)$ is an ancestor of $(t,s')$ (equivalently $(\nu,\nu')$ belongs to the transitive closure of the $child_i$ relations) if and only if $(t,\nu,\nu')\models \varphi_{anc}(\vec{X},\vec{Y})$.
%If these do not hold, then the $\msosi$ is undefined on $t$.
An $\msosi$ $\cI$ on an input tree $t$ generates a relational structure $\cI(t)$ whose universe is the set of assignments
\[\bigl\{\nu\colon \vec{X} \to \mathcal{P}(\Dom(t)) \mid \exists \gamma \in \Gamma : (t,\nu)\models \varphi_\gamma(\vec{X}) \bigr\}\]
and whose relations are $\gamma(\nu) \iff (t,\nu)\models \varphi_\gamma(\vec{X})$ and
\begin{align*}
  \nu\leqslant_{anc}\nu' &\iff (t,\nu,\nu')\models \varphi_{anc}(\vec{X},\vec{Y})\\
  \child_i(\nu,\nu') &\iff (t,\nu,\nu')\models\varphi_i(\vec{X},\vec{Y}) \quad\text{for}\ i\leqslant \maxrank{\Gamma} 
\end{align*}
Then $\sem\cI\colon\Tree\Sigma\rightharpoonup\Tree\Gamma$ is defined on $t$, if, and only if, $\cI(t)$ is isomorphic to a tree, which we take as the value of $\sem{\cI}(t)$.

\begin{example}\label{example:msosi}
We give an $\msosi$ $\cI$ defining our running example from \S\ref{sec:running-example}.
We assume that all formulas are restricted to nodes in $1^*2^*$, which is $\mso$ definable by guarding all quantifications $\exists x$ (resp. $\exists X$) with $\psi_{1^*2^*}(x)=\exists z\ \child_1^*(\varepsilon,z)\wedge \child_2^*(z,x)$ (resp $\forall x\in X\ \psi_{1^*2^*}(x)$) where $\child_i^*$ is the transitive closure of the $\child_i$ relation and $\varepsilon$ is the constant marking the root. 
Intuitively, $\cI$ relies on a (right-depth) ordering $\leqslant$ of the nodes of the input in $\Dom(t)\cap 1^*2^*$. 
We define the successor in this order by 
\begin{align*}
S_\leqslant(x,y) \equiv &\child_2(x,y) \vee 
\big((\forall z\ \lnot\child_2(x,z))\wedge  \\
&\exists z\ \child_2^*(z,x)\wedge \forall z'\ \lnot\child_2(z',z)\wedge \child_1(z,y) \big)
\end{align*}
Then the output nodes of depth $i$ are defined by partitions into $X_1,X_2$ of the $i$-smallest elements of $\leqslant$.

Let us consider the formula $\psi_{dom}(X_1,X_2)$ stating that $X_1$ and $X_2$ are disjoint and $X_1\cup X_2$ is a set downward closed by $\leqslant$, and $max_\leqslant(X_1,X_2,x)$ which holds true if $x$ is the $\leqslant$-largest element in $X_1\cup X_2$. 
Then we define $T$ by $c=2$, $\varphi_a(X_1,X_2)$ is $\psi_{dom}$ and $X_1 \cup X_2$ does not cover all $\leqslant$, while $\varphi_b(X_1,X_2)$ is $\psi_{dom}(X_1,X_2)$ and $X_1\cup X_2$ does cover all $\leqslant$.
The ancestor formula is $$\varphi_{anc}(\vec{X},\vec{Y})=\forall z\ \bigwedge_{i=1,2} z\in X_i\Rightarrow z\in Y_i$$ where $x\leqslant y$ is the transitive closure of $S_\leqslant(x,y)$.
And finally $\varphi_i(\vec{X},\vec{Y})$ is the conjunction of $\varphi_{anc}(\vec{X},\vec{Y})$ and $$\exists x,y\ max_\leqslant(\vec{X},x)\wedge max_\leqslant(\vec{Y},y)\wedge S_\leqslant(x,y)\wedge y\in Y_i$$
\end{example}

The main result of this section is the following Theorem:
\begin{theorem}\label{thm:THMtoMSOSI}
Any total $\THM$ $\cH$ can be effectively translated to an $\msosi$ $\cI$ such that $\sem{\cH}=\sem{\cI}$.
\end{theorem}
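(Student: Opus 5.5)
We encode each output node by the branch-outputting run that produces it, and we represent such a run as a tuple of sets of input nodes. By \Cref{prop:weight-reducing}, since $\cH$ is total and bounded-visit (say by $N$), we may assume $\cH$ is weight-reducing. We also assume it uses no stay-instructions (\Cref{lem:stay}).

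\textbf{Encoding a run prefix.} Fix an output node $v$ of $\sem{\cH}(t)$, with branch word $b$. By Claims~\ref{clm:brawo} and~\ref{clm:branchwise-sem}, $v$ corresponds to a path $C_{init}(t) \tto[b'] C$, where $b'$ is the prefix of the branch word leading to $v$, together with the finite position inside the right-hand side $\delta(\cdot)$ being executed. The key observation is that the bounded-visit property makes such a path describable by a bounded amount of data per input node. At each input node $u$, the run makes at most $N$ visits. For each visit index $j\le N$, we record:
\begin{itemize}
\item the state on arrival;
\item the memory value read;
\item the output-tree branch chosen inside $\delta$, i.e.\ the leaf of the finite right-hand side through which the run continues;
\item the direction taken.
\end{itemize}
For the last visit we additionally record the position of $v$ inside the right-hand side. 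All of this ranges over a finite set $F$. A function $\Dom(t)\to (F\cup\{\bot\})^{N}$ is therefore encoded by $c = O(N\cdot\log|F|)$ second-order variables $\vec X$, in the style of the crossing sequences of two-way transducers.

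\textbf{Checking validity in $\mso$.} We write an $\mso$ formula $\mathrm{Valid}(\vec X)$ stating that these local annotations glue into a genuine run prefix from $C_{init}(t)$. The checks are local:
\begin{itemize}
\item The memory read at visit $j$ of $u$ is $\top$ for $j=1$, and otherwise equals the value written at visit $j-1$.
\item The transition $\delta(q,\lab_t(u),m)$ produces the recorded leaf $(q',m',d)$.
\item Moves match across edges: the $k$-th departure from $u$ toward child $i$ is paired with the appropriate arrival at $ui$, and returns are paired likewise.
\item Exactly one configuration is marked as the end.
\end{itemize}

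The subtle part is the global order of visits. Matching ``the $k$-th departure to $ui$'' with ``the $k$-th arrival from the parent at $ui$'' is consistent because the run is a single sequential walk, so visits across each edge alternate down/up. Local matching of these counters then determines a unique sequential path. We must also exclude spurious cyclic components disconnected from the start. Doing so amounts to asserting that the ``next visit'' successor relation, which is $\mso$-definable from $\vec X$, reaches every annotated visit from the initial visit. This reachability is $\mso$-expressible via a transitive closure over bounded-type objects, namely pairs (node, visit index), encoded by $N$-tuples of sets.

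I expect this step to be the main obstacle: showing that local consistency plus reachability exactly characterizes actual runs. The approach is the standard crossing-sequence argument of Engelfriet--Hoogeboom, adapted to trees, where the memory values written and re-read make the induced walk deterministic. Determinism of $\delta$ ensures that for each output node there is exactly one valid $\vec X$.

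\textbf{The formulas.} With validity established, we define:
\begin{itemize}
\item $\varphi_\gamma(\vec X)$: $\mathrm{Valid}(\vec X)$ holds and the marked final position inside the right-hand side carries label $\gamma$.
\item $\varphi_i(\vec X,\vec Y)$: $\vec Y$ extends $\vec X$ by choosing child $i$ at the marked position, then proceeding until the next output symbol is emitted. This is expressed by saying that $\vec X$'s annotations are a ``prefix'' of $\vec Y$'s. Prefixes are characterized by the visit-successor order, again via $\mso$ transitive closure.
\item $\varphi_{anc}(\vec X,\vec Y)$: the same prefix relation, using the reflexive-transitive version.
\end{itemize}
Finally, $\cI(t)$ is then isomorphic to $\sem{\cH}(t)$ via the bijection between valid encodings and output nodes. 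This bijection follows from Claim~\ref{clm:branchwise-sem} together with totality, which guarantees that the output tree exists.
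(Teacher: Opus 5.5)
Your proposal is correct and follows essentially the same route as the paper. Your per-node bounded visit records are the paper's local Hennie profiles, encoded by second-order variables, and your validity check (locally definable visit-successor plus $\mso$ transitive closure/reachability from the initial visit) is its coherence via an execution order; your $\varphi_{anc}$ and $\varphi_i$ are its per-node prefix extension and one-step extension through output-free visits.
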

\begin{proof}[Sketch of proof]
The main ingredients of the proof are the notions of local and (coherent) global profiles.
Intuitively, a branch-outputting run of $\cH$ is entirely characterized by the finite visits it does to each input nodes. 
A local profile on an input node is then a sequence of visits.
A global profile simply associates to each input node a local profile. 
It is coherent if the local profiles collectively describe a (partial) branch-outputting run.
As a $\THM$ moves step by step, the coherence of a global profile can be decided by neighboring properties and hence it is $\mso$-definable.

To define the $\msosi$ $\cI$, we set $c$ as the number of possible local profiles.
By associating each set $X_p$ to a local profile $p$, an evaluation $(t,\nu)$ where the $X_i$ are a partition of $t$ defines a global profile on $\Dom(t)$. 
The domain formulas $\varphi_\gamma(\vec{X})$ state that the coloring $\vec{X}$ is a  coherent global profile and describes a branch-outputting and a distinguished node $\gamma$ in its last transition.
The ancestor formula $\varphi_{anc}(\vec{X},\vec{Y})$ states that the global profile defined by $\vec{Y}$ is an extension of the one defined by $\vec{X}$.
Finally, the successor formulas $\varphi_i(\vec{X},\vec{Y})$ are similar to the ancestor formula, additionally requiring that the extension is only one step, i.e. with a unique new visit that produces an element of $\Gamma$, the $i$-th successor of $\vec{X}$.
\end{proof}

\begin{corollary}\label{cor:string-msot}
  \emph{Tree-to-string} Hennie machines (i.e.\ $\THM$s whose output alphabet have maximum rank 1) can be effectively translated to equivalent \emph{$\mso$ transductions}.
\end{corollary}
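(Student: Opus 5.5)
Tree-to-string Hennie machines produce monadic output, so each output is a single branch word. By Claim~\ref{clm:branchwise-sem}, that word corresponds to the unique branch-outputting run of $\cH$ from $C_{init}(t)$ to a leaf symbol. We may assume totality: by \Cref{prop:weight-reducing}, or by the effective regularity of the domain (\Cref{cor:reg-domain}), we can restrict to the domain, and $\mso$ transductions handle regular domain restrictions. The plan is to reuse the profile machinery from the proof of \Cref{thm:THMtoMSOSI}, but to observe that in the monadic case the set interpretation can be replaced by a first-order-copying $\mso$ transduction.

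\textbf{Step 1: the run's profile as an $\mso$ relabeling.}
Because the output branch is unique, there is a unique coherent global profile describing the \emph{complete} run. Its uniqueness follows from determinism: a coherent complete profile must agree with the actual run, since coherence is checked locally and step by step. Each local profile has bounded length, at most $N$ visits. So the map sending $t$ to $t$ annotated at every node with its local profile is an $\mso$ relabeling, defined by $\exists \vec X\,(\text{coherent complete profile} \wedge \ldots)$ using the coherence formula of \Cref{thm:THMtoMSOSI}.

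\textbf{Step 2: copies and labels.}
The output positions are the visits, or more precisely the pairs (visit, position within the finite output word $b_i$ emitted at that step). Take as copy set the pairs $(j,r)$ with $j\le N$ and $r$ bounded by the maximal length of the transition outputs. An input node $u$ in copy $(j,r)$ exists iff the local profile at $u$ has at least $j$ visits and the $j$-th transition emits at least $r$ letters. Its label, an element of $\widetilde\Gamma$ or a leaf symbol, is read off the local profile, hence is $\mso$-definable from the relabeling.

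\textbf{Step 3: successor relation (main obstacle).}
The successor is immediate inside one visit: $(j,r)\to(j,r+1)$. Between visits it is harder. The next visit after the $j$-th visit at $u$ is at a neighbour $ud$ (or $u$ itself for stay-steps), and we must determine \emph{which} index $j'$ that visit has at $ud$. The local profile alone records only directions, not indices at neighbours. I would first enrich local profiles so that each visit records the index of its successor visit at the target neighbour. This remains finite since indices are $\le N$, and coherence (matching these indices across the edge) stays a neighbouring property, hence $\mso$-definable. Some transitions emit no letter, so the output successor must skip such visits. Skipping can be defined in $\mso$ as the transitive closure of the run-successor relation restricted to silent visits; this is expressible because the set of visited (node, index) pairs is bounded-copy, so transitive closure over the copies is $\mso$-definable. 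The output is then a string given as a successor structure over copies of input nodes, with labels and edges $\mso$-definable, which is exactly an $\mso$ transduction.
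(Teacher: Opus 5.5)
Your proof is correct, but it takes a different route from the paper's. The paper's proof is three lines. Since the output is monadic, its size is its height plus one, so Claim~\ref{clm:lshi} makes a tree-to-string $\THM$ linear size increase. \Cref{thm:THMtoMSOSI} then turns it into an $\msosi$, and the external result \cite[Theorem~1.5]{StructPoly} says that $\msosi$ of linear size increase are $\mso$ transductions.

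You instead build the $\mso$ transduction by hand:
\begin{itemize}
  \item you use the fact that monadic output means a single deterministic run;
  \item you annotate each input node with its unique local profile via an $\mso$ relabeling;
  \item you take copies indexed by (visit number, letter position in the emitted word);
  \item you obtain the output successor from the run-successor on visit pairs, using an $\mso$-definable transitive closure over the bounded-copy domain to skip silent visits.
\end{itemize}
This is essentially the classical crossing-sequence argument, in the spirit of \cite{EngelfrietHoogeboom} for string-to-string Hennie machines. It gives a self-contained proof that does not depend on the deep normalization theorem for $\msosi$ from \cite{StructPoly}. It also makes the origin semantics explicit: each output letter is a copy of the input node where it was emitted. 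The paper's route is shorter and modular, reusing \Cref{thm:THMtoMSOSI} and a general theorem.

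A minor remark: enriching profiles with successor indices is harmless but unnecessary. The index at $ud$ of the visit following a visit at $u$ can be recovered from the two local profiles by counting crossings of the edge between $u$ and $ud$. This works because the direction from which a node is entered at its $(i+1)$-th visit is determined by the direction in which its $i$-th visit exited. This is what the paper's proof of \Cref{lem:MSOSIdomainFormula} implicitly relies on.
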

\begin{proof}
	Since the size of a string is equal to its height (plus one),	
	by Claim~\ref{clm:lshi} tree-to-string $\THM$s are linear size increase.
	$\msosi$ of linear size increase can be translated to $\mso$ transductions~\cite[Theorem~1.5]{StructPoly}, therefore \Cref{thm:THMtoMSOSI} gives the result.
\end{proof}
% \tito{vérifier si résultats MSOSI cas total ou partiel}
\begin{corollary}\label{cor:reg-domain}
  The subset of inputs of a $\uTHM$ on which it is defined and visits each node at most $N$ times is effectively a regular language. 
\end{corollary}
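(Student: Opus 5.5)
The plan is to derive \Cref{cor:reg-domain} from \Cref{prop:weight-reducing} together with the $\mso$-definability of coherent global profiles used in the proof of \Cref{thm:THMtoMSOSI}. Fix a $\uTHM$ $\cH$ and $N$, and let $L_N$ be the set of inputs $t$ in the domain of $\sem{\cH}$ on which every branch-outputting run visits each node at most $N$ times.

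First, I describe the runs that respect the visit bound by local profiles. A local profile at a node is a sequence of at most $N$ visits. Each visit records the state on arrival, the memory value read, the direction of arrival, the instruction $(q',m,d)$ taken, and the output branch letter in $\widetilde\Gamma$ produced along the way. The memory value read at the $j$-th visit must be $\top$ for $j=1$, and otherwise the value written at visit $j-1$. There are finitely many such profiles, say $c$ of them. A branch-outputting run with at most $N$ visits per node corresponds exactly to a global profile that is \emph{coherent}. Coherence means that consecutive visits match across neighbouring nodes: a departure towards direction $d$ pairs with the next arrival at the neighbour $ud$, in order. This is a local, hence $\mso$-definable, condition on a partition $\vec X$, exactly as in the sketch of \Cref{thm:THMtoMSOSI}.

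Second, I express membership in $L_N$ by an $\mso$ sentence built from three properties.
\begin{itemize}
  \item \textbf{Bounded visits.} No coherent partial run ever needs an $(N+1)$-th visit. This is definable by allowing local profiles of length $N+1$ and requiring that no coherent global profile uses one.
  \item \textbf{Defined transitions.} Every coherent run whose last visit is in a configuration issues a defined transition. This excludes undefined $\delta$ and moves that leave the tree.
  \item \textbf{Termination.} Every branch-outputting run terminates. Under the visit bound, runs have length at most $N|t|$, so termination is automatic.
\end{itemize}
Under the visit bound, the output exists iff all these processes reach leaves. Quantifying universally over $\vec X$ then gives a sentence, and Büchi--Thatcher--Wright yields a tree automaton.

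Alternatively, I could use \Cref{prop:weight-reducing} directly. It gives a total weight-reducing $\THM$ $\cH'$ that outputs a tree containing a default leaf symbol exactly off $L_N$, up to that default symbol also appearing legitimately. To avoid this ambiguity, I take the default to be a fresh symbol $\bot$. Then $L_N = \sem{\cH'}^{-1}(\text{trees without}\ \bot)$, and regularity would follow from regularity reflection. That result comes later in the paper, though, so I would use the direct $\mso$ route.

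The main obstacle is the first step: checking that a coherent global profile faithfully encodes a single deterministic run despite memory rewriting. The key observation is that the memory at a node changes only when the node is visited. So the memory read at a visit is determined by the previous visit in the same local profile, and the determinism of $\delta$ makes the coherent profile of each partial run unique.
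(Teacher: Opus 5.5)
Your argument is correct, but it is organized differently from the paper's. The paper never reopens the profile machinery. It applies \Cref{prop:weight-reducing} with a fresh default leaf $\blacksquare\notin\Gamma$ to obtain a total weight-reducing $\THM$ $\cH'$ with $L_N=\sem{\cH'}^{-1}(\Tree{\Gamma})$; this is exactly the first half of your ``alternative''. It then turns $\cH'$ into an $\msosi$ by \Cref{thm:THMtoMSOSI}. Finally it pulls the \emph{first-order} sentence $\forall x.\,\lnot\blacksquare(x)$ back through the set interpretation, which always yields an $\mso$ sentence on the input (Colcombet--L\"oding). So regularity reflection is not needed, and you dropped that route one step too early: \Cref{thm:THMtoMSOSI} precedes the corollary.

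Unfolding the pulled-back sentence (``no coherent global profile of $\cH'$ ends in $\blacksquare$'') gives essentially your sentence. The difference is that you work directly on the original $\uTHM$, with profiles truncated at length $N+1$ and a pending last visit, instead of first bounding and totalizing the machine. Your version is self-contained and avoids the hypotheses of \Cref{thm:THMtoMSOSI}, which is stated for total $\THM$s. The paper's version takes three lines because the profile construction is verified once there.

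One step deserves more care than your ``main obstacle'' paragraph gives it. Showing that every input of $L_N$ satisfies your sentence requires \emph{soundness} of coherence: every coherent profile must come from an actual run prefix. Otherwise a spurious profile containing an $(N+1)$-th visit would wrongly reject an input. Edge-by-edge matching of crossing sequences only shows that the visit-successor graph is a path from the root's first visit plus possibly detached cycles. Your memory argument also presupposes that each local order is the true time order. You can close this in either of two ways. One is to add the paper's global condition: the transitive closure of the locally defined successor relation on visits (indexed by $\Dom(t)\times\{1,\dots,N+1\}$) is a linear order extending the local orders, which is $\mso$-definable. The other is to note that, since the input is a tree, linear orders at adjacent nodes that agree on the crossings of their common edge amalgamate into an acyclic relation. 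That excludes both cycles and order inversions.
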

\begin{proof}
  Let $\cH$ be a $\uTHM$ with $\sem{\cH}\colon\Tree\Sigma\rightharpoonup\Tree\Gamma$. We apply \Cref{prop:weight-reducing}, taking a default leaf symbol $\blacksquare \notin \Gamma$. We then get a total $\THM$ $\cH'$ with output alphabet $\Gamma\cup\{\blacksquare\}$ such that $\sem{\cH'}^{-1}(\Tree\Gamma)$ is the aforementioned subset. The \emph{first-order} formula $\forall x.\, \lnot \blacksquare(x)$ defines $\Tree\Gamma$ within $\Tree{\Gamma}[\{\blacksquare\}]$, so its inverse image by any $\mso$ set interpretation is regular~\cite[\S2.3]{ColcombetL07}. (\Cref{cor:reg-preserving} states a stronger preservation property of $\THM$s, but its proof goes through $\lambda$-calculus.)
\end{proof}

\section{Actors: an interactive model of computation}\label{sec:actors}

This section treats part of the equivalence proof between tree-to-tree Hennie machines and a $\lambda$-calculus characterization --- a part that  \emph{that does not involve any $\lambda$-term}. We introduce a model of computation for tree-to-tree functions, based on a formalism from from Clairambault and Murawski~\cite{MAHORS} of tree-generating, finite-state processes which we call \emph{actors}. The main result in this section (\Cref{lem:actor-to-thm}) is a translation from actor-based tree transducers with a certain boundedness property to Hennie machines.

The idea is that an actor and its environment interact by exchanging messages; during the course of the interaction, both the actor and the environment may also output tree nodes. We describe the allowed messages for an actor using \emph{types}, which form a subset of linear logic formulas (though a more general notion of interface would have been possible):
$A,B \mathrel{::=} o \mid A \multimap B \mid A \with B$.
\begin{definition}[{\cite[\S3.1]{MAHORS}}]
  The sets $\IM(A)$ of incoming messages and $\OM(A)$ of outgoing messages at a type $A$ are  defined by:
\begin{align*}
  \IM(o) &= \{\bullet\} & \OM(o) &= \varnothing \\
  \IM(A \multimap B) &= \OM(A) + \IM(B) & \OM(A \multimap B) &= \IM(A) + \OM(B)\\
  \IM(A \with B) &= \IM(A) + \IM(B) & \OM(A \with B) &= \OM(A) + \OM(B)
\end{align*}
using the disjoint sum operation $X + Y = \{\ttL\} \times X \cup \{\ttR\} \times Y$.
\end{definition}

Intuitively, an actor is in a \emph{passive} state when it is waiting for an incoming message from the environment. Upon receiving this message, it becomes \emph{active}, then performs some computation, and eventually sends an outgoing message, returning the control to the environment by becoming again passive.

\begin{definition}[{\cite[Def.~2]{MAHORS}}]
  An actor $\alpha$ of type $A$ over the ranked alphabet $\Gamma$ (notation: $\Gamma \Vdash \alpha : A$) consists of:
  \begin{itemize}
    \item a finite set $Q^\ominus$ of \emph{passive states} \& an \emph{initial state} $q^\ominus_0 \in Q^\ominus$;
    \item a finite set $Q^\oplus$ of \emph{active states}, nonempty, disjoint from $Q^\ominus$;
    \item two transition functions $\delta^\ominus \colon Q^\ominus \times \IM(A) \rightharpoonup Q^\oplus$ and
          \[ \delta^\oplus \colon Q^\oplus \rightharpoonup Q^\oplus \cup \{\underbrace{\gamma(q^\oplus_1,\dots)}_{\mathclap{\text{this notation implies that \(\gamma\) takes \(\rank(\gamma)\) arguments}}} \mid \gamma \in \Gamma,\, q^\oplus_i \in Q^\oplus\} \cup (Q^\ominus \times \OM(A))\]
  \end{itemize}
  When $A = o$, we associate to this actor the tree-generating machine:
  \begin{itemize}
    \item whose set of configurations is $Q^\oplus$;
    \item whose initial configuration is $\delta^\ominus(q^\ominus_0,\bullet)$;
    \item whose computation-step function is $\delta^\oplus$.
  \end{itemize}
  The output of this machine is the \emph{tree generated by $\Gamma \Vdash \alpha : o$}.
\end{definition}

\begin{example}
  Every tree over $\Gamma$ can be generated by an actor with a single passive state, and whose active states correspond to the internal nodes of the tree.
\end{example}

\begin{example}\label{ex:actor-diagonal}
  We illustrate the use of multiple passive states with an actor of type $A \multimap (A \with A)$. It relays messages back and forth between the $A$ on the left of $\multimap$ and the two $A$s on the right; the incoming messages on the left are forwarded to a recipient determined by the state. Formally: $Q^\ominus = \{\ttL,\ttR\}$ and $Q^\oplus,\delta^\ominus,\delta^\oplus$ are chosen to satisfy
  \begin{align*}
    (\delta^\oplus \circ \delta^\ominus) \colon Q^\ominus \times \overbrace{\IM(A \multimap (A \with A))}^{\mathclap{\OM(A) + (\IM(A) + \IM(A))}} &\rightharpoonup Q^\ominus \times \overbrace{\OM(A \multimap (A \with A))}^{\mathclap{\IM(A) + (\OM(A) + \OM(A))}} \\
    (x, (\ttL,\mathfrak{m})) &\mapsto (x, (\ttR, (x, \mathfrak{m})))\\
    (x, (\ttR,(y,\mathfrak{m}))) &\mapsto (y, (\ttL, \mathfrak{m}))
  \end{align*}
\end{example}

We now explain how to \enquote{plug together} two actors $\Gamma \Vdash \alpha : A$ and $\Gamma \Vdash \beta : A \multimap B$. The \emph{application}  $\beta(\alpha)$ is a sort of product\footnote{It can be seen as a variant with two-way communication of the cascade product of sequential transducers.} automaton, keeping track of the state of both $\alpha$ and $\beta$, and transferring messages between them: for instance, a message $\mathfrak{m}\in\OM(A)$ emitted by $\alpha$ is then received by $\beta$ under the form $(\ttL,\mathfrak{m}) \in \{\ttL\}\times\OM(A) \subset \IM(A\multimap B)$. The exchanges of messages between $\alpha$ and $\beta$ are treated as internal computation steps of $\beta(\alpha)$.
\begin{definition}[{special case of~\cite[Def.~3]{MAHORS}}]\label{def:actor-app}
  Let $\Gamma \Vdash \alpha : A$ and $\Gamma \Vdash \beta : A \multimap B$. For their sets of states and transition functions, we use the notations of the previous definition with indices $\alpha,\beta$. The \emph{application} $\beta(\alpha)$ has the sets of states
  \[ Q^\ominus_{\beta(\alpha)} = Q^\ominus_\alpha \times Q^\ominus_\beta \qquad Q^\oplus_{\beta(\alpha)} = (Q^\oplus_\alpha \times Q^\ominus_\beta) \cup (Q^\ominus_\alpha \times Q^\oplus_\beta) \]
  Indeed, $\alpha$ perceives $\beta$ to be part of its environment, and vice versa. For this reason, at any point during an interaction, if one of them is active, then the other must be passive.

  The initial state is $(q_{0,\alpha},q_{0,\beta})$. The transition functions are
  \[\delta^\ominus_{\beta(\alpha)} \colon ((q^\ominus_\alpha,q^\ominus_\beta),\mathfrak{m}) \mapsto (q^\ominus_\alpha, \delta^\ominus_\beta(q^\ominus_\beta,(\ttR,\mathfrak{m})))\] (redirecting incoming messages from the environment to $\beta$) and
  \begin{align*}
    \delta^\oplus_{\beta(\alpha)} \colon (q^\oplus_\alpha,q^\ominus_\beta)
    &\mapsto \left[{\begin{aligned}
      p^\oplus &\mapsto (p^\oplus,q^\ominus_\beta)\\
      \gamma(p^\oplus_1,\dots) &\mapsto \gamma((p^\oplus_1,q^\ominus_\beta),\dots)\\
      (p^\ominus,\mathfrak{m}) &\mapsto (p^\ominus,\delta^\ominus_\beta(q^\ominus_\beta,(\ttL,\mathfrak{m})))
    \end{aligned}}\right]\!\!\bigl(\delta^\oplus_{\alpha}(q^\oplus_\alpha)\bigr) \\
    (q^\ominus_\alpha,q^\oplus_\beta)
    &\mapsto \left[{\begin{aligned}
      p^\oplus &\mapsto (q^\ominus_\alpha,p^\oplus)\\
      \gamma(p^\oplus_1,\dots) &\mapsto \gamma((q^\ominus_\alpha,p^\oplus_1),\dots)\\
      (p^\ominus,(\ttL,\mathfrak{m})) &\mapsto (\delta^\ominus_\alpha(q^\ominus_\alpha,\mathfrak{m}),p^\ominus)\\
      (p^\ominus,(\ttR,\mathfrak{m})) &\mapsto ((q^\ominus_\alpha,p^\ominus),\mathfrak{m})\\
    \end{aligned}}\right]\!\!\bigl(\delta^\oplus_{\beta}(q^\oplus_\beta)\bigr)
  \end{align*}
\end{definition}

\begin{definition}\label{def:actor-transducer}
  An \emph{actor-based tree transducer} with input alphabet $\Sigma$ and output alphabet $\Gamma$ (both ranked) is specified by a type~$A$ and:
  \begin{itemize}
    \item $\overbrace{\Gamma \Vdash \beta_\sigma : A \multimap (A \multimap \dots (A \multimap A)\dots)}^{\mathclap{\text{\enquote{transition actors}, with \(\rank(\sigma)+1\) times \(A\)}}}$ for each $\sigma \in \Sigma$;
    \item $\Gamma \Vdash \beta_{\mathrm{out}} : A \multimap o$ (\enquote{output actor}).
  \end{itemize}
  On an input $t\in\Tree\Sigma$, it outputs the tree generated by $\Gamma \Vdash \beta_{\mathrm{out}}(\alpha_{t}) : o$, where we define inductively
  \[ \Gamma \Vdash \alpha_{\sigma(t_1,\dots)} = \beta_\sigma(\alpha_{t_1}) \dots (\alpha_{t_{\rank(\sigma)}}) : A\quad\text{for}\ \sigma\in\Sigma \]
\end{definition}

Let us denote the components of the actor $\beta_\sigma$ with indices: $Q^\ominus_\sigma,q^\ominus_{0,\sigma},Q^\oplus_\sigma,\delta^\ominus_\sigma,\delta^\oplus_\sigma$; and similarly for $\beta_{\mathrm{out}}$. By induction on $t \in \Tree\Sigma$, one can deduce from \Cref{def:actor-app} that:
\begin{claim}\label{clm:big-actor}
  Up to bijections that just reindex products, the states of $\beta_{\mathrm{out}}(\alpha_t)$ are:
  \begin{align*}
    Q^\ominus &\cong Q^\ominus_{\Dom(t)} \times Q^\ominus_{\mathrm{out}} \quad \text{where}\ Q^\ominus_{X} = \prod_{u \in X} Q^\ominus_{\lab_t(u)}\ \text{for}\ X \subseteq \Dom(t) \\
    Q^\oplus &\cong Q^\ominus_{\Dom(t)} \times Q^\oplus_{\mathrm{out}} \;\cup\; \bigcup_{\mathclap{u \in \Dom(t)}}  \{u\} \times Q^\oplus_{\lab_t(u)} \times Q^\ominus_{\Dom(t)\setminus\{u\}} \times Q^\ominus_{\mathrm{out}}
  \end{align*}
\end{claim}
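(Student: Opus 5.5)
We argue by structural induction on $t$, proving a slightly more general statement. For each input tree $t$, the actor $\alpha_t$ of type $A$ has passive states $Q^\ominus_{\Dom(t)}$ and active states $\bigcup_{u \in \Dom(t)} \{u\} \times Q^\oplus_{\lab_t(u)} \times Q^\ominus_{\Dom(t)\setminus\{u\}}$, up to reindexing bijections. Here nodes of a subtree $\subtree{t}{i}$ are identified with nodes $iv$ of $t$. The claim for $\beta_{\mathrm{out}}(\alpha_t)$ then follows from one more application of \Cref{def:actor-app}, with $\beta = \beta_{\mathrm{out}}$ and $\alpha = \alpha_t$:
\begin{align*}
Q^\ominus_{\beta(\alpha)} &= Q^\ominus_{\alpha_t} \times Q^\ominus_{\mathrm{out}},\\
Q^\oplus_{\beta(\alpha)} &= (Q^\oplus_{\alpha_t} \times Q^\ominus_{\mathrm{out}}) \cup (Q^\ominus_{\alpha_t} \times Q^\oplus_{\mathrm{out}}).
\end{align*}
Substituting the inductive description gives exactly the two displayed formulas.

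For the inductive step, take $t = \sigma(t_1,\dots,t_k)$. Then $\alpha_t$ is the iterated application $(\cdots(\beta_\sigma(\alpha_{t_1}))\cdots)(\alpha_{t_k})$. We first need the intermediate actors $\beta_\sigma(\alpha_{t_1})\cdots(\alpha_{t_j})$, which have type $A\multimap\cdots\multimap A$; \Cref{def:actor-app} is stated for arbitrary $A,B$, so these are well defined. We then prove by an inner induction on $j$ the following. The passive states of $\beta_\sigma(\alpha_{t_1})\cdots(\alpha_{t_j})$ are $Q^\ominus_\sigma \times \prod_{i\le j} Q^\ominus_{\alpha_{t_i}}$. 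Its active states are the union of two kinds of tuples: those where the $\beta_\sigma$ component is active and all the $\alpha_{t_i}$ components are passive, and those where exactly one $\alpha_{t_i}$ component is active and $\beta_\sigma$ together with the other $\alpha_{t_{i'}}$ are passive.

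The inner step is just the definition. Application multiplies passive sets. Its active set is (active of one factor) $\times$ (passive of the other), unioned over the two choices of which factor is active. So the invariant ``exactly one component active'' propagates, and by associativity of products the tuples reindex as claimed. At $j=k$ we substitute the outer induction hypothesis for each $\alpha_{t_i}$:
\begin{itemize}
\item the passive part is $Q^\ominus_\sigma \times \prod_i Q^\ominus_{\Dom(t_i)} \cong Q^\ominus_{\Dom(t)}$, since $\Dom(t) = \{\varepsilon\} \cup \bigcup_i i\cdot\Dom(t_i)$ is a disjoint union and $\lab_t(\varepsilon)=\sigma$;
\item an active state with active node $u=iv$ in $t_i$ becomes $\{iv\}\times Q^\oplus_{\lab_t(iv)} \times Q^\ominus_{\Dom(t)\setminus\{iv\}}$;
\item an active state with $\beta_\sigma$ active gives $u=\varepsilon$.
\end{itemize}

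The base case, where $\sigma$ has rank $0$, is immediate: $\alpha_t = \beta_\sigma$. There is no real obstacle here. The only care needed is bookkeeping: making the reindexing bijections explicit, namely the disjoint-union decomposition of $\Dom(t)$ and the associativity and commutativity of products. One should also note that the states are formally nested pairs and that the claim is only up to these bijections.
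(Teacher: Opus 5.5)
Your proof is correct and follows essentially the same route as the paper. The paper first proves a general statement about the states of an iterated application $\beta(\alpha_1)\dots(\alpha_k)$ by induction on $k$, which is your inner induction on $j$. It then deduces the states of $\alpha_t$ by structural induction on $t$, using $\Dom(\sigma(t_1,\dots,t_{\rank(\sigma)})) \cong \{\varepsilon\} + \Dom(t_1) + \dots + \Dom(t_{\rank(\sigma)})$, and concludes with one final application of $\beta_{\mathrm{out}}$, just as you do.
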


The intuitive idea is that if the composite actor $\beta_{\mathrm{out}}(\alpha_t)$ is in an active state $x = (u,\dots)$, this means that its subprocess $\beta_{\sigma}$ corresponding to the node $u$ in $t$ (with $\sigma=\lab_{t}(u)$) is currently active while the others are passive --- and the passive states of these other subprocesses are also recorded in the $Q^\ominus_{\Dom(t)\setminus\{u\}}$ component of~$x$. We simulate $\beta_{\mathrm{out}}(\alpha_t)$ by a $\uTHM$ running on the input $t$, whose position corresponds to the current active subprocess, and which writes in memory the passive states. We then get:
\begin{lemma}\label{lem:actor-to-uthm}
  Every actor-based tree transducer $\cA$ can be translated to a $\uTHM$ $\cH$ such that the partial function $\sem\cH$ extends $\sem\cA$.
\end{lemma}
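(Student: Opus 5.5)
We build a $\uTHM$ $\cH$ that simulates the tree-generating machine of $\beta_{\mathrm{out}}(\alpha_t)$ step by step, using the decomposition of states from Claim~\ref{clm:big-actor}. The memory set is $M = \{\top\} \cup \bigcup_{\sigma\in\Sigma} Q^\ominus_\sigma$, where $\top$ stands for \enquote{initial passive state $q^\ominus_{0,\lab_t(u)}$ at $u$}. The machine's state is either an active state of some $\beta_\sigma$ (plus the node's own identity implicitly given by the position), or a \enquote{message in transit}. The passive state of $\beta_{\mathrm{out}}$ sits at the root; we add an extra component to the root's memory, stored in the root's memory symbol alongside its own passive state. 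An active configuration $(u, q^\oplus, \vec{p}^\ominus)$ of $\beta_{\mathrm{out}}(\alpha_t)$ is then represented by $\cH$ sitting at $u$ in state $q^\oplus$, with $\mu(v)$ encoding $p^\ominus_v$ for $v\neq u$. The component of $\mu(u)$ is meaningless while $u$ is active. For the active states of $\beta_{\mathrm{out}}$ itself, we place $\cH$ at the root in a state tagged \enquote{out}.

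Next we unfold how messages travel in the nested application $\beta_\sigma(\alpha_{t_1})\cdots(\alpha_{t_k})$. By Definition~\ref{def:actor-app}, a message emitted by $\beta_\sigma$ at $u$ is routed according to its tag. Messages in $\OM(A\multimap\cdots\multimap A)$ have the form $\ttR^{j}\ttL\,\mathfrak m$ or $\ttR^{k}\mathfrak m$. The first form is delivered to the passive child $\alpha_{t_{j+1}}$, i.e.\ to node $u(j+1)$, as an incoming message $\mathfrak m\in\IM(A)$. The second form goes to the environment of $\alpha_{t_u}$: either the parent $\parent{u}$, where it arrives tagged with the index of $u$ among the parent's children, or $\beta_{\mathrm{out}}$ if $u=\varepsilon$. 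Symmetrically, $\beta_{\mathrm{out}}$ sends messages $\ttL\,\mathfrak m$ to the root. So every internal communication moves between neighbouring nodes.

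The transitions of $\cH$ at $u$ with label $\sigma$ follow from this. In an active state $q^\oplus$, $\cH$ applies $\delta^\oplus_\sigma$. A result $p^\oplus$ gives a stay-instruction (Lemma~\ref{lem:stay}). A result $\gamma(p^\oplus_1,\dots)$ outputs $\gamma$ with stay-instructions into each branch; this is how the forking of the tree-generating machine is reproduced. A result $(p^\ominus,\mathfrak m')$ writes $p^\ominus$ into $\mu(u)$ and moves to the neighbour designated above, in a transit state carrying $\mathfrak m'$ together with its arrival tag. On arrival in a transit state, $\cH$ reads the passive state from $\mu(\cdot)$ (interpreting $\top$ as the initial state) and applies $\delta^\ominus$ to become active. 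The initial configuration $\delta^\ominus_{\mathrm{out}}(q^\ominus_{0,\mathrm{out}},\bullet)$ is handled from the root.

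Correctness is proved by showing that the map from $\cH$-configurations in active states to states of $\beta_{\mathrm{out}}(\alpha_t)$ commutes with computation steps, modulo the intermediate transit/stay steps. Each machine step of $\beta_{\mathrm{out}}(\alpha_t)$ then corresponds to a bounded number of $\cH$-steps producing the same output tree fragment. Claim~\ref{clm:branchwise-sem} therefore gives equal branch-word sets whenever the actor's output exists, hence $\sem\cH$ extends $\sem\cA$. The main obstacle is the bookkeeping of this routing. Establishing that the nested sums $\IM/\OM$ of the curried type decompose exactly as claimed needs an induction on $k$ through Definition~\ref{def:actor-app}. This is also the induction behind Claim~\ref{clm:big-actor}, which we reuse. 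Where the actor's output is undefined, $\cH$ may still be defined, which is why we only claim extension.
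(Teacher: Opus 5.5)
Your overall strategy is the paper's: simulate $\beta_{\mathrm{out}}(\alpha_t)$ step by step, keep passive states in memory, and put the head on the active subprocess. Your local deviations are sound and make the construction more elementary.
\begin{itemize}
  \item Reading $\top$ as $q^\ominus_{0,\sigma}$ for the label $\sigma$ at hand replaces bottom-up initialization.
  \item Transit states replace the paper's use of regular lookaround (\Cref{thm:lookaround}), which lets the paper apply the neighbour's $\delta^\ominus$ before moving.
  \item Storing $\beta_{\mathrm{out}}$'s passive state in the root's memory, rather than in the control state as the paper does, is correct. The root's cell is untouched while the head is elsewhere, and the extra component also marks the root, provided every write at the root preserves it.
\end{itemize}

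There is one concrete hole. Suppose $\beta_\sigma$ at a non-root node $u = wi$ emits $\ttR^{k}\mathfrak m$. Then the parent's actor must receive $\mathtt{arg}_i\mathfrak m = \ttR^{i-1}\ttL\,\mathfrak m$, so someone must know $i$. You say the upward move \enquote{arrives tagged with the index of $u$ among the parent's children}. But a $\uTHM$ transition (\Cref{def:uthm}) sees only $(q,\lab_t(u),\mu(u))$, and nothing in your construction supplies $i$:
\begin{itemize}
  \item Your memory set $\{\top\}\cup\bigcup_\sigma Q^\ominus_\sigma$ contains no child index.
  \item The control state cannot carry it either. $u$ may have been activated by a message from one of its own children, in which case the transit state knows that child's index, not $u$'s.
  \item The parent cannot recover it on arrival, since a $\uTHM$ does not know from which direction it entered a node.
\end{itemize}
So the transitions for upward messages are not well defined as you stated them. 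The paper gets this index, and root-ness, from regular lookaround.

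The repair is cheap; any of these works.
\begin{itemize}
  \item[(i)] Record, next to the passive state, the direction of the last downward move at each node. The head can re-enter $w$ from below only through the child $wi$ it last descended into, so the parent can form $\mathtt{arg}_i\mathfrak m$ itself. This is the same trick as the directional memory component in the proof of \Cref{thm:lookaround}.
  \item[(ii)] Let downward transit states carry the direction. Store $u$'s child index in $\mu(u)$ at its first visit, which necessarily comes from its parent, and preserve it in all later writes.
  \item[(iii)] Invoke lookaround, as the paper does.
\end{itemize}
With (i) or (ii), your simulation argument goes through unchanged and still needs neither bottom-up initialization nor lookaround.
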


The effective construction uses the bottom-up initialization and regular lookaround features presented in \Cref{sec:basic-extensions}.

To conclude this section, we give a sufficient condition for the a priori unrestricted Hennie machine built above to be bounded-visit.
\begin{definition}\label{def:wd-actor}
  An actor $\Gamma \Vdash \alpha : A$ with states $Q^\ominus,Q^\oplus$ and transitions $\delta^\ominus,\delta^\oplus$ is \emph{weight-reducing} when one can map each state $q \in Q^\ominus \cup Q^\oplus$ to a weight $\omega(q)$ in such a way that:
  \begin{itemize}
    \item for every $q^\ominus \in Q^\ominus$ and $\mathfrak{m} \in \IM(A)$, if $\delta^\ominus(q^\ominus,\mathfrak{m})$ is defined, then its weight is strictly less than $\omega(q^\ominus)$;
    \item for $q^\oplus \in Q$, every state that appears in $\delta^\oplus(q^\oplus)$ has a weight less than or equal to $\omega(q^\oplus)$.
  \end{itemize}

  An actor-based transducer is \emph{weight-reducing} when all the actors that compose it are weight-reducing.
\end{definition}
\begin{lemma}\label{lem:actor-to-thm}
  Every weight-reducing actor-based tree transducer $\cA$ can be translated to a $\THM$ $\cH$ such that $\sem\cA \subseteq \sem\cH$.
\end{lemma}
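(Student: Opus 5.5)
We start from the $\uTHM$ $\cH_0$ given by \Cref{lem:actor-to-uthm}, with $\sem{\cA} \subseteq \sem{\cH_0}$. Recall how it works. Its position is the node $u$ whose subprocess $\beta_{\lab_t(u)}$ is active (or a distinguished treatment of $\beta_{\mathrm{out}}$ at the root). The passive states of the other subprocesses sit in memory. The step function is obtained from $\delta^\oplus$ of the active actor. When the active actor emits a message, control passes to a neighbouring subprocess via $\delta^\ominus$: the parent through the $\ttR$ component of the arrow type, or a child through its $\ttL$ component. The plan is to show that weight-reduction of the actors yields a uniform bound $N$ on the number of visits of each node along any branch-outputting run. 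We would then either conclude directly that $\cH_0$ is bounded-visit, or apply \Cref{prop:weight-reducing} with this $N$ to obtain a genuine $\THM$ $\cH$ with $\sem{\cH_0}\subseteq\sem{\cH}$ on the relevant inputs. The latter route is safer, because the bottom-up initialization and regular lookaround used to build $\cH_0$ add auxiliary visits, and we want those handled uniformly.

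The key counting step works at the level of a single node $u$ with actor $\beta = \beta_{\lab_t(u)}$, followed along one branch-outputting run.

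First, between two consecutive activations of $\beta$, the subprocess goes from a passive state $q^\ominus$ to $\delta^\ominus(q^\ominus,\mathfrak m)$. This has weight strictly below $\omega(q^\ominus)$.

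Second, the active phase only passes through active states of weight $\le$ that value and ends in a passive state $p^\ominus$ of weight $\le$ it. A branch-outputting run follows a single branch when $\delta^\oplus$ produces $\gamma(p_1,\dots)$, so this holds along the chosen branch.

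Hence the weights of the successive passive states of $\beta$ along the run strictly decrease. Since weights take values in a finite set, the number of activations of $\beta$ is at most $W = \max\omega + 1$.

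Each activation lasts a bounded number of steps of $\cH_0$ at $u$. Indeed, a cycle of active states of equal weight with no output would make the run infinite, and such runs produce no branch. We may also first normalize $\delta^\oplus$ so that chains of silent active steps are collapsed into one $\cH_0$ step; internal steps would anyway be done via stay-instructions (\Cref{lem:stay}).

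The remaining visits at $u$ are transits: the machine passes through $u$ while relaying a message between the parent and the children. In the construction of \Cref{lem:actor-to-uthm} a message is delivered directly to the adjacent subprocess, so every arrival at $u$ either activates $\beta$ or is one of the lookaround/initialization steps. Bounding these extra steps by a constant per activation of $u$ or of a neighbour gives $N = O(W\cdot(\maxrank{\Sigma}+2))$ visits per node.

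The main obstacle I expect is making this precise against the actual construction of \Cref{lem:actor-to-uthm}. In particular, the bottom-up initialization (\Cref{lem:bottom-up-rel}) and regular lookaround (\Cref{thm:lookaround}) preserve bounded-visit, so it suffices to bound visits of the $\uTHMr$ with initialization. One also has to handle that $\sem{\cA}$ is only a partial function. This is exactly why we end with \Cref{prop:weight-reducing}: it converts the uniform bound $N$, valid on the domain of $\sem{\cA}$, into a total weight-reducing $\THM$ $\cH$ that agrees with $\cH_0$, hence with $\cA$, wherever $\sem{\cA}$ is defined, giving $\sem\cA\subseteq\sem\cH$.
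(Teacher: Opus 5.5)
Your proposal is correct and follows essentially the argument the paper intends. The paper only remarks that the machine of \Cref{lem:actor-to-uthm} becomes bounded-visit under weight-reduction, because the weight of the passive state stored at each node strictly drops at every activation. Your final appeal to \Cref{prop:weight-reducing} is a sound way to need the visit bound only on $\Dom(\sem\cA)$.

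One small repair is needed in bounding the length of an active phase. You must exclude every repetition of a local active state within the phase, not only silent cycles: output-producing loops such as $q^\oplus \mapsto \gamma(q^\oplus)$ are allowed by the weight condition. On $\Dom(\sem\cA)$ the generated tree is finite, so no global configuration of $\beta_{\mathrm{out}}(\alpha_t)$ repeats along a branch-outputting run. Since only the local active state changes during a phase, each phase at a $\sigma$-labelled node therefore has at most $|Q^\oplus_\sigma|$ steps. Accordingly, $N$ is of order $W\cdot\max_\sigma|Q^\oplus_\sigma|$ plus the overhead of compiling away initialization and lookaround, rather than $O(W\cdot(\maxrank{\Sigma}+2))$.
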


\section{Affine $\lambda$-transducers with additive branching}
\label{sec:lambda}

The goal of this section is to show:

\begin{theorem}\label{thm:lambda-actor-hennie}
  The following devices can compute the same total tree-to-tree functions:
  \begin{enumerate}
    \item tree-to-tree affine $\lambda$-transducers with additive branching;
    \item weight-reducing actor-based tree transducers;
    \item tree-to-tree Hennie machines.
  \end{enumerate}
\end{theorem}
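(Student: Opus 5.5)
We prove the cycle of inclusions $(1) \subseteq (2) \subseteq (3) \subseteq (1)$ suggested by \Cref{fig:results}. The inclusion $(2) \subseteq (3)$ is essentially \Cref{lem:actor-to-thm}. It gives a $\THM$ $\cH$ with $\sem\cA \subseteq \sem\cH$. When $\sem\cA$ is total, the two functions coincide, and then $\sem\cH$ is total as well.

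For $(1) \subseteq (2)$ we follow Clairambault and Murawski~\cite{MAHORS}, interpreting affine $\lambda$-terms with $\with$ as actors, i.e.\ a finite-memory geometry of interaction.
\begin{itemize}
  \item We first normalize the $\lambda$-transducer: the input tree is encoded Church-style, each input letter is sent to a closed term of type $A \multimap \dots \multimap A$, and a final term of type $A \multimap o$ plays the role of $\beta_{\mathrm{out}}$. Output constructors of rank $k$ are given the type $o \with \dots \with o \multimap o$ (additive branching).
  \item Next we interpret each closed affine term, by induction on typing derivations, as an actor of the corresponding type. Variables become copycat actors. Abstraction is reindexing of messages. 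Application is \Cref{def:actor-app}. Additive pairs are interpreted by an actor that remembers, in its passive state, which component is in use; \Cref{ex:actor-diagonal} is the prototype. Output constructors send $\bullet$ on the result and spawn one active state per child, each of which then queries the corresponding component of the $\with$.
  \item Along the way we check that these actors are weight-reducing. Weights are given by counting how many incoming messages are still possible. Affineness ensures that each type position is queried at most once per additive branch, so only finitely many interactions are possible, and this makes a decreasing weight function exist.
  \item Finally we show that the generated tree equals the $\beta$-normal form. This is an adequacy lemma, to be imported from~\cite{MAHORS}.
\end{itemize}

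For $(3) \subseteq (1)$ we encode a $\THM$ directly as a $\lambda$-transducer. By \Cref{prop:weight-reducing} we may assume the machine is weight-reducing, so the memory at each node decreases along a finite chain. The plan is as follows.
\begin{itemize}
  \item The type $A$ assigned to a subtree is the interface of that subtree as seen from its parent. It is a $\with$ over the possible entry states $q$ of the ``continuation'' obtained by entering the subtree downwards in state $q$. This continuation eventually either produces output leaves or exits upward in some state.
  \item Because the memory is weight-reducing, the number of re-entries is bounded, so $A$ can be defined by a finite nesting of levels $A_n$, indexed by remaining weight. Each level is given by a formula built from $\multimap$ and $\with$ over the state set, in which the parent hands the child a continuation for when it returns upward.
  \item Output nodes emitted mid-walk correspond to using the additive pair at type $o \with \dots \with o$. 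Because of this, each output branch uses the continuation affinely, which matches the branchwise bounded-visit condition.
\end{itemize}

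The main obstacle is this last encoding: designing the types so that every use is affine while parallel output branches each reuse the same continuations. The plan is to lean on $\with$ for exactly that duplication, and to prove correctness through the branchwise semantics (\Cref{clm:branchwise-sem}).
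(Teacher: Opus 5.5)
Your decomposition matches the paper's. It is the cycle $(1)\subseteq(2)\subseteq(3)\subseteq(1)$, with $(2)\subseteq(3)$ from \Cref{lem:actor-to-thm}, $(1)\subseteq(2)$ via the game semantics of~\cite{MAHORS}, and $(3)\subseteq(1)$ via a continuation-passing encoding of subtree behaviors, unfolded finitely using \Cref{prop:weight-reducing}.

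\textbf{The gap is in $(1)\subseteq(2)$.} You cannot \emph{check} that the actors obtained by interpreting typing derivations are weight-reducing, because in general they are not.
\begin{itemize}
\item A copycat actor relays a message and returns to its unique passive state.
\item In \Cref{ex:actor-diagonal}, which the paper explicitly says is not weight-reducing, the passive state $x$ receives $(\ttL,\mathfrak{m})$ and ends up back in passive state $x$. \Cref{def:wd-actor} would then force $\omega(x) < \omega(x)$.
\end{itemize}
The affineness bound you invoke (each message occurs at most once in a valid interaction) is a property of the strategy $\Strat(\alpha)$. It is not a property of the transition functions of $\alpha$, which are also defined on histories that repeat messages. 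Since the states do not record how many messages have been received, no decreasing weight on them exists.

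Your counting idea is the right one, but it must be implemented by \emph{modifying} the actors without changing their strategies:
\begin{itemize}
\item add to the states of each $\beta_\sigma$ and of $\beta_{\mathrm{out}}$ a counter of incoming messages received so far;
\item make $\delta^\ominus$ undefined when the counter would exceed $|\IM(A)|$;
\item take $|\IM(A)|$ minus the counter as the weight.
\end{itemize}
Passive-to-active steps then strictly decrease the weight and active steps preserve it. Valid interactions never repeat a message, so the modified actor plays the same strategy. The compositionality of application and adequacy at type $o$ in~\cite{MAHORS} are stated in terms of the terms that actors play. Hence the transducer built from the modified actors still computes the same function.

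\textbf{On $(3)\subseteq(1)$.} Your outline agrees with the paper but stops at what you call the main obstacle. The paper resolves it with $A_0 = o^{\with Q}$ and $A_{n+1} = (A_n \multimap o^{\with Q}) \multimap o^{\with Q}$.
\begin{itemize}
\item The $q$-component of the result is what entering the subtree in state $q$ produces.
\item The continuation takes as argument the \emph{updated} behavior of the subtree, since its memory has changed by the time the machine exits. It returns a tuple indexed by the exit state.
\item The index counts remaining upward exits. An exit writes some $m'$ over $m$, and weight reduction gives $k \geq \omega(m) > \omega(m')$. So $A_{k-1}$ suffices, and the truncation is lossless with $N = |M|$.
\item Affineness holds because the tuple over $Q$ and the output constructors are additive, so $\vec{z}$ and $x$ may be shared between components. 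Each branch of $\delta(q,\sigma,m)$ ends in a single leaf, and the term substituted for a move leaf uses each of $\vec{z}$ and $x$ at most once. For a downward move, that term passes to $z_i$ a continuation rebuilding the current node's behavior around $z_{\mathrm{new}}$.
\end{itemize}
The paper then proves correctness by showing that $\beta$-reduction simulates the tree-generating machine of $\cH$, rather than through the branchwise semantics.
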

\Cref{lem:actor-to-thm} gives us $(2) \subseteq (3)$. We show $(1) \subseteq (2)$ in \Cref{sec:lambda-actor} and $(3) \Rightarrow (1)$ in \Cref{sec:hennie-lambda}. In \S\ref{sec:precomp-lambda}, we deduce the closure of this function class by precomposition by $\lshimtt$.

But first, we state a few definitions. As in the previous section, our grammar of types is $A,B \mathrel{::=} o \mid A \multimap B \mid A \with B$. 
We use the abbreviations (taking into account that our type system does not contain the multiplicative conjunction $\otimes$):
\[ A^{\with k} = \underbrace{A \with \dots \with A}_{\text{\(k\) times}} \qquad A^{\otimes k} \multimap B = \underbrace{A \multimap \dots \multimap A}_{\text{\(k\) times}} \multimap B\]
where $\multimap$ is right-associative. It does not matter whether `$\with$' is left- or right-associative as long as we stay consistent when encoding additive $k$-tuples as nested pairs (notation: $\pp{T_1,\dots,T_k}$).

Our typing judgments are of the form $\Phi \mid \Theta \vdash T : A$ where $\Phi$ is a context of \emph{reusable constants}, $\Theta$ is a context of \emph{affine variables}, $T$ is a $\lambda$-term and $A$ is a type. We always take the context of constants $\Phi$ to be an encoding of a ranked alphabet $\Gamma$; there are two variants: 
\begin{align*}
  \text{\emph{additive branching:}}\quad \Gamma^\with &= \{ \gamma : o^{\with\rank(\gamma)} \multimap o \mid \gamma \in \Gamma \}\\
  \text{\emph{multiplicative branching:}}\quad \Gamma^\otimes &= \{ \gamma : o^{\otimes\rank(\gamma)} \multimap o \mid \gamma \in \Gamma \}
\end{align*}
Our grammar of $\lambda$-terms, typing rules and reduction rules are standard, cf.\ e.g.~\cite[Figure~1]{MAHORS}, and are recalled in the appendix.  
By turning $\multimap$ into $\to$ and $\with$ into $\times$, we can translate our affine $\lambda$-terms to simply typed $\lambda$-terms with products, in a way that preserves reduction steps. Therefore, we inherit the strong normalization and confluence of $\beta$-reduction from the simply typed case. This ensures that every affine $\lambda$-term has a unique normal form.
\begin{claim}[{\cite[\S2.2.3]{LHORS}}]\label{clm:tree-lambda-encoding}
  The encoding that maps, for instance, the tree $a(b(c),c)$ to the $\lambda$-term $a\, \langle (b\, c),\, c \rangle$ (resp.\ $a\, (b\, c) \, c$) defines, for any ranked alphabet $\Gamma$, a bijection between
  $\Tree\Gamma$ and the $\lambda$-terms $T$ in normal form such that $\Gamma^\with \mid \varnothing \vdash T : o$ (resp.\ $\Gamma^\otimes \mid \varnothing \vdash T : o$).
\end{claim}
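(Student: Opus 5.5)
We prove the claim by mutual structural recursion: one direction decodes a tree from a normal form, the other encodes a tree as a normal form and checks that the two maps are inverse. The encoding direction is immediate. By structural induction, $\gamma(t_1,\dots,t_k)$ is sent to $\gamma\,\pp{\underline{t_1},\dots,\underline{t_k}}$ (resp.\ $\gamma\,\underline{t_1}\cdots\underline{t_k}$), where $\underline{t_i}$ is the encoding of $t_i$. The context of affine variables is empty, and the constants in $\Phi$ are reusable, so the additive pairing rule (which shares the affine context among its components) and the application rule (which splits it) are both satisfied. Hence the term has type $o$. It is in normal form because a constant applied to a tuple or to normal arguments creates no $\beta$-redex and no projection redex. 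For leaves ($k=0$) we use $o^{\with 0}\multimap o$ and $o^{\otimes 0}\multimap o$, both read as $o$, so the leaf is just the constant $\gamma$.

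For the converse, we show that every normal $T$ with $\Gamma^\with\mid\varnothing\vdash T:o$ is of that shape. The standard tool is the structure of normal forms: a $\beta$-normal term, with no $\pi_i\pp{\dots}$ redexes, is either an abstraction $\lambda x.T'$, a pair $\pp{T_1,T_2}$, or a neutral term $h\,e_1\cdots e_n$. In the neutral case the head $h$ is a variable or a constant, and each $e_j$ is either an argument term or a projection $\pi_1,\pi_2$. Since the type is $o$, $T$ is neither an abstraction nor a pair. Because the affine context is empty and constants are the only free symbols, the head must be a constant $\gamma:o^{\with k}\multimap o$.

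Next we read off the eliminators from the type. In the additive case the type of $\gamma$ is $o^{\with k}\multimap o$, so exactly one argument $U:o^{\with k}$ is applied, and no projection can follow because the result type is $o$. In the multiplicative case exactly $k$ arguments of type $o$ are applied. The argument $U$ is again normal and closed. If it is neutral with a constant head, its type would end in $o$, which is not $o^{\with k}$ for $k\ge2$. The cases $k=0$ and $k=1$ are degenerate and are handled directly: for $k=1$, $o^{\with 1}=o$. So for $k\ge2$, $U$ must be a nested pair, and inverting the pairing rule repeatedly (under the fixed associativity convention) gives $U=\pp{U_1,\dots,U_k}$ with each $U_i$ normal of type $o$ in the empty affine context. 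We then apply the induction hypothesis on term size to each $U_i$.

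The two constructions are mutually inverse by a trivial induction, which gives the bijection. The main obstacle is stating the normal-form classification carefully in the presence of $\with$ and projections: we must exclude heads that are variables, and exclude projections on top of neutral terms. The exclusions rely on the empty context and on the constant types ending in $o$, and they are easiest to obtain by a subformula/typing argument on the head's type.
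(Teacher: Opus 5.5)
Your proof is correct. The paper gives no argument of its own here; it simply cites \cite[\S2.2.3]{LHORS}. Your inversion on typed normal forms is the standard argument behind that reference: the empty affine context forces a constant head, the constant's type fixes the spine, and closed normal terms of a $\with$-type must be nested pairs. One point is worth stating explicitly. Your classification of normal forms (abstraction, pair, or neutral term with a variable or constant head) relies on typability to exclude stuck normal terms such as $\pi_i\,(\lambda x.\,M)$ or $\pp{M,N}\,P$.
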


We now adapt the definition of a $\lambda$-transducer from~\cite{nguyen_et_al:LIPIcs.STACS.2025.68} (which is a simplification of the classical notion of higher-order transducer) to potentially use the encoding with additive branching for the output, but not for the input --- observe the $A^{\otimes\rank(\sigma)}$ below.

\begin{definition}\label{def:lambda-transducer}
  An \emph{affine $\lambda$-transducer}
  $\Tree\Sigma\to \Tree\Gamma$ is specified by a \emph{memory type}~$A$ and the $\lambda$-terms:
  \begin{itemize}
    \item $\Gamma^\odot \mid \varnothing \vdash T_\sigma : A^{\otimes\rank(\sigma)} \multimap
      A$ (\enquote{transition terms}) for $\sigma \in \Sigma$; 
    \item $\Gamma^\odot \mid \varnothing \vdash U : A \multimap o$ (\enquote{output term}).
  \end{itemize}
  where $\odot$ is equal to:
  \begin{itemize}
    \item $\with$ for a $\lambda$-transducer with \emph{additive branching};
    \item $\otimes$ for a $\lambda$-transducer with \emph{multiplicative branching}.
  \end{itemize}
  For an input tree $t \in \Tree\Sigma$, the output of the $\lambda$-transducer is the tree encoded by the normal form of $U\, T_{t}$, where we define inductively
  $T_{\sigma(t_1,\dots,t_{\rank(\sigma)})} = T_\sigma\, T_{t_1}\, \dots\, T_{t_{\rank(\sigma)}}$ for $\sigma\in\Sigma$.
\end{definition}

\begin{example}
  Here is a $\lambda$-transducer that computes $\exa$ (cf.~\S\ref{sec:running-example}).
  \begin{itemize}
    \item Its memory type is $o \with (o \multimap o)$.
    \item Its $b$-transition is $\pp{a\,\pp{b,b},\; \lambda y.\, a\, \pp{y,y}}$ and its $a$-transition is $\lambda x_1.\, \lambda x_2.\, \pp{\pi_2\, x_1\, (a\, \pp{\pi_1\, x_2,\, \pi_1\, x_2}),\; \lambda y.\, \pi_2\, x_1\, (a\, \pp{y,y})}$.
    \item Its output term is $\lambda x.\, \pi_1\, x$.
  \end{itemize}
  Note the resemblance with the $\mtt$ of \Cref{ex:mtt}. Morally, the two components $o$ and $o\multimap o$ of the memory type correspond to the $\mtt$ states $q_0$ (of rank 0) and $q_1$ (of rank 1) respectively.
\end{example}

\subsection{$\lambda$-transducers to actors: game semantics}%
\label{sec:lambda-actor}

The tree-generating game semantics from~\cite[\S3]{MAHORS} maps each affine \mbox{$\lambda$-term} $T$ to its \emph{strategy} $\sem{T}$: the set of all its possible interactions with its environment. These interactions consist of exchanging messages (i.e, playing moves, in the game metaphor) and outputting tree nodes. Furthermore, there is a partial function $\Strat$ from the actors of \S\ref{sec:actors} to strategies defined in~\cite[\S3.2]{MAHORS}. Overloading metaphors, we say that an actor $\alpha$ \emph{plays} a term $T$ when $\Strat(\alpha) = \sem{T}$.  

We recall some properties of this \enquote{playing} relation.
\begin{itemize}
  \item Every $\lambda$-term $\Gamma^\with \mid \varnothing \vdash T : A$ is played by some effectively computable actor $\Gamma \Vdash \alpha : A$~\cite[Theorem~23 in Appendix~B]{MAHORS}.
        \begin{itemize}
          \item For example~{\cite[Lemma~21 in \S{}B]{MAHORS}}, the actor from \Cref{ex:actor-diagonal} plays $\Gamma^\with \mid \varnothing \vdash \lambda x.\, \langle x,x \rangle : B \multimap (B \with B)$.
        \end{itemize}
    \item Furthermore, if $\Gamma \Vdash \beta : A \multimap B$ plays $U$, then $\beta(\alpha)$ plays $U\, T$~~\cite[Proposition~22 in \S{}B]{MAHORS}.
    \item Finally, if $A=o$, then $\alpha$ generates the tree encoded by the normal form of $T$~~\cite[Proposition~4]{MAHORS}.
\end{itemize}
Thanks to these properties, to translate an affine $\lambda$-transducer with additive branching into a actor-based tree transducer, we can take:
\begin{itemize}
  \item for each input letter, a transition actor that plays the transition term for this letter,
  \item an output actor that plays the output term.
\end{itemize}
We still have to make sure that the actors we take are weight-reducing --- which is not the case for \Cref{ex:actor-diagonal}. Fortunately, the following property suffices to do so.
\begin{proposition}
  For every actor $\Gamma \Vdash \alpha : A$ there exists a weight-reducing actor $\Gamma \Vdash \alpha' : A$ such that $\Strat(\alpha) = \Strat(\alpha')$.
\end{proposition}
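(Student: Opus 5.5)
We will make $\alpha$ weight-reducing by attaching to each passive state a bounded counter of how many incoming messages it may still receive. Then $\delta^\ominus$ strictly decreases the weight. The key point is that, along a single branch of any interaction, an actor only receives boundedly many messages. Here "bounded" means bounded in terms of the type $A$ alone, once we restrict to what is relevant for the strategy.

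\textbf{Step 1: a bound $K(A)$.} We first bound the length of the plays that matter. The strategies of~\cite{MAHORS} are sets of plays; in the tree-generating setting, each branch of such a play is an alternating sequence of moves in the arena of $A$. Because the types contain only $\multimap$ and $\with$, with no exponentials, these plays are affine: each move occurrence of the arena is played at most once per play. Two facts support this. First, the additive $\with$ forces a choice of one component per play. Second, each branch of the output tree corresponds to one such play, since output nodes fork the play into independent branches. Therefore, in any play of $\Strat(\alpha)$, the number of Opponent moves (incoming messages) along a branch is at most $K(A) = |\IM(A)|$. An upper bound of the number of moves in the finite arena of $A$ would suffice just as well. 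We expect this step to be the main obstacle. It requires a careful check of the precise definition of $\Strat$ and of the plays in~\cite[\S3]{MAHORS}. In particular, we must rule out that an environment sends the same question twice along one branch. We would derive this from the well-bracketing and visibility conditions, or, more directly, from the linearity and justification-pointer conditions, which forbid repeated justified moves.

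\textbf{Step 2: the counter construction.} We define $\alpha'$ with passive states $Q^\ominus \times \{0,\dots,K\}$ and active states $Q^\oplus \times \{0,\dots,K\}$. The initial state is $(q^\ominus_0, K)$. The passive transition is $\delta'^\ominus((q,n),\mathfrak m) = (\delta^\ominus(q,\mathfrak m), n-1)$ when $n \geqslant 1$, and is undefined when $n = 0$. The active transition $\delta'^\oplus$ applies $\delta^\oplus$ and copies the counter unchanged to every resulting state, including each child of an output node $\gamma$. For the weights, we take $\omega(q,n) = 2n+1$ on passive states and $\omega(q,n) = 2n$ on active states. A passive step then strictly decreases the weight, from $2n+1$ to $2n-2$. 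An active step never increases it: an active-to-active step keeps it at $2n$, and an active-to-passive step raises it to $2n+1$. This last case is a problem, so we redefine the weights to avoid it. We set $\omega(q^\ominus,n) = 2n$ for passive states and $\omega(q^\oplus,n) = 2n-1$ for active states. Now a passive step goes from $2n$ to $2(n-1)-1 < 2n$, an active step stays at $2n-1$, and an active-to-passive step goes from $2n-1$ to $2n$. This still increases the weight. So instead we decrement the counter when emitting rather than when receiving: the passive transition keeps $n$, and an active state $(p,n)$ that outputs a message goes to the passive state $(p',n-1)$. We then take $\omega(\text{passive},n) = 2n+2$ and $\omega(\text{active},n) = 2n+1$. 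A passive step goes from $2n+2$ to $2n+1$. An active-to-active step stays at $2n+1$. An emission goes from $2n+1$ to $2(n-1)+2 = 2n$. The conditions of \Cref{def:wd-actor} are therefore met. The bound now counts outgoing messages per branch, which Step 1 bounds by $|\OM(A)|$ in the same way.

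\textbf{Step 3: $\Strat(\alpha') = \Strat(\alpha)$.} By Step 1, along every branch of every interaction of $\alpha$, the number of emitted messages stays below the budget $K$. Hence $\alpha'$ is never blocked where $\alpha$ would proceed. Conversely, $\alpha'$ never does anything that $\alpha$ does not do. The two actors thus induce the same sets of plays.
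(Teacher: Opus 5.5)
Your proposal is correct and is essentially the paper's argument. The paper also relies on the fact, cited from \cite[proof of Theorem~7]{MAHORS}, that each message occurs at most once in a valid interaction; it adds a per-branch message counter to the states, aborts once the bound would be exceeded, and uses the counter as the weight. The only difference is that the paper counts \emph{received} messages and gives passive and active states the same weight (the remaining budget), which satisfies \Cref{def:wd-actor} directly and avoids the false starts in your Step~2. Your final variant, which counts emissions and uses weights $2n+2$ and $2n+1$, is equally valid, because receptions and emissions alternate along each branch.
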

\begin{proof}[Proof sketch]
  For lack of space, we do not recall the precise definition of strategies and explain the argument informally.

  $\Strat$ maps an actor to the interaction sequences that it can perform and that are valid for the game interpretation of $A$. As noted in~\cite[proof of Theorem~7]{MAHORS}, each message can occur at most once in a valid interaction in this game semantics, reflecting the affineness of the type system.
  By adding to the states of $\alpha$ a counter of the messages already received, and aborting the computation when this counter would otherwise exceed $|\IM(A)|$, we get a weight-reducing actor $\alpha'$ --- the weight of a state of $\alpha'$ is the value of the counter --- with the same valid interactions as $\alpha$. 
\end{proof}
\begin{remark}
  The \enquote{restriction} on tree stack automata considered in~\cite[Theorem~7]{MAHORS} is analogous to our bounded-visit condition.
\end{remark}

\subsection{$\THM$s to $\lambda$-transducers}%
\label{sec:hennie-lambda}

Let $\cH$ be a \emph{weight-reducing} $\THM$ with the notations of \S\ref{sec:hennie}.
We give some guiding intuitions and the definition of an affine $\lambda$-transducer with additive branching meant to be equivalent to $\cH$. The correctness proof is relegated to the appendix.   

The idea is to represent as a $\lambda$-term the \enquote{behavior} of $\cH$ when it enters a subtree $\subtree{t}{u}$ of the input, taking into account the current memory values for $u$ and its descendants. This behavior is a function of the current state. After possibly producing part of the output, $\cH$ might exit the subtree, i.e.\ move upwards from $u$ into the surrounding context $t[u \leftarrow \square]$. Upon exit, the behavior should \enquote{return} the new state and a new behavior, reflecting the new memory values, to the \enquote{caller} i.e.\ the context. This way, the new behavior can be called the next time $\cH$ enters $\subtree{t}{u}$.

One subtlety is that there may be multiple return points, because $\cH$ may produce multiple output branches in parallel. To handle this, we program in continuation-passing style: a behavior takes as argument a continuation that expects a behavior. Naively, this suggests a recursive type of behaviors, but we leverage the weight-reducing condition to unfold the recursion finitely:
\[ A_0 = o^{\with Q} \qquad A_{n+1} = (A_{n} \multimap o^{\with Q}) \multimap o^{\with Q}\quad \text{for}\ n\in\N \]
following~\cite[\S5]{MAHORS}. Tuples $\pp{t_q \mid q \in Q}$ indexed by the set of states $Q$ are encoded as nested pairs; one can think of them as functions from $Q$.
The type $A_n$ represents \enquote{$n$-truncated behaviors}, allowing for at most $n$ exits of $\subtree{t}{u}$, i.e.\ upwards moves from $u$.
Morally, this truncation is lossless when $n$ is greater than the weight of the current memory value at $u$ --- we define the weight of a memory symbol $m\in M$ as $\omega(m)=|\{m' \in M \mid m' < m\}|$.

Let us fix some default leaf symbol $\gamma_0 \in \rankK{\Gamma}{0}$.
We build truncated behaviors compositionally, using $\lambda$-terms
\[ \Gamma^\with \mid \varnothing \vdash T_{\sigma,m}^{\vec{n};k} : A_{n_1} \multimap \dots \multimap A_{n_{\rank(\sigma)}} \multimap A_k \]
defined by strong induction on $n_1 +\dots + n_{\rank(\sigma)} + k$:
\[ T_{\sigma,m}^{\vec{n};k} = \lambda \vec{z}.\, \lambda x.\, \pp{(\text{encoding of}\ \delta(q,\sigma,m))[\dots] \mid q \in Q}\]
where $[\dots]$ is a simultaneous substitution that consists of:
\begin{itemize}
  \item $(q',m',\uparrow) \leftarrow \pi_{q'}\, \bigl(x\, \bigl(T_{\sigma,m'}^{\vec{n};k-1}\, \vec{z}\bigr)\bigr)$ if $k \geqslant 1$, else $\gamma_0$\\
        Intuitively: we \enquote{return} the new state $q'$ and the new $(k-1)$-truncated behavior, where the memory at the current node has changed from $m$ to $m'$, by calling the continuation $x$.
  \item $(q',m',i)\leftarrow \pi_{q'}\, \bigl(\overbrace{z_i\, \bigl(\lambda z_{\mathrm{new}}.\, T_{\sigma,m'}^{\vec{n}';k}\, \vec{z}'\, x\bigr)}^{\mathclap{\text{or just \(z_i\) without argument when \(n_i=0\)}}}\bigr)$ for $i \in \db{\rank(\sigma)}$\\
        $\begin{aligned}
          \text{where}\ \vec{n}' &= n_1, \dots, n_{i-1}, n_i - 1, n_{i+1}, \dots, n_{\rank(\sigma)}\\
          T\, \vec{z}' &= T\, z_1\, \dots\, z_{i-1}\, z_{\mathrm{new}}\, z_{i+1}\, \dots\, z_{\rank(\sigma)}
        \end{aligned}$\\
        Intuitively: we \enquote{call} the behavior of the subtree $\subtree{t}{ui}$ rooted at the $i$-th child of the current node $u$, passing it the new state $q'$. Once it \enquote{returns} $z_{\mathrm{new}}$ (by calling the continuation that we pass), $u$ becomes the current node again. It is as if we had just entered $\subtree{t}{u}$, but the memory at $u$ has become $m'$, and the memory at $ui$ and its descendants may also have changed as reflected in the new behavior $z_{\mathrm{new}}$.
\end{itemize}
\begin{lemma}\label{lem:hennie-lambda}
  $\sem\cH$ is computed by the $\lambda$-transducer with:
  \begin{itemize}
    \item memory type $A_N$ for $N = |M|$;
    \item transition term $T_{\sigma,\top}^{N,\dots,N;N} \colon A_N^{\otimes\rank(\sigma)} \multimap A_N$ for $\sigma\in\Sigma$;
    \item output term $\lambda z.\, \pi_{q_{init}}\, (z\, (\lambda z_{\mathrm{new}}.\, \pp{\gamma_0 \mid q \in Q})) : A_N \multimap o$. 
  \end{itemize}
\end{lemma}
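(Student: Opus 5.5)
The plan is to define a semantic invariant relating the normal form of the truncated-behavior terms to the branchwise semantics of $\cH$ (Claim~\ref{clm:branchwise-sem}), and then prove it by strong induction on the same measure $n_1+\dots+n_{\rank(\sigma)}+k$ used to define $T_{\sigma,m}^{\vec n;k}$.

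\textbf{Step 1: a semantic notion of behavior.} Fix an input $t$, a node $u$, and memory contents $\mu$ on $\subtree{t}{u}$. Consider the runs of $\cH$ that stay inside $\subtree{t}{u}$, and count each exit from $u$ upwards as a return. For $n \geqslant 1$, a closed term $B : A_n$ \emph{represents} $(u,\mu)$ when the following holds. For every continuation $\kappa : A_{n-1} \multimap o^{\with Q}$ and every state $q$, the normal form of $\pi_q(B\,\kappa)$ is the tree obtained as follows:
\begin{itemize}
  \item run $\cH$ from $(u,q,\mu)$ inside $\subtree{t}{u}$;
  \item replace each exit in state $q'$, with resulting memory $\mu'$, by $\pi_{q'}(\kappa\, B')$, where $B'$ represents $(u,\mu')$ at level $n-1$.
\end{itemize}
The case $n=0$ is the same with no exits allowed; an exit is replaced by $\gamma_0$ or is simply unreachable.

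Since $\kappa$ is an arbitrary term, the cleanest way to state this is up to observational equivalence. Concretely, I would state it as an equation between normal forms, after substituting a fresh variable or constant for $\kappa$. I expect getting this formulation right, so that the induction goes through, to be the delicate part.

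\textbf{Step 2: the main inductive lemma.} Suppose $z_i$ represents $(ui,\mu|_{ui})$ at level $n_i$ for each child $i$, and $\mu(u)=m$. Then $T_{\sigma,m}^{\vec n;k}\,\vec z$ represents $(u,\mu)$ at level $k$. This holds provided the truncation levels are sufficient, i.e.\ $k > \omega(m)$ and $n_i > \omega(\mu(ui))$ in the appropriate sense.

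The proof unfolds one computation step $\delta(q,\sigma,m)$ and treats each leaf of its encoding separately.
\begin{itemize}
  \item \emph{Upward leaf.} Here the leaf becomes $\pi_{q'}(x\,(T_{\sigma,m'}^{\vec n;k-1}\vec z))$. By the induction hypothesis this term represents the updated memory, and this matches the definition of returning.
  \item \emph{Downward leaf.} Here we call $z_i$ with the continuation $\lambda z_{\mathrm{new}}.\,T_{\sigma,m'}^{\vec n';k}\vec z'\,x$. We then apply the representation property of $z_i$, and use the induction hypothesis for $T^{\vec n';k}$ with $z_{\mathrm{new}}$ substituted in.
\end{itemize}

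The sufficiency of the truncation levels comes from the weight-reducing order. The memory at $u$ strictly decreases with each exit from $u$, so there are at most $\omega(\top)+1 \leqslant |M| = N$ exits. Likewise, each call into child $i$ lowers the memory at $ui$ before the corresponding return, which lets us decrement $n_i$. This is exactly why the construction never reaches $\gamma_0$ on genuine runs. Affineness is respected because each of $x$ and $z_i$ is used at most once per additive component, which is precisely what `$\with$' permits; this is needed only for typing, which is already given.

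\textbf{Step 3: assembling the result.} By structural induction on $t$, combined with Step~2, $T_t$ represents $(\varepsilon,\top\text{ everywhere})$ at level $N$. The output term passes a continuation that yields $\gamma_0$ on return. At the root, however, $\cH$ never moves up, so no return ever happens. Hence $\pi_{q_{init}}$ of this application normalizes to the tree whose branch words are exactly those of Claim~\ref{clm:branchwise-sem}, namely $\sem{\cH}(t)$, via the encoding of Claim~\ref{clm:tree-lambda-encoding}.

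\textbf{Main obstacle.} The main obstacle is Step~2 in the downward case. There one must show that the continuation passed to $z_i$ correctly represents the remaining computation, with the right nesting of truncation indices. Potentially infinite (non-terminating) runs must also be excluded, which is done by the weight-reducing order. I would first try to phrase representation as a statement about branch words, namely paths of the labeled transition system, rather than about whole trees. Branches are then handled one at a time, and the induction becomes a simulation between a single run and head reduction.
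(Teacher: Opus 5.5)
Your plan is correct, but it takes a different route from the paper's proof.

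\textbf{The paper's route.} It proves a small-step simulation. It defines purely syntactic sets of $\lambda$-representations:
\begin{itemize}
  \item memory-annotated subtrees, $\bet{\cdot}_n$, consisting of the terms $T^{\vec n;k}_{\sigma,m}\,T_1\cdots T_{\rank(\sigma)}$ with $k\geqslant\omega(m)$;
  \item one-hole contexts above the current node, $\bec{\cdot}_k$, built from the same terms with the hole abstracted as the continuation;
  \item configurations $(u,q,\mu)$, represented as $\pi_q(T\,U)$.
\end{itemize}
It then shows that each computation step of $\cH$ is matched by finitely many $\beta$-steps, up to closure under output nodes. It concludes from termination of $\cH$ together with confluence and strong normalization.

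\textbf{Your route.} You give a big-step, context-independent specification of a subtree's behavior, using a free continuation variable $\kappa$. You verify it by induction on $n_1+\dots+n_{\rank(\sigma)}+k$, which mirrors the recursive definition of the terms.

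\textbf{What yours buys.} You never have to represent the context above $u$ syntactically. The role of the weight-reducing order, i.e.\ why truncation is lossless, becomes explicit exactly where indices are decremented.

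\textbf{What yours costs.} You need run-level infrastructure that the paper avoids, since the paper only ever reasons about one machine step at a time:
\begin{itemize}
  \item a notion of a run of $\cH$ confined to $\subtree{t}{u}$, with upward exits as leaves; it is finite only because weight-reducing runs terminate;
  \item a decomposition lemma: a confined run that descends into $ui$ splits into the child's confined run, followed by confined runs from $u$ with the updated memory;
  \item the fact that normalization commutes with substituting a concrete continuation for $\kappa$, which you need when you instantiate the hypothesis on $z_i$ with $\lambda z_{\mathrm{new}}.\,T^{\vec n';k}_{\sigma,m'}\,\vec z'\,x$;
  \item a representation property that is invariant under $\beta$-conversion (yours is, being phrased via normal forms), since the returned behavior $T^{\vec n;k-1}_{\sigma,m'}\,\vec z$ is not itself normal.
\end{itemize}

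\textbf{On the truncation condition.} Your strict condition $k>\omega(m)$ is fine because $N=|M|>\omega(\top)$. The paper's $k\geqslant\omega(m)$ also suffices, since each exit writes a strictly smaller memory value. Your version has the small advantage that level-$0$ truncations never arise.
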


To prove this lemma we show (in the appendix) that $\beta$-reduction simulates the tree-generating machine for $\cH$.

\subsection{Precomposition by $\lshimtt$}%
\label{sec:precomp-lambda}

\begin{lemma}\label{lem:multiplicative}
  The functions computed by affine $\lambda$-transducers with additive (resp.\ multiplicative) branching are closed under precomposition by affine $\lambda$-transducers with multiplicative branching. 
\end{lemma}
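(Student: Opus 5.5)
The natural approach is syntactic composition: substitute the second transducer's transition terms into the first one's. Fix a multiplicative-branching affine $\lambda$-transducer $\cT_1 \colon \Tree\Sigma \to \Tree\Gamma$ with memory type $A$, transition terms $T_\sigma$ and output term $U$. Fix a second transducer $\cT_2 \colon \Tree\Gamma \to \Tree\Delta$ with branching $\odot \in \{\with,\otimes\}$, memory type $B$, transition terms $S_\gamma$ and output term $V$. The key observation is that the output of $\cT_1$ is encoded with multiplicative branching. This is exactly the shape in which $\cT_2$ consumes its input, since the input side of \Cref{def:lambda-transducer} always uses $\otimes$. So we can replace each output constant $\gamma : o^{\otimes\rank(\gamma)} \multimap o$ of $\Gamma^\otimes$ by the term $S_\gamma : B^{\otimes\rank(\gamma)} \multimap B$.

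Concretely, for a type $C$ built from $o$, let $C[B/o]$ be the type obtained by substituting $B$ for $o$. The composite transducer will have:
\begin{itemize}
  \item memory type $A[B/o]$;
  \item transition terms $T'_\sigma = T_\sigma[\gamma \mapsto S_\gamma]$;
  \item output term $\lambda x.\, V\, (U[\gamma \mapsto S_\gamma]\; x)$.
\end{itemize}
The steps are as follows.

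\begin{enumerate}
  \item A substitution lemma for typing. If $\Gamma^\otimes \mid \Theta \vdash T : C$, then $\Delta^\odot \mid \Theta[B/o] \vdash T[\gamma \mapsto S_\gamma] : C[B/o]$, by induction on typing derivations. Affineness is preserved because the $S_\gamma$ are closed terms with empty affine context. They replace reusable constants, so duplicating them is harmless: they live in the context $\Phi$ of reusable constants, where the $\Delta^\odot$ constants also live.
  \item Compatibility with $\beta$-reduction. Substitution commutes with reduction, so $(U\, T_t)[\gamma\mapsto S_\gamma]$ has the same normal form as $(U\,T_t)^{\mathrm{nf}}[\gamma \mapsto S_\gamma]$. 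Here $(U\,T_t)^{\mathrm{nf}}$ is the encoding of $\sem{\cT_1}(t)$ under \Cref{clm:tree-lambda-encoding}, so it has the form $\gamma\, t_1 \dots t_k$.
  \item After substitution, the encoding of a tree $s$ becomes literally $S_{s}$, i.e.\ the term $T_s$ built from the $S_\gamma$ as in \Cref{def:lambda-transducer}. This is checked by structural induction on $s$, and crucially uses that the input encoding is multiplicative (curried application $\gamma\,t_1\dots t_k$) rather than additive tuples.
  \item Hence the composite output term applied to $T'_t$ $\beta$-reduces to $V\, S_{\sem{\cT_1}(t)}$. By confluence and strong normalization, its normal form encodes $\sem{\cT_2}(\sem{\cT_1}(t))$.
\end{enumerate}

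The main obstacle is step 3, together with the subtlety that the substitution applies at type $o$ inside arbitrary types containing $\with$ and $\multimap$. This requires the type-substitution lemma to handle projections and pairs uniformly. It is routine, but it is where one must be careful that the additive typing rule (shared context for both components of a pair) remains valid after substitution. It does, because the substituted terms are closed.

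This argument also explains why the precomposed transducer must have multiplicative branching. If $\cT_1$ used additive branching, its output would contain pairs $\pp{t_1,\dots,t_k}$, which could not be fed to $S_\gamma : B^{\otimes k} \multimap B$.
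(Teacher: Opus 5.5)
Your proposal is correct and is exactly the paper's construction: memory type $A[o \leftarrow B]$, transition terms $T_\sigma[\gamma \leftarrow S_\gamma]$ and output term $\lambda x.\, V\,(U[\gamma \leftarrow S_\gamma]\, x)$, resting on the same key point that the multiplicative encoding makes $[o \leftarrow B]$ send the type of $\gamma$ to the type of $S_\gamma$. The paper states only the construction and this remark; your steps 1--4 (typing under substitution, commutation of substitution with $\beta$-reduction, structural induction on encodings) are the routine verification it leaves implicit.
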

\begin{proof}
  Let $\cA,\cB$ be two $\lambda$-transducers given respectively by:
  \begin{align*}
    \Gamma^\otimes \mid \varnothing &\vdash T_\sigma : A^{\rank(\sigma)} \multimap A & \Pi^\odot \mid \varnothing &\vdash T'_\gamma : B^{\rank(\gamma)} \multimap B \\ 
    \Gamma^\otimes \mid \varnothing &\vdash U : A \multimap o & \Pi^\odot \mid \varnothing &\vdash U' : B \multimap o
  \end{align*}
  where $\odot\in\{\with,\otimes\}$. Then $\sem{\cB}\circ\sem{\cA}$ is computed by this $\lambda$-transducer:
  \begin{align*}
    \Pi^\odot \mid \varnothing &\vdash T_\sigma[\gamma\leftarrow T'_\gamma\; \forall \gamma \in \Gamma] : A[o \leftarrow B]^{\rank(\sigma)} \multimap A[o \leftarrow B] \\ 
    \Pi^\odot \mid \varnothing &\vdash \lambda x.\, U'\,(U[\gamma\leftarrow T'_\gamma\; \forall \gamma \in \Gamma]\, x) : A[o \leftarrow B] \multimap o
  \end{align*}
  The multiplicative encoding of $\Gamma$ is crucial: it is the reason why the substitution $[o \leftarrow B]$ sends the type of $\gamma$ to the type of $T_\gamma$.
\end{proof}

This is the same argument as~\cite[Proposition~1.8]{nguyen_et_al:LIPIcs.STACS.2025.68}.   In that paper, \Nguyen and Vanoni study $\lambda$-transducers for an affine $\lambda$-calculus without `$\with$' --- in particular, the branching is multiplicative. Thus, the above proposition applies to precomposition by the devices from~\cite{nguyen_et_al:LIPIcs.STACS.2025.68}. Their main result~\cite[Theorem~1.5]{nguyen_et_al:LIPIcs.STACS.2025.68}\footnote{This statement by \Nguyen and Vanoni should be seen as a convenient rephrasing of a result of Gallot, Lemay and Salvati~\cite[Theorem~3]{LambdaTransducer} (see also~\cite[Chapter~6]{GallotPhD}).} is that tree-to-tree $\mso$ transductions are equivalent to
\[ \text{multiplicative affine \(\lambda\)-transducers} \circ \mso\ \text{relabelings} \]
\begin{theorem}\label{thm:precomp-msot}
  The functions computed by $\THM$s are closed under precomposition by $\mso$ transductions.
\end{theorem}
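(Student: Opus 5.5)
The plan is to factor the $\mso$ transduction and absorb each factor separately. By the result of \Nguyen and Vanoni quoted just above~\cite[Theorem~1.5]{nguyen_et_al:LIPIcs.STACS.2025.68}, every tree-to-tree $\mso$ transduction $f$ can be written as
\[ f = \sem{\cB} \circ g, \]
where $g$ is an $\mso$ relabeling and $\cB$ is an affine $\lambda$-transducer with multiplicative branching. Given a $\THM$ $\cH$, we must show that $\sem{\cH}\circ\sem{\cB}\circ g$ is computed by a $\THM$. We restrict to total functions, which is the setting of \Cref{thm:lambda-actor-hennie}. If $\cH$ is partial, we first apply \Cref{prop:weight-reducing}, or we use that its domain is regular (\Cref{cor:reg-domain}).

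The first step converts $\cH$ into an affine $\lambda$-transducer $\cA$ with additive branching such that $\sem{\cA}=\sem{\cH}$. This uses \Cref{thm:lambda-actor-hennie}, direction $(3)\Rightarrow(1)$, concretely \Cref{lem:hennie-lambda} applied to a weight-reducing version of $\cH$ given by \Cref{prop:weight-reducing}. The second step applies \Cref{lem:multiplicative}: since $\cB$ has multiplicative branching, $\sem{\cA}\circ\sem{\cB}$ is computed by an affine $\lambda$-transducer with additive branching. This is a direct substitution of the transition terms of $\cA$ for the output constants of $\cB$, and it typechecks because of the multiplicative encoding on the $\cB$ side. The third step converts this composite back into a $\THM$ through \Cref{thm:lambda-actor-hennie}, direction $(1)\Rightarrow(2)\Rightarrow(3)$, using the game-semantics translation to weight-reducing actors followed by \Cref{lem:actor-to-thm}. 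The fourth step precomposes with the $\mso$ relabeling $g$, which is allowed by \Cref{cor:precomp-relab}.

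The step needing most care is totality and domains. Two issues arise. First, the composite produced by \Cref{lem:multiplicative} must be total, which holds because $\lambda$-transducers are total by strong normalisation. Second, the translation of \Cref{lem:actor-to-thm} only gives $\sem{\cA}\subseteq\sem{\cH'}$, so $\cH'$ might a priori be defined on more inputs; since $\sem{\cA}$ is total, this inclusion is in fact an equality. The order of the factorisation also matters: \Cref{lem:multiplicative} only allows precomposition by multiplicative transducers, so the relabeling $g$ must be handled last, at the level of $\THM$s, rather than inside the $\lambda$-calculus.
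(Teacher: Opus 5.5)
Your proposal is correct and follows the paper's proof. The paper also factors the $\mso$ transduction via \Nguyen and Vanoni's result, absorbs the multiplicative affine $\lambda$-transducer using \Cref{lem:multiplicative} together with the equivalence of \Cref{thm:lambda-actor-hennie} (which it leaves implicit and you spell out), and absorbs the $\mso$ relabeling last using \Cref{cor:precomp-relab}; your totality remarks only make explicit what the paper's terse argument takes for granted.
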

\begin{proof}
  By the above result of~\cite{nguyen_et_al:LIPIcs.STACS.2025.68}, it suffices to check that $\THM$s are closed under precomposition by:
  \begin{description}
    \item[MSO relabelings:] cf.~\Cref{cor:precomp-relab};
    \item[the affine $\lambda$-transducers from~{\cite{nguyen_et_al:LIPIcs.STACS.2025.68}}:]  cf.~\Cref{lem:multiplicative}. \qedhere
  \end{description}
\end{proof}

\section{Postcomposition properties of THMs}\label{sec:postcomp}

We have already mentioned that our affine $\lambda$-calculus can be directly translated to the simply typed $\lambda$-calculus with products. It is known that simply typed $\lambda$-transducers with products are regularity reflecting --- see the references given in Remark~\ref{rem:lambda-reg}. Therefore: 
\begin{corollary}\label{cor:reg-preserving}
  $\THM$s are regularity reflecting: the inverse image of a regular tree language by a $\THM$ is effectively regular.
\end{corollary}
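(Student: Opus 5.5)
Given a $\THM$ $\cH$ and a regular tree language $L \subseteq \Tree\Gamma$, the plan is to pass through the $\lambda$-calculus characterization and then apply the finite set-theoretic semantics of Remark~\ref{rem:lambda-reg}.

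First, by \Cref{prop:weight-reducing} we may assume $\cH$ is weight-reducing; its total function agrees with $\sem{\cH}$ on the (regular, by \Cref{cor:reg-domain}) domain of $\cH$. Outside that domain it outputs trees containing a default leaf symbol. Since regular languages are closed under intersection, it then suffices to compute the inverse image under this total function and intersect it with the domain. By \Cref{lem:hennie-lambda} we obtain an affine $\lambda$-transducer with additive branching, with memory type $A_N$, transition terms $T_\sigma$ and output term $U$, computing this function.

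Next, fix a deterministic bottom-up automaton $(P,\delta_L)$ recognizing $L$. Translate the affine terms into simply typed terms with products by reading $\multimap$ as $\to$ and $\with$ as $\times$. Interpret this translation in finite sets, with $\sem{o} = P$, $\sem{A\to B} = \sem{B}^{\sem{A}}$ and $\sem{A\times B} = \sem{A}\times\sem{B}$. Each constant $\gamma : o^{\with k}\multimap o$ is interpreted as $\delta_L[\gamma] \colon P^k \to P$; this is well typed because $o^{\with k}$ becomes $P^k$. A straightforward induction on the encoding of Claim~\ref{clm:tree-lambda-encoding} shows that the normal form encoding a tree $s$ denotes $\delta_L[s]$.

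Since the interpretation is invariant under $\beta$-reduction, including the projection/pairing rules, we get $\sem{U\,T_t} = \delta_L[\text{output on } t]$. Moreover $\sem{T_t}\in\sem{A_N}$, which is a finite set, and it is computed bottom-up by $\sem{T_{\sigma(t_1,\dots)}} = \sem{T_\sigma}(\sem{T_{t_1}},\dots)$. This yields a deterministic bottom-up automaton with state set $\sem{A_N}$, whose acceptance condition is that $\sem{U}$ applied to the state lies in the accepting states of $L$. It recognizes the inverse image, and everything in the construction is computable.

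The main obstacle is confirming soundness of the semantics for the product rules and the strong normalization/confluence used to define outputs; the paper already states these as inherited from the simply typed case. The remaining point needing care is totality: the denotation of the default symbol $\gamma_0$ is simply whatever $\delta_L[\gamma_0]$ is, so no special handling is needed beyond intersecting with the domain.
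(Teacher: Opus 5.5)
Your proposal is correct and follows the paper's own route. Like the paper, you pass through the affine $\lambda$-transducer with additive branching given by \Cref{lem:hennie-lambda}, read it as a simply typed $\lambda$-transducer with products, and apply the finite set-theoretic semantics of Remark~\ref{rem:lambda-reg}. The paper only cites that last argument, whereas you spell it out, and you also handle partial $\THM$s explicitly via \Cref{prop:weight-reducing} and intersection with the regular domain from \Cref{cor:reg-domain}, which the paper leaves implicit (as does the paper, you tacitly take the visit bound $N$ to be known).
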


We can then deduce that:
\begin{corollary}\label{cor:rel-postcomp}
  The class of total functions computed by $\THM$s is closed under postcomposition by bottom-up relabelings.
\end{corollary}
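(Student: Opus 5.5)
Let $\cH$ be a total $\THM$ from $\Sigma$ to $\Gamma$ and $\cR$ a bottom-up relabeler with deterministic bottom-up automaton $(Q',\delta')$ over $\Gamma$ and relabeling map $\rho\colon \Gamma\times Q' \to \Gamma'$. The plan is to build a $\THM$ for $\sem{\cR}\circ\sem{\cH}$ by having it simulate $\cH$ while carrying extra information that decides, at each output node, which relabeled symbol $\rho(\gamma,q')$ to emit. The missing datum is the state $q'=\delta'[\text{output subtree rooted at that node}]$. This is a regular property of the output subtree. By \Cref{cor:reg-preserving}, however, it can be pulled back to a regular property of the input (more precisely, of the current configuration).

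Step 1 (the pullback). For each state $p$ of $\cH$ and memory configuration, consider the function sending an encoded configuration $C$ (as in \Cref{def:enc-conf}) to the output tree generated from $C$ by the tree-generating machine of $\cH$. I would show that $C\mapsto\delta'[\text{output from }C]$ is recognizable on encodings of configurations. To get this from \Cref{cor:reg-preserving}, I would define an auxiliary total $\THM$ $\cH_0$ whose input alphabet is $\Sigma\times M\times(Q\cup\{\mathtt{notHere}\})$. This machine first walks to the marked node, initializing memory from the labels using bottom-up initialization (\Cref{lem:bottom-up-rel}) so that memory values are read from the input. It then behaves as $\cH$. Bounded-visit transfers, since every branch-outputting run from such a configuration is a suffix of a run of $\cH$ with bounded visits, plus the initial walk. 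Totality issues on encodings of unreachable configurations can be handled via \Cref{prop:weight-reducing}, padding with a default leaf. The preimage under $\sem{\cH_0}$ of the regular language $\{s \mid \delta'[s]=q'\}$ is then effectively regular for each $q'$. This gives recognizability of the required function $F\colon\text{encodings}\to Q'$.

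Step 2 (the new machine). I would define a $\uTHMr$ $\cH'$ with the same states and memory as $\cH$. Its transition on configuration $C$ takes $\delta(q,\lab_t(u),\mu(u))=\tau\in\Tree{\Gamma}[Q\times M\times\dirs{}]$ and relabels each $\Gamma$-node $v$ of $\tau$ by $\rho(\lab_\tau(v),q'_v)$. Here $q'_v$ is computed by running $\delta'$ bottom-up on $\tau$, where each leaf $(q'',m,d)$ is assigned the value $F$ of the successor configuration $C\ogreaterthan(q'',m,d)$. That successor's encoding differs from $C$'s only locally, so $F$ of it is still a recognizable function of the encoding of $C$; finite products of recognizable functions remain recognizable. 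By confluence and \Cref{clm:branchwise-sem}, the output of $\cH'$ is $\sem{\cR}(\sem{\cH}(t))$. $\cH'$ has exactly the same runs (positions and memory) as $\cH$, so it is bounded-visit. \Cref{thm:lookaround} then turns it into a $\THM$.

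The main obstacle I expect is Step 1: making the auxiliary machine $\cH_0$ total and bounded-visit on all encoded configurations, including unreachable ones, while agreeing with $\cH$ on reachable ones. My first approach is the weight-reducing normalization from \Cref{prop:weight-reducing} with the original visit bound $N$ plus a constant for the walk. I would argue that reachable configurations always lie in the good region, and that the value of $F$ elsewhere is irrelevant.
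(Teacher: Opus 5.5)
Your proposal is correct and follows essentially the paper's route: the paper also shows that the map $g_\cH$, which sends an encoded configuration to the output generated from it, is computed by a $\THM$ (Claim~\ref{clm:StartingConfig}), pulls back regular properties of that output via \Cref{cor:reg-preserving}, and builds a $\THM$ with regular lookaround having exactly the states and memory of $\cH$ that relabels each right-hand side on the fly before invoking \Cref{thm:lookaround}. The differences are minor: the paper reads off the relabeled label at each position $u$ of the right-hand side directly, through maps $\psi_u$ for the finitely many $u$ of length at most the maximal right-hand-side height, whereas you evaluate $F$ at successor configurations and run $\delta'$ locally; and your use of \Cref{prop:weight-reducing} to make the auxiliary machine total and bounded-visit on unreachable encodings makes explicit a point the paper leaves implicit.
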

The proof combines the regular lookaround feature of \S\ref{sec:basic-extensions} with:
\begin{claim}\label{clm:StartingConfig}
  Let $\cH$ be a $\THM$ with $\sem\cH \colon \Tree\Sigma \to \Tree\Gamma$.
  
  The function $g_{\cH}$ that maps a configuration of $\cH$, encoded as per \Cref{def:enc-conf}, to the output tree obtained by starting $\cH$ from this configuration is itself definable by a $\THM$.

  Therefore, it preserves regular languages by inverse image.
\end{claim}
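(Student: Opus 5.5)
To prove Claim~\ref{clm:StartingConfig}, I would build a $\THM$ $\cH_g$ whose input alphabet is $\Sigma \times M \times (Q \cup \{\mathtt{notHere}\})$ and which simulates $\cH$ from the encoded configuration. The memory of $\cH_g$ at each node will store a memory symbol of $\cH$, initialized from the $M$-component of the input label. This initialization is exactly the \enquote{read the initial memory from the input letter on first visit} trick used in the proof for bottom-up initialization.

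The only difficulty is the starting position. $\cH_g$ begins at the root, but the simulation must begin at the unique node labeled with some $q \in Q$. I would handle this with regular lookaround (\Cref{def:thm-rla}, \Cref{thm:lookaround}). In an initial phase, $\cH_g$ walks from the root down towards the marked node; at each step the lookaround tells it which child subtree contains the mark, since that is a regular property. Upon arrival it switches to simulating $\cH$ from state $q$ with the stored memory values.

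The initial descent visits each node at most once, and the walk is deterministic along a single path. I would implement it with a distinct memory flag that does not interfere with the simulated memory. This flag can be an extra component of the memory recording \enquote{already initialized / descent visited}; the simulation then reads the $\cH$-memory component. After the descent, every branch-outputting run of $\cH_g$ is a descent prefix followed by a branch-outputting run of $\cH$ started from the given configuration.

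The main obstacle is bounded-visit, since $\cH$ is only assumed bounded-visit on runs starting from $C_{init}(t)$, not from an arbitrary configuration. To deal with it, I would first replace $\cH$ by a weight-reducing equivalent via \Cref{prop:weight-reducing} with the known visit bound $N$. Weight-reduction is a syntactic property, so it guarantees bounded visits from \emph{any} configuration. Concretely, each visit to a node strictly decreases its memory symbol in a poset of size $|M|$, so there are at most $|M|$ visits per node plus one descent visit. The catch is that the weight-reducing $\cH'$ might differ from $\cH$ on runs from configurations unreachable from $C_{init}$. Since the statement speaks of $\cH$ itself, I would define $g_{\cH}$ up to this normalization, i.e.\ assume $\cH$ is weight-reducing, or else restrict to encodings where $\cH$ terminates within the bound and output a default leaf otherwise, as in \Cref{prop:weight-reducing}.

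Finally, the \enquote{therefore} clause follows from Corollary~\ref{cor:reg-preserving} applied to $\cH_g$, after translating $\cH_g$ to a plain $\THM$ by \Cref{thm:lookaround}.
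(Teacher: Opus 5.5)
Your proposal is correct and follows the paper's plan: load $\cH$'s memory values from the input labels (the paper uses bottom-up initialization, you read them lazily on first visit), move the head to the marked node (the paper says \enquote{a tree traversal}, you use a lookaround-guided descent), then simulate $\cH$ step by step. Your extra step of first normalizing $\cH$ to a weight-reducing machine via \Cref{prop:weight-reducing} fills a real gap in the paper's one-line justification: a $\THM$ is only bounded-visit on runs starting from $C_{init}(t)$, so from unreachable configurations (e.g.\ with memory symbols never written in reachable runs) it may fail to terminate or produce outputs of super-linear height, and this normalization costs nothing in \Cref{cor:rel-postcomp}, which only uses $g_\cH$ on reachable configurations.
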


Finally, the most technical result of this section is: 

\begin{theorem}\label{thm:narrow-postcomp}
  If $\cH$ is a $\THM$ and $\cN$ is narrow-visit on the range of $\sem\cH$, then $\sem\cN \circ \sem\cH$ can be realized by some $\THM$.
\end{theorem}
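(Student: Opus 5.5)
We build a single $\THM$ $\cH'$ that runs $\cN$ on the \emph{virtual} tree $\sem\cH(t)$ without materializing it. A node of $\sem\cH(t)$ is represented by a configuration of $\cH$ paused at an output-producing transition: a position in $t$, a state, the full memory of $\cH$ on $t$, and the remaining path inside the finite right-hand side $\delta(q,\sigma,m)$. When $\cN$ moves down to child $i$ of a virtual node, $\cH'$ resumes $\cH$ along the $i$-th branch until it emits the next output symbol. To support moves upwards, or a return to a virtual node whose $\cN$-memory must be read or updated, $\cH'$ stores in its own memory, on the input nodes of $t$, two layers of information: the $\cH$-memory together with a trace of the simulated $\cH$-run, and the $\cN$-memory symbols attached to virtual nodes. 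Virtual nodes are anchored at the input node where the corresponding $\cH$-transition fired, with a bounded tag (state and position inside the finite right-hand side) to disambiguate.

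The key difficulty is that a virtual node $v$ of $\sem\cH(t)$ corresponds to a \emph{branch-outputting run prefix} of $\cH$, not just to a point of $t$, and different virtual nodes may be anchored at the same input node. If the $\cN$-run visits many virtual nodes anchored at the same input node, $\cH'$ has neither enough memory there nor a bounded visit count. Narrow-visit is what rules this out. By \Cref{cor:narrow-branches}, each branch-outputting run of $\cN$ visits only $O(1)$ branches of $\sem\cH(t)$, each node a bounded number of times. Each output branch of $\sem\cH(t)$ is produced by one branch-outputting run of $\cH$, which visits each input node $O(1)$ times. So the virtual nodes touched by one $\cN$-run are anchored, with multiplicity, at most $O(1)$ times on each input node. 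Hence $\cH'$ needs only a bounded number of memory ``slots'' per input node: for each of the $O(1)$ output branches under exploration, the $\cH$-memory along that branch's run plus the $\cN$-memory of the anchored virtual nodes. The total visits of $\cH'$ at any node are then bounded by a product of constants.

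The steps are as follows. First, normalize $\cH$ and $\cN$ to weight-reducing total machines via \Prop{prop:weight-reducing}, which gives explicit visit bounds, and use stay-instructions and regular lookaround (\Cref{lem:stay}, \Cref{thm:lookaround}) freely. Second, make $\cH'$ maintain, per input node, a bounded vector of slots indexed by the ``branch identifier'' among the $O(1)$ branches of the virtual tree currently explored. When $\cN$ forks, the processes of $\cH'$ fork, and each copy keeps its own memory, which is consistent with the per-branch semantics. When $\cN$ enters a new virtual branch, for instance by going down, $\cH'$ allocates a fresh slot, copied from the parent's branch-prefix. Third, implement upward moves of $\cN$. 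For these, $\cH'$ must retrace $\cH$'s run backwards to the previous output-producing step. We make this possible by recording at each visit the direction and the state from which $\cH$ came, as a bounded log per slot. This lets $\cH'$ walk back step by step without re-deriving history, since $\cH$ itself is deterministic only forwards.

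The main obstacle I expect is this reverse simulation and slot bookkeeping: showing that a bounded number of slots suffices, and that they can be reused once a virtual branch is abandoned. Branch indices must not grow along the run. My first attempt is to index slots by the position, among the $O(1)$ virtual chains covering the visited set from Dilworth, of the chain currently explored. I would then argue that the chains can be assigned online with bounded identifiers, because of the uniform antichain bound. If this fails, the fallback is to exploit that narrow-visit on the range is a regular property with an explicit bound, via \Cref{cor:reg-domain} and \Cref{cor:reg-preserving}. I would then precompute via regular lookaround the $O(1)$ relevant branches as guessed-and-verified markings, using the $\lambda$-calculus side to keep the construction compositional.
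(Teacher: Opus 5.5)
Your strategy is the paper's: $\cN$ drives a lazy, call-by-need simulation of $\cH$. The materialized pieces of $\sem\cH(t)$, decorated with $\cN$-memory, are stored at the input node where the producing $\cH$-transition fired. Their number per input node is bounded by the visit bound of $\cH$ times the number of intermediate branches $\cN$ explores (\Cref{cor:narrow-branches}). There is a genuine gap, though: the step you flag as the main obstacle is the actual content of the theorem, and the mechanism you sketch for it fails as stated.

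\textbf{Copying a slot is neither local nor correct.} The $\cH$-memory you want to copy \enquote{from the parent's branch-prefix} is spread over the whole input, so allocating a fresh slot this way is not a local step. Worse, by the time $\cN$ enters the new branch, the parent slot may already have been overwritten by the continuation of the old run. What is needed is the $\cH$-memory \emph{at the divergence point}, and a per-node log cannot identify that locally.

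\textbf{Per-branch slots duplicate shared nodes.} Virtual nodes of the common prefix would be stored once per branch, but their $\cN$-memory must be shared, since $\cN$ may return to them from either side.

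\textbf{The tag is ambiguous.} A state plus a position in a right-hand side does not distinguish several $\cH$-visits to the same input node.

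\textbf{Slot reuse and online chain naming are non-issues.} \Cref{cor:narrow-branches} bounds the branches explored over the \emph{whole} run, so memory may grow monotonically. Also, in a tree a new chain only starts at a new leaf of the visited prefix-closed set.

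\textbf{The fallback does not work.} A deterministic machine cannot use guessed-and-verified markings. Moreover, the relevant branches live in the intermediate tree, not in the input.

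What the paper supplies is a shared, linked representation.

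\textbf{Fragments.} Each input node stores an ordered forest of at most $C_{fv}\cdot C_{fnv}$ \emph{fragments}. Here $C_{fv}$ bounds the visits of $\cH$ and $C_{fnv}$ the number of intermediate branches explored by $\cN$. A fragment is a copy of a right-hand side of an $\cH$-rule fired at that node, decorated with $\cN$-memory. The forest carries a forest-level pointer $\diamond$ and each fragment a node pointer $\bullet_i$. Every virtual node thus exists exactly once and is shared by all branches through it.

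\textbf{Links.} Fragments are glued by link symbols carrying a direction and an index into the neighbouring forest. There is an $\upLink$ at each fragment root, which replaces your reverse simulation. A $\downLink$ leads to an already materialized continuation. A $\downNewLink$ $(q_\cH,m_\cH,d)$ marks a head of $\cH$ not yet run; it holds exactly what is needed to advance $\cH$ by one step when $\cN$ reaches it.

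\textbf{Recovering the $\cH$-memory.} Links also record $\cH$-memory values. The link through which the simulation last left an input node supplies the $\cH$-memory to use when a new fragment must be computed there. So no fragment is ever copied or deleted.

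\textbf{Bounds and correctness.} A frugality invariant says every stored fragment contains a node visited by the simulated $\cN$-run; together with narrow-visit this gives the forest bound. Correctness is proved by decoding the stitched tree of fragments into a global state of $\cH$ and a configuration of $\cN$.
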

\begin{proof}[Proof idea]
  We only give some high-level ideas here. The construction is detailed in the appendix.

  Essentially, the composite $\THM$ virtually evaluates $\cN$ on the tree outputted by $\cH$: when $\cN$ asks to explore this intermediate tree, $\cH$ is lazily evaluated in order to produce the relevant part of its output. This is reminiscent of the composition of \emph{reversible} two-way transducers~\cite[\S3.1]{ReversibleTransducers}. Indeed, our composite $\THM$ records enough information in memory to \enquote{rewind the computation} of $\cH$, so that we may simulate upwards moves of $\cN$ on the intermediate tree.

  However, in order to simulate the read-write memory accesses by $\cN$, we need to remember the nodes of the intermediate tree that we have already visited (and the memory values stored therein), instead of re-generating them on the fly. In other words, the composition needs to be evaluated in \enquote{call-by-need} whereas the composition of reversible two-way transducers is \enquote{call-by-name}.\footnote{The reference semantics is call-by-value: compute the full output of $\cH$, then feed it to $\cN$. It is well known that call-by-name and call-by-value differ in their observable results in the presence of effects such as mutable state. What happens in our proof is that a form of call-by-need, whose laziness allows us to respect computational bounds, turns out to be a valid optimization of call-by-value in this very specific setting.}

  We store a representation of a \enquote{global state} in $\Tree{\Gamma}[\Conf(\cH,t)]$ (cf.~\S\ref{sec:tree-generating}) of the computation of $\cH$ that is distributed among the nodes of the input $t$. The distribution reflects the origins\footnote{In a sense analogous to~\cite{bojanczykTransducersOriginInformation2014}.} of the nodes of the intermediate tree (in $\Tree\Gamma$) already generated in that global state. Since the memory at each node is finite, this means that the size of the global state must be linearly bounded by the input size. This is possible thanks to the narrow-visit condition and \Cref{cor:narrow-branches}: a branch-outputting run of $\cN$ explores $O(1)$ many branches, so it only requires a portion of the intermediate tree whose size is linearly bounded by the intermediate height; the latter is $O(|t|)$ because $\cH$ is linear size-to-height increase.

  Similarly, the composite $\THM$ is bounded-visit because $\cN$ is bounded-visit and explores $O(1)$ branches of the intermediate tree, each of them generated in a bounded-visit fashion by $\cH$.
\end{proof}

\begin{corollary}\label{cor:lhimtt-comp-thm}
  $\lhii\mtt \circ \THM = \THM$.

  As a consequence, $\bigcup_{k\in\N} \lhii\mtt^{k} \subset \THM$.
\end{corollary}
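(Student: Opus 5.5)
We prove the two inclusions of $\lhii\mtt \circ \THM = \THM$ separately, then obtain the hierarchy statement by induction on $k$.

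\emph{The inclusion $\THM \subseteq \lhii\mtt \circ \THM$} is trivial. The identity on $\Tree\Gamma$ is computed by an $\mtt$ with a single rank-0 state that copies each input letter. This $\mtt$ has linear height increase, so every $\THM$ function factors as $\mathrm{id}\circ\sem\cH$.

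\emph{The inclusion $\lhii\mtt \circ \THM \subseteq \THM$} is the substantial direction. Let $\cM$ be an $\lhii\mtt$ and $\cH$ a total $\THM$ with $\sem\cH\colon\Tree\Sigma\to\Tree\Gamma$.
\begin{enumerate}
  \item By \Cref{cor:lhimtt-narrow}, we write $\sem\cM = \sem{\cN}\circ\sem\cR$, where $\cR$ is a bottom-up relabeler and $\cN$ is a $\THM$ that is narrow-visit on the range of $\sem\cR$.
  \item By \Cref{cor:rel-postcomp}, $\sem\cR\circ\sem\cH$ is computed by some total $\THM$ $\cH'$. This corollary needs totality of $\cH$, which we assume, consistent with the paper's focus on total functions.
  \item The range of $\sem{\cH'}$ is contained in the range of $\sem\cR$, so $\cN$ is narrow-visit on it. \Cref{thm:narrow-postcomp} then shows that $\sem\cN\circ\sem{\cH'} = \sem\cM\circ\sem\cH$ is realized by a $\THM$.
\end{enumerate}
The main obstacle is entirely contained in \Cref{thm:narrow-postcomp}, which we take as given; the only care needed here is the bookkeeping of input languages in step 3. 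Narrow-visit is only guaranteed on the range of $\cR$, not globally, and this is exactly why \Cref{thm:narrow-postcomp} is stated relative to the range of $\sem\cH$.

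\emph{The hierarchy statement} follows by induction on $k$, showing $\lhii\mtt^k \subseteq \THM$.
\begin{itemize}
  \item For $k=0$, the identity is a $\THM$: a single state copies the current letter and moves down into each child, visiting every node once.
  \item Alternatively, for $k=1$, $\lhii\mtt\subseteq\lshimtt\subseteq\THM$ by \Cref{cor:lshimtt-hennie}, since $\height(t)\le|t|$.
  \item For the inductive step, $\lhii\mtt^{k+1} = \lhii\mtt\circ\lhii\mtt^k \subseteq \lhii\mtt\circ\THM = \THM$ by the first part.
\end{itemize}
Since $\mtt$s are total, all functions involved are total, so the totality hypothesis of \Cref{cor:rel-postcomp} is met at every step.

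The problem asks for strict inclusion $\subset$. Assuming this means strict, we witness it with the running example $\exa$: it is computed by a $\THM$ (\Cref{ex:thm}) but has exponential size increase. The plan is to show that every function in $\lhii\mtt^k$ has height $O(\height(t))$, since linear height increase is preserved by composition. The function $\exa$ has quadratic height increase on inputs where $|\Dom(t)\cap 1^*2^*|$ is quadratic in the height, so it lies outside every level $\lhii\mtt^k$.
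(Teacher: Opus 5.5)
Your proof is correct and follows the paper's own argument: you decompose the $\lhii\mtt$ via \Cref{cor:lhimtt-narrow}, absorb the bottom-up relabeling into the $\THM$ with \Cref{cor:rel-postcomp}, and conclude with \Cref{thm:narrow-postcomp}, handling the range of $\cR$ exactly as that theorem requires. The paper uses $\subset$ for plain inclusion and writes $\subsetneq$ when it means strictness, so your strictness paragraph is not needed, although your argument there is valid: compositions of LHI functions are LHI, while $\exa$ has quadratic height increase.
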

\begin{proof}
  This reduces to postcomposition by bottom-up relabelings (\Cref{cor:rel-postcomp}) and by narrow-visit $\uTHM$s (\Cref{thm:narrow-postcomp}), thanks to the decomposition of \Cref{cor:lhimtt-narrow} for $\lhii\mtt$s.
\end{proof}

\section{On the LHI-MTT composition hierarchy}\label{sec:lhimtt-comp}

As announced in the introduction, we show here:
\begin{corollary}\label{cor:lhimtt-strict}
  $\lhii\mtt^{k-1} \subsetneq \lhii\mtt^{k}$ for every $k\geqslant1$.
\end{corollary}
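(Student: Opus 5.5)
We separate the levels of the hierarchy with a growth-rate invariant: for each $k$, exhibit a function in $\lhii\mtt^{k}$ whose height growth no function in $\lhii\mtt^{k-1}$ can match. Height is the natural measure, since LHI is exactly a linear constraint on height. The difficulty is that a single $\lhii\mtt$ may blow up \emph{size} exponentially while keeping height linear. So the invariant must be subtler than height or size alone. I would use \emph{size-to-height} growth relative to the input, in the following form: for $f \in \lhii\mtt^{k}$ we have $\height(f(t)) = O(\height(t))$, since each stage is LHI. What the extra stage buys is more freedom in how size relates to height.

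The plan is to find a measure on outputs that stays linear under LHI but can grow by one exponential per stage. The candidate is the classical bound for MTTs: an $\mtt$ has at most exponential size-to-height increase, $\height(\sem{\cM}(t)) = 2^{O(|t|)}$, or more finely, size increase at most $2^{O(\height)}$ style. Concretely, I would prove by induction on $k$ that every $f \in \lhii\mtt^{k-1}$ satisfies $|f(t)| \leqslant \exp^{k-1}(O(|t|))$ (a tower of height $k-1$). Then I would build a witness $g_k \in \lhii\mtt^{k}$ and a family of inputs $t_n$ of size $O(n)$ with $|g_k(t_n)| \geqslant \exp^{k}(n)$.

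\textbf{Upper bound.} It suffices to show that a single $\lhii\mtt$ has size increase $|f(t)| \leqslant 2^{O(\height(t))}$ whenever the input alphabet has bounded rank. This is immediate: $|f(t)| \leqslant \maxrank{\Gamma}^{\height(f(t))+1}$ and $\height(f(t)) = O(\height(t))$. Composing $k-1$ such functions gives $|f(t)| \leqslant \exp^{k-1}(O(\height(t))) \leqslant \exp^{k-1}(O(|t|))$; the base case $k=1$ is the identity, which is LHI.

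\textbf{Witnesses.} Take monadic inputs $t_n = c^n(e)$. The first stage maps $t_n \mapsto \pbt(n)$, which is LHI: heights $n$ to $n$, via the rule $\qq{q,c(x)}\to a(\qq{q,x},\qq{q,x})$. Each subsequent stage must turn a tree of size $N$ and height $O(\log N)$ into one of size $2^{\Theta(N)}$, while keeping height linear in the input height. A naive $\pbt(|\text{input}|)$ transducer has height equal to the input size, which is not LHI. The fix is to use the perfect-binary structure: map $\pbt(n)$ to $\pbt(2^n)$-like trees of height only $O(n)$. That is impossible, because height $O(n)$ forces size $2^{O(n)}$, so this naive tower fails. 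This is the step I expect to be the main obstacle: the size invariant collapses, because every $\lhii\mtt^k$ satisfies $|f(t)| \leqslant 2^{O(\height(t))}$.

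The invariant must therefore be changed. First I would try counting the \emph{number of distinct} outputs or, better, \emph{height relative to size of a string input}: on monadic inputs of size $n$, $\lhii\mtt^{k}$ can reach output height only $O(n)$ at every level, again collapsing. The remaining plausible invariant is a \emph{non-growth} one, for instance functions computable in $\lhii\mtt^k$ that require $k$-fold nested non-linear behaviour, in the style of Engelfriet's MTT hierarchy proofs via \emph{bridge/pumping} arguments. I would therefore fall back on adapting the known strictness proof of the MTT composition hierarchy (via a size-versus-number-of-distinct-subtrees bound), restricting the witnesses to LHI ones. The core lemma would be: for $f\in\mtt^{k-1}$, the number of pairwise distinct subtrees of $f(t)$ is at most $\exp^{k-1}(O(|t|))$. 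The witness would be $g_k$ producing $\pbt$-like trees whose leaves are labelled so that all $\exp^{k}(n)$ subtrees are distinct, kept LHI by encoding counters in labels rather than height. Verifying that such labelled witnesses exist within LHI at each stage is the main expected difficulty.
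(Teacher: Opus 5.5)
\textbf{There is a genuine gap.} Your final invariant collapses for the same reason the size invariant did. The number of pairwise distinct subtrees of $f(t)$ is at most $|f(t)|$. You already observed that every $f \in \lhii\mtt^{k}$, for \emph{any} $k$, satisfies $\height(f(t)) = O(\height(t))$, hence $|f(t)| = 2^{O(\height(t))} = 2^{O(|t|)}$.

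So for $k \geqslant 2$ no witness $g_k \in \lhii\mtt^{k}$ can have $\exp^{k}(n)$ distinct subtrees on inputs of size $O(n)$. Encoding counters in the labels changes \emph{which} subtrees are distinct, but not how many there can be: at most the number of nodes. Whether your core lemma about $\mtt^{k-1}$ is true is therefore irrelevant. Any separation by an $\exp^{k}$-type growth rate of a quantity bounded by output size is ruled out, because composition preserves LHI and all levels share the same bounds. The bridge/pumping fallback is only named, not developed. What is missing is a non-membership argument that is not based on growth, together with an explicit LHI witness.

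The paper imports such an argument from the pebble-transducer literature and never counts anything itself. The witness is $\EncPeb^{k}$, the $k$-fold iteration of Engelfriet and Maneth's encoding. Each $\EncPeb$ is computed by a tree-walking transducer, hence by an $\mtt$, and satisfies $\height(\EncPeb(t)) \leqslant 3\,\height(t)$ (\Cref{prop:encpeb-lhi-twt}). So $\EncPeb^{k} \in \lhii\mtt^{k}$.

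Non-membership comes from two facts:
\begin{enumerate}
  \item every $k$-pebble tree transducer factors as $\twt \circ \EncPeb^{k}$ (\Cref{prop:pebble-facto});
  \item Engelfriet and Maneth exhibit a string function computed by a $k$-pebble transducer that is not in $\mtt^{k}$.
\end{enumerate}
If $\EncPeb^{k}$ were in $\mtt^{k-1}$, that string function would lie in $\mtt \circ \mtt^{k-1} = \mtt^{k}$, a contradiction.

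This gives the stronger $\lhii\twt^{k} \not\subseteq \mtt^{k-1}$, so the separation holds even against all of $\mtt^{k-1}$. The non-strict inclusion $\lhii\mtt^{k-1} \subseteq \lhii\mtt^{k}$ follows by composing with the identity.
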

We actually derive this from the stronger \Cref{thm:lhitwt-vs-mtt} below, stated in terms of \emph{tree-walking tree transducers} ($\twt$s).\footnote{In the key reference~\cite{EngelfrietPebbleMacro}, they are called \enquote{0-pebble transducers}, cf.\ Remark~\ref{rem:pebble-index}.} Their classical definition may be found in e.g.~\cite[Chapter~8]{courcellebook}; equivalently, we may describe them here concisely as $\THM$s:
\begin{itemize}
  \item with a single memory symbol (the idea is that they do not write any information on the nodes: a $\twt$ configuration consists only of a state and a position);
  \item whose transitions can depend on whether the current node is the root, and if not, for which $i\in\maxrank{\Sigma}$ it is an $i$-th child.
  % (as in Remark~\ref{rem:total-root}).
\end{itemize}
In terms of expressive power, $\twt \subset \mtt$~\cite[Lemma~34]{EngelfrietPebbleMacro}, which is why \Cref{cor:lhimtt-strict} follows from the following result:
\begin{theorem}\label{thm:lhitwt-vs-mtt}
  $\lhii\twt^{k} \not\subset \mtt^{k-1}$ for every $k\geqslant1$.

  As a consequence, $\twhm \not\subset \mtt^{k-1}$.
\end{theorem}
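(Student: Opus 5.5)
We will separate the levels of the hierarchy by a height-growth argument, using as the key known fact the bound on the height of outputs of $\mtt$ compositions. Specifically, we rely on the classical result (Engelfriet--Maneth) that for every $f \in \mtt^{k-1}$ there is a constant $c$ with $\height(f(t)) \leqslant \exp^{k-1}_c(\height(t))$, or more finely $|f(t)| \leqslant \exp^{k-1}(O(|t|))$. Combined with the observation that for inputs that are monadic (strings), height and size coincide up to one, this gives: on monadic inputs of size $n$, any $\mtt^{k-1}$ function produces outputs of height at most a $(k-1)$-fold exponential, and, crucially, of \emph{size} at most a $(k-1)$-fold exponential in $n$. The plan is therefore to exhibit, for each $k$, a function in $\lhii\twt^{k}$ whose output size on monadic inputs of size $n$ is a $k$-fold exponential $\exp^k(\Omega(n))$. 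The size bound for $\mtt^{k-1}$ is the one to use, since $\mtt$ compositions of depth $k-1$ have at most $(k-1)$-fold exponential size increase (each $\mtt$ being at most exponential in size-to-height and hence doubly exponential in size is too weak). So I would instead use the finer known bound that $\mtt$ has at most exponential size increase, hence $\mtt^{k-1}$ at most $(k-1)$-fold exponential size increase.

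The witness is built by iterating a single $\lhii\twt$ $g$ with exponential size increase and linear height increase: $g$ maps a tree $t$ to the full binary tree obtained by doubling, e.g.\ $g(t)$ traverses $t$ in depth-first order and at each step outputs a binary node $a$ with both children continuing the traversal. Concretely, a $\twt$ that performs a depth-first traversal of $t$ (possible without memory, using the child-number test in the transitions) and at every move emits $a(\cdot,\cdot)$ with both subprocesses continuing identically produces $\pbt(\Theta(|t|))$. Its height is linear in $|t|$ but not in $\height(t)$, so it is not LHI as such. To restore LHI, I would restrict inputs to monadic trees (or to trees whose height is linear in their size, preserved by the construction since $\pbt(m)$ has size $2^{m+1}$ but height $m$). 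That is the delicate point: after one application the output is a perfect binary tree, whose height is logarithmic in its size, so a second depth-first traversal has height increase exponential. Instead I would let each $\twt$ compute $t \mapsto \pbt(|t|)$ only along a single branch structure: use a $\twt$ that on $\pbt(m)$ walks its branches so as to output a tree of height $O(m)$ but size $2^{2^m}$, e.g.\ by outputting at each node of the input a binary node whose children both explore the two input children in turn. Then height stays $O(\text{height})$, which is LHI, while size roughly squares-exponentiates.

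The main obstacle is making the witness precise: designing a single $\twt$ $h$ that is LHI and satisfies $|h(t)| \geqslant 2^{\Omega(|t|)}$ on the relevant range, with the range closed under $h$ (so iteration works). I would first try inputs that are perfect binary trees and $h(\pbt(m)) = \pbt(2^m)$-like outputs of height $O(m)$ but size $\exp$ of size; then $h^k(\text{string of length } n)$ has size $\exp^k(\Omega(n))$, contradicting the $(k-1)$-fold exponential size bound for $\mtt^{k-1}$. The second assertion follows since $\lhii\twt^k \subseteq \lhii\mtt^k \subset \THM$ by \Cref{cor:lhimtt-comp-thm}.
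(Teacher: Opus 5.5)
\textbf{This approach cannot succeed: the witness you are looking for does not exist.}

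\begin{itemize}
\item A composition of LHI functions is LHI, so every $f \in \lhii\twt^{k}$ satisfies $\height(f(t)) = O(\height(t))$.
\item The output alphabet has bounded rank, so this forces $|f(t)| \leq 2^{O(\height(t))} \leq 2^{O(|t|)}$. The size increase of $\lhii\twt^{k}$ therefore stays \emph{singly} exponential for every $k$.
\item In particular, an $h$ with $\height(h(\pbt(m))) = O(m)$ and $|h(\pbt(m))| \approx 2^{2^m}$ is impossible. Likewise $h^k$ can never reach size $\exp^k(\Omega(n))$ on strings of length $n$.
\end{itemize}

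The bound you use on the other side is also wrong. $\mtt$s have exponential size-to-height increase, hence doubly exponential size increase. For example, a rank-1 state $q$ with rules $\qq{q,a(x_1)}(y_1) \to \qq{q,x_1}(\qq{q,x_1}(y_1))$ and $\qq{q,e}(y_1) \to c(y_1,y_1)$ turns $a^n(e)$ into a perfect binary tree of height $2^n$. So $\mtt^{k-1}$ only obeys a $(2k-2)$-fold exponential size bound, not a $(k-1)$-fold one.

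Either way, linear height with exponential size is already attained within $\mtt$ (even within $\twt$). Hence no size- or height-growth argument can separate $\lhii\twt^{k}$ from $\mtt^{k-1}$ for $k \geq 2$; the separation has to be qualitative.

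\textbf{What the paper does instead.} It imports a known qualitative separation rather than proving a growth bound.
\begin{itemize}
\item It uses Engelfriet and Maneth's function $\EncPeb$. This is in $\lhii\twt$: a root-to-leaf path in $\EncPeb(t)$ goes down, back up, and down again in $t$, so $\height(\EncPeb(t)) \leq 3\height(t)$.
\item It then shows $\EncPeb^{k} \notin \mtt^{k-1}$ from two facts. First, some string function $f$ computed by a $k$-pebble transducer is not in $\mtt^{k}$. Second, every $k$-pebble tree transducer factors as $g \circ \EncPeb^{k}$ with $g \in \twt \subseteq \mtt$.
\item So $\EncPeb^{k} \in \mtt^{k-1}$ would give $f \in \mtt^{k}$, a contradiction.
\end{itemize}

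Your final step is fine: $\THM \not\subset \mtt^{k-1}$ follows from $\lhii\twt^{k} \subseteq \lhii\mtt^{k} \subseteq \THM$.
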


Our proof exhibits a concrete separating function: the $k$-fold iteration $\EncPeb^k$ of a function $\EncPeb$ defined by Engelfriet and Maneth in~\cite[Section~4]{EngelfrietPebbleMacro}. More rigorously, $\EncPeb$ is a family of similar tree-to-tree functions, one for each input alphabet --- which never coincides with the output alphabet, so the composition $\EncPeb^k$ actually involves $k$ different members of the family. To avoid cumbersome notations, we ignore these subtleties concerning alphabets. For lack of space, we do not recall the full definition of $\EncPeb$; our only argument that requires examining it is for:

\begin{proposition}\label{prop:encpeb-lhi-twt}
  $\EncPeb \in \lhii\twt$.
\end{proposition}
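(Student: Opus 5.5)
The plan is to recall Engelfriet and Maneth's definition of $\EncPeb$~\cite[Section~4]{EngelfrietPebbleMacro}, give an explicit $\twt$ computing it, and bound the height of its outputs directly. Two separate things are needed: membership in $\twt$, and linear height increase. For $\twt$s, which are $\THM$s with a single memory symbol, we cannot use \Cref{cor:NarrowLHI}, because we do not know in advance that the machine is narrow-visit. So I will argue about the shape of each branch-outputting run.

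\emph{Step 1: the machine.} The intended behaviour of $\EncPeb$, as I recall it, is to encode the placement of one pebble at each input node. The output follows the input tree top-down. At each node $u$ it spawns an extra branch encoding the input tree \enquote{as seen from $u$}: the path from $u$ back to the root, followed by copies of the subtrees hanging off that path. This local description is what makes a pebble-free tree-walking transducer possible. The pebble position never has to be remembered: it is implicit in the starting point of the upward walk, and a $\twt$ can detect which child it came from through the child-number test. I would write the transitions as follows.
\begin{itemize}
  \item A \enquote{copy-down} state descends and outputs the current label. It forks into the continuation in each child and into one \enquote{climb} process.
  \item The \enquote{climb} state moves up. At each step it outputs the child index it came from, together with the label of the new node. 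It also forks \enquote{copy-down} processes, restricted to not spawning climbs, into the siblings it did not come from.
  \item When the climb reaches the root, it emits a leaf.
\end{itemize}
Correctness is then checked against the definition of $\EncPeb$ by structural induction, using the branchwise semantics (Claims~\ref{clm:brawo} and~\ref{clm:branchwise-sem}).

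\emph{Step 2: linear height.} By the argument in the proof of Claim~\ref{clm:lshi}, the output height is $O(\ell)$, where $\ell$ is the length of the longest branch-outputting run. In the machine above, every such run has a fixed shape. It is a downward walk to some $u$, then an upward walk from $u$ to an ancestor $v$, then a downward walk from a child of $v$. Each of these three monotone segments has length at most $\height(t)+1$, so $\ell \leqslant 3(\height(t)+1)$. Every state comes with a fixed direction of movement, so this three-phase shape can be read off the finite automaton of states: a run never switches from a second downward phase back to climbing. In particular the visited nodes lie on at most three branches, so the machine is even narrow-visit, and \Cref{cor:NarrowLHI} also applies.

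\emph{Main obstacle.} The main obstacle is Step 1: matching the precise output format of $\EncPeb$, including its alphabet and marker conventions, with a pebble-free walk. If the definition instead requires reproducing the entire input tree with $u$ marked, then a $0$-pebble $\twt$ cannot recognise $u$ when it comes back down to it. In that case I would exploit the encoding's orientation, as in~\cite{EngelfrietPebbleMacro}, where the tree is re-rooted at the pebble, which is exactly the climb-then-branch-off walk above. As a fallback I would use regular lookaround, which \Cref{thm:lookaround} shows is conservative, though only for $\THM$s and not for $\twt$s. Once the walk is fixed, the height bound in Step 2 is routine.
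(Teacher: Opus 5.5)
Your height bound rests on the same observation as the paper's proof. Every root-to-leaf path of $\EncPeb(t)$ decomposes into three walks in $t$: a downward walk to some node $u$, an upward walk back towards the root, and another downward walk. This gives $\height(\EncPeb(t)) \leqslant 3\times\height(t)$.

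The routes differ in two ways.
\begin{itemize}
  \item The paper does not rebuild the transducer. Membership in $\twt$ is taken directly from \cite[proof of Lemma~9]{EngelfrietPebbleMacro}, so what you call the main obstacle disappears.
  \item The paper reads the bound off the \emph{shape} of the output: $t$ with, at each node $u$, a grafted copy of $t$ re-rooted at $u$. This makes the height argument independent of how the machine works internally. Your version goes through run lengths, as in the proof of Claim~\ref{clm:lshi}. It is therefore only as good as the correctness of your own machine, which you have not checked against the definition.
\end{itemize}

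That unchecked machine is the weak point, and as written it is slightly off. The grafted branch at $u$ is produced by your climb process. It contains only the ancestors of $u$ and the subtrees hanging off them. It never copies the subtrees below $u$, because your copy-down continuations into $u$'s children belong to the main spine. The paper instead describes the grafted subtree as a copy of the \emph{whole} of $t$ re-rooted at $u$. Adding those subtrees only creates down-then-down branches, which are shorter, so your bound survives.

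The cleaner way to write your argument is to cite the Engelfriet and Maneth construction for membership, then bound the height on the output shape. Your route gives some extra information for this particular machine (bounded and even narrow visits), but none of it is needed here. Also, your regular-lookaround fallback would only produce a $\THM$, not a $\twt$, so it could not establish this statement.
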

\begin{proof}
  A tree-walking tree transducer computing $\EncPeb$ is built in~{\cite[proof of Lemma~9]{EngelfrietPebbleMacro}}.

  Ignoring the node labels, the shape of $\EncPeb(t)$ is obtained by grafting, to each node~$u$ in the tree $t$, an additional subtree which is a copy of $t$ rerooted at $u$ by tree rotations. Therefore, a downwards path from the root of $\EncPeb(t)$ to a leaf corresponds to:
  \begin{itemize}
    \item a downwards path from the root of $t$ to a node $u$,
    \item followed by the reverse upwards path from $u$ to the root,
    \item followed by a downwards path from the root of $t$ to a leaf.
  \end{itemize}
  Hence $\height(\EncPeb(t)) \leqslant 3\times\height(t)$.
\end{proof}

Other than that, we mainly use the connection that $\EncPeb$ has to \enquote{pebble tree transducers} --- a keyword that we syntactically manipulate by matching its occurrences in two theorem statements, with no need to access its definition.
\begin{theorem}[{\cite[proof of Theorem~10]{EngelfrietPebbleMacro}}]\label{prop:pebble-facto}
  Every function computed by some $k$-pebble tree transducer is in $\twt \circ \EncPeb^k$.
\end{theorem}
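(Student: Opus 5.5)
The proof goes by induction on $k$. The base case $k=0$ is immediate: a $0$-pebble tree transducer is exactly a $\twt$, and $\EncPeb^0$ is the identity. For the inductive step, the plan is to show the one-step factorisation
\[ \text{$k$-pebble tree transducers} \subseteq \text{$(k-1)$-pebble tree transducers} \circ \EncPeb . \]
The induction hypothesis then yields $\twt \circ \EncPeb^{k-1} \circ \EncPeb = \twt \circ \EncPeb^k$. Thus the whole argument reduces to eliminating the \emph{first} (outermost) pebble by precomputing $\EncPeb$.

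For this step I would use the structure of $\EncPeb(t)$ recalled in the proof of \Cref{prop:encpeb-lhi-twt}. Each node $u$ of $t$ carries a grafted copy of $t$ rerooted at $u$. The labels of that copy record, for each node, its original label in $t$ and the original direction (parent or $i$-th child) of each of its edges. They also mark the copy's root as ``the position of pebble~1''.

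Given a $k$-pebble transducer $\cP$ on $t$, I build a $(k-1)$-pebble transducer $\cP'$ on $\EncPeb(t)$ with two modes. Before pebble~1 is dropped, $\cP'$ walks on the main spine of $\EncPeb(t)$, i.e.\ the original copy of $t$, and simulates $\cP$ step by step. When $\cP$ drops pebble~1 at $u$, $\cP'$ instead enters the grafted copy at $u$. Inside that copy it simulates walking in $t$: the recorded directions let it translate each ``up'' or ``$i$-th child'' move of $\cP$ into the corresponding move in the rerooted tree. A test ``is pebble~1 here?'' becomes ``am I at the root of the copy?''. Pebbles $2,\dots,k$ of $\cP$ become pebbles $1,\dots,k-1$ of $\cP'$.

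By the stack discipline of pebbles, these later pebbles are always dropped and lifted while pebble~1 is in place, so they live entirely inside that same copy. When $\cP$ lifts pebble~1, all other pebbles have already been lifted. The simulated position must then be $u$ itself, so $\cP'$ is at the root of the copy and steps back up onto the spine at node $u$.

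The main obstacle I expect is making this simulation faithful in every detail. In particular, the rerooted copy must let a walker recover the \emph{child index} of each node in the original $t$, and whether that node is the original root, since $\cP$'s transitions may depend on this. Moves that in $t$ cross the path between the root and $u$ are reversed in the copy and must be translated correctly. I would first check the exact label alphabet used by Engelfriet and Maneth's $\EncPeb$ in \cite[\S4]{EngelfrietPebbleMacro} to confirm that this information is encoded. If it is not directly present, I would recover it with regular lookaround. That is safe here because it can be absorbed into the $\twt$ at the bottom of the composition, or pushed through the induction as an $\mso$ relabeling that $\twt$s can simulate.
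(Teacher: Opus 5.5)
Your proposal is correct and is essentially the argument of Engelfriet and Maneth that the paper only cites (it gives no proof of its own): eliminate pebble~1 by entering the copy of $t$ rerooted at its position, where ``pebble~1 is here'' becomes ``at the root of the copy'', simulate pebbles $2,\dots,k$ inside that copy, and iterate to obtain $\twt \circ \EncPeb^k$. One caveat: your fallback would not work, because deterministic $\twt$s cannot simulate arbitrary $\mso$ relabelings or regular lookaround; it is also unnecessary, since the labels of $\EncPeb$ in their construction already record the original direction of each edge of the rerooted copies, which is what their simulation relies on.
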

\begin{remark}\label{rem:pebble-index}
  We index the pebble hierarchy according to the convention from~\cite{PebbleString,EngelfrietPebbleMacro}: tree-walking transducers are 0-pebble transducers. The recent literature on polyregular functions~\cite{PolyregSurvey,gaetanPhD,Kiefer24} would call them 1-pebble transducers instead.
\end{remark}
\begin{proof}[Proof of \Cref{thm:lhitwt-vs-mtt}]
  We show that $\EncPeb^{k} \notin \mtt^{k-1}$.

  We use the following result~\cite[proof of Theorem~5]{PebbleString}: there exists a (string-to-string) function\footnote{What Engelfriet and Maneth actually prove in~\cite{PebbleString} is that $f$ is not in $\mathtt{yield}\circ\mtt^{k-1}$, but this coincides with the subclass of $\mtt^{k}$ consisting of string outputs. For more details on this point, see the discussion by Kiefer, \Nguyen and Pradic~\cite[Section~5.1]{KNP}, who prove a variation on~\cite[Theorem~5]{PebbleString} involving polyregular functions of quadratic size increase~\cite[Theorem~5.1]{KNP}.} $f \notin \mtt^{k}$ which is computed by a $k$-pebble transducer.  By \Cref{prop:pebble-facto}, we can write $f = g \circ \EncPeb^k$ for some $g \in \twt \subset \mtt$. Thus, if $\EncPeb^{k}$ were in $\mtt^{k-1}$, we would get $f \in \mtt \circ \mtt^{k-1}$, reaching a contradiction.
\end{proof}

Let us remark that together, \Cref{cor:lhimtt-comp-thm}, \Cref{prop:encpeb-lhi-twt} and \Cref{prop:pebble-facto} yield a positive result concerning the expressivity of tree-to-tree Hennie machines:
\begin{corollary}
  Every function computed by some pebble tree transducer is in $\twt \circ \twhm$.
\end{corollary}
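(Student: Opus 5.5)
The plan is to chain the results already established, using the factorization of pebble transducers together with the closure of $\THM$s under postcomposition.

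Let $f$ be computed by a $k$-pebble tree transducer. By \Cref{prop:pebble-facto}, we can write $f = g \circ \EncPeb^k$ for some $g \in \twt$. It therefore suffices to show that $\EncPeb^k \in \twhm$. For $k = 0$, $\EncPeb^0$ is the identity, and the identity is computed by a trivial $\THM$: it walks down the input, copies each label, and visits every node exactly once.

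For $k \geqslant 1$, the argument goes through $\twt$s of linear height increase.

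\begin{itemize}
  \item By \Cref{prop:encpeb-lhi-twt}, each member of the family $\EncPeb$ lies in $\lhii\twt$.
  \item We need each such function to lie in $\lhii\mtt$. We have $\twt \subset \mtt$ by~\cite[Lemma~34]{EngelfrietPebbleMacro}. Linear height increase is a semantic property of the function, not of the device computing it. So an $\mtt$ computing a given $\lhii\twt$ function is automatically an $\lhii\mtt$, and hence $\lhii\twt \subseteq \lhii\mtt$.
  \item It follows that $\EncPeb^k \in \lhii\mtt^k$.
  \item By \Cref{cor:lhimtt-comp-thm}, $\lhii\mtt^k \subset \THM$, so $\EncPeb^k \in \twhm$.
\end{itemize}

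The only delicate point is this corollary. It is obtained by starting from the identity, which is a $\THM$, and then applying the equality $\lhii\mtt \circ \THM = \THM$ a total of $k$ times. Each step requires the $\THM$ built so far to be total. This holds because every member of the family $\EncPeb$ is total, so each intermediate composite is total. For the same reason, the change of alphabets between the different members of the family causes no difficulty.

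Putting these together gives $f = g \circ \EncPeb^k \in \twt \circ \twhm$. No further obstacle is expected, since everything reduces to results stated earlier in the paper.
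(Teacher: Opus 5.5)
Your proposal is correct and follows the paper's own argument. The paper obtains this corollary from the same chain: \Cref{prop:pebble-facto}, \Cref{prop:encpeb-lhi-twt}, the inclusion $\twt \subset \mtt$, and \Cref{cor:lhimtt-comp-thm}. Your extra remarks on the base case $k=0$, and on the totality of the intermediate composites (which the appeal to \Cref{cor:rel-postcomp} requires), spell out details the paper leaves implicit.
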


\section{Future work}

We would like to investigate further the relationship of $\THM$s with other automata models. For instance we believe that $\twt \circ \THM$ should be a natural tree-to-tree counterpart of the \enquote{Ariadne-transducers} from~\cite{ExpReg}.
We also conjecture that $\twhm \not\subset \bigcup_{k} \mtt^{k}$ --- this is because our translation to $\lambda$-transducers seems to be intrinsically unsafe in the sense of higher-order recursion schemes.
Also, several decision problems on $\THM$s are still open, such as equivalence of two $\THM$s.

\bibliography{biblio}
\appendix

\section{Proofs for Section~\ref{sec:hennie}}

\subsection{Proof of \Cref{prop:weight-reducing}}

\begin{proof}
	We first show how to make $\cH$ weight-reducing and $N$-bounded, and secondly and to make $\cH'$ a total function with a default output.
	To make $\cH'$ weight-reducing and equal to $\cH$ on its $N$-bounded branch-outputting runs, we enrich the memory of $\cH$ with a counter initialized by $N$ at each node $u$.
	Each subsequent visit to a node $u$ decrements its counter. Finally, when the counter reaches $0$, the transition function outputs the default leaf symbol, ensuring that all branch-outputting runs are weight-reducing and $N$-bounded, while maintaining equivalence with $\cH$ on $N$-bounded branch-outputting runs.
	
	Secondly, the function $\sem{\cH'}$ is made total by first making the transition function total by defaulting to a special leaf-symbol when $\cH$ would be undefined, and secondly handling the case where $\cH$ would try to move up of its root by allowing $\cH'$ to identify the root through special memory symbols. This can be done at the first transition, since the initial configuration is at the root.
\end{proof}

\subsection{Proof of \Cref{lem:stay}}

Let $\cH$ be a $\uTHM$ with stay instructions of transition function $\delta$. If $t_0 = \delta(q_0, \sigma, m_0)$, then let us build $t_1$ by replacing each symbol $\qq{q, m, \circlearrowleft}$ in $t_0$ with $\delta(q, \sigma, m)$. Iterating this procedure to build $t_2, t_3, \ldots t_k$ deterministically either : yields a finite sequence such that $t_k$ contains no staying transition, or  : loops, which we can detect by remembering, along each branch, every pair $(q_i, m_i)$ ever reached. 
Let us then build $\cH'$, a quasi-copy of $\cH$, by removing rule $r$.
If we built a finite sequence, replace it with a new rule $\delta_{\cH'}(q^0, \sigma, m^0) = t_k$. These substitutions are the same that would have happened in a run of $\cH$, so $\sem{\cH} = \sem{\cH'}$. If, on the other end, we looped, then we do not add any transition. We know from the existence of the loop and determinism that any run that uses this rule terminate anyways, so removing the rule preserves semantics. We use this construction to remove each transition using stay instructions one by one, which will yield a $\uTHM$ without stay instructions but equivalent to $\cH$. Finally, it also (non-strictly) decreases the number of visits to any given input node and thus keeps the bounded-visit and narrow-visit restrictions satisfied.

\subsection{Proof of Lemma~\ref{lem:bottom-up-rel}}
  Informally, the $\uTHM$ $\cH'$ first writes in the nodes' memory the state computed by the bottom-up automaton. It then behaves like $\cH$, using the memory symbol in addition to the input symbol to determine the relabeling.

The first phase visits $\rank(\sigma)+1 = O(1)$ times each node with label $\sigma$. The first time we encounter this node (by going down), we just go further down to the first child if possible. At the $(k+1)$-th visit for $0\leqslant k < \rank(\sigma)$, we have already processed the subtrees rooted at the $k$ first children of the current node, recording these subtrees' relabeler states in the current node's memory, and we go on to visit the $(k+1)$-th child. At the $(\rank(\sigma)+1)$-th visit, we apply the transition function to the information we have gathered to determine the relabeling $\rho(\sigma,p)$ of the current subtree, record it in memory --- it will be used as the new label of the node ---, and move up to the parent in state $p$.

More precisely, let $\cH=(Q,M,\top,\Xi,\Gamma,q_0,\delta)$ and $R=(P,\Sigma,\Xi,\{\alpha[\sigma]\mid \sigma\in \Sigma\},\rho)$.
Let $S$ be the set of sequences of elements of $P$ of length at most $\maxrank{\Sigma}$.
We define $\cH'=(Q\cup P\cup \{init\}, M\times \Xi\cup\{\top\}\cup S,\top, \Sigma, \Gamma, init, \beta)$
where $\beta$ is decomposed into two modes, one doing a postfix reading of the input to add the relabeling information, and one simulating $\cH$ using the previously computed information. The first mode is defined as:
\begin{itemize}
\item $\beta(init,\sigma,\top)=(init,\top,1)$ if $\rank(\sigma)\neq 0$.
\item $\beta(init,\sigma,\top)=(p,(\top,\rho(\sigma,p)),\uparrow)$ if $\rank(\sigma)=0$ and $p=\alpha[\sigma]\in P$.
%\item $\beta((p_1,\ldots,p_k),\sigma,\top)=\qq{(p,(\top,\rho(\sigma,p)),\uparrow}$ if $k=\rank(\sigma)$ and $\alpha[\sigma](p_1,\ldots,p_k)=p$.
%\item $\beta((p_1,\ldots,p_k),\sigma,\top)=\qq{(init,(p_1,\ldots,p_k),k+1}$ if $k<\rank(\sigma)$.
\item $\beta(p',\sigma,(p_1,\ldots,p_k))=((p,(\top,\rho(\sigma,p)),\uparrow)$ if $k+1=\rank(\sigma)$ and $\alpha[\sigma](p_1,\ldots,p_k,p')=p$. 
\item $\beta(p',\sigma,(p_1,\ldots,p_k))=((init,(p_1,\ldots,p_k,p'),k+2)$ if $k+1<\rank(\sigma)$.
\end{itemize}
When this mode would try to move up of the root, we switch to the state $q_0$ and start the simulating mode which is defined as follows:
\begin{itemize}
\item $\beta(q,\sigma,(m,\xi))= (q',(m',\xi),d')$ where $(q',m',d') = \delta(q,\xi,m)$.
\end{itemize}

\subsection{Proof of \Cref{thm:lookaround}}
\tito{remanier preuve (harmoniser déf config + déf lookaround)}

Consider a $\uTHMr$ with the notations from \Cref{def:thm-rla}. We translate it to a $\uTHM$ whose execution simulates the original $\THMr$ step-by-step while maintaining some additional information that replaces the regular lookaround. Its set of states (resp.\ memory values) is of the form $Q' = Q\times(\dots)$ (resp.\ $M' = M\times(\dots)$),
so that the correspondence between new and old configurations amounts to applying suitable projections. Its transition function is
\[ \delta' \colon (q',m',\sigma) \mapsto \text{apply}\ \ell_{q',m',\sigma}\ \text{to every state call leaf of}\ \delta(\lambda(q',m',\sigma)) \]
where:
\begin{itemize}
  \item $\ell_{q',m',\sigma} \colon Q \times M \times \dirs{\maxrank{\Sigma}} \to Q' \times M' \times \dirs{\maxrank{\Sigma}}$ extends the calls in the original transitions with instructions to update the additional information in the state and memory;
  \item $\lambda\colon Q' \times M' \times \Sigma \to Q_\LA$ should output the correct lookaround state: for any reachable configuration, the image by $\delta_\LA$ of the encoding of the corresponding configuration of the original $\THMr$ should be equal to the image by $\lambda$ of the argument passed to the transition function.
\end{itemize}
To define $\lambda$ we decompose the computation of the lookaround state. Let $t \in \Tree{\Sigma\times M}$ be an input tree labeled with memory values for the original machine. Let $v$ be a node of $t$, with label $\qq{\sigma,m}$. Let us break down $t$ with the vertex $v$ removed into its connected components: $t = C[y \leftarrow \qq{\sigma,m}(s_{1},\dots,s_{\rank(\sigma)})]$, where $C$ is a context with a single occurrence, at position $v$, of the parameter $y$. Let us also write $s \times \{\mathtt{notHere}\}$ for the result of applying $x \mapsto (x,\mathtt{notHere})$ to every node label in $s$. We define:
\begin{align*}
  \rho_i(t,v) &= \delta_\LA[s_{i}\times\{\mathtt{notHere}\}] \in Q_\LA \quad \text{for}\ i \leqslant \rank(\sigma),\quad \text{or}\ \mathtt{null}\ \text{otherwise} \\
  \varphi(t,v) &= \delta_\LA[C \times\{\mathtt{notHere}\}] \in (Q_\LA\to Q_\LA)
\end{align*}
(The action of a single-parameter context on the states of a bottom-up automaton is defined in the usual way, with the key property that $\delta_\LA(C'[y \leftarrow s]) = \delta_\LA(C')(\delta_\LA(s))$.)

If we know this information, then we can recover the lookaround state of a configuration. Indeed, for $q \in Q$, the configuration $(v,q,t)$ of the original machine (we use here a definition of configuration whose third component is a memory-annotated input tree) is encoded as
\[(C\times\{\mathtt{notHere}\})[y \leftarrow \qq{\sigma,m,q}(s_{1}\times\{\mathtt{notHere}\},\, \dots,\, s_{\rank(\sigma)}\times\{\mathtt{notHere}\})] \]
whose image by $\delta_\LA$ is equal to
\[ \varphi(t,v)(\delta_\LA[(\sigma,m,q)](\rho_1(t,v), \dots, \rho_{\rank(\sigma)}(t,v))) \in Q_\LA \]
We choose a definition of $\lambda$ that mirrors this equation:
\[ \lambda \colon (q',m',\sigma) \mapsto \varphi'(q',m')(\delta_\LA[(\sigma,m,q)](\rho_1'(q',m'), \dots, \rho_{\rank(\sigma)}'(q',m'))) \]
where $m$ (resp.\ $q$) is the first component of $m'$ (resp.\ $q'$), for some maps
\[ \rho'_i \colon Q' \times M' \to Q_\LA \qquad \varphi' \colon Q' \times M' \to (Q_\LA\to Q_\LA) \]
For this $\lambda$ to satisfy its desired specification, it suffices to make sure that
\[ \rho'_i(q',m') = \rho(t,v) \qquad \varphi'(q',m') = \varphi(t,v) \]
whenever $q'$ and $m'$ are respectively the state and the memory at the current head position in a configuration of the new THM that projects to $(v,q,t)$.

To achieve this, we take the new sets of states and memory values to be respectively
\begin{align*}
  Q' &= Q \times (Q_\LA \cup (Q_\LA\to Q_\LA)) \\
  M' &= M \times \dirs{\maxrank{\Sigma}} \times (Q_\LA \cup \{\mathtt{null}\})^{\maxrank{\Sigma}} \times (Q_\LA\to Q_\LA)
\end{align*}
For $q' = (q,p) \in Q'$ and $m' = (m, d, r_{1}, \ldots, r_{\maxrank{\Sigma}}, f) \in M'$, we define
\[\rho'_i(q',m') = \begin{cases}
    p & \text{if}\ d = \downarrow_i\\
    r_i & \text{otherwise}
  \end{cases}
  \qquad\qquad
  \varphi'(q',m') = \begin{cases}
    p & \text{if}\ d = \uparrow\\
    f & \text{otherwise}
  \end{cases}
\]
In other words, all the lookahead information can be read from memory, except for one value (and $d$ indicates which one): it is the responsibility of the previous transition to compute this missing value and record it in $p$ as part of the state.

Let us now turn to the transitions. If the original $\uTHMr$ moves in a direction $d''$, then, informally, the new $\uTHM$ also does so in the step-by-step simulation. Furthermore:
\begin{itemize}
  \item In the directional component of the memory, it records $d''$. Indeed, if $d''=\downarrow_i$, then the machine might mutate the memory of the $i$-th child of the current node $v$ before the head position comes back to $v$ (if it ever does). Thus, we cannot predict the lookaround information for the $i$-th subtree at the next visit of $v$, so we record the direction $\downarrow_i$ as our one allowed exception. A similar analysis holds when $d=\uparrow$ regarding the context above $v$.
  \item However, since the moves of the head are local, the memory of the rest of the tree (that is, outside of the aforementioned subtree or context) is not going to change before visiting $v$ again. This is why the new machine also records $\rho_1'(q',m'), \dots$ to memory, to be used at the next visit. (Here $(q',m',\sigma)$ is the argument passed to the transition function.)
\end{itemize}
Formally, recall that the transition function involves a relabeling of state call leaves; we define it as:
\[ \ell_{q',m',\sigma}(q'',m'',d'') = ((q'',p''),(m'',d'',r''_1, \dots, r''_{\rank(\sigma)}, f''), d'') \]
where $r''_i = \rho_i'(q',m')$, $f'' = \varphi'(q',m')$ and
\[ p'' =
  \begin{cases}
    \delta_\LA[(\sigma,m'',\mathtt{notHere})](r''_{1},\dots,r''_{\rank(\sigma)}) \in Q_\LA &\text{if}\ d'' = \uparrow\\
    f'' \circ \delta_\LA[(\sigma,m'',\mathtt{notHere})](r''_{1},\dots,r''_{i-1},-,r''_{i+1},\dots,r''_{\rank(\sigma)}) &\text{if}\ d'' = \downarrow_i
  \end{cases}
\]
Let us explain the case $d'' = \uparrow$ of this last equation (the other case is similar). Let $(v,q',t')$ be a reachable configuration, with $(\sigma,m')$ being the label of $v$ in $t'$. By moving in the direction $\uparrow$ we go to the parent $w$ of the node $v$. It must be the case that $w$ has already been visited before (in order to reach $v$ from the root), and the last visit to $w$ has concluded with a move towards the child $v$, in direction $\downarrow_i$ for some $i$. The responsibility of the transition from $(v,q',t')$ to the successor configuration $(w,(q'',p''),s')$ is therefore to make sure that $p'' = \rho_i(s,w)$ where $s$ is the projection of $s'$. By unfolding the definitions, and noting that $s'$ and $t'$ differ only at the node $v$, one can check that the above choice for $p''$ works.

A final detail: our new $\uTHM$ uses the bottom-up initialization feature from \Cref{lem:bottom-up-rel}. If the input subtree rooted at a node is $\sigma(t_1,\dots,t_k)$ then its initial memory is
\[ (\top, \delta_\LA[t_1 \times \{(\top,\mathtt{notHere})\}],\dots,\delta_\LA[t_k \times \{(\top,\mathtt{notHere})\}], \underbrace{\mathtt{null}, \dots, \mathtt{null}}_{\mathclap{\text{\(\maxrank{\Sigma}-k\) times}}}, \mathrm{id}) \]
This guarantees that the correspondence between $\rho,\varphi$ and $\rho',\varphi'$ holds at the first visit of each node.

\section{Proofs for Section~\ref{sec:mtt}}

We prove first the first statement of \Cref{thm:mtt-hennie}, namely that any $\mtt$ can be translated into an equivalent $\uTHM$.
\begin{proof}[Proof of \Cref{thm:mtt-hennie}]
%%%Sketch of proof

Before formally defining the equivalent $\uTHM$ $\cH$, we explain how it simulates $\cM$.
%Let $\cM=(Q,\Sigma,\Gamma,q_0,h,R)$.
The main difference between $\mtt$ and $\uTHM$ is that the $\mtt$ processes its input once while having pointers to different parts of its output, while the $\uTHM$ can process its input several times, but only produces its output in a top-down fashion. On the other hand, a $\uTHM$ has access to a finite memory for each node and for each branch of the output.
In order to simulate $\cM$, the $\uTHM$ $\cH$ applies, for each branch, an outside-in evaluation process, meaning that it always evaluates the topmost call state of a branch. It also stores in each node the sequence of remaining calls to be processed.

$\cH$ starts simulating $\cM$ at the root, applying its initial transition.
When $\cH$ reaches a node from its parent, it does so in a given state $q$ of $\cM$. 
Together with the node label $\sigma$, it gives a transition $(q,\sigma)$ of $\cM$, where $r\in \Tree{\Gamma\cup\qq{Q,X_j}}{(Y_k)}$ with $j$ being the rank of $\sigma$ and $k$ being the rank of $q$.
The corresponding transition of $\cH$ produces a tree where each branch $b$ is truncated at the first $\qq{Q,X_j}$ symbol $\qq{q',x_i}$.
It launches a head in state $q'$ on direction $i$, and writes on its node the truncated part of the branch $b$.
If there is no $\qq{Q,X_j}$ symbol, then either the branch is done and nothing else needs to be done, or there is a $Y_i$ symbol, in which case the reading head moves up, back to its parent, in a state indicating the branch $i$ being produced.

When $\cH$ reaches a nodes from one of its children, then it contains a remainder of  the branch to produce, and the state indicates which branch to follow. It then acts similarly to above, according to the state at the top of the branch, minus the state call at the root whose simulation has finished.
The computation stops when all branches have reached their respective ends.%%top phrase !

%%%%Formal construction

Formally, let $\cM=(Q,\Sigma,\Gamma,q_0,\delta)$. 
We define a $\uTHM$ $\cH=(P,N,\top,\Sigma,\Gamma,p_0,\alpha)$ as follows.
We set the set of states $P=Q\cup\db{\maxrank{Q}}$ and $p_0=q_0$.
To define the memory $N$, we need to consider the possible productions of $M$.
We first consider $Out$ to be the set of output trees $\delta(q,\sigma)$ for $q,\sigma$ in $Q,\Sigma$. 
Let $Out'$ be the set of subtrees of elements of $Out$ that are rooted at an element of $\qq{Q,X_j}$. The set of memory symbols $N$ is finally defined as $\{\top\}\cup Out'$.

It remains to define the transition function $\alpha$:
\begin{itemize}
\item $\alpha(q,\sigma,\top)$ produces the tree $\delta(q,\sigma)$ where each branch is truncated before the first symbol that does not belong to $\Gamma$, and for each truncated subtree $t$, we set a processing head:
\begin{itemize}
\item $\qq{q,t,j}$ if the top symbol belongs is $\qq{q,x_j}$,
\item $\qq{i,\top,\uparrow}$ if $t$ is reduced to a parameter $Y_i$.
\end{itemize} 

\item $\alpha(i,\sigma,t)$ behaves similarly to the first case, using the $i$-th branch of $t$ (without its root) instead of $\delta(q,\sigma)$.
Note that in this case $\alpha(i,\sigma,t)$ nonetheless produces the possible finite $\Gamma$ branches attached to the root of $t$.
\item the case $\alpha(q,\sigma,t)$ is undefined as it does not appear since $\cH$ only reaches a node in a state in $Q$ if its memory is emptied (i.e. in memory state $\top$).
\item Note that the $\uTHM$ can only reach a node in a state $i$ coming from one of its children, so the case $(i,\sigma,\top)$ does not happen either.
\end{itemize}

%%%%Correctness

We now prove the correctness of the construction, namely that $\sem{\cM}=\sem{\cH}$.
As mentioned earlier, we in fact prove that $\cH$ realizes an outside-in evaluation of calls of $\cM$ on every branches, meaning that it produces its output similarly to $\cM$ does by always applying the topmost derivation possible on each branch, called the outside-in derivation.
This proves the correctness of simulation since all derivations of $\mtt$s lead to the same output.

In order to prove this, we describe what outside-in partial derivations of $\cM$ look like.
The $\mtt$ $\cM$ processes its input in a top-down fashion, stacking state calls on the partial output. 
A state call $\qq{q,s}$ is replaced, by a derivation step, by a tree whose state calls are children of $s$.
If the topmost state call is processed each step (branch-wise), then 
the state calls of children of a node are processed before its own state calls.
The resulting derivation is a tree where for any branch, all state calls of siblings of a given node are grouped together. More, the depths of state calls in the derivation tree are inverted compared to the depths of the node they point to.

Let us consider a partial output $t'$ of $\cM$ over some input $t$, meaning that $\qq{q_0,t}\Rightarrow_M^* t'$, obtained using the outside-in derivation.
We show that $\cH$ reaches a configuration where it has produced the maximal prefix $t''$ of $t'$ that belongs to $T_\Gamma$, 
and for each topmost node $n=\qq{q,s}$ appearing of $t'$ (hence not in $t''$), $\cH$ has a processing head on $s$ in state $q$. 
Moreover for each node $s'$ and such node $n$, the memory of $s'$ for the branch denoted by $n$ is $\top$ if $s'$ is a descendant of $s$, and otherwise contains the maximal subtree of $t''$ below $n$ containing symbols of $\Gamma$ and all state calls on its children, and such that its root is a state call.

At the start of the computation of $\cM$, nothing is produced and the initial configuration of $\cH$ satisfies this property.

Now assume that property holds true at some point of the computation, and let $n=\qq{q,s}$ be the topmost node of some branch $b$ of $t'$ that does not belong to $\Gamma$. 
The outside-in derivation of $M$ on this branch applies to $n$ and replaces $\qq{q,s}$ by $\delta(q,\lab(s))$.
By induction hypothesis, $\cH$ has a processing head of $s$ in state $q$.
Since $n$ is the topmost state call and points to $s$, there is no state call related to any children of $s$, so the memory state of $s$ is $\top$.
The derivation of $\cM$ replaces $\qq{q,s}$ with the context $c=\delta(q,\lab(s))$ and plugs it in $t'$ to form a new partial output $t''$.
Similarly, the corresponding transition $\alpha(q,\lab(s),\top)$ produces the $\Gamma$ part of $c$.
For any branch of $c$ with a state call, let $n'=\qq{q',si}$ be its topmost state call. 
$\cH$ launches a processing head on $si$ in state $q'$, and stores on $s$ the (possibly empty) tree below $n'$, satisfying the invariant for the unfinished branch.
If a branch of $c$ is entirely comprised of $\Gamma$-symbols, the branch  is entirely computed and the invariant is also satisfied.
Finally, if a branch of $c$ ends in some parameter $Y_i$, the possible topmost state call of $t''$ of this branch relates to a node $s'$ above $s$.
The machine $\cH$ moves up, in state $i$, up to the first ancestor node that still contains a state call in its memory state, possibly updating the state $i$. Thanks to the invariant, it corresponds to $s'$ and the memory state of $s'$ contains the remainder of the branches to produce and the state indicates which branch is being produced. The transition applied here then launches a processing head on the child of $s'$ corresponding to the state call, in the corresponding state, and removes from $t$ the produced part, satisfying the invariant.

This proves that $\sem{\cM}\subseteq \sem{\cH}$ when seen as relations.
And since $\cM$ and $\cH$ realizes total functions, this means that $\sem{\cM}=\sem{\cH}$, which concludes the proof.
\end{proof}

The second point of \Cref{thm:mtt-hennie} is a consequence of the more general following property:
each call of the machine $\cM$ to a node $u$ corresponds exactly to one downward visit of the machine $\cH$ to the same node $u$.
Indeed by construction of $\cH$, each downward visit of $\cH$ to a node $u$ consumes a call stored by its parent. As the memory is constructed from transitions of $\cM$ that are yet to be expanded, we get that calls of $\cM$ are in bijection to downward visits of $\cH$.
Then the number of downward visits at an antichain $S$ of nodes of an input $t$, for a branch-outputting run of $\cH$, is the number of calls of $\blacksquare$ that can be stored in nodes of $S$ by the branch-outputting run, which is bounded by the maximal number of $\blacksquare$ that can be stacked during a computation of $\cM^\blacksquare$.

\section{Proofs for Section~\ref{sec:msosi}}

The goal of this section is to construct, given a finite visits $\twhm$, an $\msosi$ 
that realizes the same function.
The main ingredient of the proof is the use of local profiles, which are abstractions of the run of a $\twhm$ at a given input position regarding a given output position.
Thanks to the finite visits property, local profiles are bounded. 
We use them for the coloring $X_i$ of the $\mso$ set interpretation.
A global profile of an input then associates to each input position a local profile. The domain formula of the $\msosi$ states that a global profile is coherent, meaning that all local profiles concern a same output position. 

To define the successor and ancestor formulas,
we rely on the fact that the local profiles for a given output position are extensions of the local profiles for its ancestors, as $\twhm$ produces their output in a top-down fashion. In particular, this means that a partial run up to producing a given output position has to produce all its ancestors.

For the remainder of this section, we fix a (finite visits) $ \twhm $ $\cH = (Q, M, \top, \Sigma, \Gamma, q_{init}, \delta) $. We denote by $ B$ (resp. $PB$) the set of (resp. prefix) branches in the right hand side of transition rules of $\cH$, i.e.\ the set $\bigcup_{q,\sigma,m} B(\delta(q, \sigma, m))$ (resp. $\bigcup_{q,\sigma,m} PB(\delta(q, \sigma, m)))$).

\subsection{Hennie Profiles}

A node of the output tree $\cH(t)$ for some input tree $t$ is uniquely characterized by the branch-outputting run leading to it.
Said branch is determined by a sequence of transitions of $\cH$ on $t$, 
i.e. by an input node, a state and a memory state.
Should we call a \emph{visit pair} a pair composed by a state of $Q$ and a branch of $B_P$,
the trace left on an input node of producing an output node is a finite sequence of visit pairs.
Formally, we set:
\begin{definition}[Local profiles]
For $\sigma \in \Sigma$,
 a \emph{local $ \cH $ Hennie profile over $\sigma$} 
is a finite sequence $(q_1,b_1),\ldots (q_k,b_k)$ of visit pairs that
satisfies the following conditions: 
\begin{enumerate}
\item $k$ is smaller of equal to maximal number of visits of $\cH$.
\item for all $0< i < k-1 $, $b_i$ is a branch of $B$ and its leaf is a reading head,
\item $ b_0 \in B(\delta(q_0, \pp{\sigma, \top})) $
, and 
\item $ \forall i<k-1, b_{i+1} \in B(\delta(q_{i+1}, \pp{\sigma, m_i}))$, where $m_i$ is defined inductively as the memory set by the previous transition. 
\end{enumerate}
We call $\Lambda_\cH(\sigma)$ the set of local $\cH$ Hennie profiles over $\sigma$, or simply $\Lambda(\sigma)$ when $\cH$ is clear from context, and $\Lambda_\cH=\cup_{\sigma\in \Sigma} \Lambda_\cH(\sigma)$ the set of all local $\cH$ Hennie profiles.
\end{definition}

Intuitively, a local profile requires that at each visit of the input position, the memory state corresponds to the one written by the last visit (or $\top$ for the first visit).
We also require that each visit but potentially the last selects a complete and non terminating branch of the production of the transition.

Producing a an output node, and the branch leading to it, might require to visit several if not all nodes of the input, generating local profiles for all input nodes. This gives rise to the notion of global profiles:
\begin{definition}[Global profiles]
A \emph{global $\cH$ Hennie profile $\alpha$ over t} is a mapping from nodes of $t$ to local $\cH$ Hennie profiles respecting the condition: $ \forall u \in \Dom(t), \alpha(u) \in \Lambda_\cH( \lab_t(u) ) $. We call $ N_\cH(t) $ the set of global $ \cH $ Hennie profiles over $t$.
\end{definition}

Global profiles assigns to each node a local profile corresponding to its label.
The next notion allows us to describe global profiles that effectively characterize partial runs.

\begin{definition}[Execution order]
Consider a global Hennie profile $ \alpha $.
For $u$ a node of $t$ and $i$ an integer, we denote by $(u,i)$ the $i$-th visiting pair $(q_i,b_i)$ of $\alpha(u)$ if it exists.
An \emph{execution order} of $mnu$ is a total ordering $\leqslant_e$ of all visiting pairs $(u,i)$ such that:
\begin{itemize}
\item For all $u$, $(u,i)\leqslant_e (u,j)$ iif $i\leqslant j$,
\item For all visiting pair $(q,b)$ of some node $u$, with the leaf of $b$ being some head $\qq{q',m,d}$ if, and only if, there exists some visiting pair $(q',b')$ in $ud$ and $(q',b')$ is the $\leqslant_e$-successor of $(q,b)$.
\end{itemize}
\end{definition}

We finally define coherent full profiles that we later prove to be the colorings we need to define the $\msosi$ transformation.
\begin{definition}[Coherence]
A global profile is said to be coherent if it admits an execution order whose smallest element is the first visiting pair of the root, with state $q_{init}$.
\end{definition}
\begin{proposition}\label{prop:coherent_endpoint}
In a coherent global profile $\nu$ over $t$, there exists a unique $u \in \Dom(t)$ that contains a branch $b$ which endpoint is not labeled in $\qq{Q, M, D}$.
\end{proposition}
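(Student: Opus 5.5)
The plan is to exploit the execution order $\leqslant_e$ that witnesses coherence. Since it is a total order on the finite set of visiting pairs $(u,i)$ occurring in $\nu$, I will call its greatest element $(u^*,i^*)$ and show that $u^*$ is the node whose statement we seek. Everything will follow from the second axiom of execution orders. That axiom says that a visiting pair $(q,b)$ at a node $u$ has a $\leqslant_e$-successor if and only if the endpoint of $b$ is a head $\qq{q',m,d}$, and in that case the successor is a visiting pair at $ud$.

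For existence, take the maximal pair $(q,b) = (u^*,i^*)$. It has no $\leqslant_e$-successor, so by the equivalence the endpoint of $b$ is not a head. Since $b$ is a branch of a right-hand side $\delta(\dots)$, its endpoint is a leaf of a tree in $\Tree{\Gamma}[Q\times M\times \dirs{\maxrank{\Sigma}}]$. Hence it is labelled either by an element of $\rankK{\Gamma}{0}$ or, in the prefix-branch case allowed for the last visit of a local profile, by a $\Gamma$-symbol. In either case its label is not in $\qq{Q,M,D}$, which gives existence.

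For uniqueness, take any visiting pair $(u,i)$ other than the maximal one. Because $\leqslant_e$ is total and finite, $(u,i)$ has a $\leqslant_e$-successor. The equivalence then forces the endpoint of its branch to be a head, i.e.\ a label in $\qq{Q,M,D}$. So only the maximal pair can carry a non-head endpoint, and $u^*$ is unique; in fact the branch $b$ is unique as well.

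The main obstacle is not this argument but the sloppiness of the definitions it rests on. First, the indices in the definition of local profiles are off by one, so I will read the condition as saying that every visit except the last selects a branch ending in a head. Second, the statement mentions branches whose endpoint lies in $\qq{Q,M,D}$, which I take to mean $Q\times M\times \dirs{\maxrank{\Sigma}}$. Third, I must treat the \enquote{if and only if} clause as giving both directions: a head endpoint implies a successor, and having a successor implies a head endpoint. If that reading holds, the proof is immediate. I would also note that the first axiom of execution orders ensures that the $\leqslant_e$-successor really is the next visit at $ud$, but this fact is not needed for the claim itself.
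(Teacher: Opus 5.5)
Your proof is correct and follows the paper's argument. The paper likewise reads the execution-order axiom as \enquote{$(q,b)$ has a $\leqslant_e$-successor iff the endpoint of $b$ is a head}, and concludes from totality of $\leqslant_e$ that the unique pair with a non-head endpoint is the $\leqslant_e$-maximum. Your version simply spells out the two directions the paper's one-line proof leaves implicit: existence via the maximum having no successor, and uniqueness via every other pair of the finite total order having one. Nonemptiness of the order, which you use implicitly, is supplied by coherence, since the first visit of the root is its minimum.
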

\begin{proof}
A pair $(q,b)$ has a successor if and only if the endpoint of $b$ is labeled in $ \qq{Q, M, D} $, but the order $\leqslant_e$ needs to be total, so there must be exactly one visiting pair whose endpoint is not in $ \qq{Q, M, D} $ : the maximum $(u, i)$ of $\leqslant_e$.
\end{proof}

The previous definitions amount to this key lemma that links coherent global profiles with output nodes.

\begin{lemma}\label{lem-GlobalProfPBranches}
Given an input $t$ of $\cH$, there is a bijection between the output nodes of $\sem{\cH}(t)$ (or equivalently branch-outputting runs with a distinguished node in the last branch) and
the set of coherent $\cH$-profiles over $t$. 
\end{lemma}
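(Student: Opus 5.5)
We construct the bijection explicitly in both directions, using the branchwise semantics (Claims~\ref{clm:brawo} and~\ref{clm:branchwise-sem}). An output node of $\sem{\cH}(t)$ corresponds to a branch-outputting run $C_{init}(t) = x_0 \xrightarrow{b_1} x_1 \cdots \xrightarrow{b_\ell} x_\ell$ together with a distinguished position in the last label $b_\ell$, i.e.\ a prefix branch $b \in PB$ of the last transition's right-hand side. The run is deterministic given the sequence of branch choices, and the path from the root to the output node fixes these choices. So the data to encode is a run plus one prefix branch at the last step.

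\textbf{Forward map.} Given such a run, we define $\alpha(u)$ as the sequence of visit pairs $(q,b_j)$ over all steps $j$ whose configuration has position $u$, in chronological order. Here $q$ is the state at that step and $b_j$ is the chosen branch of $\delta(q,\lab_t(u),\mu(u))$; it is a full branch ending in a head for $j<\ell$, and the distinguished prefix branch for $j=\ell$. By the bounded-visit hypothesis the length is at most $N$. The memory condition in the definition of local profiles holds because the memory at $u$ changes only when $u$ is visited, and it changes to the memory symbol written in the head at the end of $b_j$. Unvisited nodes get the empty profile. The chronological order of all steps is an execution order: the first step is the root in state $q_{init}$, per-node order is chronological, and the successor of a pair whose branch ends in $\qq{q',m,d}$ is the next step, which is at $ud$ in state $q'$. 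Hence $\alpha$ is coherent.

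\textbf{Inverse map.} Given a coherent $\alpha$ with execution order $\leqslant_e$, we show that $\leqslant_e$ is unique. The successor of each pair is forced by the head at the end of its branch: it must be the next unused pair of node $ud$, because per-node order agrees with $\leqslant_e$. Hence, by induction from the minimum, $\leqslant_e$ is determined. By Proposition~\ref{prop:coherent_endpoint}, it is a finite chain ending in the unique pair whose branch does not end in a head. Reading the chain in order, we reconstruct configurations inductively: the position follows from the node, the state from the pair, and the memory function $\mu$ from the last symbol written at each node (or $\top$). We then check by induction that each branch $b_j$ is indeed a branch of $\delta(q_j,\lab_t(u_j),\mu_j(u_j))$, using condition~4 of local profiles. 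This yields a path $C_{init}(t)\tto[b]\cdots$ whose last label is a prefix branch, hence a node of the output by Claim~\ref{clm:branchwise-sem}; here we use totality, so that every prefix extends to a full branch word.

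\textbf{Main obstacle.} The two maps are mutually inverse essentially by construction. The step needing care is the uniqueness and well-foundedness of the execution order. We need that the chronological interleaving of per-node sequences is recoverable, and we must rule out spurious ``coherent'' profiles containing leftover pairs not reached by the chain. Totality of $\leqslant_e$, together with the forced-successor argument, shows that every pair lies on the single chain from the minimum. This excludes such extra pairs and makes both directions well defined. We would also double-check the off-by-one indexing conventions in the local profile definition, which we read as saying that the last pair may carry a prefix branch while the earlier ones carry full branches ending in heads.
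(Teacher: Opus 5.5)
Your proof is correct and follows the same route as the paper's. In one direction you record the visit pairs node by node in chronological order along the run, and in the other you concatenate the branches of a coherent profile along its execution order, checking the local-profile conditions against the memory last written at each node. Your extra argument that the execution order is forced (each successor is the next unused pair at $ud$, and a finite total order is a single chain from its minimum) makes explicit two points the paper leaves implicit: the profile-to-node map does not depend on a choice of $\leqslant_e$, and the two maps are mutually inverse. Your opening paragraph describes the data loosely as a full run with a marked position in the last label, which would count a node once per continuation. Your forward map already represents the last step as a prefix branch ending at a $\Gamma$-node, and you should state that convention from the outset.
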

\begin{proof}
First, let $\alpha :\Dom(t)\to \Lambda_\cH$ be a coherent global profile with execution order $\leqslant_e$.
Using \Cref{prop:coherent_endpoint}, there exists exactly one node $u$ with a visiting pair $(q,b)\in \alpha(u)$ such that the endpoint of $b$ is an element of $\Gamma$.
We prove by induction on the execution order that the concatenation of the branches of the visiting pairs along the execution order is a branch-outputting run of $\sem{\cH}(t)$ with the last transition potentially truncated.
Then by \Cref{prop:coherent_endpoint} the endpoint of this truncated branch-outputting run is an element of $\Gamma$, thus an output node of $\sem{\cH}(t)$.
Initially, the smallest element of $\leqslant_e$ is, by definition of coherence, the first visiting pair $(q_{init},b)$ of the root of $t$, where $b$ is a (potentially truncated) branch of $\delta(q_{init},\lab(\varepsilon),\top)$.
Hence at the first step $b$ is a possibly truncated branch-outputting run of the initial transition of $\cH$ on $t$.

Now suppose that we have constructed a branch-outputting run $b_1\ldots b_n$ up to some visiting pair $(q,b_n)$ of some node $u$ of $t$, and let $\qq{q',m,d}$ be the endpoint of $b_n$.
Note that if the endpoint of $b$ is an element $\gamma$ of $\Gamma$, by \cref{prop:coherent_endpoint} we have exhausted the execution order and we can conclude.
By definition of the execution order there is a visiting pair $(q',b')$ in $ud$ that is the successor of the visiting pair $(q,b_n)$. 
Then $b'$ is a branch (or a prefix of a branch, in which case we can also conclude) of $\delta(q',\lab(ud),m')$ where $m'$ is the last memory state written at $ud$ by the previous visiting pair (or $\top$ if it is the first visiting pair of the node).
We concatenate $b'$ to the already produced branch-outputting run by replacing $\qq{q',m,d}$ by $b'$. As we followed the transition of $\cH$, we successfully extended our induction.

Conversely, 
given an output node $o$ of $\sem{\cH}(t)$, we construct a coherent global profile as follows. 
Using the branchwise semantics of $\twhm$, consider the unique sequence $x_0 \xrightarrow{b_0} \ldots \xrightarrow{b_{n-1}} x_n$ where $x_0=C_{init}$ and $\pi_2(b_0\ldots b_{n-1} \gamma)$ is the step by step branch computation from the initial configuration up to the computation step that output node $o$. Let $x_i=(u_i,q_i,t_i)$ for all $i<n$
Then we construct the local profiles as follows: 
local profiles are initialized as the empty sequence.
Then for $i$ ranging from $0$ to $n-2$, we add to the node $u_i$ the visiting pair $(q_i,b_i)$, and we add to $u_{n-1}$ the pair $(q_{n-1},b')$ where $b'$ is $b_{n-1}$ truncated at $o$.
We prove that the set of sequences we defined is a coherent global profile.
First, they are local profiles since :
\begin{enumerate}
\item each profile has a length corresponding to the number of visits of the node, it is hence bounded by the maximal number of visits.
\item Each $b_i$ is one computation step of $\cH$, so it is a full, branch of a right-hand side of $\delta$, and ends with a reading head except for the last configuration $x_n$,
\item The first visit of a given input node has the memory set to $\top$,
\item The branch $b_i$ is the result of applying the transition of the reading head in state $q_i$, with the memory set by the last visit of the position.
\end{enumerate}

It is then a global profile since all visiting pairs of the local profile apply transitions of the label of the corresponding node.
The execution order is naturally defined by the sequence of $b_i$ ordering their corresponding pairs. By construction it satisfies both conditions of the definitions.
And finally it is coherent since we start at the initial configuration.
\end{proof}

We are now able to define the labeling formulas $\phi_\gamma$.
\begin{lemma}\label{lem:MSOSIdomainFormula}
We can construct $\mso$ formulas $\phi_\gamma(\{X_\lambda\mid \lambda\in \Lambda_\cH\})$, for $\gamma\in\Gamma$, 
such that for any input $t$ and assignment 
$\alpha : \Dom(t)\to \Lambda_\cH$, $(t,\alpha)\models \phi_\gamma(\{X_\lambda\mid \lambda\in \Lambda_\cH\})$ if, and only if, $\alpha$ is a coherent global profile in bijection with an output node labeled by $\gamma$.
\end{lemma}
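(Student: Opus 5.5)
The plan is to write $\phi_\gamma$ as a conjunction $\phi_{\mathrm{glob}} \wedge \phi_{\mathrm{coh}} \wedge \phi_{\mathrm{end},\gamma}$ and then use \Cref{lem-GlobalProfPBranches} to obtain the bijection with output nodes. The formula $\phi_{\mathrm{glob}}$ says that the sets $X_\lambda$ partition $\Dom(t)$ and that every $x \in X_\lambda$ has a label $\sigma$ with $\lambda \in \Lambda_\cH(\sigma)$. This is first-order and finite, since $\Lambda_\cH$ is finite by the bounded-visit bound. The formula $\phi_{\mathrm{end},\gamma}$ says that the unique visiting pair whose branch does not end in a head (unique by \Cref{prop:coherent_endpoint}) ends in a $\gamma$-labeled output node. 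This is also a finite disjunction over local profiles, so the real work is $\phi_{\mathrm{coh}}$.

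The main obstacle is expressing coherence, because an execution order is a global total order on up to $N\cdot|t|$ visiting pairs, and we cannot quantify over such an order directly in $\mso$. I would replace it with local matching conditions. For each edge $(u, ui)$ of $t$, the local profiles at $u$ and $ui$ determine two sequences. The first is the sequence of moves from $u$ down to $ui$, given by the visits of $u$ whose branch ends in a head with direction $i$. The second is the sequence of moves from $ui$ up to $u$, given by the visits of $ui$ ending with direction $\uparrow$. The matching conditions require that these sequences interleave correctly: the $j$-th downward move to $ui$ carries the state and memory that start the $j$-th entry visit of $ui$, and the returns to $u$ likewise start the corresponding visits of $u$. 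This is the classical crossing-sequence technique for two-way automata and Hennie machines, here on trees. Since profiles are bounded, for each pair of local profiles and each $i$ the condition is a finite table, so $\phi_{\mathrm{coh}}$ becomes a first-order statement over adjacent nodes. To it I would add: the root's first visit has state $q_{init}$; no visit of the root moves up; nodes with empty profile are never entered; and the memory-threading condition inside each local profile, which is already built into the definition of $\Lambda_\cH$.

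The remaining step is to show that local consistency implies the existence of an execution order as in the definition of coherence. I would argue by reconstructing the run. Starting at the root's first visit, always follow the successor that the matching forces. Locality guarantees that the next visit consumed at each node is the next unused one in its profile, which follows by induction on the number of steps using the pairing of the $j$-th crossings across each edge. The process terminates because every profile is finite. It remains to check that every recorded visit is eventually used. A visit that is never used would form, together with the crossings that pair with it, a closed set of cycles disjoint from the run, since pairing is a bijection on crossings. To exclude this I would add a well-foundedness device. The simplest option, which I would try first, is to strengthen the matching condition to demand that the interleaving of down- and up-crossings on each edge is a properly nested, well-bracketed sequence starting with a downward move from the root side. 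This forbids detached cycles, because every cycle would have to pass through the root's first visit. The converse direction, that a coherent profile satisfies these local conditions, follows directly from the definition of execution order.
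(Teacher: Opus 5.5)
Your route differs from the paper's and can be made to work, but the decisive step is only asserted.

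\textbf{Comparison.} The paper also defines the successor between the $i$-th visit at $x$ and the $j$-th visit at a neighbour $y$ locally from the two profiles; this is exactly your in-order crossing matching. It then uses second-order power: it states that the reflexive--transitive closure of this successor relation is a total order whose least element is the root's first visit, and reads $\gamma$ off its maximum. Here the closure is taken over the boundedly many visit slots of each node. Antisymmetry and totality exclude detached cycles and unreached visits for free, so no realizability lemma is needed. You aim instead for a purely local, first-order characterization in the style of classical crossing sequences. That gives a stronger statement, since coherence becomes locally checkable. The price is that you must prove that locally matching profiles come from an actual run.

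\textbf{What is missing.} First, \enquote{the $j$-th entry visit of $ui$} and \enquote{the corresponding visits of $u$} have to be fixed by an explicit rule. The first visit of a non-root node is entered from its parent. Visit $k+1$ of $u$ is entered from $ud_k$, where $d_k$ is the exit direction of visit $k$; this holds in every run, because from the $ud_k$ side one can only come back through that edge. With this rule, down- and up-crossings on each edge alternate automatically, so your \enquote{well-bracketed} strengthening adds nothing. Note also that only the state crosses an edge; the memory is threaded inside each local profile, not carried by the move.

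Second, \enquote{every cycle would have to pass through the root's first visit} is not an argument. Your forward reconstruction, in which visits are consumed in index order, is the right idea, but you carry it out only along the chain starting at the root. Excluding cycles needs the same nesting property for every entry point, proved bottom-up by induction on the height of the subtree rooted at $ui$:
\begin{itemize}
\item Following successors from the visit of $ui$ matched with the $j$-th downward crossing of $(u,ui)$ traverses the visits of $ui$ in index order up to its next $\uparrow$-exit.
\item Each intermediate downward exit returns, by the induction hypothesis on the child edge, to the very next visit of $ui$.
\item The chain then leaves through the $j$-th upward crossing, or ends inside the subtree if there is none.
\end{itemize}
Top-down from the root's first visit, every visit of every node is then reached. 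Since in- and out-degrees are at most one, the successor graph is a single path, i.e.\ an execution order. With this induction your in-order matching already rules out cycles. Without it, the proof is incomplete exactly where the paper relies on $\mso$ transitive closure.
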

\begin{proof}
We first define a formula $\phi(\{X_\lambda\mid \lambda\in \Lambda_\cH\})$ such that
$(t,\alpha)\models \phi$ if, and only if, $\alpha$ is a coherent global profile.
First, the formula states that the sets $X_i$ form a partition of the domain of $t$.
The predicate $x\in^ X_\lambda$ is interpreted as the node quantified by $x$ is labeled by local profile $\lambda$.
The global profile condition is simply the subformula $\psi_{gp}=\forall x, \wedge_{\sigma\in\Sigma} \big(\sigma(x) \to \vee_{\lambda\in\Lambda_\cH(\sigma)} x\in X_\lambda\big)$.
Next, we define the execution order via a formula $\psi_{i,j}(x,y)$  that is satisfied if the successor in the execution order of the $i$-th visiting pair of the profile of $x$ is the $j$-th visiting pair of $y$.
As the successor in the execution order is always a neighbor (self, child or parent) and only depends on the local profile of the two which is finite information, the formula $\psi_{i,j}(x,y)$ can be written as a disjunction of types on $x$ and $y$ in its direct neighborhood.
Then the existence of the execution order is the formula stating that the transitive closure the successor formula is a total order, i.e. transitive, reflexive and antisymmetric, which are all $\mso$ definable properties.
Finally, the coherence is defined by stating that the minimal element of the execution order is the first visiting pair of the root.
The label $\gamma$ correspond to the label of the maximal element of the execution order.
\end{proof}

%\begin{theorem}\label{thm:TWHMtoMSOSI}
%Given a $\twhm$ $\cH$, we can effectively construct an $\msosi$ $T$ such that
%$\sem\cH=\sem{T}$.
%\end{theorem}
We conclude this Section by giving the reduction from $\THM$ to $\msosi$.
\begin{proof}[Proof of \Cref{thm:THMtoMSOSI}]
Similarly to the proof of \Cref{lem:MSOSIdomainFormula}, we set $c=|\Lambda_\cH|$, the number of second order variables of $T$ is equal to the number of local profiles.
The labeling formulas $\phi_\gamma$ are taken from \Cref{lem:MSOSIdomainFormula}.

Finally, we need to define the $\phi_i$ formula interpreting the $i$-th child relation as well as the $\phi_{anc}$ formula interpreting the ancestor relation.
A global profile $\alpha$ describes the ancestor of an other global profile $\alpha'$ if, and only if, for all node $u$, $\alpha'(u)$ is obtained from $\alpha(u)$ by 
 adding visiting pairs to it, and by enriching the maximal element of the execution order. This is a local property that can be tested separately on each node, verifying that $\alpha'(u)$ is indeed an extension of $\alpha(u)$, and both are coherent global profiles.
Similarly, a coherent global profile $\alpha'$ describes the $i$-th child of an other coherent global profile $\alpha$ if $\alpha'$ is obtained from $\alpha$ by replacing the maximal element of the execution order $(q,b)$ with $b$ ending in some $\gamma$ by $(q,b')$ where $b'$ is $b$ enriched with the $i$-th child in the transition $\delta(q,\pp{\sigma,m})$.
Then either the leaf of $b'$ is an element of $\Gamma$, or
the local profiles of $\alpha'$ are local profile of $\alpha$ enriched with visiting pairs whose branches are reduced to a single node.
The fact that $\alpha'$ is a coherent global profile assures then that all these branches added are reduced to a single reading head, except for the maximal element.
As these properties are $\mso$ definable by a disjunction over the finite set of local profiles, we are able to define formulas $\phi_i$ for the successor.
\end{proof}

\section{Proofs for Section~\ref{sec:actors}}

\subsection{Proof and further details for Claim~\ref{clm:big-actor}}

We first look at the $k$-ary application of actors in general. Recall that $\OM(A \multimap B) = \{\ttL\}\times\IM(A) \cup \{\ttR\}\times\OM(B)$ and that $\IM(A \multimap B)$ is defined similarly. We use the abbreviation $\mathtt{arg}_i\mathfrak{m} = (\ttR, \dots (\ttR, (\ttL, \mathfrak{m}))\dots)$ with $i-1$ times $\ttR$.

\begin{claim}
  Let $k\in\N$. Let $\Gamma \Vdash \alpha : B_1 \multimap \dots \multimap B_k \multimap A$ and $\Gamma \Vdash \beta_i : B_i$ for $i \in \db{k}$. Then:
  \begin{footnotesize}
  \[ Q^\ominus_{\beta(\alpha_1)\dots(\alpha_k)} = Q^\ominus_\beta \times \prod_{i=1}^k Q^\ominus_{\alpha_i} \qquad Q^\oplus_{\beta(\alpha_1)\dots(\alpha_k)} = Q^\oplus_{\beta} \times \prod_{i=1}^k Q^\ominus_{\alpha_i} \cup \bigcup_{i=1}^k Q^\ominus_\beta \times Q^\ominus_{\alpha_1} \times \dots \times Q^\ominus_{\alpha_{i-1}} \times Q^\oplus_{\alpha_i} \times Q^\ominus_{\alpha_{i+1}} \dots \times Q^\ominus_{\alpha_k} \]
  where $\cong$ denotes bijections that only reindex products. Intuitively, at any time, at most one of the subprocesses is active.
  \end{footnotesize}

  The initial state is $(q_{0,\beta}^\ominus,q_{0,\alpha_1}^\ominus,\dots,q_{0,\alpha_k}^\ominus)$ and the transitions are:
  \[\delta^\ominus_{\beta(\alpha_1)\dots(\alpha_k)} \colon ((q^\ominus_\beta,q^\ominus_{\alpha_1},\dots,q^\ominus_{\alpha_k}),\mathfrak{m}) \mapsto (\delta^\ominus_\beta(q^\ominus_\beta,(\ttR,\dots(\ttR,\mathfrak{m})\dots)),q^\ominus_{\alpha_1},\dots,q^\ominus_{\alpha_k})\ \text{with}\ k\ \text{times}\ \ttR \]
  \begin{footnotesize}
    \begin{align*}
      \delta^\oplus_{\beta(\alpha_1)\dots(\alpha_k)} \colon (q^\ominus_\beta, q^\ominus_{\alpha_1}, \dots, q^\oplus_{\alpha_i}, \dots q^\ominus_{\alpha_k})
      &\mapsto \left[{\begin{aligned}
        p^\oplus &\mapsto (q^\ominus_\beta, q^\ominus_{\alpha_1}, \dots, p^\oplus, \dots q^\ominus_{\alpha_k})\\
        \gamma(p^\oplus_1,\dots) &\mapsto \gamma((q^\ominus_\beta, q^\ominus_{\alpha_1}, \dots, p_1^\oplus, \dots q^\ominus_{\alpha_k}),\dots)\\
        (p^\ominus,\mathfrak{m}) &\mapsto (\delta^\ominus_\beta(q^\ominus_\beta,\mathtt{arg}_i\mathfrak{m}), q^\ominus_{\alpha_1}, \dots, p^\ominus, \dots q^\ominus_{\alpha_k})
      \end{aligned}}\right]\!\!\bigl(\delta^\oplus_{\alpha_i}(q^\oplus_{\alpha_i})\bigr) \\
      (q^\oplus_\beta, q^\ominus_{\alpha_1}, \dots, q^\ominus_{\alpha_k})
      &\mapsto \left[{\begin{aligned}
        p^\oplus &\mapsto (p^\oplus,q^\ominus_{\alpha_1}, \dots, q^\ominus_{\alpha_k})\\
        \gamma(p^\oplus_1,\dots) &\mapsto \gamma((p^\oplus,q^\ominus_{\alpha_1}, \dots, q^\ominus_{\alpha_k}),\dots)\\
        (p^\ominus,\mathtt{arg}_i\mathfrak{m}) &\mapsto (p^\ominus,q^\ominus_{\alpha_1}, \dots,\delta^\ominus_\alpha(q^\ominus_{\alpha_i},\mathfrak{m}),\dots,q^\ominus_{\alpha_k})\\
        (p^\ominus,(\ttR,\dots(\ttR,\mathfrak{m})\dots)) &\mapsto ((p^\ominus,q^\ominus_{\alpha_1}, \dots, q^\ominus_{\alpha_k}),\mathfrak{m})
      \end{aligned}}\right]\!\!\bigl(\delta^\oplus_{\beta}(q^\oplus_\beta)\bigr)
    \end{align*}
  \end{footnotesize}
\end{claim}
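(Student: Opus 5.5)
The claim is the $k$-ary generalization of \Cref{def:actor-app}, and I would prove it by induction on $k$, unfolding $\beta(\alpha_1)\dots(\alpha_k)$ as $(\beta(\alpha_1)\dots(\alpha_{k-1}))(\alpha_k)$ (application is left-associative). In the case $k=0$ the statement is trivial: the actor is just $\beta$, with no $\ttR$ prefixes and no argument subprocesses.

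For the inductive step, write $\beta' = \beta(\alpha_1)\dots(\alpha_{k-1})$, of type $B_k \multimap A$. The induction hypothesis describes $\beta'$, and \Cref{def:actor-app} describes $\beta'(\alpha_k)$. The state sets follow directly. The passive states are $Q^\ominus_{\alpha_k} \times Q^\ominus_{\beta'}$, and the active states are $(Q^\oplus_{\alpha_k}\times Q^\ominus_{\beta'}) \cup (Q^\ominus_{\alpha_k}\times Q^\oplus_{\beta'})$. Expanding $Q^\ominus_{\beta'}$ and $Q^\oplus_{\beta'}$ with the induction hypothesis and reindexing the products gives the claimed form. In particular, the invariant \enquote{at most one component is active} is preserved.

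For the transitions, the main point is to track the message tags. Incoming messages of $\beta'(\alpha_k)$ are sent to $\beta'$ with one extra $\ttR$; the induction hypothesis then adds $k-1$ more $\ttR$, giving $k$ in total. The active transitions split into three cases.
\begin{itemize}
  \item If $\alpha_k$ is active, \Cref{def:actor-app} sends its outgoing message $\mathfrak{m}$ to $\beta'$ as $(\ttL,\mathfrak{m})$. Relative to $A$ this is an incoming message of $\beta'$ at argument position $B_k$. By the induction hypothesis, $\beta'$ forwards it to $\beta$ with $k-1$ further $\ttR$'s, which is exactly $\mathtt{arg}_k\mathfrak{m}$.
  \item If some $\alpha_i$ with $i<k$ is active, then $\beta'$ is active and $\alpha_k$ is passive. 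The clause of \Cref{def:actor-app} for an active $\beta'$ lifts $\beta'$'s step, which the induction hypothesis already gives in the form $\mathtt{arg}_i\mathfrak{m}$. A message $\mathtt{arg}_i\mathfrak{m}$ is never of the form $(\ttL,\cdot)$ at the outermost level of $B_k\multimap A$, so it does not reach $\alpha_k$. The tree-node and internal-step cases lift componentwise.
  \item If $\beta$ is active, an outgoing message $\mathtt{arg}_k\mathfrak{m}$ of $\beta$ becomes, by the induction hypothesis, an outgoing message $(\ttL,\mathfrak{m})$ of $\beta'$. \Cref{def:actor-app} then delivers it to $\alpha_k$ via $\delta^\ominus_{\alpha_k}$. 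Messages $\mathtt{arg}_i\mathfrak{m}$ with $i<k$ are handled internally by the induction hypothesis. Messages with $k$ nested $\ttR$'s become $(\ttR,\cdot)$ for $\beta'$ and are emitted to the environment with the tags stripped, as the last clause of \Cref{def:actor-app} prescribes.
\end{itemize}

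The only real obstacle is bookkeeping. One must check that the nesting of the disjoint-sum tags $\ttL/\ttR$ matches the right-associative reading of $B_1\multimap\dots\multimap B_k\multimap A$, so that $\mathtt{arg}_i$ picks out the $i$-th argument. One must also make the reindexing bijections explicit enough to see that they commute with the transitions. Once $\mathtt{arg}_i$ is fixed as defined, each case is a direct unfolding of the two cases of \Cref{def:actor-app}.
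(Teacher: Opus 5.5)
Your proposal is correct and is essentially the paper's proof: induction on $k$, viewing $\beta(\alpha_1)\dots(\alpha_k)$ as $\beta'(\alpha_k)$ with $\beta'=\beta(\alpha_1)\dots(\alpha_{k-1})$, unfolding \Cref{def:actor-app}, and reindexing products. The paper only writes out the state-set computation and asserts that the transitions \enquote{can also be derived inductively from the definitions}, so your tag bookkeeping is a useful elaboration. One small point: when some $\alpha_i$ with $i<k$ is active, the cleaner justification is that, by the induction hypothesis, $\beta'$ always ends its step in an active state (the message is consumed by $\delta^\ominus_\beta$), so the message clauses of \Cref{def:actor-app} never fire.
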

\begin{proof}
  The case $k=0$ is tautological, while $k=1$ is the definition of application, up to the bijection that swaps pairs. Inductive case:
  \[ Q^\ominus_{\beta(\alpha_1)\dots(\alpha_{k+1})} = Q^\ominus_{\alpha_{k+1}} \times Q^\ominus_{\beta(\alpha_1)\dots(\alpha_{k})} = Q^\ominus_{\alpha_{k+1}} \times \left( Q^\ominus_\beta \times \prod_{i=1}^k Q^\ominus_{\alpha_i} \right) \cong Q^\ominus_\beta \times \prod_{i=1}^{k+1} Q^\ominus_{\alpha_i} \]
  \begin{footnotesize}
    \begin{align*}
      Q^\ominus_{\beta(\alpha_1)\dots(\alpha_{k+1})}
      &= Q^\oplus_{\alpha_{k+1}} \times Q^\ominus_{\beta(\alpha_1)\dots(\alpha_k)} \cup Q^\ominus_{\alpha_{k+1}} \times Q^\oplus_{\beta(\alpha_1)\dots(\alpha_k)}\\
      &= Q^\oplus_{\alpha_{k+1}} \times Q^\ominus_\beta \times \prod_{i=1}^k Q^\ominus_{\alpha_i} \cup Q^\ominus_{\alpha_{k+1}} \times \left( Q^\oplus_{\beta} \times \prod_{i=1}^k Q^\ominus_{\alpha_i} \cup \bigcup_{i=1}^k Q^\ominus_\beta \times Q^\ominus_{\alpha_1} \times \dots \times Q^\ominus_{\alpha_{i-1}} \times Q^\oplus_{\alpha_i} \times Q^\ominus_{\alpha_{i+1}} \dots \times Q^\ominus_{\alpha_k} \right)\\
      &\cong Q^\ominus_\beta \times \prod_{i=1}^k Q^\ominus_{\alpha_i} \times Q^\oplus_{\alpha_{k+1}} \cup Q^\oplus_{\beta} \times \prod_{i=1}^{k+1} Q^\ominus_{\alpha_i} \cup \bigcup_{i=1}^k Q^\ominus_\beta \times Q^\ominus_{\alpha_1} \times \dots \times Q^\ominus_{\alpha_{i-1}} \times Q^\oplus_{\alpha_i} \times Q^\ominus_{\alpha_{i+1}} \dots \times Q^\ominus_{\alpha_{k+1}} \\
      &\cong Q^\oplus_{\beta} \times \prod_{i=1}^{k+1} Q^\ominus_{\alpha_i} \cup \bigcup_{i=1}^{k+1} Q^\ominus_\beta \times Q^\ominus_{\alpha_1} \times \dots \times Q^\ominus_{\alpha_{i-1}} \times Q^\oplus_{\alpha_i} \times Q^\ominus_{\alpha_{i+1}} \dots \times Q^\ominus_{\alpha_{k+1}}
    \end{align*}
  \end{footnotesize}
  The expressions for the transitions can also be derived inductively from the definitions. 
\end{proof}

Let us now fix an actor-based tree transducer with the notations of \Cref{def:actor-transducer}.
By structural induction on $t\in\Tree\Sigma$, and using the above claim, we can deduce expressions (up to product reindexing) of the states and transitions of $\alpha_t$ involving $\Dom(t)$, using the fact that
\[ \Dom(\sigma(t_1,\dots,t_{\rank(\sigma)})) \cong \{\varepsilon\} + \Dom(t_1) + \dots + \Dom(t_{\rank(\sigma)})\]
Finally using the definition of application we can describe the states and transitions of $\beta_{\mathrm{out}}(\alpha_t)$. The states are given in the statement of Claim~\ref{clm:big-actor}. The initial state is $q^\ominus_0 = ((q^\ominus_{0,\lab_t(u)})_{u\in\Dom(t)}, q^\ominus_{0,\mathrm{out}})$, and we have $\delta^\ominus((\vec{q}, q^\ominus_{\mathrm{out}}),\bullet) = (\vec{q}, \delta^\ominus_{\mathrm{out}}(q^\ominus_{\mathrm{out}},\bullet))$.

For an active state of the form $x = {(u,q^\oplus,(q^\ominus_v)_{v\neq u},q^\ominus_{\mathrm{out}})}$, we have $\delta^\oplus(x) = f_x(\delta^\oplus_{\lab_t(u)}(q^\oplus))$ where
  \begin{align*}
    f_x \colon \gamma(p^\oplus_1,\dots,p^\oplus_{\rank(c)}) &\mapsto \gamma((u,p^\oplus_1,(q^\ominus_v)_{v\neq u},q^\ominus_{\mathrm{out}}),\dots)
        \\
    p^\oplus &\mapsto (u,p^\oplus,(q^\ominus_v)_{v\neq u},q^\ominus_{\mathrm{out}})\\
    (q^\ominus_u,\mathtt{arg}_i\mathfrak{m}) &\mapsto (ui,\delta^\ominus_{\lab_t(ui)}(q^\ominus_{ui},(\ttR, \dots (\ttR, \mathfrak{m}))),(q^\ominus_v)_{v\neq ui},q^\ominus_{\mathrm{out}}) \\
    (q^\ominus_u,\underbrace{(\ttR, \dots (\ttR, \mathfrak{m})\dots)}_{\mathclap{\rank(\lab_t(u))\ \text{times}\ \ttR\qquad}}) &\mapsto \begin{cases}
      (w,\delta^\ominus_{\lab_t(w)}(q^\ominus_{w},\mathtt{arg}_i\mathfrak{m}),(q^\ominus_v)_{v\neq w},q^\ominus_{\mathrm{out}}) &\text{if}\ u = wi\ \text{for some}\ i\\
      ((q^\ominus_v)_{v\in\Dom(t)},\delta^\ominus_{\mathrm{out}}(q^\ominus_{\mathrm{out}},(\ttL,\mathfrak{m}))) &\text{otherwise, i.e.}\ u = \varepsilon
    \end{cases}
  \end{align*}
Finally, $\delta^\oplus(((q^\ominus_u)_{u\in\Dom(t)},q^\oplus))$ is given by the case $(q^\ominus_\alpha,q_\beta^\oplus)$ of \Cref{def:actor-app}.

\subsection{Proof of \Cref{lem:actor-to-uthm}}

We design a machine $\cH$ whose runs on the input $t$ simulate $\beta_{\mathrm{out}}(\alpha_t)$ step by step. Its states and memory values are:
\[ Q_H = Q^\oplus_{\mathrm{out}} \cup \bigcup_{\sigma \in \Sigma} Q^\oplus_{\sigma} \times Q^\ominus_{\mathrm{out}} \qquad M_H = \bigcup_{\sigma \in \Sigma} Q^\ominus_{\sigma} \]
A configuration $(u,q,\mu)$ of $\cH$ represents an active state $q^\oplus$ of $\beta_{\mathrm{out}}(\alpha_t)$ when:
\begin{itemize}
  \item either $q \in Q^\oplus_{\mathrm{out}}$ and $q^\oplus = ((\mu(v))_{v\in\Dom(t)}, q)$,
  \item or $q = (q^\oplus_u,q^\ominus_{\mathrm{out}})$ and  $q^\oplus = (u,q^\oplus_u,(\mu(v))_{v\neq u},q^\ominus_{\mathrm{out}})$.
\end{itemize}
The initial state of $\cH$ is $\delta^\ominus_{\mathrm{out}}(q^\ominus_{0,\mathrm{out}},\bullet)$. Using bottom-up initialization, for each node with label $\sigma$, we take the initial memory value to be $q^\ominus_{0,\sigma}$. This way, the initial configuration of $\cH$ is the unique representation of $\delta^\ominus(q^\ominus_0,\bullet)$ where $q^\ominus_0$ is the initial state of $\beta_{\mathrm{out}}(\alpha_t)$. (However, the representations of active states in $Q^\oplus \setminus (Q^\ominus_{\Dom(t)} \times Q^\oplus_{\mathrm{out}})$ are not unique: the memory value at the current position is undetermined.)

The argument of the transition function is some regular lookaround state, containing enough information to determine:
\begin{itemize}
  \item the label $\sigma$ of the current node;
  \item the current state --- let us treat here the case where it has the form $(q^\oplus,q^\ominus_{\mathrm{out}})$;
  \item the labels $\sigma_1,\dots,\sigma_{\rank(\sigma)}$ and memory values $q^\ominus_1,\dots$ of the current node's children;
  \item either the label $\sigma_{\mathrm{up}}$ and the memory value $q^\ominus_{\mathrm{up}}$ of the parent or the knowledge that the current node has no parent i.e.\ is the root.
\end{itemize}
On such an input, the transition function returns the result of applying to $\delta^\oplus_\sigma(q^\oplus)$ the map
\begin{align*}
  p^\oplus &\mapsto ((p^\oplus,q^\ominus_{\mathrm{out}}),\, [\text{some arbitrary value}],\, \circlearrowleft) \quad(\text{note the stay-instruction})\\
  \gamma(p^\oplus_1,\dots,p^\oplus_{\rank(c)}) &\mapsto \gamma(((p_1^\oplus,q^\ominus_{\mathrm{out}}),\, [\text{some arbitrary value}],\, \circlearrowleft),\dots)\\
  (p^\ominus,\mathtt{arg}_i\mathfrak{m}) &\mapsto ((\delta^\ominus_{\sigma_i}(q^\ominus_{i},(\ttR, \dots (\ttR, \mathfrak{m}))),q^\ominus_{\mathrm{out}}),\, p^\ominus,\, i) \\
  (p^\ominus,(\ttR, \dots (\ttR, \mathfrak{m}))) &\mapsto \begin{cases}
    (\delta^\ominus_{\mathrm{out}}(q^\ominus_{\mathrm{out}},(\ttL,\mathfrak{m})),\, p^\ominus,\, \circlearrowleft) &\text{if the current node is the root}\\
    ((\delta^\ominus_{\sigma_{\mathrm{up}}}(q^\ominus_{\mathrm{up}},\mathtt{arg}_j\mathfrak{m}),q^\ominus_{\mathrm{out}}),\, p^\ominus,\, \uparrow) &\text{if the current node is an \(j\)-th child}
        \end{cases}
\end{align*}
(for a non-root node, this number $j$ can also be determined by regular lookaround).
This definition closely reflects the transitions of $\beta_{\mathrm{out}}(\alpha_t)$ described in the previous subsection of the appendix, in order to get a step-by-step simulation.

\section{Details and proofs for Section~\ref{sec:lambda}}

\subsection{Specification of our affine $\lambda$-calculus}

Grammar of $\lambda$-terms:
$t,u \mathrel{::=} x \mid a \mid \lambda x.\, t \mid t\,u \mid \langle t,u\rangle \mid \pi_1\, t \mid \pi_2\, t$.
Reduction rules: $(\lambda x.\, T)\, U \to_\beta T[x \leftarrow U]$ and $\pi_i\, \pp{T_1,T_2} \to_\beta T_i$.
Typing:
\[ \frac{}{\Phi \mid \Theta, x : A \vdash x : A} \qquad
  \frac{\Phi \mid \Theta, x : A \vdash T : B}{\Phi \mid \Theta \vdash \lambda x.\, T : A \multimap B} \qquad
  \frac{\Phi \mid \Theta \vdash T : A \multimap B \quad \Phi \mid \Theta' \vdash U : A}{\Phi \mid \Theta, \Theta' \vdash T\, U : B}
\]
\[ \frac{}{\Phi,\; c : A \mid \Theta \vdash c : A} \qquad
  \frac{\Phi \mid \Theta \vdash T : A \quad \Phi \mid \Theta \vdash U : B}{\Phi \mid \Theta \vdash \pp{T,U} : A \with B} \qquad
  \frac{\Phi \mid \Theta \vdash T : A_1 \with A_2}{\Phi \mid \Theta \vdash \pi_i\, T : A_i}(i \in \{1,2\})
\]
 
\subsection{Proof of \Cref{lem:hennie-lambda}}

In order to show that $\beta$-reduction, starting from a $\lambda$-term built from an input $t\in\Tree\Sigma$ by our $\lambda$-transducer, simulates the tree-generating machine for $\cH$ on the same input $t$, we explain how relate configurations of $\cH$ to $\lambda$-terms.

First we note that $(u,q,\mu) \in \Conf(\cH,t)$ contains the same information as the data of:
\begin{itemize}
  \item the state $q$;
  \item the subtree $\subtree{t}{u}$ annotated with the memory values from $\mu$ --- this gives us a tree $\overline{t}^{u,\mu} \in \Tree{\qq{\Sigma,M}}$;
  \item the context $t[u \leftarrow \square]$ above $u$, plus memory annotations, yielding a one-hole context $\underline{t}_{u,\mu} \Tree{\qq{\Sigma,M}}[\{\square\}]$ with a single leaf with label $\square$.
        \tito{les chevrons servent pas à la même chose c'est confusing (par ex double chevron comme dans l'article des allemands)}
\end{itemize}
We explain how to represent (memory-annotated) subtrees $\overline{t}^{u,\mu}$, then contexts $\underline{t}_{u,\mu}$, then configurations.
\begin{definition}[set of $\lambda$-representations of a subtree]
  $\bet{-}_n \colon \Tree{\qq{\Sigma,M}} \to \{\text{sets of terms of type}\ A_n\}$ is defined by:
  \[\bet{\qq{\sigma,m}(t_1,\dots,t_{\rank(\sigma)})}_k = 
    \begin{cases}
      \varnothing &\text{if}\ k < \omega(m)\\
      \left\{ T_{\sigma,m}^{\vec{n};k}\; T_1 \; \dots \; T_{\rank(\sigma)} \mid \vec{n} \in \N^{\rank(\sigma)},\; T_i \in \bet{t_i}_{n_i} \right\} &\text{otherwise}
    \end{cases}
  \]
\end{definition}
For contexts note that the continuation passed to a subtree behavior can also be thought of as the behavior of the surrounding context. Let $B_n = A_{n-1} \multimap o^{\with Q}$ for $n\geqslant1$, so that $A_{n} = B_n \multimap o^{\with Q}$. We define $\bec{C}_n$ as a set of terms of type $B_n$ for a one-hole context $C$. 
\begin{definition}[set of $\lambda$-representations of a context]
  $\bec{-}_k$ maps one-hole contexts to sets of terms of type $B_{k}$ for $k\geqslant1$. It is defined by induction on the depth of the hole $\square$: the base case is $\bec{\square}_k = \{ \lambda x.\; \pp{\gamma_0 \mid q \in Q} \}$ and the inductive case is
  \begin{align*}
    \bec{C[\square \leftarrow \qq{\sigma,m}(t_1,\dots,t_i,\square,t_{i+1},\dots)]} &= \big\{ \lambda x.\; T_{\sigma,m}^{\vec{n};\ell}\; T_1\; \dots \; x \; \dots \; T_{\rank(\sigma)}\; U \; \big| \\
                                                                                     &\qquad \vec{n} \in \N^{\rank(\sigma)},\; n_i = k-1,\; T_j \in \bet{t_j}_{n_j},\; \ell \geq \omega(m),\; U \in \bec{C}_\ell \big\}
  \end{align*}
\end{definition}
\begin{definition}[set of $\lambda$-representations of a configuration] $\bek{-}$ maps configurations to sets of terms of type $o$:
  \[\bek{(u,q,\mu)} = \{ \pi_q\,(T\,U) \mid \exists k \geqslant 1 : T \in \bet{\overline{t}^{u,\mu}}_k,\; U \in \bec{\underline{t}_{u,\mu}}_k \}\]
  We define $(\simulambda) \subset \{\lambda\text{-terms}\ T\ \text{s.t.}\ \Gamma^\with \mid \varnothing \vdash T : o\} \times \Tree{\Gamma}[\Conf(\cH,t)]$ as the relation generated from $T \simulambda (u,q,\mu) \iff T \in \bek{(u,q,\mu)}$ by the context closure $(\forall i,\; T_i \simulambda g_i) \implies \gamma\,\pp{T_1,\dots,T_k} \simulambda \gamma(g_1,\dots,g_k)$ ($g$ stands for \enquote{global state}).
\end{definition}

\begin{lemma}[Simulation]
  For all $(u,q,\mu) \in \Conf(\cH,t)$ and $V \in \bek{(u,q,\mu)}$, there is $V \to_\beta^* V'$ s.t.\ $V' \simulambda \text{computation-step}((u,q,\mu))$. 
\end{lemma}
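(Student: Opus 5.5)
The plan is to unfold $V$ to a redex of the form $T^{\vec n;k}_{\sigma,m}$ applied to its arguments and a continuation, then perform a case analysis on the leaves of $\delta(q,\sigma,m)$, where $\sigma=\lab_t(u)$ and $m=\mu(u)$. By definition, $V=\pi_q\,(T\,U)$ with $T\in\bet{\overline{t}^{u,\mu}}_k$ and $U\in\bec{\underline{t}_{u,\mu}}_k$ for some $k\geqslant 1$. Nonemptiness of the representation gives $k\geqslant\omega(m)$, and $T=T^{\vec n;k}_{\sigma,m}\,T_1\dots T_{\rank(\sigma)}$ with $T_i\in\bet{\subtree{t}{ui}}_{n_i}$. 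Three $\beta$-steps (the abstractions over $\vec z$ and $x$, followed by the projection $\pi_q$) turn $V$ into
\[ (\text{encoding of}\ \delta(q,\sigma,m))[\dots][\vec z\leftarrow\vec T,\, x\leftarrow U]. \]
The $\Gamma$-skeleton of this term is exactly that of $\delta(q,\sigma,m)$, encoded with additive tuples. Because $\simulambda$ is closed under the $\Gamma$-contexts $\gamma\,\pp{\dots}$, it then suffices to treat each leaf $(q',m',d)$ separately. Each such leaf must be related to the configuration $(u,q,\mu)\ogreaterthan(q',m',d)$. Affineness poses no problem here, since the additive pairing permits $\vec T$ and $U$ to be shared across the components.

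\textbf{Case $d=\uparrow$.} The leaf becomes $\pi_{q'}(U\,(T^{\vec n;k-1}_{\sigma,m'}\,\vec T))$. We need $k-1\geqslant 1$ so that this is not the dummy $\gamma_0$. The bound $k-1\geqslant\omega(m')$ holds because weight-reduction gives $m'<m$, hence $\omega(m')<\omega(m)\leqslant k$. The subcase $k=1$ needs more care: we must argue that $u=\varepsilon$ is impossible, because moving up from the root is undefined (so the leaf is in fact absent from a well-defined computation step). When $u\neq\varepsilon$, $U$ is not the base case, so it unfolds as $\lambda x.\,T^{\vec n';\ell}_{\sigma_p,m_p}\dots x\dots U'$ with $n'_i=k-1$. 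One $\beta$-step yields $\pi_{q'}(T'\,U')$, where $T'\in\bet{\overline{t}^{\parent{u},\mu'}}_\ell$: its $i$-th argument $T^{\vec n;k-1}_{\sigma,m'}\vec T$ lies in $\bet{\cdot}_{k-1}$ and represents the updated subtree. Also $U'\in\bec{\cdot}_\ell$ and $\ell\geqslant\omega(m_p)$, so the result lies in $\bek{(\parent{u},q',\mu')}$. The main thing to verify is the index bookkeeping, namely that $n_i=k-1$ matches the truncation level of the new behavior.

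\textbf{Case $d=i$.} The leaf is $\pi_{q'}(T_i\,(\lambda z_{\mathrm{new}}.\,T^{\vec n';k}_{\sigma,m'}\,\vec z'\,U))$, with $n'_i=n_i-1$. We first observe that $\lambda z_{\mathrm{new}}.\,T^{\vec n';k}_{\sigma,m'}\vec z' U$ is literally an element of $\bec{\underline{t}_{ui,\mu'}}_{n_i}$, using $\ell=k\geqslant\omega(m')$ and the memory at $u$ updated to $m'$. We also have $T_i\in\bet{\subtree{t}{ui}}_{n_i}$, and the subtree memory is unchanged. Hence the leaf is already in $\bek{(ui,q',\mu')}$ as soon as $n_i\geqslant 1$, with no further reduction needed.

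The main obstacle is the degenerate subcase $n_i=0$, where the term is just $z_i$ and no continuation is passed. We need to show either that this cannot occur or that it is harmless. I would try an invariant stating that $n_i\geqslant$ the number of remaining possible upward exits from $ui$, which is bounded by the weight of the current memory at $ui$ (at most $\omega(\mu(ui))+1$). Since $T_i\in\bet{\cdot}_{n_i}$ forces $n_i\geqslant\omega(\mu(ui))$, we would then need to handle the visit that leaves $ui$ without moving up; the $\gamma_0$ or $z_i$ truncation only matters on branches that never return. If the simulation fails to be exact there, I would strengthen $\simulambda$ to allow, in those subcases, terms whose continuation is never invoked along the branch. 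This works because $\pi_{q'}(z_i)$ still reduces to the correct output when the subtree behavior never exits.
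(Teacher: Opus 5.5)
Your two main cases follow the paper's proof. You reduce $\pi_q(T^{\vec n;k}_{\sigma,m}\,\vec T\,U)$ and use the context closure of $\simulambda$ to argue leaf by leaf. For a downward leaf you recognise $\lambda z_{\mathrm{new}}.\,T^{\vec n';k}_{\sigma,m'}\,T_1\cdots z_{\mathrm{new}}\cdots U$ as an element of $\bec{\underline{t}_{ui,\mu'}}_{n_i}$; for an upward leaf you $\beta$-reduce the unfolded $U$ once to land in $\bek{(\parent{u},q',\mu')}$.

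One remark in your $\uparrow$ case is a misreading. $\gamma_0$ is substituted only when $k=0$, which cannot happen because $k\geqslant 1$. For $k=1$ the leaf is still $\pi_{q'}(x\,(T^{\vec n;0}_{\sigma,m'}\,\vec z))$, and $T^{\vec n;0}_{\sigma,m'}\,\vec T\in\bet{\overline{t}^{u,\mu'}}_0$ because $\omega(m')<\omega(m)\leqslant 1$. The root issue has nothing to do with $k=1$. It is excluded for every $k$ (by totality, as in the paper), after which your unfolding of $U$ works uniformly.

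The degenerate case $n_i=0$ that you flag is real, and the paper's written proof also passes over it. It writes $z_d\,(\lambda z_{\mathrm{new}}.\,\dots)$ and claims $U'\in\bec{-}_{n_d}$, and both need $n_d\geqslant 1$. Your proposal does not close this case, however. You only obtain $n_i\geqslant\omega(\mu(ui))$ and do not carry the invariant further. Your fallback of enlarging $\simulambda$ assumes, without justification, that the subtree behaviour never exits. That is exactly what must be shown, since a level-$0$ behaviour turns exits into $\gamma_0$.

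The missing observation is short. $T_i\in\bet{\overline{t}^{ui,\mu}}_0$ forces $\omega(\mu(ui))=0$, i.e.\ $\mu(ui)$ is minimal. By weight-reduction, $\delta(q',\lab_t(ui),\mu(ui))$ then contains no configuration leaves at all. Hence $\pi_{q'}(T_i)\to_\beta^*$ the encoding of the $\Gamma$-tree $\text{computation-step}((ui,q',\mu'))$, since $\mu'(ui)=\mu(ui)$.

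Either of two fixes then completes the proof:
\begin{enumerate}
\item Admit level-$0$ representations $\pi_q\,T$ with $T\in\bet{-}_0$ into $\bek{-}$. For these, simulation is immediate.
\item Make the side conditions of $\bet{-}_k$ and $\bec{-}$ strict ($k>\omega(m)$, $\ell>\omega(m)$). This holds initially since $N=|M|>\omega(\top)$. It is preserved by both cases (for $\uparrow$, $k-1\geqslant\omega(m)>\omega(m')$). It forces $n_i\geqslant 1$, so the degenerate case never arises.
\end{enumerate}
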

\begin{proof}
  Let $\sigma = \lab_t(u)$ and $m = \mu(u)$. By definition of $\bek{-}$, for some $T_1\in \bet{\overline{t}^{u1,\mu}}_{n_1},\dots,T_{\rank{\sigma}}\in \bet{\overline{t}^{u\rank(\sigma),\mu}}_{n_{\rank(\sigma)}}$ and $U \in \bec{\underline{t}_{u,\mu}}_k$,
  \[ T' = \pi_q\; (T_{\sigma,m}^{\vec{n};k}\; T_1 \; \dots \; T_{\rank(\sigma)}\; U) \to_\beta^* (\text{encoding of}\ \delta(q,\sigma,m))\underbrace{[\text{the substitution of \S\ref{sec:hennie-lambda}}]}_{\mathclap{\text{call this}\ \mathfrak{s}_{\text{move}}}}\underbrace{[\vec{z} \leftarrow \vec{T},\; x \leftarrow U]}_{\mathclap{\text{call this}\ \mathfrak{s}_\beta}} \]
  Meanwhile, the inductive definition of $\ogreaterthan$ on $\Tree{\Gamma}[\Conf(\cH,t)]$ entails that the image of $(u,q,\mu)$ by the computation-step function can be similarly described as $\delta(q,\sigma,m)[(q',m',d) \leftarrow (u,q,\mu) \ogreaterthan (q',m',d) \; \forall q',m',d]$. Therefore, it suffices to check that:
  \[ \forall q',m',d,\; \exists V'' :\quad (q',m',d)\mathfrak{s}_{\text{move}}\mathfrak{s}_\beta \to_\beta^* V'' \simulambda \bigl( (u,q,\mu) \ogreaterthan (q',m',d) \bigr) = (ud, q', \mu')\]
  where $\mu'(u) = m'$ and $\mu'(v) = \mu(v)$ for $v\neq u$, and $(q',m',d')$ ranges over the labels of leaves of $\delta(q,\sigma,m)$ of this form. If there is no such leaf, then the result is vacuously true. If there is at least one, then we have: $k \geqslant \omega(m) > \omega(m') \geqslant 0$. 
  Note that the definition of $\mathfrak{s}_{\text{move}}$ depends on whether $k=0$ or $k\geqslant1$; thus, the weight-reducing property guarantees that we are in the latter case.

  We then proceed by case disjunction on $d$.
  \begin{itemize}
    \item First we consider the case $d \in \db{\rank{\sigma}}$, i.e.\ $d \neq \uparrow$.
          \begin{align*}
            (q',m',d)\mathfrak{s}_{\text{move}}\mathfrak{s}_{\beta}
            &= \pi_{q'}\, \bigl(z_d\, \bigl(\lambda z_{\mathrm{new}}.\, T_{\sigma,m'}^{\vec{n}';k}\, z_1\, \dots\, z_{d-1}\, z_{\mathrm{new}}\, z_{d+1}\, \dots\, z_{\rank(\sigma)}\, x\bigr)\bigr)\mathfrak{s}_\beta  & n'_d &= n_d - 1\\
            &= \pi_{q'}\, \bigl(T_d\, \bigl(\underbrace{\lambda z_{\mathrm{new}}.\, T_{\sigma,m'}^{\vec{n}';k}\, T_1\, \dots\, T_{d-1}\, z_{\mathrm{new}}\, T_{d+1}\, \dots\, T_{\rank(\sigma)}\, U}_{\text{call this}\ U'}\bigr)\bigr) & n'_i &= n_i\ \text{for}\ i \neq d
          \end{align*} 
          One can recognize the shape of a $\lambda$-representation of a one-hole context: $U' \in \bec{t'}_{n_d}$ where
          \begin{align*}
            t' &= \underline{t}_{u,\mu}[\square \leftarrow \qq{\sigma,m'}(\overline{t}^{u1,\mu},\dots,\overline{t}^{u\cdot(d-1),\mu},\square,\overline{t}^{u\cdot(d+1),\mu},\dots)] \\
               &= \underline{t}_{u,\mu'}[\square \leftarrow \qq{\sigma,\mu'(u)}(\overline{t}^{u1,\mu'},\dots,\overline{t}^{u\cdot(d-1),\mu'},\square,\overline{t}^{u\cdot(d+1),\mu'},\dots)]
               = \underline{t}_{ud,\mu'}
          \end{align*}
          Note also that $T_d \in \bet{\overline{t}_{ud,\mu}}_{n_d} = \bet{\overline{t}_{ud,\mu'}}_{n_d}$. Therefore
          $(q',m',d)\mathfrak{s}_{\text{move}}\mathfrak{s}_{\beta} = \pi_{q'}\, \bigl(T_d\, U'\bigr) \in \bek{(ud,q',\mu')}$.
    \item In the remaining case $d={\uparrow}$, we can assume that $u$ is not the root, since we are translating a $\THM$ that realizes a total function. Therefore, writing $\sigma' = \lab_t(\parent{u})$: 
          \begin{footnotesize}
            \[ U \in \bec{\underline{t}_{u,\mu}}_k = \bec{\underline{t}_{\parent{u},\mu}[\square \leftarrow \qq{\sigma',\mu(\parent{u})}(\overline{t}^{\parent{u}1,\mu},\dots,\square,\dots,\overline{t}^{\parent{u}\rank(\sigma'),\mu})]}_k = \left\{ T_{\sigma',\mu(\parent{u})}^{\vec{n}';\ell}\; T'_1\; \dots \; x \; \dots \; T'_{\rank(\sigma)}\; U'  \mid \dots\right\} \]
            \[ (q',m',\uparrow)\mathfrak{s}_{\text{move}}\mathfrak{s}_{\beta} = \pi_{q'}\, \bigl(x\, \bigl(T_{\sigma,m'}^{\vec{n};k-1}\, \vec{z}\bigr)\bigr)\mathfrak{s}_{\beta} = \pi_{q'}\, \bigl(U\, \bigl(T_{\sigma,m'}^{\vec{n};k-1}\, \vec{T}\bigr)\bigr) \to_\beta \underbrace{\pi_{q'}\, \bigl(T_{\sigma',\mu(\parent{u})}^{\vec{n}';\ell}\; T'_1\; \dots \; \bigl(T_{\sigma,m'}^{\vec{n};k-1}\, \vec{T}\bigr) \; \dots \; T'_{\rank(\sigma')}\; U' \bigr)}_{\text{call this}\ V''} \]
          \end{footnotesize}
          First, from the shape we can see that $T_{\sigma,m'}^{\vec{n};k-1}\, \vec{T} \in \bet{\overline{t}^{u,\mu'}}_{k-1}$ modulo one sanity check: since $\cH$ is weight-reducing,
          \[ \omega(m') < \omega(m) \leqslant k \qquad \text{therefore} \qquad \omega(m') \leqslant k-1 \]
          Also, if $u$ is an $i$-th child, then $T'_j \in \bet{\underline{t}_{\parent{u}j,\mu}}_{n'_j}$ for $j\neq i$ and $U' \in \bec{\underline{t}_{\parent{u},\mu}}_\ell$ and $n'_i = k-1$ and $\ell \geqslant \mu(\parent{u}) = \mu'(\parent{u})$ by definition of $\bec{-}$. Therefore, $V'' \in \bek{(u,q',\mu')}$. \qedhere
  \end{itemize}
\end{proof}

\begin{remark}
  We only use the fact that the memory writes that occur during upwards exits are weight-reducing. Thus $\omega$ could be replaced by a function on memory symbols that bounds the upwards exists instead of the number of visits. This is consistent with the intuitions for \enquote{$n$-truncated behaviors} given in the main text.
\end{remark}

To conclude the proof of \Cref{lem:hennie-lambda}, we observe that the $\lambda$-transducer that we build maps $t\in\Tree\Sigma$ to a $\lambda$-term
\[ (\lambda z.\, \pi_{q_{init}}\, (z\, (\lambda x.\, \pp{\gamma_0 \mid q \in Q}))) \underbrace{T_t}_{\mathclap{\text{in \(\bet{\overline{t}^{\varepsilon,\mu_{init}}}_N\) by structural induction}}} \to_\beta \pi_{q_{init}}\, (T_t\, (\underbrace{\lambda x.\, \pp{\gamma_0 \mid q \in Q}}_{\mathclap{\text{in \(\bec{\square}_N\) by definition}}})) \in \bek{(\varepsilon,q_{init},\mu_{init})} = \bek{C_{init}(t)} \]

\section{Proofs of Section~\ref{sec:postcomp}}

\subsection{Proof of Claim~\ref{clm:StartingConfig} and \Cref{cor:rel-postcomp}}

  Let $\cH$ be a $\THM$ with $\sem{\cH} \colon \Tree\Gamma \to \Tree\Sigma$ and $\cR$ a bottom-up relabeler with $\sem\cR \colon \Tree\Sigma \to \Tree\Xi$.
	
  We first give a justification of Claim~\ref{clm:StartingConfig}: the new machine for $g_\cH$ uses bottom-up initialization to copy in memory the values provided by the input, then performs a tree traversal to position its head at the position indicated by the configuration, and finally, it simulates $\cH$.

  For any position $u \in \db{\maxrank{\Xi}}^{*}$ and label $\xi \in \Xi$, the language $L_{u,\xi} = \{t \in \Tree\Xi \mid u \in \Dom(t) \land \lab_{t}(u) = \xi\}$ is regular. A classical property of bottom-up relabelings is that they preserve regular tree languages by inverse image; and so does $g_\cH$. Therefore, $g_\cH^{-1}(\sem\cR^{-1}(L_{u,\xi}))$ is regular.

  Let $h$ be the maximum height of a tree in the range of the transition function of $\cH$. There are finitely many languages $g_\cH^{-1}(\sem\cR^{-1}(L_{u,\xi}))$ for \emph{$u$ of length at most $h$} and $\xi \in \Xi$. Thus, without loss of generality, we may assume that these languages are all recognized by the \emph{same} bottom-up automaton $(\delta,Q)$, with only the set of accepting states changing. This means for every $u \in \db{\maxrank{\Xi}}^{\leqslant h}$ there exists a partial function $\psi_{u}\colon Q \rightharpoonup \Xi$ such that $\psi_{u}(\delta[t]) = \lab_{\sem\cR(g_\cH(t))}(u)$ for any tree $t$. Note that if $t$ encodes a configuration of $\cH$ reachable on the input tree $s$, then $g_\cH(t)$ appears as a subtree of $\sem\cH(s)$ at some position $w$. Then the bottom-up nature of $\cR$ entails that $\lab_{\sem\cR(\sem\cH(s))}(wu) = \lab_{\sem\cR(g_\cH(t))}(u)$.

  \tito{new def of regular lookaround}
  We can now sketch the construction of a new $\THM$ \emph{with regular lookaround} (cf.\ \Cref{thm:lookaround})  that computes $\sem\cR \circ \sem{\cH}$. It simulates $\cH$ step-by-step --- more than that: its states and memory values (and therefore its configurations) are exactly the same --- while applying the relabeling on the fly. The lookaround automaton is $(\delta,Q)$; we assume, again w.l.o.g., that if $t$ encodes a configuration of $\cH$, then the lookaround state $\delta[t]$ suffices to determine the current state of this configuration, as well as the label and memory value of its current node. The new transition function maps a lookaround state to a relabeling of the image of the corresponding (state,label,memory) tuple by the transition function of $\cH$. To determine the new label at position $u$, we just apply $\psi_{u}$ to the lookaround state --- this is justified by the reasoning in the previous paragraph.

\subsection{Proof of \Cref{thm:narrow-postcomp}}

Given $\cH$ and $\cN$ two Hennie machines, we aim to construct a Hennie machine $ \cG$ that realizes the composition $ \cN \circ \cH$. Over an input tree $t$, $\cG$ simulates the execution of $\cN$ on the intermediate tree $\sem{\cH}(t)$ by simulating $\cH$ to build it implicitly in memory. During the execution of $\cG$, each node of $t$ holds in memory all the right hand sides of rules of $ \cH $ that it has produced (we call those \emph{(memory) fragments}). We compute that lazily, by moving forward the simulation of $\cH$ only when $\cN$ asks for a yet unknown part of $\sem{\cH}(t)$.
In addition, each node of $\sem{\cH}(t)$ in memory of $\cG$ is decorated with the memory of $\cN$ used in our simulated run of $\cN$. We remember where the reading head of $\cN$ is in the intermediate tree $\sem{\cH}(t)$ by adding pointers to our memory symbols. 

Finally, to allow movement of the simulated reading head of $\cN$ between different $\sem{\cH}(t)$ fragments, the root and some leaves of the fragments are enriched with $\Link$ symbols, which take the place of those $\sem{\cH}(t)$ fragments that are stored at a different input node in $t$. They contain the information of what is the node itself, and a precise tree in the memory forest of said node, indicated by a natural number.

Note that there are only finitely many transitions rules in $\cH$. Additionally, as long as forests are of bounded size (in the number of trees they contain), then there also are only finitely many link symbols to decorate trees with. Thus, if our forests have bounded size, then we need only finitely many memory symbols. 

Lets us describe formally the tools we have introduced here

\begin{definition}[Ordered pointed forest of pointed trees]
A pointed tree on $\Sigma$ is a tree in $\Sigma$ together with a node $\bullet$ in its domain. An ordered pointed forest $m$ of pointed trees of size $k$, on alphabet $\Sigma$, is a finite sequence $t_1, \ldots, t_k$ of pointed trees on $\Sigma$ together with a natural number $1 \leq \diamond \leq k$. For $ 1 \leq i \leq k$, we use $\bullet_i$ to denote the distinguished node in tree $t_i$.
\end{definition}

The following notations are useful when discussing pointed forests : if $t$ is a pointed tree on alphabet $\Sigma$ with memory in $N$, and $s \in \Dom(t)$, then $t[\bullet = s]$ is the tree obtained by setting the pointed node to $s$. Letting $\sigma \in \Sigma$, we also define $t[\bullet \leftarrow \sigma]$ the tree obtained from $t$ by setting the label of $\bullet$ to $\sigma$. In a similar fashion, if $m$ is an ordered pointed forest of pointed trees on $\Sigma$ with $k$ trees, and $ 1 \leq i \leq k$, then $m[\diamond = i]$ changes the pointed tree of $m$, and $m[i \leftarrow t]$ sets the $i$-th tree to be $t$. The size $|m|$ of $m$ is the number of trees it contains. 

To have an easier description of the transition function of $\cN \circ \cH$, we allow it to take stationary transitions, that do not move the reading head, as discussed in \Cref{sec:basic-extensions}. 

\begin{theorem}
  Let  $ \cH = (Q_\cH, M_\cH, \Sigma, \Gamma, q^0_\cH, \delta_\cH) $ be a $\twhm$ which visits input nodes at most $C_{fv}$ times, and $ \cN = (Q_\cN, M_\cN, \Gamma, \Xi, q^0_\cN, \delta_\cN) $ a narrow-visit Hennie machine, visiting on each branch outputing run on a tree of $\Dom(\sem{\cN}) \cap \Img(\sem{\cH}) $ the union of at most $C_{fnv} $ branches.
  Then there exists a $\twhm $ $ \cG $ such that $ \sem{\cG} = \sem{\cN} \circ \sem{\cH} $.
\end{theorem}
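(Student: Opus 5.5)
We construct $\cG$ as a $\uTHM$ with stay-instructions (removed afterwards by \Cref{lem:stay}) whose branch-outputting runs simulate, step by step, the branch-outputting runs of $\cN$ on the intermediate tree $\sem\cH(t)$. The intermediate tree is never built in full. Instead, each input node $u$ of $t$ stores in memory an ordered pointed forest of \emph{fragments}: right-hand sides $\delta_\cH(q,\lab_t(u),m)$ that $\cH$ has produced at $u$ during the simulation. These fragments are decorated with the current $\cN$-memory value at each of their $\Gamma$-nodes. Their state-call leaves are replaced by $\Link$ symbols pointing to the neighbour $ud$ and an index into its forest; the fragment roots carry a back-link to the fragment that called them. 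Alongside each fragment we record the $\cH$-memory value $m$ it was produced with, so the simulation of $\cH$ can be resumed. The state of $\cG$ is a state of $\cN$ plus a bounded "mode" component. The pointer $\bullet$/$\diamond$ records where $\cN$'s head currently sits.

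A step of $\cN$ at the simulated intermediate node $v$ is handled as follows. If the move stays inside the current fragment, $\cG$ applies $\delta_\cN$ to the label and $\cN$-memory stored there, updates the decoration, and moves the pointer, all with a stay-instruction. If $\cN$ goes down to a $\Link$ leaf whose target fragment already exists, $\cG$ walks to $ud$ and sets the pointer there. If the target does not exist yet, this is the lazy step: $\cG$ walks to $ud$, takes the $\cH$-state from the link and the last $\cH$-memory value at $ud$, computes $\delta_\cH$, and appends the new fragment. Chains of empty fragments (pure moves of $\cH$ with no $\Gamma$ output) are followed iteratively, so that $\cH$'s run advances until it emits the next $\Gamma$-node. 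If $\cN$ goes up from a fragment root, $\cG$ follows the back-link, retracing the $\cH$-path backwards. This is the "rewinding" of the proof idea. It terminates because along any branch-outputting run of $\cH$ every step leaves a link in memory. Output nodes of $\cN$ are emitted by $\cG$ directly, and forking of $\cN$ branches is forking of $\cG$, so each $\cG$ process owns its own copy of the memory. Hence $\sem\cG = \sem\cN \circ \sem\cH$, since the lazily materialised fragments agree with $\sem\cH(t)$ by determinism and \Cref{clm:branchwise-sem}.

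The main obstacle is finiteness of memory and bounded visits, and both rest on \Cref{cor:narrow-branches}. Along one branch-outputting run of $\cN$, the set of visited intermediate nodes lies in at most $C_{fnv}$ branches of $\sem\cH(t)$, and each $\cN$ node is visited $O(1)$ times. Every fragment materialised at $u$ lies on the $\cH$-run producing some visited intermediate branch. Each such branch corresponds to one branch-outputting run of $\cH$, which visits $u$ at most $C_{fv}$ times. Therefore at most $C_{fnv}\cdot C_{fv}$ fragments ever live at $u$, which bounds the forest size and makes the memory alphabet finite. I would carry out this counting first, since it is also where we must check that only fragments for visited branches get created (this forces laziness).

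For bounded visits of $\cG$ at $u$, we count three kinds of visits. Creations of fragments at $u$ number $O(C_{fnv} C_{fv})$. Traversals through $u$ triggered by a link move of $\cN$ across the boundary of a fragment at $u$ are bounded by the $\cN$-visits to the $O(1)$ boundary nodes of each of those fragments. Rewinding passes through $u$ are charged to distinct upward $\cN$-moves crossing fragments at $u$. Each count is $O(1)$.

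Finally, we make sure $\cG$ is bounded-visit on all inputs, not merely on $\Img(\sem\cH)$. For this we apply \Cref{prop:weight-reducing} with the computed bound $N$. This yields a $\THM$ agreeing with $\cG$ on all inputs, since $\cH$ is total.
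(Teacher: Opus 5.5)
Your overall construction is the same as the paper's. Fragments of right-hand sides of $\cH$ are stored at their origin as ordered pointed forests of size at most $C_{fv}\cdot C_{fnv}$, with links and back-links, lazy materialisation, a pointer for $\cN$'s head, and stay-instructions. The gap is in the lazy step. You never say which $\cH$-memory value at $ud$ is fed to $\delta_\cH$, and \enquote{the last $\cH$-memory value at $ud$} is not the right one.

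The memory of $\cH$ is not one global store. The computation of $\cH$ forks, and each forked process carries its own copy. Since $\cN$ explores up to $C_{fnv}$ intermediate branches, the forest at $ud$ mixes fragments produced by different processes. The value to use is the memory at $ud$ of the particular $\cH$-run that leads, in the stitched tree, to the link being expanded.

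Here is a case where the naive value fails. Let the root transition of $\cH$ fork into two processes that both move to node $1$. The first process visits node $1$ (writing $m_3$), returns to the root, comes back to node $1$ and emits output there. $\cN$ reads this output, climbs back above the fork (so $\cG$ rewinds through both fragments at node $1$), then descends into the second branch. The second process must see $\top$ at node $1$. But the most recent information at node $1$ is $m_3$, whether you mean the last value written or the value the last fragment was produced with.

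Even without forking, storing only the value each fragment \enquote{was produced with} is not enough. The next visit needs the value \emph{written} on leaving, and that depends on which leaf of the right-hand side was followed. So \enquote{agreement with $\sem\cH(t)$ by determinism} does not follow, because determinism is being applied to possibly wrong memory values.

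Two things are missing. First, you need both kinds of values: the pre-visit value at each fragment root (in the back-link) and the written value at each exit-leaf link. Second, you need the invariant that the pointer at every input node sits on the link through which $\cG$ last left that node, plus an argument that this link carries the right value. The argument uses that both the input and the stitched tree are trees.

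When $\cG$ is at $u$ and moves to $ud$, its last exit from $ud$ crossed the input edge between $ud$ and $u$. Since then, $\cN$ has only visited fragments located on $u$'s side of that edge. There are two cases.
\begin{itemize}
\item The exit went through an exit leaf $x$ of a fragment at $ud$. The walk could not climb above the child of $x$, so the link being expanded lies below $x$ with no fragment at $ud$ in between. The value written at $x$ is therefore correct.
\item The exit went upward through the root of a fragment $F$ at $ud$. The link being expanded is reached from $F$'s parent along a stitched path containing no fragment at $ud$. So the memory at $ud$ on its branch is the value before $F$, which is exactly what $F$'s root records (and it is $\top$ if that run had not yet visited $ud$).
\end{itemize}
This is how the paper's construction works. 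The transition entering $ud$ reads $\pi_1(\lab_{m_\cG}(\bullet_\diamond))$ there, the new root link records the pre-visit value, and $\DecodeH$ uses the branch-relative memory $\nu^c_s$.

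A smaller point: \Cref{prop:weight-reducing} only gives agreement on the domain of $\sem\cG$. Also, the input restriction that matters is $\sem\cH^{-1}(\Dom(\sem\cN))$, not $\Img(\sem\cH)$, since every input of $\cG$ already lands in $\Img(\sem\cH)$.
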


\begin{definition}[ $ \cN \circ \cH $ ]
Letting the maximum forest size $C_{fs} = C_{fv} \times C_{fnv}$, we define $ \cG $ as follows :
Its set of states $Q_\cG$ is separated into three subsets we call "regimes".
\[ Q_\cG = \underbrace{Q_\cN}_{\cN\text{-simulating regime}} \cup \underbrace{(Q_\cN\times [C_{fs}])}_{\cG\text{-machinery regime}} \cup \underbrace{(Q_\cN \times Q_\cH \times [C_{fs}] \times [C_{fs}])}_{\cH \text{-simulating regime}} \cup \{\qzero\}\]
Its set of memory symbols is $M_\cG$, a finite subset of the set of pointed ordered forests of pointed trees over the alphabet $ \pp{\Gamma, M_\cN} \cup \Link $, with size $ \leq C_{fs} $. To obtain a finite subset, we restrict ourselves to trees which have the same shape as the right-hand side of some rule of $\cH$.
Symbols in $\Link$ are special symbols used to hold both : the necessary information about both the inter-fragments structure of the intermediate tree $\sem{\cH}(t)$, and the memory of our simulation of $\cH$. We explain the intuition behind their definition later while discussing the transitions $\delta_\cG$ of $ \cG $, and here, we only define them as
\[\Link = \underbrace{(Q_\cH \times D_{R_\Sigma} \times [C_{fs}] )}_{\downLink\text{, of arity } 0}
\cup \underbrace{(Q_\cH \times D_{R_\Sigma} \times [C_{fs}])}_{\upLink\text{, of arity } 1} 
\cup \underbrace{(Q_\cH \times D_{R_\Sigma} \times Q_\cH)}_{\downNewLink\text{, of arity } 0} \]
The input and output alphabets are $ \Sigma $ and $ \Xi $.
The initial state is $ \qzero $, and the initial memory the empty forest.
\end{definition}

Finally, we describe the transitions of $ \cG $, grouped by how they move between different regimes of the states of $ \cG $. The following figure is a representation of these 5 types of transitions and their relation to regimes. 

% https://q.uiver.app/#q=WzAsMyxbMCwxLCJHXFx0ZXh0ey1zaW11bG\cG0aW5nIHJlZ2ltZX0iXSxbMCwwLCJGXFx0ZXh0ey1tYWNoaW5lcnkgcmVnaW1lfSJdLFswLDIsIkhcXHRleHR7LXNpbXVsYXRpbmcgcmVnaW1lfSJdLFswLDEsIiIsMCx7Im9mZnNldCI6LTUsIm\cN1cnZlIjotNX1dLFsyLDAsIiIsMix7Im9mZnNldCI6NSwiY3VydmUiOjV9XSxbMCwyLCIiLDIseyJvZmZzZXQiOjUsIm\cN1cnZlIjo1fV0sWzAsMCwiIiwwLHsicmFkaXVzIjoxfV0sWzEsMCwiIiwwLHsib2Zmc2V0IjotNSwiY3VydmUiOi01fV1d
\begin{center}
\[\begin{tikzcd}
    {\cG\text{-machinery regime}} \\
    {\cN\text{-simulating regime}} \\
    {\cH\text{-simulating regime}}
    \arrow[shift left=5, curve={height=-30pt}, from=1-1, to=2-1]
    \arrow[shift left=5, curve={height=-30pt}, from=2-1, to=1-1]
    \arrow[from=2-1, to=2-1, loop, in=60, out=120, distance=5mm]
    \arrow[shift right=5, curve={height=30pt}, from=2-1, to=3-1]
    \arrow[shift right=5, curve={height=30pt}, from=3-1, to=2-1]
\end{tikzcd}\]
\end{center}

The "transitions of regime X" are those transitions which go in or out of regime X.

A first transition is used to initialize the memory content at the root of the input tree : we let $\delta_\cG(\qzero, \pp{\sigma, \top_\cG}) = \qq{q_0, t^*, \circlearrowleft } $ where $t^*$ is built by taking $ \delta_\cH(q^0_\cH, \pp{\sigma, \top_\cH}) $, and replacing each occurrence of $ \qq{q_\cH, m_\cH, d}$ with $ (q_\cH, m_\cH, d)$.

Now let $\sigma \in \Sigma$, $q_\cG \in Q_\cG $, and $m_\cG \in M_\cG$.

The driving transition type is the type of $\cN\shortrightarrow \cN$ transitions, named after the second machine in our composition. They are driving in the sense that the other types of transitions are taken lazily, not any earlier than when $\cN \shortrightarrow \cN$ transitions come to need them. In the $\cN$-simulating regime, where $\cN\shortrightarrow \cN$ transitions happen, there is a fragment of the intermediate tree $\sem{\cH}(t)$ loaded in memory over the node which we read, and we use it to simulate executing our second machine $\cN$, updating the decorating memory as it instructs us. Here, the forest-level pointer $\diamond$ points to the particular fragment in our forest that our simulation of $\cN$ is currently visiting, while the tree-level pointer $\bullet_{\diamond}$ is used to represent the precise position of the reading head of $\cN$ in this fragment.

\begin{definition}[$\cN \shortrightarrow \cN$ transitions]
If $q_\cG \in Q_\cN$ and $\lab_{m_\cG}(\bullet_\diamond) = \pp{\gamma, m} \in \pp{\Gamma, M}$, then we define the right-hand side as 
\[ \delta(q_\cG, \pp{\gamma, m})[\qq{q', m', d'} \leftarrow \qq{q', m_\cG[\bullet_\diamond \leftarrow \pp{\gamma, m'}][\bullet_\diamond = \bullet_\diamond d'], \circlearrowleft}]
\]
\end{definition}

The second and third types of transitions we define are the $\cN\shortrightarrow \cG$ and $\cG \shortrightarrow \cN$ transitions. They are named after the composed machine itself, because they do not correspond to any real step in an execution of either $\cN$ or $\cH$, they are only related to the internal workings of $\cG$.
They are invoked when our simulated $\cN$ asks for a node of the intermediate tree that is not in the current fragment. When this happens, our construction guarantees they instead find an $\upLink$ or $\downLink$ symbol, say $(n,d,i)$. For these transitions, we'll only need the $(d,i)$ information. This tuple represents both : a direction in the input tree, and the position of a tree in the forest stored over in this direction. This tells our machine $\cG$ where in memory is stored the fragment containing the node we were looking for. By taking a $\cN\shortrightarrow\cG$ transition, the machine $\cG$ commits this information to its states, and moves on the input tree to try and find the missing fragment.

\begin{definition}[$\cN\shortrightarrow \cG$ transitions]
If $q_\cG = q_\cN \in Q_\cN$ and $\lab_{m_\cG}(\bullet_\diamond) = (n, d, i) $, with $i \geq 1$, then we define the right-hand side as the root only tree $ \qq{(q_\cN, i), m_\cG, d} $. 
\end{definition}

Because of the particular shape of the resulting state of $\cG$, these transitions are always followed by the other type $\cG\shortrightarrow\cN$. These transitions consume the information $i$ that was stored in state to place the pointers $\bullet$ and $\diamond$ at the right place, preparing to resume the simulation of the second machine $\cN$.

\begin{definition}[$\cG\shortrightarrow \cN$ transitions]
Arriving in state $(q_\cN,i)$ and reading a new memory symbol $m_\cG'$, we take a transition of right-hand side $\qq{q_\cN, m_\cG'[\diamond = i][\bullet_i = \bullet_i d_!], \circlearrowleft}$. Direction $d_!$ is the unique direction that points to a neighbor of $\bullet_i$. It is unique because we only leave a fragment when pointing to a symbol in $ \Link $. The definition of $\cH\shortrightarrow \cN$ transitions guarantees that all symbols in $\Link$ are either, by construction of the memory, a root of arity 1, or a leaf of arity 0, so they have a unique neighbor.
\end{definition}

Finally, we describe $\cN\shortrightarrow \cH$ and $\cH\shortrightarrow \cN$ transitions. They can be seen as a particular case of the previous type : since the memory is initially empty, we'll, at the first visit, actually have to compute the fragments of the intermediate tree that we want to work on before they can be processed. This is done "on the fly", when our simulation of $\cN$ asks for them. That is to say, like with the previous $\cN\shortrightarrow\cG$ transitions, it is done only when we encounter a $\Link$ symbol. The difference this time is that these transitions are triggered only by reading $\downNewLink$ symbols. They are the links that represent yet uncomputed fragments of $\sem\cH(t)$. Because the fragment does not yet exist, we do not have any information about its position in the forest, but what we have instead is the necessary state $q_\cH$ that is used to compute it, explaining the $q_\cH$ and $d$ components of their $(q_\cH, m_\cH, d)$ shape. Once read and committed to state, this information is discarded by overwriting with a $\downLink$ symbol, pointing to a fragment that will be generated in the next transition.
To ensure that the future fragment will be able to link back to the current one through its root, we also commit to states the information $\diamond$ of the position of the current fragment, for use in a future $\upLink$ symbol.

\begin{definition}[$\cN\shortrightarrow\cH$ transitions]
If $q_\cG = q_\cN \in Q_\cN$ and $\lab_{m_\cG}(\bullet_\diamond) = (q_\cH, m_\cH, d) $, then we define the right-hand side as the root-only tree $ \qq{(q_\cN, q_\cH, i, \diamond), m_\cG[\bullet_\diamond \leftarrow (m_\cH, d, i)], d} $. $i$ is defined here as the one plus the length of the forest in memory at direction $d$, and is computed using regular lookaround.
\end{definition}

Like the $\cG$-machinery regime, the $\cH$-simulating regime is temporary, and we always go back to $\cN$-simulating with a $\cH\shortrightarrow \cN$ transition. Here we finally make use of the $m_\cH$ information stored in $\Link$ symbols : since we are guaranteed to leave an input node only with a $\cN\shortrightarrow\cG$ or $\cN\shortrightarrow\cH$ transition, we know that when we move to a new node in the input, the pointer $\bullet_\diamond$ in memory (if it exists, i.e. the forest is non-empty), is pointing to a $\Link$ symbol. By definition of $\cN\shortrightarrow\cH$ transitions, this $\Link$ symbol contains the last written memory of $\cH$ at this point in its current simulated branch-outputting run. We use that information together with the $g_\cH$ stored in our current $\cG$ state to compute the a new fragment to be added to memory. This new fragment corresponds to the right hand-side of a rule of $\cH$, but every reading head in the right-hand side of our rule is replaced with a $\downNewLink$ symbol, we add an $\upLink$ symbol at the root to link back to the fragment we moved down from, and finally we resume simulation of $\cN$ by setting the current visited fragment to be this newly created one.

\begin{definition}[$\cH\shortrightarrow \cN$ transitions]
    Arriving on node $sd$ from node $s$ in state $(q_\cN, q_\cH, i, j)$ and reading a new memory symbol $m_\cG$, we take a transition of right-hand side $\qq{q_\cH, m_\cG[i\leftarrow\delta_\cH(q_\cH, \pp{\sigma, m_\cH})^*][\diamond = i], \circlearrowleft}$, where $m_\cH$ is either $\top_\cH$ if $m_\cG$ is empty, or $\pi_1(\lab_{m_\cG}(\bullet_{\diamond}))$ if it is not. The operation $*$ replaces every symbol $\qq{q_\cH', m_\cH', d'}$ with $(q_\cH', m_\cH', d')$, decorates every node with label in $\Gamma$ with $\top_\cN$, adds a new root $(m_\cH, d^{-1}, j)$ (where $d^{-1}$ is the direction such that $sdd^{-1} = s$, and can be computed using lookaround), and finally sets $\bullet_i$ to the old root.
\end{definition}

This concludes the definition of our constructed machine $\cG$.

\begin{theorem}
$ \sem{\cG} = \sem{\cN}\circ\sem{\cH} $
\end{theorem}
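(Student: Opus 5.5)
We prove $\sem{\cG} = \sem{\cN}\circ\sem{\cH}$ through the branchwise semantics (Claim~\ref{clm:branchwise-sem}): we show that every branch-outputting run of $\cG$ on $t$ projects to a branch-outputting run of $\cN$ on $\sem{\cH}(t)$ that emits the same labels, and conversely. Since a tree is determined by its branch words (Claim~\ref{clm:brawo}), this settles the equality. The simulation is easiest to state once the stay-instructions used by $\cG$ are eliminated (Lemma~\ref{lem:stay}), but we reason about the version with stays. The only steps of $\cG$ that emit output are the $\cN\shortrightarrow\cN$ transitions, and these copy the right-hand sides of $\delta_\cN$. So the work is to show that every maximal sequence of $\cG$-machinery and $\cH$-simulating steps is a silent, finite detour that ends in a configuration representing the next configuration of $\cN$.

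Fix one branch-outputting run of $\cG$. We define a decoding that sends the memory of $\cG$ to a pair consisting of a partial intermediate tree and a memory map for $\cN$. Start from the fragment at the root. Replace each $\downLink$ or $\upLink$ symbol $(m_\cH,d,i)$ by the $i$-th fragment stored at the neighbouring input node in direction $d$, glued at its $\upLink$ root. Leave each $\downNewLink$ symbol $(q_\cH,m_\cH,d)$ as an unexpanded $\cH$-configuration leaf. The result is an element of $\Tree{\pp{\Gamma,M_\cN}}[\Conf(\cH,t)]$ together with a pointer given by $\diamond,\bullet_\diamond$.

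The invariant we maintain at every $\cN$-simulating configuration of $\cG$ has three parts.
\begin{enumerate}
  \item Stripping the $M_\cN$ decorations yields a global state reachable by the tree-generating machine of $\cH$ along the branches explored so far. For this we must check that the memory $m_\cH$ recorded in a link symbol is exactly the memory $\cH$ would find at that input node on that branch. This follows from the fact that every $\cH$ branch-outputting run is an independent process.
  \item The $M_\cN$ decorations and the pointer coincide with the memory and head position of $\cN$ in the corresponding run on $\sem{\cH}(t)$.
  \item All nodes of $\sem{\cH}(t)$ that $\cN$ has visited are already expanded.
\end{enumerate}

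We prove the invariant by induction on steps, with one case per transition type.
\begin{itemize}
  \item An $\cN\shortrightarrow\cN$ step is a literal $\cN$ step inside a fragment.
  \item A pair $\cN\shortrightarrow\cG$, $\cG\shortrightarrow\cN$ moves to an already stored fragment without changing the decoding.
  \item A pair $\cN\shortrightarrow\cH$, $\cH\shortrightarrow\cN$ performs one computation step of $\cH$ on a leaf. By confluence of the rewriting system, this refines the decoded global state toward $\sem{\cH}(t)$, and the fresh nodes carry $\top_\cN$. This is exactly what $\cN$ sees on a first visit.
\end{itemize}
The converse inclusion uses the same simulation read backwards. Since $\cG$ is deterministic, from each configuration the next $\cN$ step is reproduced after finitely many silent steps. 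Termination of the silent detours is immediate: at most two consecutive non-$\cN\shortrightarrow\cN$ transitions occur.

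The main obstacle is showing that the memory of $\cG$ never overflows, i.e.\ that every stored forest has size at most $C_{fs}$. Without this the transitions would be undefined, and the construction would not even be finite-state. Every fragment stored at an input node $u$ corresponds to a distinct visit of $u$ by some run of $\cH$ producing a branch of the intermediate tree that $\cN$ explores. By Corollary~\ref{cor:narrow-branches}, along a single branch-outputting run, $\cN$ visits nodes contained in at most $C_{fnv}$ branches of $\sem{\cH}(t)$. Each such branch is produced by one branch-outputting run of $\cH$, which visits $u$ at most $C_{fv}$ times. Moreover, the fragments lying on the path from the root to $\cN$'s visited nodes are shared among these branches. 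Hence at most $C_{fv}C_{fnv}=C_{fs}$ fragments are ever created at $u$. The delicate point is that the invariant must be used to rule out duplicate fragments: a fragment is created only from a $\downNewLink$ symbol, which is immediately overwritten by a $\downLink$. We also need narrow-visit to hold on the actual input $\sem{\cH}(t)$, which is given by hypothesis since $\sem{\cH}(t)$ lies in the range of $\sem{\cH}$.

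The same counting argument, together with the bounded-visit property of $\cN$, yields the bounded-visit property of $\cG$. That property is needed for $\cG$ to be a $\THM$, but not for the equality of semantics.
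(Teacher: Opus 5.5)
Your overall plan matches the paper's. You decode the stored fragments into a partial global state of $\cH$, simulate $\cN$ step by step on it, bound the forests by $C_{fv}C_{fnv}$ via narrow-visit, and get the converse inclusion from determinism. But one step fails as stated: the claim that the silent detours terminate immediately because ``at most two consecutive non-$\cN\shortrightarrow\cN$ transitions occur''.

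\begin{itemize}
  \item After a $\cG\shortrightarrow\cN$ transition, the pointer moves to the unique neighbour of the link node through which the fragment was entered, and that neighbour can itself be a link. This happens whenever a rule of $\cH$ has a single head as right-hand side (a move with no output): the stored fragment is then just an $\upLink$ root above one link leaf.
  \item In the same situation, an $\cH\shortrightarrow\cN$ transition sets the pointer to the old root of $\delta_\cH(\dots)^*$, which is a $\downNewLink$. So another $\cN\shortrightarrow\cH$ pair must follow.
\end{itemize}

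Hence one step of $\cN$ is in general simulated by a chain of silent pairs whose length can grow with $|t|$, for instance when $\cH$ walks down a long path before emitting the next node. The paper says so explicitly (``we repeat the process until we reach a decodable configuration''). This breaks your invariant (2) at intermediate configurations, where the pointer sits on a link and encodes no head position of $\cN$. It also leaves ``the next $\cN$ step is reproduced after finitely many silent steps'' unproved.

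The missing argument, which the paper supplies, is to state the invariant only for \emph{decodable} configurations and to prove that every chain reaches one.
\begin{itemize}
  \item For $\cH$-chains, each pair is one step of the $\cH$ process generating the subtree of $\sem\cH(t)$ rooted at the node $\cN$ requested. That node exists and the process is finite, so a fragment with a non-link node at the pointer is eventually produced.
  \item For link chains, all links are crossed in the same direction of the finite, unchanged stitched tree. Upward chains cannot pass the root.
  \item Downward link chains stop thanks to a \emph{frugality} invariant: every stored fragment lies on the stitched path to a node that $\cN$ has actually visited. This holds because fragments are only created by an $\cH$-chain that ended on such a node.
\end{itemize}

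Your memory-bound argument silently relies on this same frugality property (``every fragment stored at $u$ corresponds to a visit by a run producing a branch that $\cN$ explores''). So it has to be carried as an explicit induction hypothesis. Your invariant (3) is the converse statement and does not provide it.

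Finally, invariant (1) needs more than the slogan that $\cH$-processes are independent. You must check, as the paper does via $\nu^c_s$, that the memory value used when a $\downNewLink$ leaf is expanded is the last one written at that input node along the stitched path to that leaf.
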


The core of the proof relies on the notion of stitched tree. The stitched tree $\Stitch(c)$ is built from the memory $\mu$ in a configuration $c$ of $\cG$ on an input tree $t$, and is defined as the tree obtained by plugging together all the trees fragments in memory, along pairs of $\downLink$, $\upLink$ symbols produced during the same $\cG\shortrightarrow\cH, \cH\shortrightarrow\cG$ transition pair. Using the stitched tree, we decode from a configuration of $\cG$ both a configuration of $\cN$ on $\sem{\cH}(t)$, and the smallest global state in an execution of $\cH$ on $t$ that contains all the node visited by that configuration of $\cN$.

\begin{definition}[Matching nodes]
    If $ c = (v, q, \mu)$ is a configuration of $\cG$ over some input $t$, $u$
    and $u'$ two nodes in $t$, $i < |\mu(u)| $, and $j < |\mu(u')| $, we say
    that a node $s$ in $\mu(u)_i$ of label $(m_\cH, d, k) $ in $\downLink$ matches the root node in $\mu(u')_j$ of label in $\upLink$ if and only if $u'=ud$ and $j=k$. 
\end{definition}
\begin{definition}[Stitched tree]
    The stitched tree $\Stitch(c)$ is the tree defined by the following
    relational structure : 
    \begin{itemize}
        \item the universe is $ \uplus_{u \in \Dom(t), i <
    |\mu(u)|} \Dom(\mu(u)_i) $ the disjoint union of all nodes in all forests
    over all input nodes. 
\item except for the "first child" relation, the child
    relations are the union of the child relations of the family $(\mu(u)_i)_{u
    \in \Dom(t), i<|\mu(u)|}$. 
\item the "first child" relation is the union of the
    first child relations of the family $(\mu(u)_i)_{u
    \in \Dom(t), i<|\mu(u)|}$, union the "$s$ matches $s'$" relation. 
\end{itemize}
    For any node $s \in \Stitch(c)$, we call "the origin $\orgn(s)$ of $s$" the node $u$ in the input such that $s$ is a node in $\Dom(\mu(u)_i)$ for some $i$. For any $u\in \Dom(t)$, any $i < |\mu(u)|$, any node $s \in \Dom(\mu(u)_i)$, $\Stitch_c(u,i,s)$ is the corresponding node in $\Dom(\Stitch(c))$.
\end{definition}
It is undefined if $C$ has its state in the $\cH$-simulating regime.

    We verify that this relational structure indeed represents a tree.
    $\cG\shortrightarrow\cH$ rules are the only type of rules that add
    $\downLink$ symbols to memory. Since the $n, d,i$ are chosen such that there
    is no $i$-th tree in direction $d$, a node can only match another node that
    is created later, and thus there are no looping sequences of matching nodes.
    Additionally, since each root node is matched by the $\downLink$ that was
    written at the last step, the relational structure describes a connex graph.
    The relational structure defines a connex, directed graph without loop, and
    we can order the children of any node, thus $\Stitch(c)$ is indeed a tree.

    We will now decode, from a stitch tree $\Stitch(c)$ on input tree $t$, the global state of $\cH$ it represents. Intuitively, $\downNewLink$ symbols correspond one-to-one to ouput heads of $\cH$ (see how they are added to memory in transitions of type $\cH\shortrightarrow\cG$). We just need to build back the memories of those different output heads. If $u$ is a node in $\Dom(t)$, we define $\nu^c_s(u)$ as the memory symbol $m_\cH$ in the last $\upLink$ symbol that has origin $u$ along the branch leading to $s$.
    \begin{definition}
        $\DecodeH(c)$ is defined by the procedure that builds it from the
        stitched tree $\Stitch(c)$. Replace all symbols $(q_\cH, m_\cH, d) \in
        \downNewLink$ at some leaf $s$ with $\qq{q_\cH, \orgn(s)d, \nu^c_s}$,
        then remove all remaining internal $\downLink$ and $\upLink$ symbols by
        replacing them with edges from their unique parent to their unique child
        (remember both types contain only symbols of arity 1).
    \end{definition}

    Assuming for now that this is indeed a global state of $\cH$ on some input,
    this encoding describes a partial mapping $\Re$ from nodes in fragments in
    $\mu(\Dom(t))$ to nodes in $\sem{\cH}(t)$ : $\Re$ first send any node that
    is not a $\Link$ to its place in the stitched tree, then to the
    corresponding node in the encoded global state of $\cH$, and then to
    $\sem{\cH}(t)$ by inclusion. 
   
    We now describe how to decode the configuration $\DecodeN(C)$ of $\cN$ from
    a configuration $C$ of $\cG$. Letting $ C = (u, q, \mu) $, we define $\Point(C)$ as
the node $ \bullet_\diamond $ of $\mu(u)$. $\lab(\Point(C))$ is then a shorthand for
$\lab_{\mu(u)}(\bullet_\diamond)$.
    \begin{definition}
        $\DecodeN(C)$ is defined by the procedure that builds it from the
        stitched tree $\Stitch(C)$. Letting $C = (q_\cG, u, \mu)$, we take the
        state of $\cN$ that is already explicit in $q$, set the reading head to
        $\Re(\Point(C))$, and define the memory over a node $s \in
        \Dom(\sem{\cH}(t)) $ as the memory stored at its unique preimage by
        $\Re$, or $\top_\cN$ if it does not have one.
    \end{definition}
    It is undefined if $\Point(C)$ has a label in $\Link$.
If a configuration $C$ is such that both $\DecodeH(C)$ and $\DecodeN(C)$ are well-defined, we call $C$ decodable.

\begin{lemma}
    Given a branch-outputting run of $\cG$, the sequence of configurations of
    $\cH$ decoded from the stitched trees is a run of $\cH$.
\end{lemma}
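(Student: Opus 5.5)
We argue by induction along the branch-outputting run of $\cG$, proving a stronger invariant: every decodable configuration $C$ reached along the run satisfies that $\DecodeH(C)$ is a global state in $\Tree{\Gamma}[\Conf(\cH,t)]$ reachable by the tree-generating machine of $\cH$ on $t$. The run of $\cH$ claimed in the statement is then the sequence of these global states, with consecutive repetitions removed. The base case is the configuration right after the initialization transition from $\qzero$. There the forest at the root holds the single fragment $t^*$, and its leaves $(q_\cH,m_\cH,d)$ are $\downNewLink$ symbols. So $\DecodeH$ returns exactly $C_{init}(t)\ogreaterthan\delta_\cH(q^0_\cH,\pp{\sigma,\top_\cH})$, which is one computation step of $\cH$ from $C_{init}(t)$. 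Here we use that $\nu^c_s$ is the memory written at the root, read off the $\Link$ data.

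For the inductive step we split by the type of the transition of $\cG$. $\cN\shortrightarrow\cN$ transitions only change the $\pp{\Gamma,M_\cN}$ decorations and the pointer $\bullet_\diamond$. $\cN\shortrightarrow\cG$ and $\cG\shortrightarrow\cN$ transitions only move the head and change $\diamond$ and $\bullet$. In all three cases the stitched tree, once $\cN$-decorations are forgotten, is unchanged, so $\DecodeH$ is unchanged. Configurations in the $\cH$-simulating regime are not decodable and are skipped. The only interesting case is an $\cN\shortrightarrow\cH$ transition followed by its $\cH\shortrightarrow\cN$ transition. First, a $\downNewLink$ leaf $s$ labelled $(q_\cH,m_\cH,d)$ with origin $u$ is rewritten to $\downLink$ $(m_\cH,d,i)$. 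Next, a fragment $\delta_\cH(q_\cH,\pp{\sigma,m'_\cH})^*$ with an $\upLink$ root is stored at $ud$ in slot $i$. Here $m'_\cH$ is the last $\cH$-memory recorded at $ud$ on this branch. In the stitched tree this amounts to substituting, at the leaf that decoded to $\qq{q_\cH,ud,\nu^c_s}$, the tree $(ud,q_\cH,\nu^c_s)\ogreaterthan\delta_\cH(q_\cH,\lab_t(ud),\nu^c_s(ud))$, with the new leaves decoded with the updated memory $\nu^{c'}$. This is exactly one rewriting step of the tree-generating machine of $\cH$, so the invariant is preserved.

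The main obstacle is the memory bookkeeping, namely showing $m'_\cH=\nu^c_s(ud)$. This requires a second invariant: along every branch of the stitched tree, the $\upLink$ and $\downLink$ symbols with origin $v$ record, in order, the successive memory values that $\cH$ writes at $v$ along the corresponding branch-outputting run of $\cH$. Moreover, the pointer $\bullet_\diamond$ at the forest of the current node always sits on the most recent such link. We would prove this jointly with the main invariant. It follows from three facts: $\cG$ leaves a node only through a $\Link$ symbol; $\cH$ moves locally, so a branch of $\cH$ revisiting $ud$ passes through the fragments in creation order; and slot indices $i$ are fresh, obtained by lookaround as one plus the current forest length. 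Freshness also rules out collisions between fragments belonging to different output branches. Finally, the forest-size bound $C_{fs}=C_{fv}\times C_{fnv}$, which follows from narrow-visit via \Cref{cor:narrow-branches}, only ensures that these memory symbols stay within $M_\cG$; it plays no role in the correctness of the decoding.
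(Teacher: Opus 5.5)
Your decomposition matches the paper's proof. Transitions outside the $\cH$-simulating regime only change $\cN$-decorations and pointers, so the decoded global state is unchanged. Each $\cN\to\cH$/$\cH\to\cN$ pair replaces the expanded $\downNewLink$ leaf by one application of $\delta_\cH$. You also correctly isolate the point the paper's written proof passes over: the $\cH$-memory $m'_\cH$ read by the $\cH\to\cN$ transition must equal $\nu^c_s(ud)$. However, your second invariant does not establish this, and under either natural reading it is false.

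The value $m'_\cH$ is read from the pointer $\bullet_\diamond$ of the forest at the node $ud$ being entered, and that pointer is stale. It was left there at $\cG$'s last visit to $ud$, possibly while $\cN$ was exploring a different branch of $\sem\cH(t)$. For instance, suppose the first step of $\cH$ outputs $\gamma(\qq{q_1,m_1,1},\qq{q_2,m_2,1})$. Suppose $\cN$ explores below the first child of $\gamma$, creating a fragment $F$ at input node $1$ and perhaps following heads out of $F$. It then climbs back to $\gamma$ and descends to the second child. At that point the pointer at node $1$ sits on the root $\upLink$ of $F$. That link is not on the branch leading to the expanded leaf $s$, which contains no link of origin $1$. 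Nor is it, in general, the most recently created link of origin $1$.

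The value read is nevertheless correct, but for a reason absent from your three facts. $\cN$ only moves between adjacent nodes of the stitched tree, and that tree only grows. So its walk from its last position at $ud$ to $s$ covers the simple path between them, and by choice of "last" meets no other fragment stored at $ud$. There are then two cases.
\begin{itemize}
\item It left that fragment through a $\downLink$. That link is then an ancestor of $s$ and stores the value written at $ud$, i.e.\ $\nu^c_s(ud)$.
\item It left through the fragment's root $\upLink$. That link stores the value read when the fragment was created, i.e.\ (inductively) $\nu^c_{s_0}(ud)$ for the leaf $s_0$ expanded into it. Moreover $\nu^c_{s_0}(ud)=\nu^c_s(ud)$, because the path from $s_0$ to $s$ avoids fragments at $ud$.
\end{itemize}
What makes the read correct is this locality of $\cN$'s moves, combined with an asymmetry: $\upLink$s store the pre-step memory and $\downLink$s the post-step memory. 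Exits through links, locality of $\cH$ and fresh slot indices alone do not exclude a stale pointer resting on a $\downLink$ of another branch, which would give the wrong memory.
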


\begin{proof}
Let $t \in T_\Sigma$ be some input tree, $C^\cG_0 \xrightarrow{b_0^\cN} C^\cN_1
\xrightarrow{b_1^\cN} \ldots \xrightarrow{b_{n-1}^\cG} C^\cG_n$ be a
branch-outputting run of $\cG$ on $t$, and for $i \leq n$, let $h_i$ be the
global state of $\cH$ encoded in $C_i$.
For any $i\leq n$, we let $C_i = (q_i, u_i, \mu_i)$.

\subparagraph{Irrelevant transitions}
We first verify that only transitions of the $\cH$-simulating regime have an impact on the sequence $(h_i)$. For any $i < n$ such that neither $q_i$ nor $q_{i+1}$ are in the $\cH$-simulating regime, we know that the transition that goes from $C_i$ to $C_{i+1}$ is not a transition of the $\cH$-simulating regime, so the changes in memory between $\mu_i$ and $\mu_{i+1}$ are only made on $\cN$ memory symbols or pointing relations. The encoded global states $h_i$ and $h_{i+1}$ don't depend on this information, and thus $h_i = h_{i+1}$.

\subparagraph{Simulation of a $\cH$ step}
Each step of $\cH$ is simulated by a pair of $\cN\shortrightarrow\cH,
\cH\shortrightarrow\cN$ transitions, hence we will compare pairs
of configurations $h_{k-1}$ and $h_{k+1}$.
We look at such pairs of $\cH$ global states, and verify that they are
transformed in a way that is coherent with the transition function of $\cH$.
Let $k < n$ such that $q_k = (q_\cN, q_\cH, i, j)$ is in the $\cH$-simulating regime. 

We will show that there exists a node $s$ in $h_{k-1}$ labelled with $(u, q,
\mu)$ such that $h_{k+1} = h_{k-1}[s \leftarrow (u, q, \mu) \ogreaterthan
\delta_\cH(q, \lab_t(u), \mu(u)] $, i.e. $h_{k+1}$ is obtained from $h_{k-1}$ by
applying $\delta_\cH$ at some node $s$.

We first look at the effect of the transitions on the stitched trees.
We know that $k>0$, because $q_0$ is not in the $\cH$-simulating regime, and
again, that the transitions from $C_{k-1}$ to $C_k$ and from $C_k$ to $C_{k+1}$
are respectively of type $\cN\shortrightarrow\cH$ and $\cH\shortrightarrow\cN$.
Let us look at the differences between $\mu_{k-1}$ and $\mu_{k+1}$. Because the
first transition is of type $\cN\shortrightarrow\cH$, we know that
$\lab_{\mu_{k-1}(u_{k-1})}(\bullet_\diamond) \in \downNewLink$, and let it equal
$(q_\cH, m_\cH, d)$. Then by definition of $\cN\shortrightarrow\cH$ transitions,
this first transition sets $\mu_k(u_{k-1}) =
\mu_{k-1}(u_{k-1})[\bullet_\diamond\leftarrow (m_\cH, d, i)]$, where $i$ is one
plus the length of the forest $\mu(u_{k})$, and $\mu_k = \mu_{k-1}$ everywhere
else.
The second transition defines $\mu_{k+1}(u_k) =
\mu_k(u_k)[i\leftarrow\delta_\cH(q_\cH, \lab_t(u_k), m_\cH)^*]$ with $m_\cH =
\top_\cH$ if $\mu_k(u_k)$ empty and $\pi_1(\lab_{\mu_k(u_k)}(\bullet_\diamond))$
otherwise, and defines $\mu_{k+1} = \mu_k$ everywhere else. 
Then $\mu_{k+1}$ differs from $\mu_{k-1}$ only over $u_{k-1}$ and $u_k$. Additionally, because $C_0, \ldots C_n$ is a branch-outputting run, $u_k = u_{k-1} d$, and thus node $\bullet_\diamond$ in $\mu_{k+1}(u_{k-1})$ matches the root of $\mu_{k+1}(u_k)$. The relational structure defining $\Stitch(C_{k+1})$ is the same as that of $\Stitch(C_{k-1})$, except we took its disjoint union with the relational structure of tree $\delta_\cH(q_\cH, \lab_t(u_k), m_\cH)^*$, and the root $\memToStitch_{C_{k+1}}(u_k, i, \varepsilon)$ of this $\delta_\cH(q_\cH, \lab_t(u_k), m_\cH)^*$ is the first child of node $\memToStitch_{C_{k+1}} (u_{k-1}, \diamond, \bullet)$, which now has a label in $\downLink$ instead of $\downNewLink$. Because decoding a global state of $\cH$ from a stitched tree forgets the decorating $\cN$ memory symbols and removes the $\downLink$ and $\upLink$ nodes, then we have $\Stitch(C_{k+1}) = \Stitch(C_{k-1})[\memToStitch_{C_{k-1}}(u_{k-1}, \diamond, \bullet) \leftarrow \delta_\cH(q_\cH, \lab_t(u_{k-1}), m_\cH)]$.

We identified a rule of $\cH$ that describes how $\Stitch(C_{k+1})$ is built
from $\Stitch(C_{k-1})$. We now verify that the new configurations in $h_{k+1}$
have correctly defined memories. Let $s'$ be a configuration-labeled node in
$h_{k+1}$, and call $s$ the node $\memToStitch_{C_{k-1}}(u_{k-1}, \diamond,
\bullet)$. Recall that the memory in $\lab_{h_{k+1}}(s')$ is $\nu_{s'}$, and
that for a node $u\in\Dom(t), \nu_{s'}(u)$ is defined as the $\cH$ memory symbol
in the last $\upLink$ symbol of origin $u$ in the branch leading to $s$.
The only new $\downLink$ symbol in $\Stitch(C_{k+1})$ is the symbol $(m_\cH, d,
i)$ from the first transition. And thus, by definition of $\nu^{C_{k+1}}$, for
any $s' \in \Dom(h_{k+1})$, $\nu_{s'} = \nu_{s}$ everywhere, except that
$\nu_{s'}(u_{k-1}) = m_\cH$ if $s' \leq s$.

\end{proof}

\begin{lemma}
     $ \sem{\cN}\circ\sem{\cH} \subseteq \sem{\cG} $
\end{lemma}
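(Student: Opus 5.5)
We prove the inclusion by running the previous lemma in the opposite direction: instead of decoding a run of $\cH$ from a run of $\cG$, we show that every branch-outputting run of $\cN$ on $\sem{\cH}(t)$ is \emph{simulated} by a branch-outputting run of $\cG$ on $t$. By Claim~\ref{clm:branchwise-sem} it then suffices to show two things. First, for every path $C^\cN_{init} \tto[b] x$ of $\cN$ on $\sem\cH(t)$ with $x \in \Conf(\cN,\sem\cH(t)) \cup \rankK{\Xi}{0}$, there is a path $C^\cG_{init}(t) \tto[b] y$ of $\cG$ with the same label $b$ and $\DecodeN(y) = x$ (or $y = x$ when $x$ is a leaf symbol). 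Second, $\cG$ is defined on $t$ and is bounded-visit, so that its output exists. Since both machines are deterministic, having the same set of branch words then gives equality of the outputs on $\Dom(\sem\cN\circ\sem\cH)$.

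\textbf{Step 1: invariant.} We fix an invariant $\mathcal{I}(y,x)$ on a decodable configuration $y$ of $\cG$ in the $\cN$-simulating regime. It requires that $\DecodeN(y) = x$ and that $\DecodeH(y)$ is a global state reachable by $\cH$ on $t$ whose $\Gamma$-part is a prefix of $\sem\cH(t)$. It also requires that the $\Gamma$-labelled nodes of $\Stitch(y)$ are in bijection, via $\Re$, with the nodes of $\sem\cH(t)$ already visited by the run of $\cN$, together with their siblings and parents. Finally, every leaf adjacent to this visited region is a $\downLink$ or $\downNewLink$ symbol. The invariant holds initially after the first $\qzero$ transition. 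Here we use the previous lemma: every $\DecodeH$ along the run is reachable by $\cH$, and confluence of the tree-generating machine of $\cH$ then guarantees that the fragments agree with $\sem\cH(t)$ wherever they are defined.

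\textbf{Step 2: preservation.} We show that one step $x \xrightarrow{b} x'$ of $\cN$ is matched by a short sequence of $\cG$-steps whose label is also $b$. There are three cases, according to the neighbour of $\Point(y)$ in direction $d'$. If it is a $\Gamma$-node of the current fragment, a single $\cN\shortrightarrow\cN$ transition suffices; it writes the new $\cN$-memory exactly where $\DecodeN$ reads it. If it is an $\upLink$ or $\downLink$, one $\cN\shortrightarrow\cN$ transition is followed by a $\cN\shortrightarrow\cG$ and a $\cG\shortrightarrow\cN$ pair. The matching relation in $\Stitch$ ensures that we land on the node $\Re^{-1}$ of the target. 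If it is a $\downNewLink$, the $\cN\shortrightarrow\cH$ and $\cH\shortrightarrow\cN$ pair computes exactly the right-hand side of $\cH$ that the previous lemma identifies. Afterwards, stay-steps (Lemma~\ref{lem:stay}) absorb the intermediate configurations. The machinery transitions are root-only trees, so they add nothing to the label.

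\textbf{Step 3: definedness and bounded visits (main obstacle).} The memory symbols must remain in $M_\cG$: each forest must have at most $C_{fs}$ fragments, and $\cG$ must be bounded-visit. By Corollary~\ref{cor:narrow-branches}, the run of $\cN$ visits the union of at most $C_{fnv}$ branches of $\sem\cH(t)$. The fragments stored at an input node $u$ correspond to $\cH$-steps at $u$ lying on these branches. Along any one output branch, $\cH$ visits $u$ at most $C_{fv}$ times, so at most $C_{fv}\cdot C_{fnv}=C_{fs}$ fragments accumulate at $u$, and the $i$ computed by the $\cN\shortrightarrow\cH$ transition never overflows. Visits of $\cG$ at $u$ are then bounded by the number of $\cN$-visits to nodes of those fragments. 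These fragments have bounded size, and $\cN$ is bounded-visit, so each such node receives $O(1)$ visits; each visit costs $O(1)$ machinery steps. The delicate point is to check that no fragment is generated twice. This holds because a $\downNewLink$ is overwritten by a $\downLink$ the first time it is crossed, and it is exactly what makes the call-by-need evaluation respect the bounds.
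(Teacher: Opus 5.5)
Your overall plan matches the paper's. You simulate a given branch-outputting run of $\cN$ on $\sem\cH(t)$ by a run of $\cG$, keeping a fidelity invariant ($\DecodeN$ of the current $\cG$-configuration is the current $\cN$-configuration, same output) plus a frugality condition. You then treat the three cases ($\Gamma$-node, $\upLink$/$\downLink$, $\downNewLink$) and bound forest sizes by $C_{fv}\cdot C_{fnv}$ via \Cref{cor:narrow-branches}. However, two of your steps fail as written.

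\textbf{The invariant of Step~1 is false.} A fragment is a whole right-hand side $\delta_\cH(q,\sigma,m)^*$, not a single node. Suppose $\cN$ enters a fragment $a(a(b,b),(q',m',1))$ at its root and then leaves upwards. Then $\Stitch(y)$ contains the inner $a$ and both $b$'s, which are neither visited nor siblings or parents of visited nodes. Conversely, a sibling of a visited node may still be hidden behind an unexpanded $\downNewLink$. So no bijection of the kind you describe exists. What does hold, and what your Step~3 actually uses, is the paper's frugality condition. Every stored fragment was produced by a step of the branch-outputting run of $\cH$ leading to some node $u^\cN_i$ already visited by $\cN$, i.e.\ it meets the stitched path from the root to $\Re^{-1}(u^\cN_i)$.

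\textbf{Step~2 overlooks link-only fragments.} A step of $\cH$ may output nothing, when its right-hand side is a single reading head. The resulting fragment is just an $\upLink$ above a $\downNewLink$, which later becomes a $\downLink$. So between a $\Gamma$-node and its child in $\sem\cH(t)$, the stitched tree may contain arbitrarily long chains of links spread over many input nodes. After one $\cN\shortrightarrow\cG$/$\cG\shortrightarrow\cN$ pair (or $\cN\shortrightarrow\cH$/$\cH\shortrightarrow\cN$ pair), $\Point$ can again be on a link, and the configuration is not decodable. This already happens right after the $\qzero$ transition if the first step of $\cH$ outputs nothing.

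You must iterate these pairs and prove termination, as the paper does in its second and third cases. Going up terminates because the stitched tree is finite and $\cN$ never moves above the root of $\sem\cH(t)$. Going down along existing $\downLink$s terminates because such a chain was only created on the way to a visited, hence non-link, node. Going through $\downNewLink$s terminates because each pair simulates one step of a terminating branch-outputting run of $\cH$.

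This also invalidates \enquote{each visit costs $O(1)$ machinery steps} in Step~3, since a single $\cN$-step may cost $\Theta(|t|)$ steps of $\cG$. The visit bound has to be argued per input node $u$. Each link stored at $u$ lies on the stitched path realizing a single edge of $\sem\cH(t)$, and $\cN$ crosses that edge $O(1)$ times. Moreover, $u$ stores at most $C_{fs}$ fragments, each of bounded size.
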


    Given a branch-outputting run of $\cN$ on $\sem{\cH}(t)$, we inductively build a branch-outputting run of $\cG$ that has the same output. We verify a two-part induction hypothesis to help us build this run of $\cG$, composed of both a "fidelity" and a "frugality" condition. The fidelity condition states that the run of $\cG$ we're building decodes to exactly the branch-outputting run of $\cN$ we were given, and that outputs are equal, while the frugality condition states that we do this without storing any unnecessary fragment in the memory of $\cG$. This is an important condition to verify, because $\cG$ supports only a bounded amount of fragments in memory, and we need to stay within these bounds. Together with the narrow-visit property of $\cN$, the frugality condition ensures we do. 

    We'll use induction on $n \in \N$ on the following induction hypothesis : for all branch-outputting runs $ C^\cN_0 \xrightarrow{b^\cN_0} C^\cN_1 \xrightarrow{b^\cN_1} \ldots \xrightarrow{b^\cN_{n-1}} C^\cN_n $ of $\cN$ on $\sem{\cH}(t)$, there exists $m \in \N$ and a branch-outputting run $ C^\cG_0 \xrightarrow{b^\cG_0} C^\cG_1 \xrightarrow{b^\cG_1} \ldots  \xrightarrow{b^\cG_{m-1}} C^\cG_m  $of $\cG$ on $t$, such that, noting $C^\cN_i = (u^\cN_i, q^\cN_i, \mu^\cN_i) $, and $C^\cG_i = (u^\cG_i, q^\cG_i, \mu^\cG_i) $ : 
\begin{itemize}
    \item \tikz[remember picture] \node (a1) {Both runs have the same output};
    \item \tikz[remember picture] \node (a2) {$ C^\cG_m $ is decodable};
    \item \tikz[remember picture] \node (a3) {$ \DecodeN(C^\cG_m) = C^\cN_k $};
    \item \tikz[remember picture] \node (b1) {$ \forall u \in \Dom(t), \forall f \in \mu_m^\cG(u), \exists s \in \Dom(f), \exists i \in [n], \Re(s) = u^\cN_i $};
\end{itemize}
\begin{footnotesize}
  \begin{tikzpicture}[overlay, remember picture]
    % First brace (items 1–3)
    \draw[decorate, decoration={brace,amplitude=6pt}, line width=1pt]
    ($(a1.east |- a1.north)+(10ex, 0ex)$) -- ($(a1.east |- a3.south)+(10ex, 0ex)$)
    node[midway, right=6pt] {Fidelity conditions};
    % Second brace (item 4)
    \draw[decorate, decoration={brace,amplitude=6pt}, line width=1pt]
    (b1.east |- b1.north) -- (b1.east |- b1.south)
    node[midway, right=6pt] {Frugality condition};
  \end{tikzpicture}
\end{footnotesize}
Remember we proved in the preceding lemma that all runs of $\cG$ decode to some run of $\cH$, so we are indeed allowed to evaluate $\Re$ on any arbitrary node in memory.
Assuming the hypothesis is true for $n \in \N$. Let $ C^\cN_0
\xrightarrow{b^\cN_0} C^\cN_1 \xrightarrow{b^\cN_1} \ldots
\xrightarrow{b^\cN_{n}} C^\cN_{n+1} $ be a branch-outputting run of $\cN$, and $
C^\cG_0 \xrightarrow{b^\cG_0} C^\cG_1 \xrightarrow{b^\cG_1} \ldots
\xrightarrow{b^\cG_{m-1}} C^\cG_m  $ the branch-outputting run of $\cG$ built by
our induction hypothesis, applied to $ C^\cN_0
\xrightarrow{b^\cN_0} C^\cN_1 \xrightarrow{b^\cN_1} \ldots
\xrightarrow{b^\cN_{n-1}} C^\cN_{n} $.
Let us consider the transition $ C^\cN_n \xrightarrow{b^\cN_{n}}
    C^\cN_{n+1} $. We know by induction hypothesis that $ \DecodeN(C^\cG_m) =
    (u^\cN_n, q^\cN_n, \mu^\cN_n) $. Because of the definition of $\DecodeN$,
    this means that $ \pi_1(\lab_{\mu(u^\cG_m)}(\bullet_\diamond)) = \lab_t(u^\cN_n) $,
    $ q^\cG_m = q^\cN_n $, and $ \pi_2(\lab_{\mu(u^\cG_m)}(\bullet_\diamond)) =
    \mu^\cN_n(u^\cN_n) $. By definition of $\cN\shortrightarrow\cN$ transitions
    in $\cG$, this means that $\delta_\cG( q^\cG_m, \lab_t(u^\cG_m),
    \mu^\cG_m(u^\cG_m)) = \delta_\cN( q^\cN_n, \lab_{\sem{H}(t)}(u^\cN_n),
    \mu^\cN_n(u^\cN_n)$. And thus, letting $b^\cG_m = b^\cN_n$ and $C^\cG_{m+1}
    = C^\cG_m \ogreaterthan \delta_\cG( q^\cG_m, \lab_t(u^\cG_m),
    \mu^\cG_m(u^\cG_m))[b_m^\cG]$, we verify the first fidelity condition. 
    
    Now, there are three main possibilities, depending on the label $
    \pi_1(\lab_{\mu^\cG_{m+1}(u^\cG_{m+1})}(\bullet_\diamond))$ of the
    pointed node. 

    First case, if that label is not a $\Link$, we immediately know that
    $C^\cG_{m+1}$ is decodable.
    We prove the third fidelity condition for this first case. Because of the
    definition of $\cN\shortrightarrow\cN$ transitions, we know that
    $q_{i+1}^\cG = q_{k+1}^\cN$. Again by definition, the memory symbol had its
    pointed second label replaced with $\mu_{k+1}^\cN(u_k^\cN)$, and then its
    point was moved along the direction indicated in the transition of $\cN$.
    Thus we conclude $\DecodeN(C_{m+1}^\cG) = C_{k+1}^\cN$
    Because we have not produced any new fragments in memory, the frugality
    condition at step $m+1$ directly follows from that of time $m$.
    our induction hypothesis.
    This concludes the first case 
    $ \pi_1(\lab_{\mu^\cG_{m+1}(u^\cG_{m+1})}(\bullet_\diamond)) \notin \Link$.

    Second case, if $ \pi_1(\lab_{\mu^\cG_{m+1}(u^\cG_{m+1})}(\bullet_\diamond)) \in \downLink
    \cup \upLink$, we prove that $\cG$ takes a serie of transitions through the
    $\cG$-machinery regime, but eventually lands in a decodable configuration. 
    This corresponds, in the memory of $\cG$, to moving through fragments
    composed only of $\upLink$ and $\downLink$ symbols. Those "trivial"
    fragments correspond to transitions of $\cH$ that did not produce any
    output.
    From $C^\cG_{m+1}$, $\cG$ takes a $\cN\shortrightarrow\cG$,
    then a $\cG\shortrightarrow\cN$ transition. These transitions modify
    neither the memory nor the output, so we can define $\mu^\cG_{m+2} =
    \mu^\cG_{m+3} = \mu^\cG_{m+1}$ and $b^\cG_{m+1} = b^\cG_{m+2} =
    \varepsilon $ to preserve the first fidelity condition. If after these
    transitions, $C^\cG_{m+3}$ still is not decodable, then we repeat the
    process until we reach a decodable configuration. We can repeat,
    because being undecodable means we must still be pointing to a $\Link$. It
    cannot be a $\downNewLink$, otherwise the current fragment would be trivial (contain
    only links). Having a $\downNewLink$ would then imply that it is a leaf in
    the stitched tree, violating the frugality hypothesis.
    We now prove that repeatedly taking $\cN\shortrightarrow\cG,
    \cG\shortrightarrow\cN$ transiitons eventually terminates. All links pointed after taking a
    $\cG\shortrightarrow\cN$ transition must go in the same
    direction. If any link is in $\downLink$ (resp. $\upLink$),
    then we enter the next fragment from an $\upLink$ (resp. $\downLink$),
    and the following $ \cG\shortrightarrow\cN$ transition 
    moves us away, to a node with a parent (resp. child), which
    thus has to, like the first one, be a $\downLink$ (resp. $\upLink$) (if it
    is still a $\Link$).
    Finally, each transition is a step in the stitched tree, which is finite and constant
    along these transitions. The up case is easy, and we eventually reach, in
    the worst case, the root of the stitched tree which is not a $\Link$. The
    down case is covered by the frugality condition, which states that we know
    we will eventually reach a node that is not a link, because it is in the
    domain of $\Re$.
    We prove the third fidelity condition similarly to the first case for
    equality of memory and state in the decoded configuration of $\cN$. 
    Equality of reading heads is subtler. We did begin by moving the pointed
    node along some direction $j$ indicated in the transition of $\cN$. Because we saw only $\Link$
    symbols along the way, the pointed node of $C_{m+k}^\cG$ and of
    $C_{m+k}^\cG$ may not map (with respect to $\memToStitch$) to a parent and its $j$-th child in
    $\Stitch(C_{m+k}^\cG)$, but they do in the final
    decoded configuration of $\cH$, which can be built by removing $\Link$
    symbols from $\Stitch(C_{m+k}^\cG)$, and thus they do (with respect to
    $\Re$) in $\sem{H}(t)$ by inclusion.
    We still have not produced any new fragment in memory, so the frugality
    condition again follows directly from induction.
    This concludes the second case.

    The $\cH$-simulating case happens when $
    \pi_1(\Point(C^\cG_{m+1})) $ is in
    $\downNewLink$. We will take a $\cN\shortrightarrow\cH,
    \cH\shortrightarrow\cN$ transition pair, and reach a configuration $C\cG_{m+3}$,
    following branches $b^\cG_{m+1} = b^\cG_{m+2} = \varepsilon $. In the same
    way, we repeat the process until it stops. We reach a decodable
    configuration because each pair of transitions corresponds to a step in the
    simulated run of $\cH$, which cannot loop indefinitely. Additionally, there
    is no empty tree, so we eventually produce a fragment with a non-$\Link$
    node. As soon as we do, we are back in a decodable configuration
    $C^\cG_{m+k}$. 
    The proof of the third fidelity condition is similar to the proof for the
    previous case.
    We did produce new fragments in memory, so we need to verify
    the frugality condition. But we just proved that $ \DecodeN(C^\cG_{m+k}) =
    C^\cN_{n+1}$. In particular, $ \Re(\Point(C^\cG_{m+k})) = u^\cN_{n+1} $,
    which means that $ \Point(C^\cG_{m+k}) $ satisfies the frugality condition
    for all the fragments we just added.
    This concludes the last case.

\begin{lemma}
    $ \sem{\cG} \subseteq \sem{\cN}\circ\sem{\cH} $
\end{lemma}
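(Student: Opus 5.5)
To prove $\sem{\cG} \subseteq \sem{\cN}\circ\sem{\cH}$, I will take an arbitrary branch-outputting run of $\cG$ on an input $t$ and project it onto a branch-outputting run of $\cN$ on $\sem{\cH}(t)$ with the same output. Since both sides are functions and $\sem{\cG}$ is determined by its branch words (Claim~\ref{clm:branchwise-sem}), this inclusion of branch-word sets is enough. The projection is the decoding map $\DecodeN$, restricted to the decodable configurations of the run.

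The first step uses the preceding lemma. The $\cH$-configurations decoded from the stitched trees form a run of $\cH$. Every decoded global state is therefore reachable from the initial configuration of $\cH$ on $t$. By confluence of the tree-generating machine, every non-configuration node of $\DecodeH(C)$ is then a node of $\sem{\cH}(t)$ with the same label. So $\Re$ is a well-defined, label-preserving embedding of the non-$\Link$ nodes of fragments into $\sem{\cH}(t)$. This embedding also respects the child relations, because $\upLink$/$\downLink$ pairs are stitched and then erased.

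The second step splits the run $C^\cG_0 \to \dots \to C^\cG_m$ into maximal segments between consecutive decodable configurations, and treats each segment by the type of its first transition.
\begin{itemize}
  \item An $\cN\shortrightarrow\cN$ transition that lands on a non-$\Link$ node corresponds to exactly one step of $\cN$. Its output labels coincide by definition, and its memory update is mirrored through $\Re$.
  \item If the landing node is a $\Link$ symbol, the segment continues with $\cN\shortrightarrow\cG$/$\cG\shortrightarrow\cN$ pairs, or $\cN\shortrightarrow\cH$/$\cH\shortrightarrow\cN$ pairs. These produce no output and do not change $\cN$'s state. Following the stitched-tree argument from the previous lemma, they only move $\Point$ along stitched edges in a fixed direction, or extend the stitched tree by one $\cH$-step. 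Hence, when the next decodable configuration is reached, $\Re(\Point)$ is exactly the neighbour of the previous $\cN$ head position in direction $d'$.
  \item The $\cN$ memory of every node of $\sem{\cH}(t)$ outside the image of $\Re$ is $\top_\cN$. This matches $\cN$'s initial memory, since $\cN$ never visited such a node: the frugality invariant, established by induction in the same way, guarantees it.
\end{itemize}
Concatenating the segments gives $\DecodeN(C^\cG_0) \xrightarrow{b_0} \dots$, a run of $\cN$ whose labels are the concatenation of the $\cG$-labels. The empty labels of the machinery transitions disappear. The final leaf symbol is produced only by an $\cN\shortrightarrow\cN$ transition, so it is the same in both runs.

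I expect the main obstacle to be showing that a $\cG$-run cannot get stuck, or wander forever, in the machinery or $\cH$-simulating regimes, and that decodability is recovered at the end of each segment. Termination of a segment should follow from finiteness of the stitched tree for $\cG$-machinery steps, and from the fact that $\cH$-steps are finitely many per decoded run. A run of $\cG$ that ends in an output leaf must in any case pass through finitely many configurations. A further subtlety is that $\DecodeN$ at the initial configuration is only meaningful after the $\qzero$ step, so I would start the projection at $C^\cG_1$ and check separately that it decodes to $C_{init}$ of $\cN$ on $\sem{\cH}(t)$.
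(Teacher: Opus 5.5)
Your proof is correct at the paper's level of detail, but the key step takes a different route. Both arguments build the $\cN$-run along the decodable configurations of the given $\cG$-run. Both use the fact that a decodable configuration can only fire an $\cN\shortrightarrow\cN$ transition, which mirrors a step of $\delta_\cN$.

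The difference is in how one gets from one decodable configuration to the next. The paper never analyses the intermediate $\Link$-chasing and $\cH$-simulating transitions in this direction. It extends the $\cN$-run by one step and invokes the forward lemma ($\sem{\cN}\circ\sem{\cH}\subseteq\sem{\cG}$) to obtain a $\cG$-run decoding to the extended $\cN$-run. Determinism of $\cG$, following the same output branch, then forces that run to be a prefix of the given one. You instead re-derive directly that each segment moves $\Re(\Point)$ to the $d'$-neighbour.

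This costs you the segment analysis a second time. In particular you need the invariant that, when climbing through an $\upLink$, the pointer of the parent fragment still sits on the matching $\downLink$. That holds because the simulated walk of $\cN$ on the stitched tree is continuous. In exchange, your inclusion is independent of the forward lemma, and hence of frugality and of the narrow-visit bound that the forward construction uses to stay within $C_{fs}$ fragments. Here the given run is finite and already respects the memory bound. The paper's route buys brevity by reusing the forward simulation.

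Three of your remarks need adjusting.

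First, the fact that nodes of $\sem\cH(t)$ outside the image of $\Re$ were never visited by $\cN$ is not what frugality says; frugality is the converse, that every stored fragment contains a visited node. What you need follows directly from your own simulation:
\begin{itemize}
\item every visited node is $\Re(\Point)$ at some decodable configuration;
\item fragments are never deleted;
\item $\Re$ of a stored node never changes.
\end{itemize}

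Second, the obstacle you anticipate does not arise in this direction. The given run is finite and ends in a leaf, which only an $\cN\shortrightarrow\cN$ transition can output. Any configuration firing such a transition has its state in $Q_\cN$ and a non-$\Link$ pointed node, so it is decodable. Hence every segment ends in a decodable configuration automatically.

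Third, passing from inclusion of branch words to equality of outputs also needs $\sem\cN(\sem\cH(t))$ to be defined. This follows from determinism of $\cN$: every path of $\cN$ from its initial configuration is a prefix of one of the finite paths you realised for the branches of $\sem\cG(t)$.
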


\begin{proof}
    We reuse the two previous lemmas.
    Let $ R_\cG = C^\cG_0 \xrightarrow{b^\cG_0} C^\cG_1 \xrightarrow{b^\cG_1} \ldots
    \xrightarrow{b^\cG_{n-1}} C^\cG_n $ be a total branch-outputting run of $\cG$ on
    some input tree $t$, along some branch $b$.
    We will prove by induction on the length of a prefix of this run that there
    exists an equivalent run of $\cN$ on $\sem{H}(t)$.

    The run $ R_\cG $ up to its first decodable state is equivalent to the
    initial configuration of $\cN$ on $\sem{H}(t)$, by the previous lemma.

    Assume there exists $k \in [n-1]$ such that $ C^\cG_k = (q_\cG, u_\cG,
    \nu_\cG)$ is decodable, and $ C^\cG_0 \xrightarrow{b^\cG_0} C^\cG_1
    \xrightarrow{b^\cG_1} \ldots \xrightarrow{b^\cG_{k-1}} C^\cG_k $ is
    equivalent to a branch-outputting run $ R_\cN = C^\cN_0
    \xrightarrow{b^\cG_0} C^\cN_1 \xrightarrow{b^\cN_1} \ldots
    \xrightarrow{b^\cN_{m-1}} C^\cN_m $ of $\cN$ on $\sem{H}(t)$. Because
    $C^\cG_k$ is decodable, it must be the case that $\lab(\Point(C^\cG_k))$ is
    some pair $\pp{\gamma, m_\cN}$, and $q_\cG \in Q_\cN$. Then the transition
    from $C^\cG_k$ to $C^\cG_{k+1}$ must be a $\cN\shortrightarrow\cN$
    transition. By definition, there exists a corresponding rule
    $\delta_\cN(q_\cG, \gamma, m_\cN)$ of $\cN$, and thus we can define a
    configuration successing $C^\cN_{m+1}$ by following branch $b$.  By the
    previous proof, this new run $ C^\cN_0 \xrightarrow{b^\cG_0} C^\cN_1
    \xrightarrow{b^\cN_1} \ldots \xrightarrow{b^\cN_{m}} C^\cN_{m+1} $ accepts
    an equivalent run of $\cG$ on $t$, and by determinism of $\cG$ and totalness
    of $R_\cG$, this equivalent run is a prefix of $ R_\cG $.
    We conclude by induction on $k$.
\end{proof}

\end{document}